\assignpagestyle{\chapter}{empty}
\pgfplotsset{compat = 1.18}
\begin{document}

\selectlanguage{english}

\definecolor{TitleColor}{HTML}{B92948}

\newgeometry{vmargin=2cm,hmargin=1cm}

\begin{titlepage}
    \tikzexternaldisable
    \tikz[remember picture, overlay] \node[opacity = 1, inner sep = 0pt] at ([xshift = 5cm, yshift = -5cm] current page.center) {\includegraphics[width = 0.5\paperwidth, height = 0.5\paperheight]{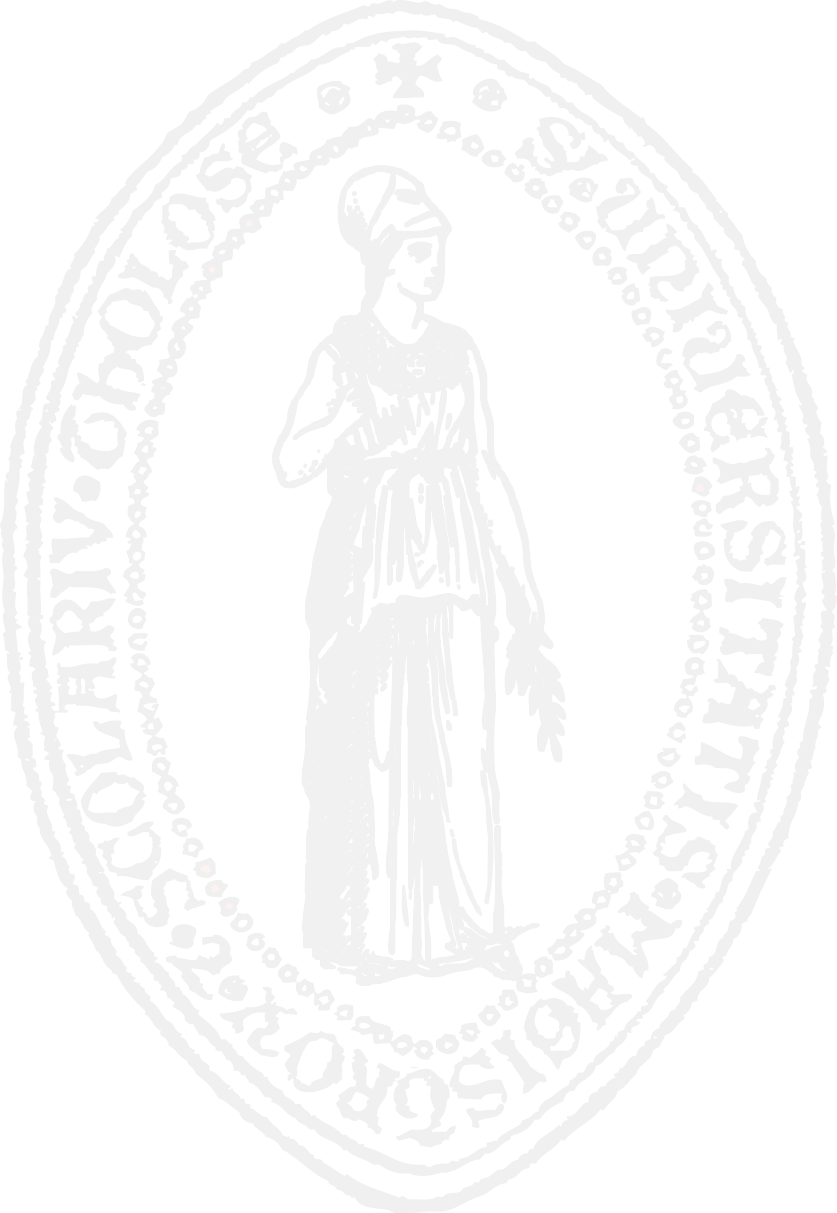}};
    \tikz[remember picture, overlay] \node[opacity = 1, inner sep = 0pt] at (2cm, -1cm) {\includegraphics[width = 4cm]{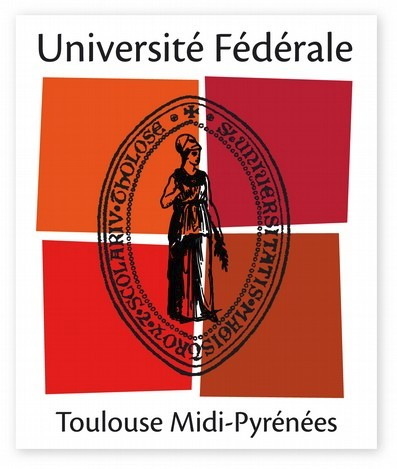}};
    \tikzexternalenable

    \centering
    
    {\fontfamily{phv}\fontseries{b}\fontsize{60}{0}\selectfont\color{TitleColor} THÈSE} \\
    
    \vspace{50pt}
    
    {\fontfamily{phv}\fontseries{b}\fontsize{19}{0}\selectfont\color{black} En vue de l'obtention du} \\
    \vspace{10pt}
    {\fontfamily{phv}\fontseries{b}\fontsize{20}{0}\selectfont\color{black} DOCTORAT DE L'UNIVERSITÉ DE TOULOUSE} \\
    \vspace{15pt}
    {\fontfamily{phv}\fontseries{b}\fontsize{17}{0}\selectfont\color{black} Délivré par l'Université Toulouse 3 -- Paul Sabatier} \\

    \vspace{50pt}

    {\color{TitleColor}\rule{250pt}{2pt}} \\
    \vspace{10pt}
    {\fontfamily{phv}\fontseries{b}\fontsize{17}{0}\selectfont\color{black} Présentée et soutenue par} \\
    \vspace{10pt}
    {\fontfamily{phv}\fontseries{b}\fontsize{20}{0}\selectfont\color{black} DENIS ROCHETTE} \\
    \vspace{20pt}
    {\fontfamily{phv}\fontsize{15}{0}\selectfont\color{black} Le 11 juillet 2023} \\
    \vspace{20pt}
    {\fontfamily{phv}\fontseries{b}\fontsize{18}{0}\selectfont\color{black} {ASYMMETRIC CLONING IN QUANTUM INFORMATION THEORY}} \\
    \vspace{5pt}
    {\color{TitleColor}\rule{250pt}{2pt}} \\
    
    \vspace{10pt}
    
    {\fontfamily{phv}\fontsize{14}{0}\selectfont\color{black} École doctorale~: {\fontseries{b}\selectfont Mathématiques, Informatique, Télécommunications de Toulouse}} \\
    \vspace{10pt}
    {\fontfamily{phv}\fontsize{14}{0}\selectfont\color{black} Spécialité~: {\fontseries{b}\selectfont Mathématiques}} \\
    \vspace{10pt}
    {\fontfamily{phv}\fontsize{14}{0}\selectfont\color{black} Unité de recherche~: {\fontseries{b}\selectfont Institut de Mathématiques de Toulouse} (UMR 5219)} \\
    \vspace{15pt}
    {\fontfamily{phv}\fontsize{14}{0}\selectfont\color{black} Thèse dirigée par} \\[5pt]
    {\fontfamily{phv}\fontseries{b}\fontsize{14}{16}\selectfont\color{black} M. Clément Pellegrini \\ M. Ion Nechita} \\
    \vspace{15pt}
    {\fontfamily{phv}\fontsize{14}{0}\selectfont\color{black} Jury} \\[5pt]
    {\fontfamily{phv}\fontsize{14}{16}\selectfont\color{black}
    \begin{tabular}{rl}
        {\fontseries{b}\selectfont M. Guillaume Aubrun} & Rapporteur \\  
        {\fontseries{b}\selectfont M. Omar Fawzi} & Rapporteur \\ 
        {\fontseries{b}\selectfont M. Michael Wolf} & Rapporteur \\
        {\fontseries{b}\selectfont M. Antonio Ácin} & Examinateur \\
        {\fontseries{b}\selectfont M. Fabrice Gamboa} & Examinateur \\
        {\fontseries{b}\selectfont Mme. Maria Jivulescu} & Examinatrice \\
    \end{tabular}
    }
\end{titlepage}

\restoregeometry

\newpage\null\thispagestyle{empty}\newpage

\titleclass{\chapter}{top}
\titleformat{\chapter}
    [display]
    {\vspace{-100pt}\centering\Huge\bfseries}
    {\appendixname\ \thechapter}
    {0pt}
    {\huge}
    
\chapter*{Acknowledgements}

My first words of thanks go to my supervisors \textbf{Clément Pellegrini} and \textbf{Ion Nechita} for giving me the opportunity to do a PhD under their supervision and with their advice. Throughout these three years, their kindness and support have been indispensable to the success of this thesis. Thank you for giving me the opportunity to attend many conferences, workshops and summer schools. These invaluable experiences have greatly enriched my research and academic network, for which I am deeply grateful.

I would like to thank \textbf{Guillaume Aubrun}, \textbf{Omar Fawzi} and \textbf{Michael Wolf} who accepted to review this manuscript and did so with great care. I would also like to thank \textbf{Antonio Ácin}, \textbf{Fabrice Gamboa} and \textbf{Maria Jivulescu} for accepting to be members of my thesis defence jury.

I would like to express my deep gratitude to the permanent members of the Institut de Mathématiques de Toulouse and the Probability team, with whom I have had the opportunity to discuss mathematics and other subjects. My special thanks go to \textbf{Tristan Benoist} and \textbf{Reda Chhaibi}, whose advice and encouragement have been crucial to my progress.

I would like to thank all the students who are part of our Quantum team in Toulouse or who have joined us in the past for a thesis, an internship or just a visit. Thanks to \textbf{Arnaud}, \textbf{Pierre}, \textbf{Khurshed}, \textbf{Jan-Luka}, \textbf{Linda}, \textbf{Satvik}, \textbf{Qing-Hua}, \textbf{Sang-Jun}, and all the new students who have joined us this year.
Thanks to \textbf{Anna}, our first postdoc!
And of course a special thanks to \textbf{Faedi}, with whom I spent many hours over a blackboard or a drink.

I would like to thank all the PhD students and postdocs at the Institut de Mathématiques de Toulouse whom I have had the pleasure to meet, and with whom I've spent some great times, in the lab, over a picnic in the park, or a drink in the Biergarten. Thank you for organising all the student activities in the lab, especially the student seminar. And of course, thank you for helping me to improve my Tarot skills. There are so many of you, so to mention just a few, thank you to
\textbf{Nicolas},
\textbf{Sophia},
\textbf{Paola},
\textbf{Lucas},
\textbf{Perla},
\textbf{Alberto},
\textbf{Fanny},
\textbf{Michèle},
\textbf{Clément},
\textbf{Étienne},
\textbf{Viviana},
\textbf{Anthony},
{\fontsize{11}{14}\selectfont \textbf{Virgile},}
{\fontsize{10}{14}\selectfont \textbf{Joachim},}
{\fontsize{9}{14}\selectfont \textbf{Corentin},}
{\fontsize{8}{14}\selectfont \textbf{Mahmoud},}
{\fontsize{7}{14}\selectfont \textbf{Alain},}
{\fontsize{6}{14}\selectfont \textbf{Fu-Hsuan},}
{\fontsize{5}{14}\selectfont \textbf{Javier},}
\scalebox{.75}{{\fontsize{5}{14}\selectfont \textbf{Axel},}}
\scalebox{.5}{{\fontsize{5}{14}\selectfont \textbf{Louis}, \rule[1pt]{10pt}{0.2pt}}}

I would also like to thank all the PhD students and postdocs at the Theoretical Physics Laboratory for welcoming me among the physicists over the last three years. Thank you for all the pool, restaurant and cinema evenings\ldots Thanks for all the discussions in the best coffee room and the seminars. Special thanks to \textbf{Bhupen}, \textbf{Sasank}, \textbf{Jeanne}, \textbf{Naïmo}, \textbf{Nathan}, \textbf{Noam}, \textbf{Quentin}, \textbf{Théo}, \textbf{Thomas\footnote[2]{It wasn't a footnote, just the two Thomas!}}.

I would like to express my sincere gratitude to \textbf{Malika} for the invaluable support and dedication provided throughout the three years of my thesis. Your assistance, enthusiasm, and kindness have been truly exceptional and irreplaceable.

I would like to thank the Amsterdam QuSoft quantum team for inviting me for a week. In particular, I would like to thank, \textbf{Dmitry} and \textbf{Māris} for your welcome and for our joint work.

I would like to thank Quantum Munich at TUM for their kind invitation. I would especially like to thank \textbf{Cambyse}.

Finally, I'd like to thank all the people I've met during the three years of my thesis. Thank you to \textbf{Cécilia}, who was part of our Quantum team in Toulouse for a while.
Thanks to \textbf{Andreas}, \textbf{Mehdi}, \textbf{Leevi}, \textbf{Teiko}, \textbf{Stephan}, \textbf{Yvan}, \textbf{Adrian}, \textbf{Matthias} and many others.
I would especially like to thank \textbf{Anne} for giving me the opportunity to continue doing quantum science during a postdoc.

My next thanks go to my friends \textbf{Pierre} and \textbf{Herménégilde}. Congratulations to you \textbf{Julien} and \textbf{Heloïse} on your wedding, congratulations to you \textbf{Joseph} and \textbf{Shiva-chan}, and welcome to you \textbf{Léo}. Thank you \textbf{Julien}, \textbf{Jérémy}, \textbf{Tristan}, \textbf{Rémi}, \textbf{Alexandre}, \textbf{Steven}, \textbf{Johanna}, \textbf{Youri}, \textbf{Viven}, \textbf{Mickaël}, \textbf{Maëlys} and all the others for the time we spent together.

I would like to thank my family for their support and help during the three years of my thesis in Toulouse. I know the long distance has not always been easy.

Finally, I'd like to thank you, \textbf{Dr. Claire Lacouture}. Thank you for your presence, your love, and your support that made the writing of this manuscript so much easier.
% Finally, I'd like to thank you, \textbf{Dr. Claire Lacouture}. Thank you for softening the writing of this manuscript with your presence and the ice creams you brought me. I'm delighted to have helped you learn to ski and find the famous \textit{confiture de lait}. Thank you for all the weekends and moments we spent together, for taking me to the other side of Ras-la-tasse and above all for being you, exceptionally yourself. I've taken good care of my Pilea and it's still doing very well. And thank you, \textbf{Mustache Newton}!

\setcounter{tocdepth}{1}
\setcounter{secnumdepth}{2}
\tableofcontents
\setcounter{page}{5} % keep updated !

\titleclass{\chapter}{page}
\titleformat{\chapter}
    [display]
    {\flushright\Huge\bfseries}
    {
    \vspace{-30pt}
    \begin{tikzpicture}
        \node [text = darkgray, font = \huge, rotate=90] (name) {\MakeUppercase{\bfseries\sffamily\chaptername}};
        \node [scale = 6.25, minimum size = 22pt, inner sep = 0pt, text = white, fill = lightgray, font = \Huge] (number) [xshift = 14pt] at (name) {\thechapter};
    \end{tikzpicture}
    }
    {20pt}
    {}
    [\clearpage]

\chapter{Introduction}

\subsection{Motivation and context}

Quantum information theory, an interdisciplinary field that combines principles of quantum mechanics and information theory, bridges theoretical physics, computer science, and mathematics. The primary goal of the field is to understand quantum properties within physical systems, which can then be used to manipulate and transmit information. As a rapidly evolving field of research, quantum information theory has the potential to catalyze significant advances in cryptography, computing, and communications.

A fundamental concept within quantum information theory is quantum entanglement, which refers to the correlations that exist between two or more quantum systems. These non-classical correlations allow tasks to be performed that are impossible in classical systems. Quantum entanglement has emerged as a key resource for quantum information processing, enabling operations such as teleportation, superdense coding, and quantum error correction.

Quantum channels, another key concept in quantum information theory, describe the transmission of quantum information. The development of efficient and reliable techniques for quantum information transmission is essential for the realization of quantum communication and quantum computing. Quantum cloning channels, a specific category of quantum channels, refer to the notion of quantum cloning, which involves the creation of multiple identical copies of an unknown quantum state. Although perfect quantum cloning is impossible due to the no-cloning theorem, resulting from the linearity of quantum mechanics, the creation of approximate copies remains feasible.

The present work aims to provide an exhaustive investigation of quantum cloning problems, along with related quantum entanglement problems. The analysis of these topics is based on the application of the core concepts of representation theory, in particular those associated with the symmetric group. The use of these concepts allows the unification of different topics and a more extensive comprehension of the matters at hand.

To achieve this goal, the initial exploration involves the fundamental notion of Schur-Weyl duality, which provides a critical link between the symmetric group and the unitary group. This duality allows efficient representation and manipulation of quantum systems, making it a valuable tool for further research in quantum information theory. Additionally, various extensions of the Schur-Weyl duality, involving other groups and algebras, are studied.

A primary application of Schur-Weyl duality that receives special attention is the quantum cloning, which involves the creation of multiple copies of an unknown quantum state. Both the $1 \to 2$ case and the more general $1 \to N$ case, where $N$ copies of an unknown state are created, are studied, providing new insights into the constraints imposed by the no-cloning theorem.

The investigation then extends to a more general quantum entanglement problem, exploring its relation to Schur-Weyl duality and developing novel techniques for analyzing and solving the problem.

\subsection{Preliminaries}

Let $\mathcal{H} \coloneqq \mathbb{C}^d$ be a finite $d$-dimensional complex Hilbert space, and let $\mathcal{M}_d$ denote the space of $d \times d$ complex matrices acting on $\mathcal{H}$. Given a matrix $M \in \mathcal{M}_d$, its conjugate transpose ${\bar{M}}^\T$ is denoted $M^*$. The Frobenius inner product on $\mathcal{M}_d$ is defined by,
\begin{equation*}
    \scalar{A}{B} \coloneqq \Tr \1[ A^* B \1].
\end{equation*}

Using the Dirac notation, vectors are denoted as kets, represented by $\ket{\psi}$, while their duals are called bras, denoted by $\bra{\psi}$. The inner product on $\mathcal{H}$ becomes simply,
\begin{equation*}
    \braket{\psi}{\phi} \in \mathbb{C},
\end{equation*}
and the outer product,
\begin{equation*}
    \ketbra{\psi}{\phi} \in \mathcal{M}_d.
\end{equation*}

On a tensor product space $\mathcal{H}_1 \otimes \cdots \otimes \mathcal{H}_n$,
the notations $M_{(i)}$ and $v_{(i)}$ for a matrix $M \in \mathcal{M}_{d_i}$ and a vector $v \in \mathcal{H}_i$ are used to denote the position of the matrix and the vector on the tensor space $\mathcal{H}_i$. Given a matrix $M \in \mathcal{M}_{d_1} \otimes \cdots \otimes \mathcal{M}_{d_n}$, the partial transpose $M^\Tpartial$ denotes the transpose operation of the first tensor space $\mathcal{H}_1$, and the partial trace $\Tr_i [M]$ denotes the trace operation on the tensor space $\mathcal{H}_i$.

Additionally, the notation $[n]$ is used to denote alternatively the set $\{1, \ldots, n\}$ or the set $\{0, 1, \ldots, n\}$.  Both usages are unambiguous in their context.

\subsection{Summary of results}

In quantum information theory, the process of copying of a state, written
\begin{equation*}
    \psi \mapsto \psi \otimes \psi,
\end{equation*}
and called quantum cloning, can be performed perfectly if and only if $\psi$ is an element of a known orthonormal basis. Otherwise, perfect cloning becomes impossible, and the resulting copies turn out to be imperfect. This phenomenon is known as the no-cloning theorem, and is part of the family of no-go theorems in theoretical physics that describe an intrinsic impossibility of quantum mechanics.

This impossibility gives rise to an optimization problem called the $1 \to N$ quantum cloning problem, which is defined as follows: identify a specific quantum channel, called the quantum cloning channel, denoted by $\Phi: \mathcal{M}_d \to {\1( \mathcal{M}_d \1)}^{\otimes N}$, which maps an input pure quantum state on $\mathcal{H}$ to an output mixed quantum state on $\mathcal{H}^{\otimes N}$, such that the output marginals of $\Phi$ are as close as possible to the input. In the most general case, each marginal of $\Phi$ can be different, resulting in asymmetric copies.

It can be showed, after a suitable symmetrisation procedure, that the marginals $\Phi_i$ of such a quantum cloning channel are of the form:
\begin{align*}
    \Phi_i(\rho) &= p_i \cdot \rho + (1 - p_i) \frac{I_d}{d}, & &\forall \rho~\text{pure},
\end{align*}
for some probabilities $p_i \in [0, 1]$. The quantum cloning problem can be reformulated as the problem of identifying the set of achievable probabilities $p_i$, and their associated quantum cloning channel $\Phi$.

The first part of this thesis focuses on the $1 \to 2$ quantum cloning problem. In this special case, a simple description of both the quantum cloning channels $\Phi$ and the achievable probabilities $p_i$ can be given. The quantum cloning channels are parameterised, independently of the dimension of $\mathcal{H}$, by only $6$ coefficients, and this number of parameters decreases to only $4$ in the optimal quantum cloning channel with respect to the optimisation problem. The achievable probabilities $(p_1, p_2)$ are described by the union of a family of ellipses  indexed by $\lambda \in [0,d]$, given by
\begin{equation*}
    \qquad \frac{x^2}{a^2_\lambda} + \frac{(y - c_\lambda)^2}{b^2_\lambda} \leq 1,
\end{equation*}
with $a_\lambda \coloneqq \frac{\lambda}{\sqrt{d^2 - 1}}, b_\lambda \coloneqq \frac{\lambda}{d^2 - 1}$, $c_\lambda \coloneqq \frac{\lambda d - 2}{d^2 - 1}$, and $x = p_1 - p_2$, $y = p_1 + p_2$.

The second part of this thesis focuses on the general $1 \to N$ quantum cloning problem. The quantum cloning channels are this time parameterised, independently of the dimension of $\mathcal{H}$, by the $N$ coefficients of a largest eigenvector of a matrix $S_x$, defined for all $x \in \mathbb{R}^N$ by,
\begin{equation*}
    S_x \coloneqq \sum^N_{k=1} \sum^{d-1}_{i,j=0} |x_k| \cdot \ketbra{ii}{jj}_{(0,k)} \otimes I^{\otimes (N-1)}.
\end{equation*}
From this matrix is derived the $\mathcal{Q}$-norm, a norm on $\mathbb{R}^N$ defined on $x \in \mathbb{R}^N$ by,
\begin{equation*}
    \norm{x}_{\mathcal{Q}} \coloneqq \frac{d \lambda_{\text{max}}(S_x) - \norm{x}_1}{d^2 - 1},
\end{equation*}
The achievable probabilities $p_i$ are exactly the non-negative part of the unit ball of the dual $\mathcal{Q}$-norm.

Exploiting the close relationship between quantum channels and quantum states, the quantum cloning problem can be seen as the quantum entanglement problem on a graph: identifying a quantum state $\rho_G$ on a star graph $G$, such that the reduced quantum state $\rho_e$ on each edge $e \in G$ of the graph is as close as possible to the maximally entangled state $\omega$, i.e.
\begin{align*}
    \rho_e &= p_e \cdot \omega + (1 - p_e) \frac{I^{\otimes 2}_d}{d^2}, & \forall e~\text{edge}.
\end{align*}
This leads to the more general optimization problem of finding such a quantum state given any graph, formally defined by the semi-definite programming,
\begin{align*}
    \max_{\rho,p_e} \quad & \sum_{e~\text{edge}} p_e \\
    \text{s.t.} \quad & \rho_e = p_e \cdot \omega + (1 - p_e) \frac{I^{\otimes 2}_d}{d^2}, & \forall e~\text{edge} \\
    & \Tr \1[ \rho \1] = 1, \quad \rho \geq 0.
\end{align*}

The third part of the thesis focuses on the exact solution for this problem in the case of the complete graph on $N$ vertices, with equal reduced quantum state, that is,
\begin{align*}
    p(N,d) \coloneqq \max_{\rho,p} \quad & p \\
    \text{s.t.} \quad & \rho_e = p \cdot \omega + (1 - p) \frac{I_d}{d}, & \forall e~\text{edge} \\
    & \Tr \1[ \rho \1] = 1, \quad \rho \geq 0.
\end{align*}
The close formula for $p(N,d)$ depend on both the values of $d$ and $N$, and their parity,
\begin{equation*}
    p(N,d) =
    \begin{cases}
        \frac{1}{N + N \bmod 2 - 1} &\text{ if $d > N$ or either $d$ or $N$ is even} \\
        \min \1\{ \frac{2 d + 1}{2 d N + 1}, \frac{1}{N - 1} \1\} &\text{ if $N \geq d$ and both $d$ and $N$ are odd}.
    \end{cases}
\end{equation*}

\subsection{Outline of the thesis}

The thesis is structured as follows. \hyperref[appA]{Appendix~\ref*{appA}} provides an overview of the fundamental principles of representation theory that are used throughout the thesis, it can be read first, or skipped. \hyperref[chap2]{Chapter~\ref*{chap2}} is devoted to the Schur-Weyl duality between the symmetric group and the unitary group, but also various extensions of the Schur-Weyl duality involving other groups and algebras. \hyperref[chap3]{Chapter~\ref*{chap3}} provides a comprehensive presentation of the mathematical foundations of quantum mechanics in the context of quantum information theory. \hyperref[chap4]{Chapter~\ref*{chap4}} studies the above quantum cloning problems in both the $1 \to 2$ and $1 \to N$ cases. \hyperref[chap4]{Chapter~\ref*{chap4}} looks at the above quantum entanglement problem on the complete graph.

\titleformat{\chapter}
    [display]
    {\flushright\Huge\bfseries}
    {
    \vspace{-30pt}
    \begin{tikzpicture}
        \node [text = darkgray, font = \huge, rotate=90] (name) {\MakeUppercase{\bfseries\sffamily\chaptername}};
        \node [scale = 6.25, minimum size = 22pt, inner sep = 0pt, text = white, fill = lightgray, font = \Huge] (number) [xshift = 14pt] at (name) {\thechapter};
    \end{tikzpicture}
    }
    {20pt}
    {}
    [
    \vfill
    \flushleft\normalsize\bfseries
    {
        \startcontents[chapters]
        \printcontents[chapters]{}{1}[2]{\rule[0.3em]{50pt}{2pt} \: {\Large Chapter contents} \: \leaders\hbox{\rule[0.3em]{2pt}{2pt}\kern-1pt}\hfill\null}
    }
    \clearpage
    ]

\chapter{Schur-Weyl duality} \label{chap2}

The primary goal of this chapter consists in establishing the foundational result of Issai Schur and Hermann Weyl, known as the Schur-Weyl duality, which relates the symmetric group $\mathfrak{S}_n$ and the complex general linear group $\mathrm{GL}_d$. Moreover, this chapter explores other adaptations of this theorem for distinct groups and algebras.
\begin{theorem*}[Schur-Weyl duality \cite{schur1927rationalen,weyl1946classical}]
    The space of $n$-fold tensors over $\mathbb{C}^d$ decomposes under the action of the direct product group $\mathrm{GL}_d \times \mathfrak{S}_n$ as follows:
    \begin{equation*}
        {\1( \mathbb{C}^d \1)}^{\otimes n} \simeq \bigoplus_{\substack{\lambda \vdash n \\ \lambda^\prime_1 \leq d}} V^d_\lambda \otimes V_\lambda.
    \end{equation*}
    Where $V^d_\lambda$ is an irreducible representation space of $\mathrm{GL}_d$ and $V_\lambda$ is an irreducible representation space of $\mathfrak{S}_n$.
\end{theorem*}

For a comprehensive exploration of the representation theory concerning the symmetric group $\mathfrak{S}_n$ and the complex general linear group $\mathrm{GL}_d$, refer to \hyperref[appA]{Appendix~\ref*{appA}}. However, the current section aims to provide a self-contained exposition.

Subsequently, two notations for permutations and partitions of the set $\{1, \ldots, n\}$ are employed. The conventional cyclic notation $\emph{(1\:2\:3)(4\:5)}$ denotes the permutation:
\begin{equation*}
    \begin{pmatrix}
    1 & 2 & 3 & 4 & 5 \\
    2 & 3 & 1 & 5 & 4
    \end{pmatrix},
\end{equation*}
and the notation $\emph{1\,2\,3\:|\:4\,5}$ represents the partition:
\begin{equation*}
    \1\{ \{ 1, 2, 3 \}, \{ 4, 5 \} \1\}.
\end{equation*}

\section{Diagrammatic algebras} \label{chap2:sec:diagrammaticAlgebra}

The term diagram algebras has no specific definition by axiomatic properties or other rigorous means. In the present thesis, a diagram algebra refers to a finite unital associative algebra over the complex field, where the basis consists of homotopy classes of diagrams. The multiplication operation within this algebraic structure finds its definition through the process of concatenation. For a survey on diagram algebras, see \cite{koenig2008panorama,halverson2020set}.

In the context of finite dimensional algebras over the algebraically closed field $\mathbb{C}$, the notion of \emph{semisimplicity} is employed interchangeably with that of the direct sum of full matrix algebras, closed under matrix multiplication.

\subsection{Symmetric group algebra \texorpdfstring{$\mathbb{S}_k$}{}} \label{chap2:sec:symmetricGroupAlgebra}

Define $\emph{\mathfrak{S}_k}$ as the \emph{symmetric group}, which is the group of order $k!$ containing all the \emph{permutations} of the set $\{ 1, \ldots, k \}$. Given a permutation $\sigma$ belonging to the symmetric group $\mathfrak{S}_k$, it is possible to represent this permutation as a \emph{diagram} via a graph consisting of $2k$ vertices. These vertices are divided equally between two columns.

Interpretation of the diagram proceeds from right to left. A connection between the $i$-th vertex in the right column and the $j$-th vertex in the left column is established if and only if the relation $\sigma(i) = j$ holds. For example,
\begin{equation*}
    \begin{tikzpicture}[site/.style = {circle,
                                       draw = white,
                                       outer sep = 2pt,
                                       fill = black!80!white,
                                       inner sep = 2.5pt}]
        \node (m1) {};

        \node[site, label={[font=\small,text=darkgray]left:{$\sigma(3)$}}] (l1) [xshift = -2em] at (m1.center) {};
        \node[site, label={[font=\small,text=darkgray]right:{$1$}}] (r1) [xshift = 2em] at (m1.center) {};

        \node[site, label={[font=\small,text=darkgray]left:{$\sigma(1)$}}] (l2) [yshift = -2em] at (l1.center) {};
        \node[site, label={[font=\small,text=darkgray]right:{$2$}}] (r2) [yshift = -2em] at (r1.center) {};
        
        \node[site, label={[font=\small,text=darkgray]left:{$\sigma(2)$}}] (l3) [yshift = -2em] at (l2.center) {};
        \node[site, label={[font=\small,text=darkgray]right:{$3$}}] (r3) [yshift = -2em] at (r2.center) {};

        \node[inner sep = 6pt]  (l4) [xshift = -3.25em, yshift = -2em] at (l3.center) {};
        \node[inner sep = 6pt]  (r4) [xshift = 2em, yshift = -2em] at (r3.center) {};
        
        \draw[-, line width = 4.5pt, draw = white] (l2) to (r1);
        \draw[-, line width = 2.5pt, draw = black] (l2) to (r1);
         
        \draw[-, line width = 4.5pt, draw = white] (l3) to (r2);
        \draw[-, line width = 2.5pt, draw = black] (l3) to (r2);
         
        \draw[-, line width = 4.5pt, draw = white] (l1) to (r3);
        \draw[-, line width = 2.5pt, draw = black] (l1) to (r3);

        \draw[<-, line width = 3pt, draw = gray] (l4) to (r4);
    \end{tikzpicture}
\end{equation*}
represents the permutation $(1\:2\:3)$ of the symmetric group $\mathfrak{S}_3$. 

The composition, denoted by $\emph{\sigma \circ \tau}$, of two permutations $\sigma$ and $\tau$ of the symmetric group $\mathfrak{S}_k$, is obtained by positioning the diagram of $\tau$ immediately to the right of the diagram of $\sigma$, and subsequently associating the leftmost column of $\tau$'s diagram with the rightmost column of $\sigma$'s diagram. For example, consider the following two permutations $\sigma \coloneqq (1 \: 2 \: 4)$ and $\tau \coloneqq (1 \: 2)(3)$, which are elements of the symmetric group $\mathfrak{S}_3$,
\begin{equation*}
    % [inline block 0: 3 envs, 8491 chars -> data_tex | \begin{tikzpicture}[site/.style = {circle,                                        draw = white,...]

\end{equation*}

Let $\sigma$ be an element of the symmetric group $\mathfrak{S}k$. The \emph{cycle type} associated with $\sigma$, denoted by $\lambda$, is defined as the $l$-tuple containing the lengths of the $l$ disjoint \emph{cycles} composing $\sigma$, arranged in non-increasing order. As a consequence, the cycle type $\lambda$ corresponds to a \emph{partition} of the integer $k$ into $l$ parts, denoted by $\emph{\lambda \vdash k}$. This partition $\lambda$ obeys the following conditions:
\begin{equation*}
    \lambda_1 \geq \cdots \geq \lambda_l \quad \text{ and } \quad \sum^l_{i = 1} \lambda_i = k.
\end{equation*}
Given a partition $\lambda \vdash k$, let $\emph{\lambda^\prime}$ denote the \emph{conjugate} partition associated with $\lambda$, defined by: $\lambda^\prime_i$ is the number of parts in $\lambda$ that are greater than or equal to $i$. A partition $\lambda \vdash k$ may be represented as a \emph{Young diagram}, which is a collection of $k$ empty boxes arranged in left-justified rows such that the $i$-th row contains $\lambda_i$ boxes. The conjugate partition $\lambda^\prime$, is the partition corresponding to transposing the Young diagram representing $\lambda$. For example, consider the permutation $\sigma \in \mathfrak{S}_7$, expressed as a product of disjoint cycles, arranged in non-increasing order, by
\begin{equation*}
    \sigma \coloneqq (1 \: 2 \: 3 \: 4) (5 \: 6 \: 7).
\end{equation*}
The cycle type of $\sigma$ is the partition $\lambda \vdash 7$ given by $\lambda \coloneqq (4,3)$, where the $i$-th entry of $\lambda$ denotes the length of the $i$-th cycle in the disjoint cycle decomposition of $\sigma$. Moreover, the partition $\lambda$ can be represented using a Young diagram, which consists of $2$ rows with $4$ and $3$ boxes, respectively
\begin{equation*}
    \ydiagram{4, 3}
\end{equation*}
The conjugate partition $\lambda^\prime \coloneqq (2,2,2,1)$ is represented using a Young diagram, which consists of $4$ rows with $2$, $2$, $2$, and $1$ boxes, respectively:
\begin{equation*}
    \ydiagram{2, 2, 2, 1}
\end{equation*}
Note that $\lambda^\prime_1$ is the length of the first column of $\lambda$. Within the context of the symmetric group $\mathfrak{S}_k$, the concept of cycle type plays a crucial role in characterizing \emph{conjugacy classes}. Specifically, two permutations in $\mathfrak{S}_k$ are said to be \emph{conjugate} if and only if their respective cycle types are identical.

\begin{remark*}
    It is essential to note that the symmetric group $\mathfrak{S}_k$, defined as permutations of the set $\{1, \ldots, k\}$ or as the above diagrams conveys the same underlying mathematical structure. These two forms are equivalent and can be employed interchangeably.
\end{remark*}

The \emph{group algebra} of the symmetric group $\mathfrak{S}_k$, denoted by $\emph{\mathbb{S}_k}$, is the complex vector space spanned by the permutations of $\mathfrak{S}_k$, i.e.
\begin{equation*}
    \mathbb{S}_k \coloneqq \Span_{\mathbb{C}} \1\{ \sigma \in \mathfrak{S}_k \1\}.
\end{equation*}
The multiplication in the group algebra $\mathbb{S}_k$, is defined on the elements of the symmetric group $\mathfrak{S}_k$ by its group law, and is denoted $\emph{\sigma \cdot \tau}$, for some $\sigma$ and $\tau$ in $\mathfrak{S}_k$.

\begin{remark*}
    The symmetric group algebra $\mathbb{S}_k$ is a finite group algebra. As a consequence, the symmetric group algebra $\mathbb{S}_k$ is always semisimple \cite{serre1977linear,fulton2013representation}.
\end{remark*}

\subsection{Partition algebra \texorpdfstring{$\mathbb{P}_k(\delta)$}{}} \label{chap2:sec:partitionAlgebra}

The \emph{partition monoid}, denoted $\mathbb{P}_k$, is a diagrammatic monoid generated by the $3 \times (k-1)$ diagrams
\begin{equation*}
    \hskip \textwidth minus \textwidth
    % [inline block 1: 4 envs, 10961 chars -> data_tex | \begin{tikzpicture}[site/.style = {circle,                                        draw = white,...]

\end{equation*}
for all $1 \leq i \leq k$. These $4k - 3$ diagrams distributed in $4$ distinct collections do not constitute a minimal generating set, as it is possible to choose a single representative diagram from each collection and subsequently use the transpositions to generate the remaining diagrams in the collection.

An element of $\mathbb{P}_k$ is a \emph{partition} of the set $\{ 1, \ldots, 2k \}$, corresponding to the connected components of the associated diagram, where then enumeration of vertices located in the right column ranges from $1$ to $k$, while the enumeration of vertices situated in the left column ranges from $k+1$ to $2k$. As a monoid, $\mathbb{P}_k$ has an identity $1_{\mathbb{P}_k}$ given by the partition $1\,(k+1)\:|\:\cdots\:|\:k\,(2k)$:
\begin{equation*}
    \begin{tikzpicture}[site/.style = {circle,
                                       draw = white,
                                       outer sep = 2pt,
                                       fill = black!80!white,
                                       inner sep = 2.5pt}]
        \node[inner sep = 5pt] (m1) {};
        \node[inner sep = 5pt] (m2) [yshift = -1.75em] at (m1.center) {};
        \node[inner sep = 5pt] (m3) [yshift = -1.75em] at (m2.center) {};

        \node[site, label={[font=\small,text=darkgray]left:{$k+1$}}] (l1) [xshift = -2em] at (m1.center) {};
        \node[site, label={[font=\small,text=darkgray]right:{$1$}}] (r1) [xshift = 2em] at (m1.center) {};

        \node[label={left:{$1_{\mathbb{P}_k} \coloneqq \phantom{k+1}$}}] (l2) [yshift = -1.75em] at (l1.center) {};
        \node (r2) [yshift = -1.75em] at (r1.center) {};
        
        \node[site, label={[font=\small,text=darkgray]left:{$2k$}}] (l3) [yshift = -1.75em] at (l2.center) {};
        \node[site, label={[font=\small,text=darkgray]right:{$k$}}] (r3) [yshift = -1.75em] at (r2.center) {};

        \draw[-, line width = 4.5pt, draw = white] (l1) to (r1);
        \draw[-, line width = 2.5pt, draw = black] (l1) to (r1);

        \draw[-, line width = 4.5pt, draw = white] (l3) to (r3);
        \draw[-, line width = 2.5pt, draw = black] (l3) to (r3);

        \draw[-, dotted, line width = 2pt, draw = gray] (m3) to (m1);
    \end{tikzpicture}
\end{equation*}

Consider two partitions $p$ and $q$ of $\mathbb{P}_k$, the composition $\emph{p \circ q}$ is obtained by positioning the diagram of $q$ immediately to the right of the diagram of $p$, associating the leftmost column of $q$'s diagram with the rightmost column of $p$'s diagram, and finally removing any \emph{loops}, which are the components of the resulting diagram not connected to either the left or the right column. For example, given the two partitions $p \coloneqq 1\,3\:|\:2\,6\:|\:4\,5$ and $q \coloneqq 1\,2\:|\:3\,5\:|\:4\,6$ of $\mathbb{P}_3$,
\begin{equation*}
    % [inline block 2: 2 envs, 6599 chars -> data_tex | \begin{tikzpicture}[site/.style = {circle,                                        draw = white,...]

\end{equation*}
where the gray loop is removed.

\begin{remark*}
    Considering the generators of the partition monoid, the inclusion of diagrams $\mathfrak{S}_k \subseteq \mathbb{P}_k$ holds for every $k \in \mathbb{N}$. However, it is important to note that the partition monoid $\mathbb{P}_k$ does not constitute a group, e.g. the partition $1\,2\:|\:3\,6\:|\:4\,5$ of $\mathbb{P}_3$ has no inverse with respect to the composition in $\mathbb{P}_3$.
\end{remark*}

The order of the partition monoid $\mathbb{P}_k$ is the number of partition of the set $\{ 1, \ldots, 2k \}$, denoted as the even \emph{Bell number} $\mathrm{B}_{2k}$. In general, the Bell number $\mathrm{B}_k$ is given by a recursive formula, with initial condition $\mathrm{B}_0 \coloneqq 1$, and 
\begin{equation*}
    \mathrm{B}_{k + 1} = \sum^k_{i=0} \binom{k}{i} \mathrm{B}_i.
\end{equation*}
Starting at $k = 0$, the first values of the Bell numbers are \cite[sequence A000110]{oeis}:
\begin{equation*}
    1, 1, 2, 5, 15, 52, 203, 877, 4140, \ldots
\end{equation*}

The \emph{partition algebra}, denoted by $\emph{\mathbb{P}_k(\delta)}$ is defined for some $\delta \in \mathbb{C}$, as the complex vector space spanned by the diagrams of $\mathbb{P}_k$, i.e.
\begin{equation*}
    \mathbb{P}_k(\delta) \coloneqq \Span_{\mathbb{C}} \1\{ p \in \mathbb{P}_k \1\}.
\end{equation*}
The multiplication in $\mathbb{P}_k(\delta)$, given two elements $p$ and $q$ of $\mathbb{P}_k$, is denoted by $\emph{p\cdot q}$ and is defined by $p\cdot q \coloneqq \delta^l (p \circ q)$, where $l$ is the number of loops removed during the composition in $\mathbb{P}_k$. For example, let $p \coloneqq 1\,4\:|\:2\,8\:|\:3\,5\:|\:6\,7\:|\:9\,10$ and $q \coloneqq 1\,2\:|\:3\,5\:|\:4\,7\:|\:6\,9\:|\:8\,10$ be two partitions of $\mathbb{P}_5$,
\begin{equation*}
    % [inline block 3: 2 envs, 10718 chars -> data_tex | \begin{tikzpicture}[site/.style = {circle,                                        draw = white,...]

\end{equation*}
where the number of gray loops removed is $2$.

\begin{remark*}
    The partition algebra $\mathbb{P}_k(\delta)$ is not semisimple for all $\delta \in \mathbb{C}$. Specifically, it is semisimple if and only if $\delta$ belongs to the set $\mathbb{C} \setminus \{ 0, \ldots, 2k - 2 \}$ \cite{martin1994algebras,halverson2005partition}.
\end{remark*}

\subsection{Others diagrammatic algebras}

In \hyperref[chap2:sec:symmetricGroupAlgebra]{Section~\ref*{chap2:sec:symmetricGroupAlgebra}}, the focus was on the symmetric group $\mathfrak{S}_k$, generated by by $k-1$ permutation diagrams. Then, in \hyperref[chap2:sec:partitionAlgebra]{Section~\ref*{chap2:sec:partitionAlgebra}}, attention was turned to the partition monoid $\mathbb{P}_k$, generated by by $4k-3$ diagrams distributed in $4$ distinct collections.

The purpose of the present Section is to highlight the relationships between the $4$ collections of diagrams generating the partition monoid $\mathbb{P}_k$, and the algebraic structures that emerge. Specifically, the choice of specific collections may yield distinct monoids.

\begin{remark*}
    A monoid is a unitary semigroup. To obtain the identity diagram $1_{\mathbb{P}_k}$ of the partition monoid $\mathbb{P}_k$, corresponding to the partition $1\,(k+1)\:|\:\cdots\:|\:k\,(2k)$, the collection of transpositions is required,  as none of the $3$ other collections is composed of invertible elements.
\end{remark*}

\subsubsection{Uniform block permutation algebra \texorpdfstring{$\mathbb{U}_k(\delta)$}{}}

The \emph{uniform block permutation monoid}, denoted by $\emph{\mathbb{U}_k}$, is the submonoid of $\mathbb{P}_k$ generated by the $2$ collections of diagrams
\begin{equation*}
    \hskip \textwidth minus \textwidth
    % [inline block 4: 2 envs, 5842 chars -> data_tex | \begin{tikzpicture}[site/.style = {circle,                                        draw = white,...]

    \hskip \textwidth minus \textwidth
\end{equation*}
for all $1 \leq i < k$.

The elements of the uniform block permutation monoid $\mathbb{U}_k$ are precisely those elements from $\mathbb{P}_k$ that satisfy the following condition: the number of vertices located on the left column equals the number of vertices located on the right column for each connected component of the diagram. This condition is a consequence of the fact that the $2$ collections generating $\mathbb{U}_k$ satisfy it and that it is preserved under multiplication. As a consequence, each connected component contains vertices situated in both the left and right columns. For example, the partition $p \coloneqq 1\,2\,4\,5\:|\:3\,6$ is in $\mathbb{U}_3$,
\begin{equation*}
    % [inline block 5: 2 envs, 3180 chars -> data_tex | \begin{tikzpicture}[site/.style = {circle,                                        draw = white,...]

\end{equation*}

The order of the uniform block permutation monoid $\mathbb{U}_k$ is given by a recursive formula \cite{sixdeniers2001extended}, with initial condition $\1| \mathbb{U}_0 \1| \coloneqq 1$, and 
\begin{equation*}
    \1| \mathbb{U}_{k+1} \1| = \sum^k_{i=1} \binom{k}{i} \binom{k + 1}{i} \1| \mathbb{U}_i \1|.
\end{equation*}
Starting at $k = 0$, the first values of $\1| \mathbb{U}_k \1|$ are \cite[sequence A023998]{oeis}:
\begin{equation*}
    1, 1, 3, 16, 131, 1496, \ldots
\end{equation*}

The \emph{uniform block permutation algebra}, denoted by $\emph{\mathbb{U}_k(\delta)}$ is defined for some $\delta \in \mathbb{C}$, as the subalgebra of the partition algebra $\mathbb{P}_k(\delta)$, spanned by the element of the uniform block permutation monoid $\mathbb{U}_k$, i.e
\begin{equation*}
    \mathbb{U}_k(\delta) \coloneqq \Span_{\mathbb{C}} \1\{ p \in \mathbb{U}_k \1\}.
\end{equation*}
The multiplication in $\mathbb{U}_k(\delta)$ does not yield any loops. Consequently, all the uniform block permutation algebras are isomorphic, for all $\delta \in \mathbb{C}$.

\begin{remark*}
    The uniform block permutation monoid $\mathbb{U}_k$ is a finite \emph{inverse monoid}, i.e. forall $x \in \mathbb{U}_k$ there exists a unique $x^* \in \mathbb{U}_k$ satifying $x \circ x^* \circ x = x$ and $x^* \circ x \circ x^* = x^*$. As a consequence, all the uniform block permutations algebras $\mathbb{U}_k(\delta)$ are semisimple for all $\delta \in \mathbb{C}$ \cite{orellana2021plethysm,steinberg2016representation}.
\end{remark*}

\subsubsection{Brauer algebra \texorpdfstring{$\mathbb{B}_k(\delta)$}{}}

The \emph{Brauer monoid}, denoted by $\emph{\mathbb{B}_k}$, is the submonoid of $\mathbb{P}_k$ generated by the $2$ collections of diagrams
\begin{equation*}
    \hskip \textwidth minus \textwidth
    % [inline block 6: 2 envs, 5478 chars -> data_tex | \begin{tikzpicture}[site/.style = {circle,                                        draw = white,...]

    \hskip \textwidth minus \textwidth
\end{equation*}
for all $1 \leq i < k$.

The elements of the Brauer monoid $\mathbb{B}_k$ are precisely all the pairings on the set $\{1, \ldots, 2k\}$. Specifically, each vertex of a diagram in $\mathbb{B}_k$ has precisely a degree of $1$. Given a diagram of $\mathbb{B}_k$, the \emph{vertical edges} are the edges that connect vertices within the same column, whereas the \emph{horizontal edges} are the edges that connect vertices of both the left and right columns. For example, the partition $p \coloneqq 1\,3\:|\:2\,6\:|\:4\,5$ is in $\mathbb{B}_3$,
\begin{equation*}
    \begin{tikzpicture}[site/.style = {circle,
                                       draw = white,
                                       outer sep = 2pt,
                                       fill = black!80!white,
                                       inner sep = 2.5pt}]
        \node (m1) {};
        \node (m2) [yshift = -1.5em] at (m1.center) {};
        \node (m3) [yshift = -1.5em] at (m2.center) {};

        \node[site] (l1) [xshift = -2em] at (m1.center) {};
        \node[site] (r1) [xshift = 2em] at (m1.center) {};

        \node[site, label={left:{$p \coloneqq$}}] (l2) [yshift = -1.5em] at (l1.center) {};
        \node[site] (r2) [yshift = -1.5em] at (r1.center) {};
        
        \node[site] (l3) [yshift = -1.5em] at (l2.center) {};
        \node[site] (r3) [yshift = -1.5em] at (r2.center) {};

        \draw[-, line width = 4.5pt, draw = white] (r1) .. controls (m1) and (m3) .. (r3);
        \draw[-, line width = 2.5pt, draw = black] (r1) .. controls (m1) and (m3) .. (r3);

        \draw[-, line width = 4.5pt, draw = white] (l1) .. controls (m1) and (m2) .. (l2);
        \draw[-, line width = 2.5pt, draw = black] (l1) .. controls (m1) and (m2) .. (l2);

        \draw[-, line width = 4.5pt, draw = white] (l3) to (r2);
        \draw[-, line width = 2.5pt, draw = black] (l3) to (r2);
    \end{tikzpicture}
\end{equation*}

The order of the Brauer monoid $\mathbb{B}_k$ is given by the \emph{odd factorial},
\begin{equation*}
    \1| \mathbb{B}_k \1| = (2k - 1)!!.
\end{equation*}
Starting at $k = 0$, the first values of $\1| \mathbb{B}_k \1|$ are \cite[sequence A001147]{oeis}:
\begin{equation*}
    1, 1, 3, 15, 105, 945, 10395, \ldots
\end{equation*}

The \emph{Brauer algebra}, denoted by $\emph{\mathbb{B}_k(\delta)}$ is defined for some $\delta \in \mathbb{C}$, as the subalgebra of the partition algebra $\mathbb{P}_k(\delta)$, spanned by the element of the Brauer monoid $\mathbb{B}_k$, i.e
\begin{equation*}
    \mathbb{B}_k(\delta) \coloneqq \Span_{\mathbb{C}} \1\{ p \in \mathbb{B}_k \1\}.
\end{equation*}

\begin{remark*}
    The Brauer algebra $\mathbb{B}_k(\delta)$ is not semisimple for all $\delta \in \mathbb{C}$. Specifically it is semisimple if and only if one of the following conditions holds \cite{wenzl1988structure,doran1999semisimplicity,rui2005criterion,rui2006criterion,andersen2017semisimplicity}:
    \begin{itemize}
        \item $\delta = 0$ and $k \in \{1, 3, 5\}$;
        \item $\delta \in \mathbb{Z} \setminus \{ 0 \}$ and $k \leq |\delta| + 1$;
        \item $\delta \not\in \mathbb{Z}$.
    \end{itemize}
\end{remark*}

\subsubsection{Walled Brauer algebra \texorpdfstring{$\mathbb{B}_ {k,l}(\delta)$}{}}

The \emph{walled Brauer monoid}, denoted by $\emph{\mathbb{B}_{k,l}}$, is the submonoid of $\mathbb{B}_{k+l}$ generated by the $2$ collections of diagrams
\begin{equation*}
    \hskip \textwidth minus \textwidth
    % [inline block 7: 3 envs, 11112 chars -> data_tex | \begin{tikzpicture}[site/.style = {circle,                                        draw = white,...]

\end{equation*}

The \emph{wall} of the walled Brauer monoid $\mathbb{B}_{k,l}$ denotes the vertical separation between the uppermost $2k$ vertices and the lowermost $2l$ vertices. The diagram elements of $\mathbb{B}_{k,l}$ are precisely those elements from $\mathbb{B}_{k+l}$ that satisfy the following condition: every vertical edge must cross the wall, while no horizontal edge shall cross the wall. The condition arises from the fact that the $k + l - 1$ diagrams generating $\mathbb{B}_{k,l}$ satisfy the condition, in addition to preserving this property under multiplication. For example, the partition $p \coloneqq 1\,4\:|\:2\,7\:|\:3\,9\:|\:5\,10\:|\:6\,12\:|\:8\,11$ is in $\mathbb{B}_{3, 3}$,
\begin{equation*}
    % [inline block 8: 2 envs, 5585 chars -> data_tex | \begin{tikzpicture}[site/.style = {circle,                                        draw = white,...]

\end{equation*}

\begin{remark*}
    Given the generators of the walled Brauer monoid $\mathbb{B}_{k,l}$, the inclusion of diagrams $\mathfrak{S}_k \times \mathfrak{S}_l \subseteq \mathbb{B}_{k,l}$ holds for every $k,l \in \mathbb{N}$. The diagrams of $\mathfrak{S}_k \times \mathfrak{S}_l$ consists of those from $\mathbb{B}_{k,l}$ with no edges crossing the wall.
\end{remark*}

The operation denoted as \emph{partial transposition} corresponds to the process that exchanges vertex $i$ and vertex $k+l+i$, both situated on the same row. However, it should be noted that the diagram obtained after performing such an operation may not necessarily belong to $\mathbb{B}_{k,l}$ anymore. For example, the transposition of the $1$-st row of the partition $1\,4\:|\:2\,7\:|\:3\,9\:|\:5\,10\:|\:6\,12\:|\:8\,11$ in $\mathbb{B}_{3, 3}$ becomes,
\begin{equation*}
    % [inline block 9: 1 envs, 5657 chars -> data_tex | \begin{tikzpicture}[site/.style = {circle,                                        draw = white,...]

\end{equation*}
which correspond to the partition $1\,2\:|\:3\,9\:|\:4\,7\:|\:5\,10\:|\:6\,12\:|\:8\,11$ in $\mathbb{B}_6$ but not in $\mathbb{B}_{3, 3}$.

The partial transposition involving the $l$ lowest vertices constitutes a one-to-one mapping from the walled Brauer monoid $\mathbb{B}_{k,l}$ to the symmetric group $\mathfrak{S}_{k+l}$. As a consequence,
\begin{equation*}
    \1| \mathbb{B}_{k,l} \1| = (k+l)!.
\end{equation*}

The \emph{walled Brauer algebra}, denoted by $\emph{\mathbb{B}_{k,l}(\delta)}$ is defined for some $\delta \in \mathbb{C}$, as the subalgebra of the Brauer algebra $\mathbb{B}_{k+l}(\delta)$, spanned by the element of the walled Brauer monoid $\mathbb{B}_{k,l}$, i.e
\begin{equation*}
    \mathbb{B}_{k,l}(\delta) \coloneqq \Span_{\mathbb{C}} \1\{ p \in \mathbb{B}_{k,l} \1\}.
\end{equation*}

\begin{remark*}
    The walled Brauer algebra $\mathbb{B}_{k,l}(\delta)$ is not semisimple for all $\delta \in \mathbb{C}$. Specifically it is semisimple if and only if one of the following conditions hold \cite{cox2008blocks,bulgakova2020some}:
    \begin{itemize}
        \item $k = 0$ or $l = 0$;
        \item $\delta = 0$ and $(k, l) \in \1\{ (1, 2), (2, 1), (1, 3), (3, 1) \1\}$;
        \item $\delta \in \mathbb{Z} \setminus \{ 0 \}$ and $k + l \leq |\delta| + 1$;
        \item $\delta \not\in \mathbb{Z}$.
    \end{itemize}
\end{remark*}

\section{Tensor representation} \label{chap2:sec:tensorRepresentation}

The diagrammatic algebras described in \hyperref[chap2:sec:diagrammaticAlgebra]{Section~\ref*{chap2:sec:diagrammaticAlgebra}} act on the $d^n$-dimensional tensor product complex vector space ${\1( \mathbb{C}^d \1)}^{\otimes n}$ by considering the mapping $\emph{\psi}$, called \emph{tensor representation}, from the diagrammatic monoids on $2n$ vertices to $\mathcal{M}_{d^n}$, and defined for each diagram $p$ by,
\begin{equation*}
    {\1( \psi(p) \1)}^{i_1, \ldots, i_n}_{i_{n+1}, \ldots, i_{2n}} \coloneqq
    \begin{cases}
    1 &\text{ if $i_k = i_l$, for all vertices $k$ and $l$ connected in $p$} \\
    0 &\text{ otherwise},
    \end{cases}
\end{equation*}
and linearly extended to the entire diagrammatic algebra. For example, let the partition $p \coloneqq 1\,3\:|\:2\,6\:|\:4\,5$ in the Brauer algebra $\mathbb{B}_3(d)$,
\begin{equation*}
    % [inline block 10: 8 envs, 7430 chars -> data_tex | \begin{tikzpicture}[site/.style = {circle,                                        draw = white,...]
.
\end{equation*}

For the symmetric group $\mathfrak{S}_n$, spanning the symmetric group algebra $\mathbb{S}_n$, this action corresponds to the permutation of the tensor positions, i.e. for all $\sigma \in \mathfrak{S}_n$ and all $v_1, \ldots, v_n$ in $\mathbb{C}^d$, the tensor representation of $\sigma$ on ${\1( \mathbb{C}^d \1)}^{\otimes n}$ gives the action
\begin{equation*}
    \psi(\sigma) \cdot (v_1 \otimes \cdots \otimes v_n) = v_{\sigma^{\shortminus 1}(1)} \otimes \cdots \otimes v_{\sigma^{\shortminus 1}(n)}.
\end{equation*}

The partial transposition operation on a row, for a given diagram $p$, corresponds to the partial transposition of a tensor, for the matrix $\psi(p)$. For example, let the partition $1\,2\:|\:3\,4$ in the Brauer algebra $\mathbb{B}_2(2)$, then taking the partial transposition on the $1$-st row gives:
\begin{equation*}
    \psi \4(
    % [inline block 11: 6 envs, 7924 chars -> data_tex | \begin{tikzpicture}[baseline = (baseline.center),                         site/.style = {circle,...]

\end{equation*}
with $2$ loops, then the trace of $\psi(p)$ becomes $\Tr \1[ \psi(p) \1] = d^2$.

The tensor representation of a diagrammatic algebra is in general non-faithfull. For example, let $\mathbb{S}_3$ be the symmetric group algebra with the its tensor representation on $\mathbb{C}^2 \otimes \mathbb{C}^2 \otimes \mathbb{C}^2$ given by the map $\psi$, and define $\emph{\mathrm{sign}(\sigma)}$ to be a \emph{signature} of the permutation $\sigma \in \mathfrak{S}_3$, then
\begin{equation*}
    \sum_{\sigma \in \mathfrak{S}_3} \mathrm{sign}(\sigma) \cdot \psi(\sigma) = 0.
\end{equation*}

\begin{remark*}
    In certain cases, diagrammatic algebras may not exhibit semisimplicity. However, the algebra $\mathcal{A}$ defined by the complex span of $\psi(p)$, for all $p$ in some diagrammatic monoid, always exhibits semisimplicity, as a matrix algebra:
    \begin{equation*}
        \mathcal{A} \simeq \mathcal{M}^{\oplus n_1}_{d_1} \oplus \cdots \oplus \mathcal{M}^{\oplus n_k}_{d_k},
    \end{equation*}
    with some multiplicities $n_i$ and dimensions $d_i$. In this basis, an element $A \in \mathcal{A}$ is written,
    \begin{equation*}
        A \simeq I_{n_1} \otimes A_1 \oplus \cdots \oplus I_{n_k} \otimes A_k.
    \end{equation*}
\end{remark*}

\section{Schur-Weyl dualities} \label{chap2:sec:SchurWeylDualities}

\subsection{Commutant}

Given a matrix algebra $\mathcal{A} \subseteq \mathcal{M}_d$, the \emph{commutant} of $\mathcal{A}$, denoted $\emph{\mathcal{A}^\prime}$, is the set of matrices that commute with all elements of $\mathcal{A}$:
\begin{equation*}
    \mathcal{A}^\prime \coloneqq \set{M \in \mathcal{M}_d}{MA = AM, \text{ for all } A \in \mathcal{A}}.
\end{equation*}

\begin{theorem}[\cite{serre1977linear,fulton2013representation}] \label{chap1:thm:doubleCommutant}
    Let $\mathcal{A}$ be a matrix algebra, and $\mathcal{B}$ the commutant of $\mathcal{A}$. Suppose $\mathcal{A}$ decomposes as $\mathcal{A} \simeq \mathcal{M}^{\oplus n_1}_{d_1} \oplus \cdots \oplus \mathcal{M}^{\oplus n_k}_{d_k}$. Then for all $A \in \mathcal{A}$ and $B \in \mathcal{B}$, 
    \begin{align*}
        A &\simeq \bigoplus^k_{i=1} I_{n_i} \otimes A_i,\\
        B &\simeq \bigoplus^k_{i=1} B_i \otimes I_{d_i}.
    \end{align*}
    Furthermore both $\mathcal{A}$ and $\mathcal{B}$ are commutants of each other, i.e. $\mathcal{B} = \mathcal{A}^\prime$ and $\mathcal{A} = \mathcal{B}^\prime$.
\end{theorem}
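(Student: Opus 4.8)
The plan is to pass from the abstract Wedderburn decomposition of $\mathcal{A}$ to a concrete realization of its action on $\mathbb{C}^d$, and then to compute the commutants block by block using Schur's lemma.

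First I would fix a concrete model for the module $\mathbb{C}^d$. The hypothesis $\mathcal{A} \simeq \mathcal{M}_{d_1}^{\oplus n_1}\oplus\cdots\oplus\mathcal{M}_{d_k}^{\oplus n_k}$ identifies $\mathcal{A}$ as a semisimple algebra whose simple modules are the defining modules $\mathbb{C}^{d_1},\dots,\mathbb{C}^{d_k}$ of the matrix factors. Since $\mathbb{C}^d$ is a faithful unital $\mathcal{A}$-module, it decomposes into isotypic components, and the multiplicity of the $i$-th simple module is exactly $n_i$; choosing a basis adapted to this decomposition gives a linear isomorphism $\mathbb{C}^d \simeq \bigoplus_{i=1}^k \mathbb{C}^{n_i}\otimes\mathbb{C}^{d_i}$ under which every $A\in\mathcal{A}$ acts as $A \simeq \bigoplus_{i=1}^k I_{n_i}\otimes A_i$, with $A_i\in\mathcal{M}_{d_i}$ arbitrary and chosen independently. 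This already establishes the stated form of $A$, and it is the form I would use throughout.

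Next I would compute $\mathcal{B}=\mathcal{A}^\prime$ in this model. Given $B\in\mathcal{B}$, write it as a block matrix $B=(B_{ij})$ relative to $\bigoplus_i\mathbb{C}^{n_i}\otimes\mathbb{C}^{d_i}$. The central idempotents $z_i\in\mathcal{A}$ (the identities of the summands $\mathcal{M}_{d_i}^{\oplus n_i}$, realized as the projections onto the isotypic components) satisfy $z_i B z_j = \delta_{ij}\, z_i B$, which forces $B_{ij}=0$ for $i\neq j$; hence $B$ is block diagonal, $B=\bigoplus_i B_{ii}$. Within the $i$-th block, $B_{ii}\in\mathcal{M}_{n_i}\otimes\mathcal{M}_{d_i}$ commutes with $I_{n_i}\otimes A_i$ for every $A_i\in\mathcal{M}_{d_i}$. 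Here I would invoke the single-block lemma: the commutant of $\{I_n\otimes M : M\in\mathcal{M}_d\}$ inside $\mathcal{M}_n\otimes\mathcal{M}_d$ equals $\mathcal{M}_n\otimes I_d$. This is the crux of the argument and the one place where something genuinely must be proved; I would establish it either by expanding $B_{ii}=\sum_{a,b}C_{ab}\otimes E_{ab}$ in the matrix units $E_{ab}$ of $\mathcal{M}_{d_i}$ and imposing commutation with each $I_{n_i}\otimes E_{cd}$ to force $C_{ab}=\delta_{ab}B_i$, or more conceptually by Schur's lemma, noting that an $\mathcal{M}_{d_i}$-linear endomorphism of $(\mathbb{C}^{d_i})^{\oplus n_i}$ is an $n_i\times n_i$ matrix whose entries are scalar intertwiners of the irreducible $\mathbb{C}^{d_i}$. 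Either way $B_{ii}=B_i\otimes I_{d_i}$, giving $B\simeq\bigoplus_i B_i\otimes I_{d_i}$ as claimed, and in particular $\mathcal{B}\simeq\mathcal{M}_{n_1}^{\oplus d_1}\oplus\cdots\oplus\mathcal{M}_{n_k}^{\oplus d_k}$.

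Finally I would deduce the double-commutant identities. The equality $\mathcal{B}=\mathcal{A}^\prime$ is simply the definition of $\mathcal{B}$. For $\mathcal{A}=\mathcal{B}^\prime$, the key observation is that $\mathcal{B}$ has exactly the same structure as $\mathcal{A}$ with the two tensor factors of each component interchanged: its simple modules are the multiplicity spaces $\mathbb{C}^{n_i}$, now appearing with multiplicity $d_i$. Running the identical block-diagonalization and single-block argument of the previous paragraph for $\mathcal{B}$ shows that $\mathcal{B}^\prime=\{\bigoplus_i I_{n_i}\otimes C_i : C_i\in\mathcal{M}_{d_i}\}$, which is exactly $\mathcal{A}$. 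The inclusion $\mathcal{A}\subseteq\mathcal{B}^\prime$ is immediate from the explicit forms, since $(I_{n_i}\otimes A_i)(B_i\otimes I_{d_i})=(B_i\otimes I_{d_i})(I_{n_i}\otimes A_i)$ in each block, and the commutant computation supplies the reverse inclusion. I expect the only real work to be the single-block commutant lemma; the passage between the abstract and concrete descriptions and the reuse of the argument for $\mathcal{B}$ are routine bookkeeping.
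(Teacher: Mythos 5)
The paper offers no proof of this theorem --- it is stated with citations to \cite{serre1977linear,fulton2013representation} --- so there is nothing internal to compare against; judged on its own, your argument is correct and is precisely the standard one from those references: isotypic decomposition of $\mathbb{C}^d$ to obtain the block form of $\mathcal{A}$, the single-block lemma that the commutant of $I_n \otimes \mathcal{M}_d$ inside $\mathcal{M}_n \otimes \mathcal{M}_d$ is $\mathcal{M}_n \otimes I_d$ (your matrix-unit computation is the correct crux, and the Schur's-lemma variant is equally fine), and the symmetric rerun of the argument for $\mathcal{B}$ to obtain $\mathcal{B}^\prime = \mathcal{A}$. The one hypothesis worth flagging explicitly is that $\mathcal{A}$ acts unitally, i.e. $I_d \in \mathcal{A}$, which you do assume when invoking semisimplicity of the module $\mathbb{C}^d$; without it, the subspace $(1 - 1_{\mathcal{A}})\mathbb{C}^d$ would contribute an extra full matrix block to the commutant and the stated form of $\mathcal{B}$ would fail.
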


\subsection{Schur-Weyl duality for \texorpdfstring{$\mathfrak{S}_n$}{the symmetric group}}

Let $\emph{\mathrm{GL}_d}$ be the \emph{complex general linear group} of degree $d$, which consists of the $d \times d$ invertible complex matrices acting on $\mathbb{C}^d$. This action extends diagonally to an action on the $d^n$-dimensional tensor product complex vector space ${\1( \mathbb{C}^d \1)}^{\otimes n}$, defined for $M \in \mathrm{GL}_d$ on tensor $v_1 \otimes \cdots \otimes v_n \in {\1( \mathbb{C}^d \1)}^{\otimes n}$ by,
\begin{equation*}
    M^{\otimes n} \cdot (v_1 \otimes \cdots \otimes v_n) = M \cdot v_1 \otimes \cdots \otimes M \cdot v_n,
\end{equation*}
and extended linearly.

Let $\mathcal{A}$ and $\mathcal{B}$ be the matrix algebras generated, respectively, by the actions of the symmetric group $\mathfrak{S}_n$ and the complex general linear group $\mathrm{GL}_d$, on the $d^n$-dimensional tensor product complex vector space ${\1( \mathbb{C}^d \1)}^{\otimes n}$, i.e.
\begin{align*}
    \mathcal{A} &\coloneqq \Span_{\mathbb{C}} \set{\psi(\sigma)}{\sigma \in \mathfrak{S}_n} \\
    \mathcal{B} &\coloneqq \Span_{\mathbb{C}} \set{M^{\otimes n}}{M \in \mathrm{GL}_d}.
\end{align*}

\begin{theorem*}[\cite{goodman2009symmetry}] \label{chap2:thm:commutantSchurWeylDuality}
    Both $\mathcal{A}$ and $\mathcal{B}$ are commutants of each other.
\end{theorem*}

The matrix algebra $\mathcal{A}$, generated by the tensor representation of the symmetric group $\mathfrak{S}_n$, can be decomposed as the direct sum,
\begin{equation*}
    \mathcal{A} \simeq \bigoplus_{\substack{\lambda \vdash n \\ \lambda^\prime_1 \leq d}} \mathcal{M}^{\oplus n_\lambda}_{d_\lambda},
\end{equation*}
indexed by the Young diagrams $\lambda$ with $n$ boxes and at most $d$ rows.\footnote{see \hyperref[appA]{Appendix~\ref*{appA}}} Then acording to \hyperref[chap1:thm:doubleCommutant]{Theorem~\ref*{chap1:thm:doubleCommutant}}, for all $A \in \mathcal{A}$ and $B \in \mathcal{B}$,
\begin{equation*}
    A \simeq \bigoplus_{\substack{\lambda \vdash n \\ \lambda^\prime_1 \leq d}} I_{n_\lambda} \otimes A_\lambda \qquad \text{ and } \qquad B \simeq \bigoplus_{\substack{\lambda \vdash n \\ \lambda^\prime_1 \leq d}} B_\lambda \otimes I_{d_\lambda},
\end{equation*}
where the $A_\lambda$ act on a space denoted $\emph{V_\lambda}$, and the $B_\lambda$ act on a space denoted $\emph{V^d_\lambda}$.

\begin{theorem}[Schur-Weyl duality \cite{schur1927rationalen,weyl1946classical}] \label{chap2:thm:SchurWeylDuality}
    The space of $n$-fold tensors over $\mathbb{C}^d$ decomposes under the action of the direct product group $\mathrm{GL}_d \times \mathfrak{S}_n$ as follows:
    \begin{equation*}
        {\1( \mathbb{C}^d \1)}^{\otimes n} \simeq \bigoplus_{\substack{\lambda \vdash n \\ \lambda^\prime_1 \leq d}} V^d_\lambda \otimes V_\lambda.
    \end{equation*}
\end{theorem}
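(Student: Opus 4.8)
The plan is to read off the decomposition as an immediate consequence of the commutant description already in hand, by feeding it into the abstract double commutant theorem. The two ingredients are: first, the commutant form of the duality established immediately above, namely that the algebras $\mathcal{A}$ and $\mathcal{B}$ generated by the tensor actions of $\mathfrak{S}_n$ and $\mathrm{GL}_d$ on $(\mathbb{C}^d)^{\otimes n}$ are mutual commutants; and second, the explicit block decomposition $\mathcal{A} \simeq \bigoplus_{\lambda \vdash n,\, \lambda'_1 \leq d} \mathcal{M}^{\oplus n_\lambda}_{d_\lambda}$, indexed by Young diagrams with $n$ boxes and at most $d$ rows, whose combinatorial justification is supplied by the representation theory of the symmetric group in Appendix~A.

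First I would invoke Theorem~\ref{chap1:thm:doubleCommutant} with this pair $(\mathcal{A}, \mathcal{B})$. Because $\mathcal{A}$ carries the stated block structure, the theorem yields, simultaneously for every $A \in \mathcal{A}$ and $B \in \mathcal{B}$, the forms $A \simeq \bigoplus_\lambda I_{n_\lambda} \otimes A_\lambda$ and $B \simeq \bigoplus_\lambda B_\lambda \otimes I_{d_\lambda}$. Reading off the underlying vector space rather than the operators, this exhibits a $\mathrm{GL}_d \times \mathfrak{S}_n$-equivariant isomorphism $(\mathbb{C}^d)^{\otimes n} \simeq \bigoplus_\lambda (\mathbb{C}^{n_\lambda} \otimes \mathbb{C}^{d_\lambda})$, in which the $\mathrm{GL}_d$ action operates on the left tensor factor $V^d_\lambda \coloneqq \mathbb{C}^{n_\lambda}$ of each block and the $\mathfrak{S}_n$ action operates on the right tensor factor $V_\lambda \coloneqq \mathbb{C}^{d_\lambda}$. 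This is precisely the asserted decomposition once the two factors are named.

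Then I would verify that each factor is irreducible under its respective group. Since $\mathcal{A}$ restricts on the block $\lambda$ to the full matrix algebra $\mathcal{M}_{d_\lambda}$ acting on $V_\lambda$, the image of $\mathbb{S}_n$ already fills out $\mathrm{End}(V_\lambda)$, so $V_\lambda$ is an irreducible $\mathfrak{S}_n$-representation; the symmetric statement, that $\mathcal{B}$ restricts to $\mathcal{M}_{n_\lambda}$ on $V^d_\lambda$, gives irreducibility of $V^d_\lambda$ as a $\mathrm{GL}_d$-representation. One should also record that distinct $\lambda$ index pairwise non-isomorphic blocks, which is what makes the sum a genuine isotypic decomposition rather than a mere direct sum.

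The conceptual work has all been front-loaded, so there is no real obstacle remaining at this stage: the genuinely hard inputs are the commutant statement, that $\mathcal{A}$ and $\mathcal{B}$ are each other's full commutant, together with the combinatorial identification of the blocks of $\mathcal{A}$ with partitions $\lambda \vdash n$ satisfying $\lambda'_1 \leq d$. Given those, the present theorem is essentially a translation, and the only point requiring care is the bookkeeping of which tensor factor carries which group action, together with the observation that the multiplicity space of one algebra is exactly the irreducible module of the other, which is precisely the content of the $I_{n_\lambda} \otimes A_\lambda$ versus $B_\lambda \otimes I_{d_\lambda}$ pattern in Theorem~\ref{chap1:thm:doubleCommutant}.
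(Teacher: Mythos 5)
Your proposal is correct and follows exactly the route the paper takes: the paper also presents Theorem~\ref{chap2:thm:SchurWeylDuality} as an immediate assembly of the commutant statement for $\mathcal{A}$ and $\mathcal{B}$, the block decomposition $\mathcal{A} \simeq \bigoplus_{\lambda \vdash n,\, \lambda^\prime_1 \leq d} \mathcal{M}^{\oplus n_\lambda}_{d_\lambda}$ (deferred to Appendix~\ref{appA}), and Theorem~\ref{chap1:thm:doubleCommutant}, reading off $V_\lambda$ and $V^d_\lambda$ as the two tensor factors of each block. Your added remark that fullness of each matrix block forces irreducibility of the corresponding factor is a point the paper leaves implicit, but it is the same argument in substance.
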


The diagonal action of $\mathrm{GL}_d$, on the $d^n$-dimensional tensor product complex vector space ${\1( \mathbb{C}^d \1)}^{\otimes n}$, can be restricted to the subgroup of the \emph{unitary group} of degree $d$, denoted $\emph{\mathrm{U}_d}$, which consists of the $d \times d$ unitary matrices acting on $\mathbb{C}^d$.

\begin{theorem} \label{chap2:thm:generalAUnitaryMatrixAlgebra}
    Let $\mathcal{C}$ be the matrix algebra generated by the action of the unitary group $\mathrm{U}_d$, on the $d^n$-dimensional tensor product complex vector space ${\1( \mathbb{C}^d \1)}^{\otimes n}$, i.e. $\mathcal{C} \coloneqq \Span_{\mathbb{C}} \set{U^{\otimes n}}{U \in \mathrm{U}_d}$. Then
    \begin{equation*}
        \mathcal{B} \simeq \mathcal{C}.
    \end{equation*}
\end{theorem}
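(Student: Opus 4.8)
The plan is to prove the two inclusions $\mathcal{C} \subseteq \mathcal{B}$ and $\mathcal{B} \subseteq \mathcal{C}$. The first is immediate: since $\mathrm{U}_d \subseteq \mathrm{GL}_d$, every generator $U^{\otimes n}$ of $\mathcal{C}$ is already a generator of $\mathcal{B}$, so $\mathcal{C} \subseteq \mathcal{B}$. All the content lies in the reverse inclusion, for which I would argue by duality inside the finite-dimensional space $\mathcal{M}_{d^n}$. Both $\mathcal{B}$ and $\mathcal{C}$ are linear subspaces, so it suffices to show that every linear functional annihilating $\mathcal{C}$ also annihilates $\mathcal{B}$; their annihilators then coincide and hence $\mathcal{B} = \mathcal{C}$.

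Concretely, I would represent a functional through the Frobenius pairing as $A \mapsto \scalar{T}{A}$ for a fixed $T \in \mathcal{M}_{d^n}$, and suppose $\scalar{T}{U^{\otimes n}} = 0$ for all $U \in \mathrm{U}_d$. The key observation is that $p \colon M \mapsto \scalar{T}{M^{\otimes n}}$ is a polynomial in the $d^2$ entries of $M$, since each entry of $M^{\otimes n}$ is a degree-$n$ monomial in the entries of $M$ with no complex conjugates; by hypothesis this $p$ vanishes on the whole unitary group $\mathrm{U}_d$. If I can conclude that $p$ vanishes on all of $\mathcal{M}_d$, then in particular $\scalar{T}{M^{\otimes n}} = 0$ for every $M \in \mathrm{GL}_d$, i.e. $T$ annihilates $\mathcal{B}$, which is exactly what is needed.

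The heart of the argument, and the step I expect to be the main obstacle, is therefore the density lemma: a polynomial in the matrix entries that vanishes on $\mathrm{U}_d$ vanishes identically on $\mathcal{M}_d$; equivalently, $\mathrm{U}_d$ is Zariski dense in $\mathrm{GL}_d$. To prove it I would fix a skew-Hermitian $K$ and a base point $U_0 \in \mathrm{U}_d$ and note that $t \mapsto p(U_0 e^{tK})$ is entire in $t$ and vanishes for all real $t$ (as $e^{tK} \in \mathrm{U}_d$), hence vanishes for all complex $t$ by the identity theorem in one variable. This propagates the vanishing of $p$ from $\mathrm{U}_d$ along every complexified one-parameter direction. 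Feeding this into the structural fact that the tangent directions $\mathfrak{u}_d$ (skew-Hermitian matrices) together with $i\,\mathfrak{u}_d$ (Hermitian matrices) span the full complex Lie algebra $\mathfrak{gl}_d$, one finds that the Zariski closure of $\mathrm{U}_d$ is a closed subgroup of $\mathrm{GL}_d$ whose Lie algebra is all of $\mathfrak{gl}_d$; by connectedness of $\mathrm{GL}_d$ this closure is everything, so $p \equiv 0$. Alternatively one may invoke that $\mathrm{U}_d$ is a maximal totally real submanifold of the connected complex manifold $\mathrm{GL}_d$, and a holomorphic function vanishing on such a submanifold vanishes identically.

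With the lemma in hand the duality argument closes immediately. As a cross-check, the same density lemma yields the statement through commutants: it shows that any $X$ commuting with every $U^{\otimes n}$ also commutes with every $M^{\otimes n}$, so $\mathcal{C}^\prime = \mathcal{B}^\prime = \mathcal{A}$, the last equality being the Schur--Weyl commutant theorem. Since $\mathcal{B}$ and $\mathcal{C}$ are unital, $*$-closed (as $(M^{\otimes n})^* = (M^*)^{\otimes n}$ with $M^* \in \mathrm{GL}_d$, and likewise for unitaries), hence semisimple subalgebras of $\mathcal{M}_{d^n}$, \hyperref[chap1:thm:doubleCommutant]{Theorem~\ref*{chap1:thm:doubleCommutant}} gives $\mathcal{B} = \mathcal{B}^{\prime\prime} = \mathcal{C}^{\prime\prime} = \mathcal{C}$.
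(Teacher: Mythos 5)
Your proposal is correct, but it takes a genuinely different route from the paper's proof. The paper (following an unpublished note of Aubrun) proves $\mathcal{B} \subseteq \mathcal{C}$ constructively: after rescaling $M$ so that its singular values lie in $(-1,1)$, it writes $M = \sum_i s_i \ketbra{e_i}{f_i}$ and applies Cauchy's formula $s^k = \frac{1}{2\pi i}\int_\gamma z^k \frac{\mathrm{d}z}{z-s}$ in each singular value, obtaining $M^{\otimes n}$ as an explicit contour integral of $U_{z_1,\ldots,z_d}^{\otimes n}$ over the unitaries $U_z = \sum_i z_i \ketbra{e_i}{f_i}$; since $\mathcal{C}$ is a finite-dimensional, hence closed, subspace, this integral (a limit of linear combinations of unitary tensor powers) stays in $\mathcal{C}$. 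You instead dualize and reduce everything to the Zariski density of $\mathrm{U}_d$ in $\mathrm{GL}_d$, proved by one-variable analytic continuation along complexified one-parameter subgroups. Both arguments exploit the same underlying fact---holomorphy of $M \mapsto \scalar{T}{M^{\otimes n}}$ in the entries of $M$---and essentially the same one-variable identity theorem (Cauchy's formula being its quantitative form), but they package it differently: the paper's version is elementary and explicit, producing a concrete integral representation of $M^{\otimes n}$ in terms of unitary tensor powers, whereas yours is softer and more portable, since it shows that any polynomial identity valid on a compact matrix group extends to its complexification, and your commutant cross-check ($\mathcal{C}^\prime = \mathcal{B}^\prime = \mathcal{A}$, then the double commutant theorem) ties the statement directly to the Schur--Weyl machinery of the chapter. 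One streamlining remark: the Lie-algebra/Zariski-closure step in your density lemma can be bypassed entirely by polar decomposition---any $M \in \mathrm{GL}_d$ equals $U_0 e^{H}$ with $U_0$ unitary and $H$ Hermitian, i.e., it is the point $t = -i$ on the curve $t \mapsto U_0 e^{t(iH)}$ with $iH$ skew-Hermitian, so your entire-function argument alone already yields $p(M) = 0$ for every $M \in \mathrm{GL}_d$ without invoking the structure theory of algebraic groups.
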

\begin{proof}
    The present proof is from an unpublished note by Guillaume Aubrun \cite{aubrun2018naive}. Due to the inclusion $\mathrm{U}_d \subseteq \mathrm{GL}_d$, it follows that $\mathcal{C} \subseteq \mathcal{B}$. To establish the result, it suffices to prove that for $M \in \mathrm{GL}_d$, the $n$-fold $M^{\otimes n}$ can be expressed as a limit of linear combinations of $n$-fold $U^{\otimes n}$, for some $U \in \mathrm{U}_d$.

    Without loss of generality, assume that $M$ can be multiplied by a real scalar to obtain a singular value decomposition given by,
    \begin{equation*}
        M = \sum^d_{i=1} s_i \ketbra{e_i}{f_i},
    \end{equation*}
    where $e_1, \ldots, e_d$ and $f_1, \ldots, f_d$ are two orthonormal bases of $\mathbb{C}^d$, and $s_i$ are non-negative real numbers satisfing $-1 < s_i < 1$. Notice that the matrix obtained by replacing each $s_i$ with a complex number $z_i$ satisfying $|z_i| = 1$ is unitary. Then
    \begin{align*}
        M^{\otimes n} &= {\3( \sum^d_{i=1} s_i \ketbra{e_i}{f_i} \3)}^{\otimes n} \\
        &= \sum^d_{i_1, \ldots, i_n = 1} \prod^n_{j=1} s_{i_j} \ketbra{e_{i_1} \otimes \cdots \otimes e_{i_n}}{f_{i_1} \otimes \cdots \otimes f_{i_n}}.
    \end{align*}

    Let $\gamma$ be a counterclockwise unit circle in the complex plane. It is well-known from Cauchy's formula, that for all $s \in \mathbb{R}$ such that $-1 < s < 1$, and for all $k \in \mathbb{N}$,
    \begin{equation*}
        s^k = \frac{1}{2 \pi i} \int_\gamma z^k \frac{\mathrm{d}z}{z-s}.
    \end{equation*}
    Then, using Fubini’s theorem, for all $s \in \mathbb{R}^d$ such that $-1 < s_i < 1$, and for all $i_1, \ldots, i_n \in \{1, \ldots d\}$,
    \begin{equation*}
        \prod^n_{j=1} s_{i_j} = \frac{1}{(2 \pi i)^d} \int_{\gamma^{\times d}} \prod^n_{j=1} z_{i_j} \frac{\mathrm{d}z_1}{z_1 - s_1} \cdots \frac{\mathrm{d}z_d}{z_d - s_d}.
    \end{equation*}
    Finally,
    \begin{align*}
        &\phantom{{}={}} M^{\otimes n} \\
        &= \sum^d_{i_1, \ldots, i_n = 1} \prod^n_{j=1} s_{i_j} \ketbra{e_{i_1} \otimes \cdots \otimes e_{i_n}}{f_{i_1} \otimes \cdots \otimes f_{i_n}} \\
        &= \sum^d_{i_1, \ldots, i_n = 1} \frac{1}{(2 \pi i)^d} \int_{\gamma^{\times d}} \prod^n_{j=1} z_{i_j} \frac{\mathrm{d}z_1}{z_1 - s_1} \cdots \frac{\mathrm{d}z_d}{z_d - s_d} \ketbra{e_{i_1} \otimes \cdots \otimes e_{i_n}}{f_{i_1} \otimes \cdots \otimes f_{i_n}} \\
        &= \frac{1}{(2 \pi i)^d} \int_{\gamma^{\times d}} U^{\otimes n}_{z_1, \ldots, z_d} \frac{\mathrm{d}z_1}{z_1 - s_1} \cdots \frac{\mathrm{d}z_d}{z_d - s_d},
    \end{align*}
    where $U^{\otimes n}_{z_1, \ldots, z_d}$ is the $n$-fold unitary matrix defined by
    \begin{equation*}
        U_{z_1, \ldots, z_d} \coloneqq \sum^d_{i=1} z_i \ketbra{e_i}{f_i}.
    \end{equation*}
\end{proof}
\begin{corollary*}
    The space of $n$-fold tensors over $\mathbb{C}^d$ decomposes under the action of the direct product group $\mathrm{U}_d \times \mathfrak{S}_n$ as follows:
    \begin{equation*}
        {\1( \mathbb{C}^d \1)}^{\otimes n} \simeq \bigoplus_{\substack{\lambda \vdash n \\ \lambda^\prime_1 \leq d}} V^d_\lambda \otimes V_\lambda.
    \end{equation*}
\end{corollary*}

\subsection{Other Schur-Weyl dualities}

\subsubsection{Schur-Weyl duality for \texorpdfstring{$\mathbb{P}_n$}{the partition monoid}}

The symmetric group $\mathfrak{S}_d$ acts on the $d$-dimensional complex vector space $\mathbb{C}^d$ by considering the mapping $\emph{\phi}$, called \emph{permutation matrix}, from $\mathfrak{S}_d$ to $\mathcal{M}_{d}$, and defined for each permutation $\sigma$ by,
\begin{equation*}
    {\1( \phi(p) \1)}_{i,j} \coloneqq
    \begin{cases}
    1 &\text{ if $i = \sigma(j)$} \\
    0 &\text{ otherwise},
    \end{cases}
\end{equation*}
To be explicit, given a basis $e_1, \ldots, e_d$ of $\mathbb{C}^d$ and vector $v = \sum^d_{i=1} v_i \ket{e_i}$ in $\mathbb{C}^d$, a permutation $\sigma \in \mathfrak{S}_d$ acts on $v$ by
\begin{equation*}
    \phi(\sigma) \cdot v = \sum^d_{i=1} v_{\sigma^{\shortminus1}(i)} \ket{e_i}.
\end{equation*}

This action extends diagonally to an action on the $d^n$-dimensional tensor product complex vector space ${\1( \mathbb{C}^d \1)}^{\otimes n}$, defined for $\sigma \in \mathfrak{S}_d$ on tensor $v_1 \otimes \cdots \otimes v_n \in {\1( \mathbb{C}^d \1)}^{\otimes n}$ by,
\begin{equation*}
    \phi(\sigma)^{\otimes n} \cdot (v_1 \otimes \cdots \otimes v_n) = \phi(\sigma) \cdot v_1 \otimes \cdots \otimes \phi(\sigma) \cdot v_n,
\end{equation*}
and extended linearly.

Let $\mathcal{A}$ and $\mathcal{B}$ be the matrix algebras generated, respectively, by the actions of the partition monoid $\mathbb{P}_n$ and the symmetric group $\mathfrak{S}_d$, on the $d^n$-dimensional tensor product complex vector space ${\1( \mathbb{C}^d \1)}^{\otimes n}$, i.e.
\begin{align*}
    \mathcal{A} &\coloneqq \Span_{\mathbb{C}} \set{\psi(p)}{p \in \mathbb{P}_n} \\
    \mathcal{B} &\coloneqq \Span_{\mathbb{C}} \set{\phi(\sigma)^{\otimes n}}{\sigma \in \mathfrak{S}_d}.
\end{align*}

\begin{theorem*}[\cite{martin1998potts,martin2000partition,halverson2005partition}]
    Both $\mathcal{A}$ and $\mathcal{B}$ are commutants of each other.
\end{theorem*}

\subsubsection{Schur-Weyl duality for \texorpdfstring{$\mathbb{U}_n$}{the uniform block permutation monoid}}

Let $\emph{\text{diag.}~\mathrm{U}_d}$ be the subgroup of $\mathrm{U}_d$ consisting of $d \times d$ \emph{diagonal unitary matrices} acting on $\mathbb{C}^d$. That is for all $U$ in $\text{diag.}~\mathrm{U}_d$, there is $\theta \in [0, 2\pi)^{d}$ such that,
\begin{equation*}
    U = \mathrm{diag} \1( e^{i\theta_1}, \ldots, e^{i\theta_d} \1).
\end{equation*}
As subgroups of $\mathrm{U}_d$, the action of $\text{diag.}~\mathrm{U}_d$ extends diagonally on the $d^n$-dimensional tensor product complex vector space ${\1( \mathbb{C}^d \1)}^{\otimes n}$.

The product of a diagonal unitary matrix $U \coloneqq \mathrm{diag} \1( e^{i\theta_1}, \ldots, e^{i\theta_d} \1)$ in $\text{diag.}~\mathrm{U}_d$, and a permutation matrix $\phi(\sigma)$ for some $\sigma \in \mathfrak{S}_d$, is a \emph{monomial matrix} in $[0, 2\pi)$, i.e. a permutation matrix whose nonzero components are $e^{i\theta}$ with $\theta \in [0, 2\pi)$:
\begin{equation*}
    {\1( U \cdot \phi(\sigma) \1)}_{i,j} =
    \begin{cases}
    \theta_i &\text{ if $i = \sigma(j)$} \\
    0 &\text{ otherwise},
    \end{cases}
\end{equation*}

Let $\mathcal{A}$ and $\mathcal{B}$ be the matrix algebras generated, respectively, by the actions of the uniform block permutation monoid $\mathbb{U}_n$ and the monomial matrices in $[0, 2\pi)$, on the $d^n$-dimensional tensor product complex vector space ${\1( \mathbb{C}^d \1)}^{\otimes n}$, i.e.
\begin{align*}
    \mathcal{A} &\coloneqq \Span_{\mathbb{C}} \set{\psi(p)}{p \in \mathbb{U}_n} \\
    \mathcal{B} &\coloneqq \Span_{\mathbb{C}} \set{U^{\otimes n} \cdot \phi(\sigma)^{\otimes n}}{U \in \text{diag.}~\mathrm{U}_d \text{ and } \sigma \in \mathfrak{S}_d}.
\end{align*}

\begin{theorem}[\cite{tanabe1997centralizer}] \label{chap2:thm:commutantSchurWeylDualityUn}
    Both $\mathcal{A}$ and $\mathcal{B}$ are commutants of each other.
\end{theorem}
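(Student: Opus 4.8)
The plan is to prove the single identity $\mathcal{A} = \mathcal{B}'$; the full statement then follows from the double commutant theorem (\hyperref[chap1:thm:doubleCommutant]{Theorem~\ref*{chap1:thm:doubleCommutant}}), since $\mathcal{B}$ is a span of the unitaries $(U\phi(\sigma))^{\otimes n}$ and is therefore a $*$-closed unital algebra, so that $\mathcal{A}' = \mathcal{B}'' = \mathcal{B}$. The first step is to split the commutant of the generators. Because the monomial matrices $U\phi(\sigma)$ contain, as special cases, both the diagonal unitaries (take $\sigma = \mathrm{id}$) and the permutation matrices (take $U = I$), the algebra $\mathcal{B}$ is generated by $\{U^{\otimes n}\}$ together with $\{\phi(\sigma)^{\otimes n}\}$, whence $\mathcal{B}' = \{U^{\otimes n} : U \in \text{diag.}~\mathrm{U}_d\}' \cap \{\phi(\sigma)^{\otimes n} : \sigma \in \mathfrak{S}_d\}'$. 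For the second factor I would invoke the Schur--Weyl duality for the partition monoid proved above, which gives $\{\phi(\sigma)^{\otimes n}\}' = \psi(\mathbb{P}_n)$. For the first factor, a short character computation on the torus $\text{diag.}~\mathrm{U}_d \cong \mathbb{T}^d$ shows that a matrix commutes with every $U^{\otimes n}$ if and only if its $(I,J)$ entry (in the basis $e_I = e_{i_1}\otimes\cdots\otimes e_{i_n}$) vanishes whenever the multisets of values of $I$ and $J$ differ; call this weight-zero space $M_0$. The whole problem thus reduces to the combinatorial identity $\psi(\mathbb{P}_n) \cap M_0 = \psi(\mathbb{U}_n)$.

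To handle this intersection I would parametrise $\psi(\mathbb{P}_n)$ by orbit data. The diagonal action of $\mathfrak{S}_d$ on index pairs $(I,J) \in [d]^n \times [d]^n$ has orbits indexed exactly by the set partitions $\pi$ of the $2n$ vertices with at most $d$ blocks, where $\pi(I,J)$ records which positions carry equal values; since an element of $\psi(\mathbb{P}_n)$ is precisely an $\mathfrak{S}_d$-invariant matrix, it is determined by scalars $x_\pi := X^I_J$ and $X \mapsto (x_\pi)$ is a linear isomorphism onto the functions on such $\pi$. The key observation is that, since distinct blocks of $\pi(I,J)$ carry distinct values by definition, the multiset of values of $I$ equals that of $J$ if and only if every block of $\pi(I,J)$ meets the two halves in equally many vertices, i.e. if and only if $\pi(I,J) \in \mathbb{U}_n$. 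Consequently $X \in M_0$ if and only if $x_\pi = 0$ for every $\pi \notin \mathbb{U}_n$, so $\psi(\mathbb{P}_n) \cap M_0$ corresponds to the coordinate subspace supported on balanced partitions.

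It remains to identify this coordinate subspace with $\psi(\mathbb{U}_n)$. Under the isomorphism, $\psi(\rho)$ corresponds to the vector $\pi \mapsto [\rho \le \pi]$ in refinement order, whose support is the up-set of $\rho$. Since merging two balanced blocks yields a balanced block, any coarsening of a partition in $\mathbb{U}_n$ stays in $\mathbb{U}_n$; hence for $\rho \in \mathbb{U}_n$ the vector attached to $\psi(\rho)$ is supported on $\mathbb{U}_n$, which already gives $\psi(\mathbb{U}_n) \subseteq \psi(\mathbb{P}_n)\cap M_0$. For the reverse inclusion I would note that the zeta matrix $\bigl([\rho \le \pi]\bigr)_{\rho,\pi}$, restricted to balanced partitions with at most $d$ blocks, is unitriangular with respect to any linear extension of the refinement order, hence invertible; therefore the vectors attached to $\{\psi(\rho) : \rho \in \mathbb{U}_n\}$ span the entire coordinate subspace, forcing equality.

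The main obstacle is the non-faithfulness of the tensor representation for small $d$: the matrices $\psi(p)$ become linearly dependent once $d < 2n$, so one cannot simply read off coefficients diagram by diagram. The orbit parametrisation circumvents this by working with the genuinely independent coordinates $x_\pi$ attached to realisable patterns (those with at most $d$ blocks), and the unitriangularity argument is what certifies that restricting to the balanced realisable patterns is captured exactly by $\psi(\mathbb{U}_n)$, with no spurious relations. The one point I would take care to verify is that the content condition depends only on the orbit $\pi$ and not on the chosen representative $(I,J)$, which is precisely where the distinct-values-on-distinct-blocks feature of $\pi(I,J)$ is essential.
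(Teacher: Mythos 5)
Your proposal is correct, but there is no in-paper proof to compare it with: the theorem is stated as a quotation of Tanabe's result, with no argument supplied in the text. So your derivation stands on its own, and it is sound given the two ingredients you borrow from earlier in the chapter (the double commutant theorem and the Schur--Weyl duality for the partition monoid $\mathbb{P}_n$).

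Two points in your argument deserve to be made explicit, though neither is a genuine gap. First, the assertion that $\mathcal{B}$ is a $*$-closed unital \emph{algebra} rests on the fact that the monomial unitaries form a group: $(U\phi(\sigma))(V\phi(\tau)) = \bigl(U\,\phi(\sigma)V\phi(\sigma)^{-1}\bigr)\phi(\sigma\tau)$, and conjugating a diagonal unitary by a permutation matrix yields a diagonal unitary. This closure is also what gives semisimplicity of $\mathcal{B}$ (a self-adjoint unital subalgebra of a matrix algebra), which is the hypothesis the paper's double commutant theorem actually requires before you may write $\mathcal{B}''=\mathcal{B}$. Second, your reduction is correct in all the delicate places: the commutant of the diagonal torus is the weight-zero space because distinct monomials $z^{\mu}$ are distinct characters of the torus; the content condition is a well-defined function of the orbit $\pi(I,J)$ precisely because distinct blocks carry distinct values; and ``every block meets the two halves equally'' is exactly the defining condition of $\mathbb{U}_n$. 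The unitriangularity of the zeta matrix on partitions with at most $d$ blocks is the right tool for the non-faithfulness issue, and it is in effect M\"obius inversion on the restricted partition lattice. One structural remark: you do not actually need to invoke the $\mathbb{P}_n$ duality at all, since the commutant of the permutation action is, by definition of the commutant, the space of $\mathfrak{S}_d$-invariant matrices; your orbit parametrisation and zeta argument then do the rest (and, run on all partitions rather than the balanced ones, they re-prove the $\mathbb{P}_n$ duality as a byproduct). This would make your proof fully self-contained, which is a genuine advantage over the paper's bare citation; Tanabe's original argument, phrased for the reflection groups $\mathbb{Z}/r\mathbb{Z}\wr\mathfrak{S}_d$, runs along essentially the same weight-space lines.
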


\subsubsection{Schur-Weyl duality for \texorpdfstring{$\mathbb{B}_n$}{the Brauer monoid}}

Let $\emph{\mathrm{O}_d}$ denote the \emph{ortogonal group} of degree $d$, which consists of the $d \times d$ orthogonal matrices acting on $\mathbb{C}^d$. As subgroups of $\mathrm{U}_d$, the action of $\mathrm{O}_d$ extends diagonally on the $d^n$-dimensional tensor product complex vector space ${\1( \mathbb{C}^d \1)}^{\otimes n}$.

Let $\mathcal{A}$ and $\mathcal{B}$ be the matrix algebras generated, respectively, by the actions of the Brauer monoid $\mathbb{B}_n$ and the orthogonal group $\mathrm{O}_d$, on the $d^n$-dimensional tensor product complex vector space ${\1( \mathbb{C}^d \1)}^{\otimes n}$, i.e.
\begin{align*}
    \mathcal{A} &\coloneqq \Span_{\mathbb{C}} \set{\psi(p)}{p \in \mathbb{B}_n} \\
    \mathcal{B} &\coloneqq \Span_{\mathbb{C}} \set{O^{\otimes n}}{O \in \mathrm{O}_d}.
\end{align*}

\begin{theorem*}[\cite{brauer1937algebras}] \label{chap2:thm:commutantSchurWeylDualityBn}
    Both $\mathcal{A}$ and $\mathcal{B}$ are commutants of each other.
\end{theorem*}

\subsubsection{Schur-Weyl duality for \texorpdfstring{$\mathbb{B}_{m,n}$}{the walled Brauer monoid}}

Let the action of the complex general linear group $\mathrm{GL}_d$ on the $d^n$-dimensional mixed tensor product complex vector space ${\1( \mathbb{C}^d \1)}^{\otimes n} \otimes {\1( \mathbb{C}^d \1)}^{\otimes m}$, defined for $M \in \mathrm{GL}_d$ by,
\begin{equation*}
    M^{\otimes m} \otimes {\2( {\1( M^{\shortminus 1} \1)}^\T \2)}^{\otimes n}.
\end{equation*}
As a subgroup of $\mathrm{GL}_d$, this action is defined for $U \in \mathrm{U}_d$ by,
\begin{equation*}
    U^{\otimes m} \otimes {\bar{U}}^{\otimes n}.
\end{equation*}

Let $\mathcal{A}, \mathcal{B}$ and $\mathcal{C}$ be the matrix algebras generated, respectively, by the actions of the walled Brauer monoid $\mathbb{B}_{m,n}$, the complex general linear group $\mathrm{GL}_d$ and the unitary group $\mathrm{U}_d$, on the $d^n$-dimensional mixed tensor product complex vector space ${\1( \mathbb{C}^d \1)}^{\otimes n}$, i.e.
\begin{align*}
    \mathcal{A} &\coloneqq \Span_{\mathbb{C}} \set{\psi(p)}{p \in \mathbb{B}_n} \\
    \mathcal{B} &\coloneqq \Span_{\mathbb{C}} \set{M^{\otimes m} \otimes {\2( {\1( M^{\shortminus 1} \1)}^\T \2)}^{\otimes n}}{M \in \mathrm{GL}_d} \\
    \mathcal{C} &\coloneqq \Span_{\mathbb{C}} \set{U^{\otimes m} \otimes {\bar{U}}^{\otimes n}}{U \in \mathrm{U}_d}.
\end{align*}

\begin{theorem*}[\cite{benkart1994tensor}]
    Both $\mathcal{A}$ and $\mathcal{B}$ (or $\mathcal{C}$) are commutants of each other.
\end{theorem*}

\begin{remark*}
    It is important to note that the various Schur-Weyl dualities presented in \hyperref[chap2:sec:SchurWeylDualities]{Section~\ref*{chap2:sec:SchurWeylDualities}} are given only in terms of the matrix algebras generated by the map $\psi$, rather than the diagrammatic algebras. The map $\psi$ may not always exhibit faithfulness.
\end{remark*}

\chapter{Quantum Information Theory} \label{chap3}

This chapter provides a comprehensive overview of the mathematical foundations of quantum mechanics in the context of quantum information theory, focusing on the postulates of quantum mechanics and their inherent probabilistic nature.

In quantum information theory, the attention is mainly directed to quantum systems with a finite number of degrees of freedom.

First, the formalism of pure quantum states is introduced, which is particularly suitable for closed quantum systems. Then then formalism of mixed quantum states is then introduced, in particular to describe open quantum systems that interact with an environment that is not intended to be described.

References for the different postulates and the mathematical foundations of quantum mechanics can be found in the textbooks \cite{watrous2018theory,nielsen2002quantum,aubrun2017alice}.

\section{Postulates of quantum mechanics}

Let $\emph{\mathcal{H}} \coloneqq \mathbb{C}^d$ be a finite $d$-dimensional Hilbert complex vector space. The convex set of unit trace, positive semi-definite $d \times d$ matrices, acting on $\mathcal{H}$ is denoted,
\begin{equation*}
    \emph{\mathcal{D}_d} \coloneqq \set{\rho \in \mathcal{M}_d}{\Tr \rho = 1 \text{ and } \rho \geq 0}.
\end{equation*}
An element of $\mathcal{D}_d$ is called a \emph{density matrix}, to highlight the fact that its eigenvalues represent a probability distribution. The extremal points of $\mathcal{D}_d$ are the unit rank projections $\ketbra{\psi}{\psi}$, for some $\psi \in \mathcal{H}$ with $\norm{\psi} = 1$.

\begin{definition*}
    A \emph{quantum system} is represented by a finite $d$-dimensional Hilbert complex vector space $\mathcal{H}$.
\end{definition*}

Let $\mathcal{H} \coloneqq \mathbb{C}^d$ be a finite $d$-dimensional Hilbert complex vector space, the computational basis of the quantum system $\mathcal{H}$ is denoted:
\begin{equation*}
    \ket{0}, \ldots, \ket{d-1}.
\end{equation*}

\begin{definition*}
    A \emph{composite quantum system} is represented by a tensor product of Hilbert complex vector spaces $\mathcal{H}_1 \otimes \cdots \otimes \mathcal{H}_n$.
\end{definition*}

The computational basis of the $n$-fold composite quantum system $\mathcal{H}^{\otimes n}$, with the finite $d$-dimensional Hilbert complex vector space $\mathcal{H} \coloneqq \mathbb{C}^d$, is the set:
\begin{equation*}
    \set{\ket{i_1} \otimes \cdots \otimes \ket{i_n}}{i_1, \ldots, i_n \in \{0, \ldots, d-1\}}.
\end{equation*}

\subsection{Pure states}

\begin{definition*}
    A \emph{pure quantum state} on $\mathcal{H}$ is an extremal point of $\mathcal{D}_d$.
\end{definition*}

A pure quantum state on $\mathcal{H} \coloneqq \mathbb{C}^2$ is a unit rank projection $\ketbra{\psi}{\psi}$, for some vector $\ket{\psi} \coloneqq \alpha \cdot \ket{0} + \beta \cdot \ket{1}$ with $\alpha, \beta$ satisfying,
\begin{equation*}
    |\alpha|^2 + |\beta|^2 = 1.
\end{equation*}

\begin{remark*}
    In the following the description of a pure quantum state as a unit rank projection $\ketbra{\psi}{\psi}$ or as a unit norm vector $\ket{\psi}$ is used interchangeably. Note that $\ketbra{\psi}{\psi}$ is simply the orthogonal projection onto the complex line spanned by $\ket{\psi}$.
\end{remark*}

\begin{definition*}
    The \emph{evolution} of a pure quantum state $\rho$ on $\mathcal{H}$ is governed by a unitary matrix $U$ on $\mathcal{H}$, through the conjugation mapping
    \begin{equation*}
        \rho \longmapsto U \rho U^*.
    \end{equation*}
\end{definition*}

The evolution of pure quantum states is a transitive action: for all pure quantum states $\rho$ and $\sigma$ there exists a unitary matrix $U$ such that $\rho = U \sigma U^*$.

As a unit norm vector $\ket{\psi}$, the evolution a pure quantum state, through a unitary matrix $U$, is given by $\ket{\psi} \mapsto U \ket{\psi}$. 

\begin{definition*}
    The \emph{projective measure} of a pure quantum state $\rho$ on $\mathcal{H}$ is described by a set of orthogonal projections $\{P_1, \ldots, P_n\}$ on $\mathcal{H}$, which sum to the identity. The \emph{outcome} of the measure is $i \in \{1, \ldots, n\}$ with probability,
    \begin{equation*}
        \Tr \1[ P_i \rho P^*_i \1],
    \end{equation*}
    and the resulting pure quantum state after the measure becomes,
    \begin{equation*}
        \frac{P_i \rho P^*_i}{\Tr \1[ P_i \rho P^*_i \1]}.
    \end{equation*}
\end{definition*}

Let $\ket{\psi} \coloneqq \alpha \cdot \ket{0} + \beta \cdot \ket{1}$ be a quantum pure state on $\mathcal{H} \coloneqq \mathbb{C}^2$. A projective measure in the computational basis $\ket{0}, \ket{1}$ is described by the two orthogonal projections $\ketbra{0}{0}$ and $\ketbra{1}{1}$, and yields outcome $0$ with probability $|\alpha|^2$ and outcome $1$ with probability $|\beta|^2$.

\begin{remark*}
    In the context of a projective measure, due to the orthogonality property of the projections $\{P_1, \ldots, P_n\}$, and the cyclic property of the trace, the probability of the outcome $i$ given a pure quantum state $\rho \coloneqq \ketbra{\psi}{\psi}$ is just $\Tr \1[ P_i \rho \1] = \bra{\psi} P_i \ket{\psi}$, and the resulting pure quantum state after the measure becomes,
    \begin{equation*}
        \frac{P_i \rho P_i}{\bra{\psi} P_i \ket{\psi}}.
    \end{equation*}
\end{remark*}

Let $\rho \coloneqq \rho_A \otimes \rho_B$ be a pure quantum state on $\mathcal{H}_A \otimes \mathcal{H}_B$, a composite quantum system. The projective measure of $\rho$ described by the set of orthogonal projections $\{P_1, \ldots, P_n\}$ on $\mathcal{H}_A \otimes \mathcal{H}_B$, yields outcome $i \in \{1, \ldots, n\}$ with probability
\begin{align*}
    \Tr \1[ P_i \rho \1] &= \Tr_A \2[ \underbrace{P_i \Tr_B \1[ I_A \otimes \rho_B \1]}_{M_i} \rho_A \2] \\
    &= \Tr \1[ M_i \rho_A \1],
\end{align*}
for some positive semidefinite $M_i$ on $\mathcal{H}_A \otimes \mathcal{H}_B$, which sum to the identity. Such a measure on part of a composite system is called a \emph{generalized measure}.

\subsection{Mixed states}

Let $\rho \coloneqq \ketbra{\psi}{\psi}$ be a pure quantum state on $\mathcal{H}_A \otimes \mathcal{H}_B$, a composite quantum system with $\mathcal{H}_A \coloneqq \mathbb{C}^d$ and $\mathcal{H}_B$ an unknown quantum system. The projective measure of $\rho$ described by the set of orthogonal projections $\{P_1, \ldots, P_n\}$ on $\mathcal{H}_A$ only, yields outcome $i \in \{1, \ldots, n\}$ with probability
\begin{align*}
    \bra{\psi} (P_i \otimes I_B) \ket{\psi} &= \Tr_A \2[ P_i \underbrace{\Tr_B \1[ \rho \1]}_\sigma \2] \\
    &= \Tr \1[ P_i \sigma \1],
\end{align*}
for some $\sigma \in \mathcal{D}_d$, which is in general not a pure quantum state. 

\begin{definition*}
    A \emph{mixed quantum state} on $\mathcal{H}$ is an element of $\mathcal{D}_d$.
\end{definition*}

Since the set of mixed quantum state $\mathcal{D}_d$ is a convex set, and since the extremal point are the pure quantum states, a mixed quantum state is a convex combination of pure quantum states, of the form,
\begin{equation*}
    \sum^k_{i=1} p_i \cdot \ketbra{\psi_i}{\psi_i},
\end{equation*}
with some positive real numbers $p_i$ satisfying $\sum^k_{i=1} p_i = 1$, and $\ketbra{\psi_i}{\psi_i}$ some unit rank projections. From the spectral Theorem, a mixed quantum state on $\mathcal{D}_d$ is a convex sum of at most $d$ terms. The most central element of $\mathcal{D}_d$ is the mixed quantum state $\emph{\mathrm{I}} \coloneqq \frac{I_d}{d}$ called \emph{maximally mixed}.

As a fundamental consequence, maximazing a convex function or minimizing a concave function over the set $\mathcal{D}_d$ of mixed quantum states will lead to extremal values of the function on a pure quantum state.

In the case $\mathcal{H} \coloneqq \mathbb{C}^2$, the mixed quantum states $\mathcal{D}_2$ can be written in a spherical representation, called \emph{Bloch sphere},
\begin{equation*}
    \mathcal{D}_2 = \set{\frac{1}{2}(I_2 + r_x \cdot \sigma_x + r_y \cdot \sigma_y + r_z \cdot \sigma_z)}{r \coloneqq (r_x, r_y, r_z) \in \mathbb{R}^3 \text{ and } \norm{r} \leq 1},
\end{equation*}
where $\emph{\sigma_x}, \emph{\sigma_y}$ and $\emph{\sigma_z}$ are the \emph{Pauli matrices} defined by
\begin{align*}
    \sigma_x &=
    \begin{pmatrix}
        0 & 1 \\
        1 & 0
    \end{pmatrix}
    &
    \sigma_y &=
    \begin{pmatrix}
        0 & -i \\
        i & 0
    \end{pmatrix}
    &
    \sigma_z &=
    \begin{pmatrix}
        1 & 0 \\
        0 & -1
    \end{pmatrix}.
\end{align*}
The pure quantum states of $\mathcal{D}_2$ are the elements satisfying $\norm{r} = 1$.

\section{Quantum entanglement}

The \emph{quantum entanglement} is a fundamental concept in quantum information theory that refers to the non-classical correlations that exist between two or more quantum systems.

\subsection{Schmidt decomposition}

Recall the \emph{singular value decomposition} (\textsc{svd}) for a matrix $M \in \mathcal{M}_d$ acting on $\mathbb{C}^d$: there exists two orthonormal bases $e_1, \ldots, e_d$ and $f_1, \ldots, f_d$ for the vector space $\mathbb{C}^d$, and $d$ non-negative real numbers $s_1, \ldots, s_d$, such that,
\begin{equation*}
    M = \sum^d_{i=1} s_i \ketbra{e_i}{f_i}.
\end{equation*}

Using the isomorphism $\mathrm{Hom}(V, W) \simeq V \otimes W$ between two finite dimensional complex vector spaces $V$ and $W$, the singular value decomposition becomes the \emph{Schmidt decomposition} of vector on a bipartite tensor product $V \otimes W$.
\begin{theorem*}[Schmidt decomposition \cite{schmidt1907theorie,everett1957relative}]
    Let $\psi$ be a vector in a bipartite tensor product of $d$-dimensional Hilbert complex vector spaces $\mathcal{H}_1 \otimes \mathcal{H}_2$. Then there exists two orthonormal bases $e_1, \ldots, e_d$ and $f_1, \ldots, f_d$ for $\mathcal{H}_1$ and $\mathcal{H}_2$, respectively, and $d$ non-negative real numbers $s_1, \ldots, s_d$ called \emph{Schmidt coefficients}, such that,
    \begin{equation*}
        \psi = \sum^d_{i=1} s_i \cdot e_i \otimes f_i.
    \end{equation*}
    The number of nonzero Schmidt coefficients is called the \emph{Schmidt number}.
\end{theorem*}
Apart from the bipartite scenario, direct multipartite extension of the Schmidt decomposition does not exist in general \cite{peres1995higher}.

\subsection{Pure quantum state entanglement}

A bipartite pure quantum state $\ket{\psi}$ on $\mathcal{H}_A \otimes \mathcal{H}_B$ is said to be \emph{entangled} if its Schmidt number is strictly greater than $1$, otherwise it is said to be \emph{separable} and can be written,
\begin{equation*}
    \ket{\psi} = \ket{\psi}_A \otimes \ket{\psi}_B,
\end{equation*}
for some $\ket{\psi}_A \in \mathcal{H}_A$ and $\ket{\psi}_B \in \mathcal{H}_B$.

The Schmidt number of a bipartite pure quantum state gives a canonical quantitative \emph{measure of entanglement}. 

A bipartite pure quantum state $\emph{\omega \coloneqq \ketbra{\Omega}{\Omega}}$, on the $2$-fold composite quantum system $\mathcal{H} \otimes \mathcal{H}$, with $\mathcal{H} \simeq \mathbb{C}^d$, is called \emph{maximally entangled} if it has, on the computational basis, the form
\begin{equation*}
    \ket{\Omega} = \frac{1}{\sqrt{d}} \sum^{d-1}_{i = 0} \ket{ii}.
\end{equation*}
The partial transpose of an unormalized maximally entangled pure quantum state $d \cdot \omega$ is the \emph{flip operator},
\begin{equation*}
    d \cdot \omega^\Tpartial = \sum^{d-1}_{i,j = 0} \ketbra{ij}{ij},
\end{equation*}
on the computational basis. On product vector $x \otimes y$, the flip operator acts through $d \cdot  \omega^\Tpartial (x \otimes y) = y \otimes x$. However, since $y \otimes x - x \otimes y$ is a eigenvector for the negative eigenvalue $-1$, the flip operator $\omega^\Tpartial$ is not a quantum state. The normalized flip operator on $\mathcal{D}_{d^2}$ is denoted $\emph{\mathrm{F}} \coloneqq \frac{\omega^\Tpartial}{d}$.

\begin{theorem*}[PPT criterion \cite{peres1996separability,horodecki2001separability}]
    If $\rho$ is a separable bipartite pure quantum state on $\mathcal{H}_A \otimes \mathcal{H}_B$, then the partially transposed $\rho^\Tpartial$ is a quantum state. The converse is true if and only if $\dim \mathcal{H}_A \times \mathcal{H}_B \leq 6$.
\end{theorem*}

In general, in the multipartite scenario $\mathcal{H}_1 \otimes \cdots \otimes \mathcal{H}_k$, a pure quantum state $\ket{\psi}$ is separable if it can be written as a product vector
\begin{equation*}
    \ket{\psi} = \ket{\psi_1} \otimes \cdots \otimes \ket{\psi_k},
\end{equation*}
and entangled otherwise.

\begin{remark*}
    As a unit rank projection $\ketbra{\psi}{\psi}$, a separable pure quantum state on a multipartite $\mathcal{H}_1 \otimes \cdots \otimes \mathcal{H}_k$ can be written
    \begin{equation*}
        \ketbra{\psi}{\psi} = \ketbra{\psi_1}{\psi_1} \otimes \cdots \otimes \ketbra{\psi_k}{\psi_k}.
    \end{equation*}
\end{remark*}

\subsection{Mixed quantum state entanglement}

A mixed quantum state is said to be separable if it can be written as a convex combination of separable pure quantum states. Therefore the convex set of separable quantum states on $\mathcal{H}_1 \otimes \cdots \otimes \mathcal{H}_k$ is
\begin{equation*}
    \Conv \setAlt{\ketbra{\psi_1}{\psi_1} \otimes \cdots \otimes \ketbra{\psi_k}{\psi_k}}{\psi_i \in \mathcal{H}_i \text{ for all } i \in \{1, \ldots, k\}},
\end{equation*}
with extremal points the separable pure quantum states.

An \emph{isotropic state} is a convex combination of a maximally entangled and maximally mixed quantum states:
\begin{equation*}
    \rho = \lambda \cdot \omega + (1 - \lambda) \mathrm{I},
\end{equation*}
with $0 \leq \lambda \leq 1$ and $\omega, \mathrm{I} \in \mathcal{D}_{d^2}$. If $\frac{-1}{d^2 - 1} \leq \lambda \leq 0$, then $\rho$ is still in $\mathcal{D}_{d^2}$, and thus still a mixed quantum state.

A \emph{Werner state} is an affine combination of a normalized flip operator and maximally mixed quantum states:
\begin{equation*}
    \sigma = \lambda \cdot \mathrm{F} + (1 - \lambda) \mathrm{I},
\end{equation*}
with $\frac{1}{1-d} \leq \lambda \leq \frac{1}{1+d}$ and $\mathrm{F}, \mathrm{I} \in \mathcal{D}_{d^2}$.

Any isotropic state $\rho$ and Werner state $\sigma$ satisfy the following commutation relations:
\begin{equation*}
    [\rho, \bar{U} \otimes U] = 0 \qquad \text{ and } \qquad [\sigma, U \otimes U] = 0,
\end{equation*}
for all unitary matrices $U$.

\begin{theorem*}[\cite{horodecki1999reduction}]
    Let $\rho \coloneqq \lambda \cdot \omega + (1 - \lambda) \mathrm{I}$ be an isotropic state on $\mathcal{D}_{d^2}$. Then $\rho$ is separable if and only if $\lambda \leq \frac{1}{d+1}$.
\end{theorem*}
\begin{theorem*}[\cite{werner1989quantum}]
    Let $\sigma \coloneqq \lambda \cdot \mathrm{F} + (1 - \lambda) \mathrm{I}$ be a Werner state, on $\mathcal{D}_{d^2}$. Then $\sigma$ is separable if and only if $\lambda \geq \frac{1}{1-d^2}$.
\end{theorem*}

In general, deciding whether a given quantum state is separable is known to be NP-hard \cite{gurvits2003classical}.

\subsection{Monogamy of entanglement} \label{chap3:sec:monogamyOfEntanglement}

A bipartite quantum state $\rho$ on $\mathcal{H}_A \otimes \mathcal{H}_B$ is said to be $k$-extendible, with respect to $\mathcal{H}_B$ if there exists a quantum state $\sigma$ on $\mathcal{H}_A \otimes \mathcal{H}^{\otimes k}_B$ such that for all $i \in \{1, \ldots, k\}$:
\begin{equation*}
    \rho = \Tr_{\{B_1, \ldots, B_k\} \setminus \{B_i\}} \1[ \sigma \1].
\end{equation*}

\begin{theorem*}[Entanglement hierarchy \cite{doherty2004complete}]
    A bipartite quantum state $\rho$ on $\mathcal{H}_A \otimes \mathcal{H}_B$ is separable if and only if it is $k$-extendible for all $k \in \mathbb{N}$.
\end{theorem*}

Let $\rho_{A,B,C}$ be a tripartite quantum state on $\mathcal{H}_A \otimes \mathcal{H}_B \otimes \mathcal{H}_C$ with $\mathcal{H}_A \simeq \mathcal{H}_B \simeq \mathcal{H}_C$, and such that the \emph{reduced quantum state} on $\mathcal{H}_A \otimes \mathcal{H}_B$,
\begin{equation*}
    \rho_{A,B} \coloneqq \Tr_C [\rho_{A,B,C}],
\end{equation*}
is maximally entangled, i.e. $\rho_{A,B} = \omega$. From the spectral Theorem, the quantum state $\rho_{A,B,C}$ is a convex combination,
\begin{equation*}
    \rho_{A,B,C} = \sum^k_{i=1} p_i \cdot \ketbra{\psi_i}{\psi_i},
\end{equation*}
for some orthonormal pure quantum states $\ket{\psi_i}$ on $\mathcal{H}_A \otimes \mathcal{H}_B \otimes \mathcal{H}_C$. Then the reduced quantum state $\rho_{A,B}$ becomes
\begin{equation*}
        \rho_{A,B} = \sum^k_{i=1} p_i \cdot \Tr_C \2[ \ketbra{\psi_i}{\psi_i} \2].
\end{equation*}
But since $\rho_{A,B}$ is a pure quantum state, i.e. $\rho_{A,B} = \ketbra{\Omega}{\Omega}$ is an extremal point of the convex set of density matrices, every $\Tr_C \2[ \ketbra{\psi_i}{\psi_i} \2]$ in the sum must necessarily be equal to $\rho_{A,B}$. This implies that $\rho_{A,B,C} = \rho_{A,B} \otimes \rho_C$, for some reduced quantum state $\rho_C$ on $\mathcal{H}_C$. Thus none of the reduced quantum states $\rho_{A,C}$ and $\rho_{B,C}$ can be maximally entangled. This phenomenon is known as \emph{monogamy of entanglement}.

\section{Quantum channels}

Let $\Phi: \mathcal{M}_d \to \mathcal{M}_{d^\prime}$ be a map such that $\Phi \1( \mathcal{D}_d \1) \subseteq \mathcal{D}_{d^\prime}$, i.e. mapping $d$-dimensional quantum states to $d^\prime$-dimensional quantum states. Under linearity assumption this is equivalent to:
\begin{itemize}
    \item $\Phi$ \emph{positive}: $X \geq 0 \implies \Phi(X) \geq 0$.
    \item $\Phi$ \emph{trace preserving}: $\Tr X = \Tr \Phi(X)$.
\end{itemize}

The transpose of a matrix is a linear map that satisfies both positivity and trace preserving properties. But when partially applied to a composite quantum system, this can lead to non-quantum states, e.g. the partial transpose of a maximally entangled pure quantum state is the flip operator $F = \omega^\Tpartial$.

A linear map $\Phi: \mathcal{M}_d \to \mathcal{M}_{d^\prime}$ is called \emph{completely positive} if all the partial applications of $\Phi$ on any positive semidefinite matrix results in another positive semidefinite matrix, i.e. $\forall D \in \mathbb{N}, \mathcal{M}_d \otimes \mathcal{M}_D \ni X \geq 0 \implies (\Phi \otimes \id_D) (X) \geq 0$.

\begin{definition*}
    The most general transformations of quantum states, called \emph{quantum channels}, are the Completely Positive Trace Preserving (\textsc{cptp}) linear maps.
\end{definition*}

\subsection{Structure of quantum channels}

The \emph{Choi matrix} of a linear map $\Phi: \mathcal{M}_d \to \mathcal{M}_{d^\prime}$ is the matrix $\emph{C_\Phi}$ in $\mathcal{M}_{d \times d^\prime}$, defined by,
\begin{align*}
    C_\Phi &\coloneqq (\id_d \otimes \Phi) \3( \sum^d_{i,j = 1} \ketbra{ii}{jj} \3) \\
    &= (\id_d \otimes \Phi) (d \cdot \omega).
\end{align*}
It is possible to retrieve the original linear map $\Phi$ from its associated Choi matrix $C_\Phi$ using the formula,
\begin{equation*}
    \Phi(X) = \Tr_d \1[ C_\Phi (X^\T \otimes I_{d^{\prime}}) \1].
\end{equation*}

\begin{theorem}[\cite{watrous2018theory}] \label{chap3:thm:quantumChannelStructure}
    Let $\Phi: \mathcal{M}_d \to \mathcal{M}_{d^\prime}$ be a linear map. The following are equivalent:
    \begin{itemize}
        \item the map $\Phi$ is \textsc{cptp};
        \item the Choi matrix $C_\Phi$ is positive semidefinite and $\Tr_{d^\prime} \1[ C_\Phi \1] = I_d$;
        \item there exist $A_1, \ldots, A_k \in \mathcal{M}_{d \times d^\prime}$ such that,
        \begin{equation*}
            \Phi(X) = \sum^k_{i=1} A_i X A^*_i \qquad \text{ and } \qquad \sum^k_{i=1} A^*_i A_i = I_d;
        \end{equation*}
        \item there exist $D \in \mathbb{N}$ and an isometry $V: \mathbb{C}^d \to \mathbb{C}^{d^\prime} \otimes \mathbb{C}^D$ such that,
        \begin{equation*}
            \Phi(X) = \Tr_D \1[ V X V^* \1].
        \end{equation*}
    \end{itemize}
\end{theorem}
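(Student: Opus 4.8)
The plan is to establish that the four conditions are equivalent by proving the cycle $(1)\Rightarrow(2)\Rightarrow(3)\Rightarrow(4)\Rightarrow(1)$, using throughout the Choi correspondence $\Phi\mapsto C_\Phi$ together with the reconstruction formula $\Phi(X)=\Tr_d\1[ C_\Phi (X^\T\otimes I_{d'}) \1]$ stated just above.

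First, for $(1)\Rightarrow(2)$, I would note that $d\cdot\omega\geq 0$, being a positive multiple of the rank-one projection $\ketbra{\Omega}{\Omega}$. Since complete positivity is insensitive to which tensor factor carries the ancilla (the two conventions differ by conjugation with the flip operator), the map $\id_d\otimes\Phi$ preserves positivity, whence $C_\Phi=(\id_d\otimes\Phi)(d\cdot\omega)\geq 0$. For the trace condition I expand $C_\Phi=\sum_{i,j}\ketbra{i}{j}\otimes\Phi(\ketbra{i}{j})$ and take the partial trace over the second factor to get $\Tr_{d'}\1[ C_\Phi \1]=\sum_{i,j}\ketbra{i}{j}\,\Tr\1[ \Phi(\ketbra{i}{j}) \1]$; trace preservation turns $\Tr\1[ \Phi(\ketbra{i}{j}) \1]$ into $\delta_{ij}$, leaving $I_d$.

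Next, for $(2)\Rightarrow(3)$, I would diagonalize the positive semidefinite matrix as $C_\Phi=\sum_{i=1}^k\ketbra{w_i}{w_i}$ with $w_i\in\mathbb{C}^d\otimes\mathbb{C}^{d'}$, absorbing eigenvalues into the $w_i$. Via the isomorphism $\mathbb{C}^d\otimes\mathbb{C}^{d'}\simeq\mathrm{Hom}(\mathbb{C}^d,\mathbb{C}^{d'})$ each $w_i$ corresponds to an operator $A_i$, and inserting the rank-one terms into the reconstruction formula yields $\Phi(X)=\sum_i A_i X A_i^*$, while $\Tr_{d'}\1[ C_\Phi \1]=I_d$ becomes $\sum_i A_i^* A_i=I_d$. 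For $(3)\Rightarrow(4)$, I set $D\coloneqq k$, fix an orthonormal basis $\ket{1},\ldots,\ket{D}$ of $\mathbb{C}^D$, and define the isometry $V\colon\mathbb{C}^d\to\mathbb{C}^{d'}\otimes\mathbb{C}^D$ by $V\ket{\psi}\coloneqq\sum_i (A_i\ket{\psi})\otimes\ket{i}$; then $V^*V=\sum_i A_i^* A_i=I_d$ confirms $V$ is an isometry, and $\Tr_D\1[ V X V^* \1]=\sum_i A_i X A_i^*=\Phi(X)$. Finally, for $(4)\Rightarrow(1)$, trace preservation follows from $\Tr\1[ \Tr_D\1[ V X V^* \1] \1]=\Tr\1[ V^*V X \1]=\Tr\1[ X \1]$, and complete positivity follows because conjugation by $V\otimes I$ and the partial trace $\Tr_D$ each preserve positivity, so $(\Phi\otimes\id)(Y)\geq 0$ for every $Y\geq 0$.

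The main obstacle is the implication $(2)\Rightarrow(3)$: one must fix the vectorization convention identifying $\mathbb{C}^d\otimes\mathbb{C}^{d'}$ with operators so that it is exactly compatible with the transpose appearing in the reconstruction formula $\Phi(X)=\Tr_d\1[ C_\Phi(X^\T\otimes I_{d'}) \1]$. A mismatched convention produces spurious transposes or complex conjugates on the $A_i$ (for instance $A_i X A_i^\T$ rather than $A_i X A_i^*$), so the bookkeeping of bras, kets, and the partial transpose has to be carried out with care; everything else in the cycle is routine positivity and trace manipulation.
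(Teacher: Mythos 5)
Your proof is correct: the cycle $(1)\Rightarrow(2)\Rightarrow(3)\Rightarrow(4)\Rightarrow(1)$ is complete, the flip-operator remark properly reconciles the paper's convention $(\Phi\otimes\id_D)$ for complete positivity with the convention $(\id_d\otimes\Phi)$ in the Choi matrix, and the transpose bookkeeping in $(2)\Rightarrow(3)$ works out (the vectorization $\ket{w_i}=\sum_j \ket{j}\otimes A_i\ket{j}$ yields $\Tr_{d'}[C_\Phi]=\bigl(\sum_i A_i^*A_i\bigr)^\T$, which equals $I_d$ exactly when $\sum_i A_i^*A_i=I_d$). Note that the paper itself offers no proof of this statement --- it is quoted from Watrous's book --- and your argument is essentially the canonical Choi--Kraus--Stinespring cycle given there, so there is no methodological divergence to report.
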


\subsection{Compatibility of quantum channels}

Let $\Phi: \mathcal{M}_d \to \mathcal{M}^{\otimes n}_{d^\prime}$ be a quantum channel from $1$ to $n$ quantum states, the $i$-th \emph{marginal} of $\Phi$, denoted $\emph{\Phi_i}$ is defined by,
\begin{equation*}
    \Phi_i(X) \coloneqq \Tr_{[n] \setminus \{i\}} \1[ \Phi(X) \1].
\end{equation*}
A marginal of a quantum channel is also a quantum channel.

Let $\Phi_i: \mathcal{M}_d \to \mathcal{M}_{d_i}$ be a family of $k$ quantum channels. The \emph{quantum channel compatibility problem} consists in determining whether there exists a global quantum channel $\Psi: \mathcal{M}_d \to \mathcal{M}_{d_1} \otimes \cdots \otimes \mathcal{M}_{d_k}$ \emph{compatible} with all the $\Phi_i$, that is,
\begin{equation*}
    \Psi_i = \Phi_i,
\end{equation*}
for all marginals $\Psi_i$.

\begin{remark*}
    Quantum channels can be incompatible with themselves.
\end{remark*}

\section{Quantum fidelity}

The \emph{quantum fidelity} is a measure of the closeness between two quantum states $\rho$ and $\sigma$, defined as the function \emph{$F$} on $\mathcal{D}_d \times \mathcal{D}_d$ by,
\begin{equation*}
    F(\rho, \sigma) = \Tr {\3[ \sqrt{\sqrt{\sigma} \rho \sqrt{\sigma}} \3]}^2.
\end{equation*}

\begin{proposition}[\cite{watrous2018theory}] \label{chap3:prop:quantumFidelityProperties}
    The quantum fidelity $F$ between two quantum states has the following properties:
    \begin{itemize}
        \item $F(\rho, \sigma) = F(\sigma, \rho)$;
        \item $F(\rho, \sigma) \in [0, 1]$;
        \item $F(\rho, \sigma) = 1 \Longleftrightarrow \rho = \sigma$;
        \item $F \2( \rho, \ketbra{\psi}{\psi} \2) = \bra{\psi} \rho \ket{\psi} = \Tr \2[ \rho \ketbra{\psi}{\psi} \2]$;
        \item $F \1( U \rho U^* , U \sigma U^* \1) = F(\rho , \sigma),$ for all unitary matrices $U$;
        \item $F \1( \Phi(\rho) , \Phi(\sigma) \1) \geq F(\rho , \sigma),$ for all quantum channels $\Phi$;
        \item $F$ is jointly concave, i.e. for all $\lambda \in [0, 1]$,
        \begin{equation*}
            F \1( \lambda \cdot \rho_1 + (1 - \lambda) \rho_2, \lambda \cdot \sigma_1 + (1 - \lambda) \sigma_2 \1) \geq \lambda \cdot F(\rho_1, \sigma_1) + (1 - \lambda) \cdot F(\rho_2, \sigma_2).
        \end{equation*}
    \end{itemize}
\end{proposition}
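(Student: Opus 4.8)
The plan is to make \emph{Uhlmann's theorem} the backbone of the whole proposition and to read off the seven properties from it. Uhlmann's theorem states that
\[
    \sqrt{F(\rho,\sigma)} = \max_{\ket{\psi_\rho},\,\ket{\psi_\sigma}} \1| \braket{\psi_\rho}{\psi_\sigma} \1|,
\]
the maximum being taken over all purifications $\ket{\psi_\rho},\ket{\psi_\sigma}$ of $\rho$ and $\sigma$ on a common space $\mathcal{H} \otimes \mathcal{H}'$. To establish it I would fix the canonical purification $\ket{\psi_\rho} = (\sqrt{\rho} \otimes I) \sum_i \ket{ii}$, observe that every purification of $\sigma$ has the form $(\sqrt{\sigma} \otimes W) \sum_i \ket{ii}$ for some unitary $W$ on the ancilla, and use the identity $\bra{\Omega}(A \otimes B)\ket{\Omega} = \Tr[A B^\T]$ with $\ket{\Omega} = \sum_i \ket{ii}$ to rewrite the overlap as $\Tr[\sqrt{\rho}\sqrt{\sigma}\,W]$ up to relabelling $W$. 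The elementary fact $\max_{W\,\text{unitary}} \1| \Tr[C W] \1| = \Tr \sqrt{C^* C}$, attained at the unitary from the polar decomposition of $C$, then gives the maximum $\Tr\sqrt{\sqrt{\sigma}\rho\sqrt{\sigma}} = \sqrt{F(\rho,\sigma)}$, which matches the definition.

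Most of the properties are then short. Symmetry is visible from the symmetric right-hand side, or from $\Tr|\sqrt{\rho}\sqrt{\sigma}| = \Tr|\sqrt{\sigma}\sqrt{\rho}|$ since a matrix and its adjoint have the same singular values. As purifications are unit vectors, Cauchy--Schwarz forces $\1|\braket{\psi_\rho}{\psi_\sigma}\1| \leq 1$, hence $F \leq 1$, while $F \geq 0$ is immediate from the definition; and $F(\rho,\sigma)=1$ forces equality in Cauchy--Schwarz, so the optimal purifications are proportional and their reduced states $\rho,\sigma$ coincide (the converse being trivial with a shared purification). The pure-state value is a direct computation: for $\sigma = \ketbra{\psi}{\psi}$ one has $\sqrt{\sigma} = \sigma$ and $\sqrt{\sigma}\rho\sqrt{\sigma} = \bra{\psi}\rho\ket{\psi}\,\ketbra{\psi}{\psi}$, whose square root has trace $\sqrt{\bra{\psi}\rho\ket{\psi}}$, so $F = \bra{\psi}\rho\ket{\psi}$. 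Unitary invariance follows from $\sqrt{U\sigma U^*} = U\sqrt{\sigma}U^*$ and the cyclicity of the trace.

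For monotonicity I would combine Uhlmann with the Stinespring representation of \hyperref[chap3:thm:quantumChannelStructure]{Theorem~\ref*{chap3:thm:quantumChannelStructure}}, writing $\Phi(X) = \Tr_D[V X V^*]$ for an isometry $V$. Choosing purifications $\ket{\psi_\rho},\ket{\psi_\sigma}$ that attain Uhlmann's maximum for $(\rho,\sigma)$, the vectors $(V \otimes I)\ket{\psi_\rho}$ and $(V \otimes I)\ket{\psi_\sigma}$ are purifications of $\Phi(\rho)$ and $\Phi(\sigma)$ whose overlap equals $\bra{\psi_\rho}(V^*V \otimes I)\ket{\psi_\sigma} = \braket{\psi_\rho}{\psi_\sigma}$, because $V^*V = I$. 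Since Uhlmann's formula for $(\Phi(\rho),\Phi(\sigma))$ is itself a maximum over purifications, it is at least this surviving overlap, which gives $F(\Phi(\rho),\Phi(\sigma)) \geq F(\rho,\sigma)$.

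The joint concavity is the delicate step and the main obstacle. The natural move is again a purification construction: taking optimal purifications $\ket{x_i},\ket{y_i}$ of $(\rho_i,\sigma_i)$ with real nonnegative overlaps $\braket{x_i}{y_i} = \sqrt{F(\rho_i,\sigma_i)}$, I would set $\ket{X} = \sqrt{\lambda}\,\ket{x_1}\ket{0} + \sqrt{1-\lambda}\,\ket{x_2}\ket{1}$ and $\ket{Y} = \sqrt{\lambda}\,\ket{y_1}\ket{0} + \sqrt{1-\lambda}\,\ket{y_2}\ket{1}$ on an enlarged ancilla; orthogonality of $\ket{0},\ket{1}$ makes these purifications of the two mixtures, with overlap $\lambda\sqrt{F(\rho_1,\sigma_1)} + (1-\lambda)\sqrt{F(\rho_2,\sigma_2)}$, and Uhlmann yields a concavity bound. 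The catch -- and where the real work lies -- is that this argument directly proves concavity of the \emph{root} fidelity $\sqrt{F}$, whereas the statement concerns $F$ with the squared normalization (consistent with the pure-state value $\bra{\psi}\rho\ket{\psi}$). To upgrade to concavity of $F$ itself I would invoke Alberti's variational formula $F(\rho,\sigma) = \inf_{W > 0} \Tr[\rho W]\,\Tr[\sigma W^{-1}]$, the technical heart being to show that this infimum is jointly concave in $(\rho,\sigma)$; I expect this to be the most demanding part of the argument.
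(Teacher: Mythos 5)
Your handling of the first six properties is sound, and since the paper offers no proof of this proposition (it is quoted from Watrous's book), your Uhlmann-based route is as good a backbone as any: the canonical-purification computation via $\bra{\Omega}(A \otimes B)\ket{\Omega} = \Tr[A B^\T]$, the polar-decomposition maximization, Cauchy--Schwarz for the range and the equality case, the direct pure-state computation, and the isometry argument for monotonicity are all correct.

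The genuine gap is the final step, and it is not a matter of missing technique: the statement you defer to there is false. With the squared normalization that item 4 forces (and that the thesis uses throughout Chapter 4, where $F(\Phi_i(\rho),\rho) = \Tr[\Phi_i(\rho)\rho]$ for pure $\rho$), the fidelity is \emph{not} jointly concave, so no argument --- via Alberti's formula or otherwise --- can establish item 7 as written. Concretely, on $\mathbb{C}^3$ take $\lambda = \frac{1}{2}$, $\rho_1 = \sigma_1 = \ketbra{0}{0}$, $\rho_2 = \ketbra{1}{1}$, $\sigma_2 = \ketbra{2}{2}$. Then $F(\rho_1,\sigma_1) = 1$ and $F(\rho_2,\sigma_2) = 0$, while the mixtures $\bar{\rho} = \frac{1}{2}\ketbra{0}{0} + \frac{1}{2}\ketbra{1}{1}$ and $\bar{\sigma} = \frac{1}{2}\ketbra{0}{0} + \frac{1}{2}\ketbra{2}{2}$ satisfy $\sqrt{\bar{\sigma}}\,\bar{\rho}\,\sqrt{\bar{\sigma}} = \frac{1}{4}\ketbra{0}{0}$, hence
\begin{equation*}
    F(\bar{\rho},\bar{\sigma}) = \frac{1}{4} < \frac{1}{2} = \lambda \cdot F(\rho_1,\sigma_1) + (1-\lambda) \cdot F(\rho_2,\sigma_2).
\end{equation*}
In particular Alberti's infimum $\inf_{W>0} \Tr[\rho W] \Tr[\sigma W^{-1}]$, which equals this squared fidelity, cannot be jointly concave, so the "technical heart" you planned to supply is unprovable. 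The inconsistency sits in the proposition itself, not in your instincts: Watrous's convention is the root fidelity $\norm{\sqrt{\rho}\sqrt{\sigma}}_1$, for which joint concavity (item 7) is true but the pure-state identity (item 4) carries a square root, $F(\rho, \ketbra{\psi}{\psi}) = \sqrt{\bra{\psi}\rho\ket{\psi}}$; the thesis squares the definition and item 4 but leaves item 7 untouched. Your purification argument already proves the correct statement --- joint concavity of $\sqrt{F}$ --- and you should stop there, flagging the convention clash. Note finally that what the thesis actually uses later is concavity in the first argument with the second argument fixed, and this \emph{does} hold for the squared fidelity; your Alberti formula gives it immediately, since for fixed $\sigma$ each map $\rho \mapsto \Tr[\rho W]\Tr[\sigma W^{-1}]$ is linear in $\rho$ and an infimum of linear functions is concave.
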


\section{Graphical calculus} \label{chap3:sec:graphicalCalculus}

The present Section introduces a graphical calculus for tensors, which is built upon the graphical notation developed by Penrose \cite{penrose1971applications}. Recently, analogous calculi have been formulated within the tensor network states framework and the framework of categorical quantum information theory, which are elaborated in \cite{wood2015tensor,bridgeman2017hand,coecke2017picturing}. The graphical calculus introduced here is consistent with the tensor representation of a diagram algebra, introduced in \hyperref[chap2:sec:tensorRepresentation]{Section~\ref*{chap2:sec:tensorRepresentation}}.

In this graphical notation, tensors are represented by \emph{boxes} and \emph{wires},
\begin{equation*}
    % [inline block 12: 15 envs, 15792 chars -> data_tex | \begin{tikzpicture}[box/.style = {rectangle,                                       draw = black,...]

\end{equation*}
\chapter{Quantum Cloning} \label{chap4}

\noindent\textbf{The present Section discusses the results of the papers ``A geometrical description of the universal $1 \to 2$ asymmetric quantum cloning region'' \cite{nechita2021geometrical}, and ``The asymmetric quantum cloning region'' \cite{nechita2022asymmetric}, of which I am a co-author.}

\medskip

The problem of quantum cloning has received considerable attention over the last thirty years. This area of research began with the early investigations of universal quantum cloners \cite{buvzek1996quantum} and has since expanded to include various cloning scenarios: symmetric \cite{werner1998optimal, keyl1999optimal}, asymmetric \cite{cerf1998asymmetric, fiuravsek2005highly}, qubits \cite{gisin1997optimal}, qudits \cite{cerf2000asymmetric}, universal \cite{bruss1998optimal, iblisdir2006generalised, iblisdir2005multipartite}, equatorial \cite{bruss2000phase, d2003optimal, durt2004characterization}, economical \cite{durt2004economic, durt2005economical}, probabilistic \cite{duan1998probabilistic, duan1998probabilistic}, continuous quantum systems \cite{cerf2004optimal, cerf2005non}.

Of particular interest are two sets of papers dealing with the most general case of asymmetric universal $1 \to N$ quantum cloning problem. The first set of papers, by Kay and collaborators, deals with the optimisation of this problem \cite{kay2012optimal, kay2014optimal, kay2016optimal}. The second set of papers, by \'Cwikli\'nski, Horodecki, Mozrzymas, and Studzi\'nski, uses techniques from group representation theory \cite{cwiklinski2012region, studzinski2013commutant, mozrzymas2014structure, studzinski2014group, mozrzymas2018simplified}.

In particular, Hashagen's article on the asymmetric $1 \to 2$ quantum cloning problem \cite{hashagen2016universal},  should be read in conjunction with the \hyperref[chap4:sec:universal1to2QuantumCloningProblem]{Section~\ref*{chap4:sec:universal1to2QuantumCloningProblem}}. Indeed, the quantum cloning problem is studied there using similar techniques: twirling of the quantum channel, Choi matrix as an element the algebra of operator permutations, decomposition this algebra into matrix algebras of size $2 \times 2$ and $1 \times 1$.
The main contribution of \cite{nechita2021geometrical} is the complete spectral analysis of the Choi matrix, which allows a block diagonalization of the matrix and a characterisation of the figures of merit as ellipses.

\section{Quantum cloning problem}

The problem known as the \emph{quantum cloning problem} aims to identify a specific quantum channel, called the \emph{quantum cloning channel}, denoted by $\Phi: \mathcal{M}_d \to {\1( \mathcal{M}_d \1)}^{\otimes N}$, which maps an \textit{input} pure quantum state to an \textit{output} mixed quantum state on a $N$-fold composite quantum system, such that the \textit{output} marginals of $\Phi$ are as close as possible to the \textit{input}. To achieve this, a \emph{direction vector} denoted $\emph{a}$ satisfying $a \in [0,1]^N$ and $\sum^N_{i=1}a_i = 1$, together with a subset $\emph{\Gamma}$ of the pure quantum states, are introduced. The quantum cloning problem is defined as the optimisation problem given by,
\begin{equation*}
    \sup_{\Phi~\textsc{cptp}} \sum^N_{i = 1} a_i \cdot \operatornamewithlimits{\mathbb{E}}_{\rho \in \Gamma} \2[ F \1( \Phi_i(\rho), \rho \1) \2],
\end{equation*}
where the expected value is taken with respect to the uniform measure on $\Gamma$.
\begin{remark*}
    The quantum cloning problem belongs to a more general class of quantum marginal problems, which includes the quantum state marginal problem \cite{haapasalo2021quantum}, the quantum channel marginal problem \cite{hsieh2022quantum}, and the quantum measurement marginal problem \cite{loulidi2021compatibility}.
\end{remark*}

The \emph{no-cloning theorem} states that a perfect quantum cloning channel, i.e. a quantum channel with marginals the identity quantum channels, cannot exist in general. Indeed, a linearity argument shows that, if $\ket{0}, \ldots, \ket{d-1}$ is the computational basis of $\mathbb{C}^d$, and $U \in \mathrm{U}_{d^2}$ is a unitary matrix such that on this basis,
\begin{equation*}
    U \ket{i} \otimes \ket{v} = \ket{i} \otimes \ket{i},
\end{equation*}
Then $U$ is a perfect quantum cloning unitary for this basis, given an auxiliary pure quantum state $\ket{v}$. But if $\ket{\psi} \coloneqq \frac{\ket{i} + \ket{j}}{\sqrt{2}}$ for distinct $i,j \in \{0, \ldots, d-1 \}$, then,
\begin{align*}
    U \ket{\psi} \otimes \ket{v} &= \frac{\ket{i} \otimes \ket{i} + \ket{j} \otimes \ket{j}}{\sqrt{2}} \\
    &\neq \ket{\psi} \otimes \ket{\psi}.
\end{align*}

Many quantum cryptographic schemes rely on the existence of the no-cloning theorem \cite{bennett1984quantum,broadbent2020uncloneable}.
\begin{theorem}[No-cloning theorem \cite{wootters1982single,dieks1982communication}] \label{chap4:thm:noCloning}
    For any subset $\Gamma$ of the pure quantum states, there is no quantum cloning channel $\Psi: \mathcal{M}_d \to {\1( \mathcal{M}_d \1)}^{\otimes N}$ such that for all pure quantum states $\rho \in \Gamma$ and all marginals $\Psi_i$,
    \begin{equation*}
        \Psi_i(\rho) = \rho,
    \end{equation*}
    unless $\Gamma$ is a set of mutually orthogonal pure quantum states.
\end{theorem}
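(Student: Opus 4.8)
The plan is to argue on the overlap of two states and to exploit the monotonicity of fidelity under channels (Proposition~\ref{chap3:prop:quantumFidelityProperties}). Throughout I assume $N \geq 2$, since for $N = 1$ the identity channel clones any $\Gamma$ and the statement is vacuous. Suppose a quantum cloning channel $\Psi \colon \mathcal{M}_d \to (\mathcal{M}_d)^{\otimes N}$ satisfies $\Psi_i(\rho) = \rho$ for every $\rho \in \Gamma$ and every marginal $i$. I fix two pure states $\rho = \ketbra{\psi}{\psi}$ and $\sigma = \ketbra{\phi}{\phi}$ in $\Gamma$; the goal is to show $\braket{\psi}{\phi} = 0$ whenever $\rho \neq \sigma$.

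First I would show that the output necessarily factorizes, i.e. $\Psi(\rho) = \rho^{\otimes N}$. This uses the same mechanism as the monogamy argument of Section~\ref{chap3:sec:monogamyOfEntanglement}: if a multipartite density matrix has a pure marginal on one subsystem, then it splits as the tensor product of that pure marginal with a state on the complement. Concretely, since $\Psi_1(\rho) = \rho$ is an extremal point of the set of density matrices, writing $\Psi(\rho)$ in its spectral decomposition and tracing out the last $N-1$ factors forces every eigenvector to have the form $\ket{\psi} \otimes \ket{\eta}$, whence $\Psi(\rho) = \rho \otimes \Omega'$. Applying the same observation to the remaining marginals $\Psi_2(\rho), \ldots, \Psi_N(\rho)$, each equal to $\rho$, and iterating yields $\Psi(\rho) = \rho^{\otimes N}$, and likewise $\Psi(\sigma) = \sigma^{\otimes N}$. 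In particular both outputs are again pure.

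Then I would invoke monotonicity of the fidelity under the channel $\Psi$, namely $F(\Psi(\rho), \Psi(\sigma)) \geq F(\rho, \sigma)$. Both arguments are now pure, so the pure-state formula $F(\tau, \ketbra{\chi}{\chi}) = \bra{\chi} \tau \ket{\chi}$ from Proposition~\ref{chap3:prop:quantumFidelityProperties} applies on both sides: using $\braket{\psi^{\otimes N}}{\phi^{\otimes N}} = \braket{\psi}{\phi}^{N}$ gives $F(\Psi(\rho), \Psi(\sigma)) = |\braket{\psi}{\phi}|^{2N}$, while $F(\rho, \sigma) = |\braket{\psi}{\phi}|^{2}$. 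Setting $t \coloneqq |\braket{\psi}{\phi}| \in [0,1]$, monotonicity reads $t^{2N} \geq t^{2}$; but for $t \in [0,1]$ and $N \geq 2$ one always has $t^{2N} \leq t^{2}$, so $t^{2N} = t^{2}$, which forces $t \in \{0, 1\}$. The value $t = 1$ corresponds to $\rho = \sigma$, so any two distinct states of $\Gamma$ satisfy $\braket{\psi}{\phi} = 0$, i.e. $\Gamma$ is a set of mutually orthogonal pure states, as claimed.

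The main obstacle is the factorization step $\Psi(\rho) = \rho^{\otimes N}$; everything else is a one-line consequence of monotonicity and the pure-state overlap. The amplification from exponent $2$ to exponent $2N$ — which is exactly what breaks the inequality — comes from the tensor structure of the perfect output, so the purity-of-marginals argument cannot be bypassed. An equivalent route would replace this step by a Stinespring isometry $V$ (Theorem~\ref{chap3:thm:quantumChannelStructure}): purity of all marginals of $V\ket{\psi}$ forces $V\ket{\psi} = \ket{\psi}^{\otimes N} \otimes \ket{\xi_\psi}$, and then $V^*V = I_d$ gives $\braket{\psi}{\phi} = \braket{\psi}{\phi}^{N} \braket{\xi_\psi}{\xi_\phi}$, leading to the same dichotomy after taking moduli.
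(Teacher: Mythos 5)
Your proof is correct, and it takes a genuinely different --- in fact more complete --- route than the paper, which never actually proves Theorem~\ref{chap4:thm:noCloning}: the paper attributes the result to the original references, precedes it with the classical linearity sketch (a unitary that copies a fixed basis fails on the superposition $(\ket{i}+\ket{j})/\sqrt{2}$, an argument tied to a unitary implementation and to that particular $\Gamma$), and follows it with a remark that the statement is a reformulation of the monogamy of entanglement at the level of Choi matrices, without ever extracting the orthogonality dichotomy. Your two steps supply exactly what is missing. The factorization $\Psi(\rho)=\rho^{\otimes N}$ is the rigorous form of that monogamy remark: extremality of the pure marginal $\Psi_1(\rho)=\rho$ in $\mathcal{D}_d$ forces every spectral eigenvector of $\Psi(\rho)$ to split off $\ket{\psi}$, and iterating over the remaining marginals gives the tensor power. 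The data-processing inequality for fidelity from Proposition~\ref{chap3:prop:quantumFidelityProperties} then yields $|\braket{\psi}{\phi}|^{2N}\geq|\braket{\psi}{\phi}|^{2}$, which for $N\geq 2$ and $|\braket{\psi}{\phi}|\in[0,1]$ forces $|\braket{\psi}{\phi}|\in\{0,1\}$, i.e.\ distinct states of $\Gamma$ are orthogonal --- and this holds for arbitrary \textsc{cptp} maps and arbitrary $\Gamma$, which the paper's linearity sketch does not cover. Your Stinespring variant is also sound and even leaner: purity of all marginals forces $V\ket{\psi}=\ket{\psi}^{\otimes N}\otimes\ket{\xi_\psi}$, and $V^{*}V=I_d$ gives $\braket{\psi}{\phi}=\braket{\psi}{\phi}^{N}\braket{\xi_\psi}{\xi_\phi}$, hence the same dichotomy without invoking fidelity at all. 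What your route buys is a self-contained proof of the precise statement; what the paper's presentation buys is brevity and a conceptual pointer to monogamy, at the cost of leaving the theorem as a citation.
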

\begin{remark*}
    A perfect quantum cloning channel $\Psi: \mathcal{M}_d \to {\1( \mathcal{M}_d \1)}^{\otimes N}$ has marginals $\Phi_i$ of the form $\Phi_i = \id_d$ the identity quantum channels, and Choi matrices $C_{\Phi_i} = d \cdot \omega$ an unormalized maximally entangled quantum state. The no-cloning \hyperref[chap4:thm:noCloning]{Theorem~\ref*{chap4:thm:noCloning}} is a reformulation of the monogamy of the entanglement, and the fact that there is no quantum channel compatible with the identity quantum channels.
\end{remark*}

A generalisation of the quantum cloning problem can be considered given a quantum cloning channel $\Phi: {\1( \mathcal{M}_d \1)}^{\otimes M} \to {\1( \mathcal{M}_d \1)}^{\otimes N}$, which maps an \textit{input} pure quantum state as an identical product state on a $M$-fold composite quantum system, to an \textit{output} mixed quantum state on a $N$-fold composite quantum system,
\begin{equation*}
    \sup_{\Phi~\textsc{cptp}} \sum^N_{i = 1} a_i \cdot \operatornamewithlimits{\mathbb{E}}_{\rho \in \Gamma} \3[ F \2( \Phi_i \1( \rho^{\otimes M} \1), \rho \2) \3].
\end{equation*}
Even with $M$ identical copies as \textit{input}, the no-cloning theorem hold.
\begin{theorem}[$M \to N$ no-cloning theorem \cite{werner1998optimal}] \label{chap4:thm:MtoNnoCloning}
    For any subset $\Gamma$ of the pure quantum states, there is no quantum cloning channel $\Phi: {\1( \mathcal{M}_d \1)}^{\otimes M} \to {\1( \mathcal{M}_d \1)}^{\otimes N}$, with $M < N$, such that for all pure quantum states $\rho \in \Gamma$ and all marginals $\Psi_i$,
    \begin{equation*}
        \Phi_i \1( \rho^{\otimes M} \1) = \rho,
    \end{equation*}
    unless $\Gamma$ is a set of mutually orthogonal pure quantum states.
\end{theorem}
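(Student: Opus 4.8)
The plan is to show that the very existence of a channel $\Phi$ meeting the cloning conditions forces any two states of $\Gamma$ to be either equal or orthogonal. Fix $\rho \coloneqq \ketbra{\psi}{\psi}$ and $\sigma \coloneqq \ketbra{\phi}{\phi}$ in $\Gamma$, and suppose $\Phi$ satisfies $\Phi_i(\rho^{\otimes M}) = \rho$ and $\Phi_i(\sigma^{\otimes M}) = \sigma$ for every marginal index $i$. The case $M = 1$ is exactly the no-cloning \hyperref[chap4:thm:noCloning]{Theorem~\ref*{chap4:thm:noCloning}}, so the argument below should be read as the natural extension absorbing the extra input copies.

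The first step is to upgrade the per-marginal conditions into a full product structure on the output. Since $M < N$ we have $N \geq 2$, and the output $\Phi(\rho^{\otimes M})$ is an $N$-partite state all of whose single-system marginals equal the \emph{pure} state $\rho$. Here I would reuse the purity argument from \hyperref[chap3:sec:monogamyOfEntanglement]{Section~\ref*{chap3:sec:monogamyOfEntanglement}}: if the reduced state on one factor is an extremal point of $\mathcal{D}_d$, then in any spectral decomposition of the global state every term must reduce to that same pure marginal, which forces the factor to split off as a tensor factor. Peeling off the systems one at a time (the remaining state again has all single-system marginals equal to $\rho$) yields $\Phi(\rho^{\otimes M}) = \rho^{\otimes N}$, and likewise $\Phi(\sigma^{\otimes M}) = \sigma^{\otimes N}$.

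The second step compares input and output fidelities. The products $\rho^{\otimes M} = \ketbra{\psi^{\otimes M}}{\psi^{\otimes M}}$, $\sigma^{\otimes M}$, $\rho^{\otimes N}$, $\sigma^{\otimes N}$ are all pure, so the formula $F(\rho,\ketbra{\chi}{\chi}) = \bra{\chi}\rho\ket{\chi}$ of \hyperref[chap3:prop:quantumFidelityProperties]{Proposition~\ref*{chap3:prop:quantumFidelityProperties}} gives
\begin{equation*}
    F\left(\rho^{\otimes M}, \sigma^{\otimes M}\right) = |\braket{\psi}{\phi}|^{2M}, \qquad F\left(\rho^{\otimes N}, \sigma^{\otimes N}\right) = |\braket{\psi}{\phi}|^{2N}.
\end{equation*}
Monotonicity of the fidelity under the channel $\Phi$ (same proposition), combined with the product structure just established, then yields
\begin{equation*}
    |\braket{\psi}{\phi}|^{2N} = F\left(\Phi(\rho^{\otimes M}), \Phi(\sigma^{\otimes M})\right) \geq F\left(\rho^{\otimes M}, \sigma^{\otimes M}\right) = |\braket{\psi}{\phi}|^{2M}.
\end{equation*}
Setting $t \coloneqq |\braket{\psi}{\phi}| \in [0,1]$, this reads $t^{2N} \geq t^{2M}$; but $M < N$ and $t \in [0,1]$ also force $t^{2N} \leq t^{2M}$, so $t^{2N} = t^{2M}$ and hence $t \in \{0,1\}$. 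Thus $\rho$ and $\sigma$ are orthogonal (if $t = 0$) or equal (if $t = 1$, since $F = 1 \Longleftrightarrow$ equality). As $\rho,\sigma \in \Gamma$ were arbitrary, $\Gamma$ must consist of mutually orthogonal pure states, which is the assertion of the theorem.

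The only genuinely non-routine step is the first one, promoting the marginal cloning conditions to the global identity $\Phi(\rho^{\otimes M}) = \rho^{\otimes N}$. The weaker bound $F\big(\text{joint}\big) \leq F\big(\text{single marginal}\big) = F(\rho,\sigma)$, obtained from monotonicity under the partial trace, is insufficient: it only gives $t^{2M} \leq t^{2}$, which is vacuous. It is therefore essential to exploit the rigidity that a pure marginal imposes on the whole state, exactly as in the monogamy-of-entanglement computation; once that is in hand, the remainder is the fidelity calculus together with the elementary behaviour of $t \mapsto t^{2k}$ on $[0,1]$.
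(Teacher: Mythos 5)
Your proof is correct. It is worth noting that the thesis itself contains no proof of this statement: the theorem is quoted from \cite{werner1998optimal}, and the only supporting material in the text is the linearity argument preceding \hyperref[chap4:thm:noCloning]{Theorem~\ref*{chap4:thm:noCloning}} together with the remark that no-cloning is a reformulation of the monogamy of entanglement. Your argument is precisely the rigorous form of that remark, assembled from tools the thesis already provides. The crucial step --- promoting the marginal conditions $\Phi_i(\rho^{\otimes M}) = \rho$ to the global identity $\Phi(\rho^{\otimes M}) = \rho^{\otimes N}$ --- is the purity-rigidity computation of \hyperref[chap3:sec:monogamyOfEntanglement]{Section~\ref*{chap3:sec:monogamyOfEntanglement}} applied inductively: a pure single-system marginal forces that factor to split off, and the residual $(N-1)$-partite state inherits the hypothesis. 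Your observation that monotonicity under the partial trace alone yields only the vacuous $t^{2} \geq t^{2M}$ correctly identifies why this step cannot be skipped. The rest is the fidelity calculus of \hyperref[chap3:prop:quantumFidelityProperties]{Proposition~\ref*{chap3:prop:quantumFidelityProperties}}: $t^{2N} = F\bigl(\Phi(\rho^{\otimes M}), \Phi(\sigma^{\otimes M})\bigr) \geq F\bigl(\rho^{\otimes M}, \sigma^{\otimes M}\bigr) = t^{2M}$, while $t \in [0,1]$ and $M < N$ give the reverse inequality, so $t \in \{0,1\}$, i.e.\ equality or orthogonality of the two states. A pleasant by-product of your route is that it proves the $1 \to N$ case (\hyperref[chap4:thm:noCloning]{Theorem~\ref*{chap4:thm:noCloning}}) along the way rather than assuming it, and it never needs the Choi-matrix or channel-compatibility language through which the thesis frames the $M=1$ statement.
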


It is important to note that the quantum cloning problem, the $1 \to N$ no-cloning \hyperref[chap4:thm:noCloning]{Theorem~\ref*{chap4:thm:noCloning}} and $M \to N$ no-cloning \hyperref[chap4:thm:MtoNnoCloning]{Theorem~\ref*{chap4:thm:MtoNnoCloning}} are states on pure states only. When mixed states are considered, the problem becomes the \emph{quantum broadcasting problem}, and the equivalent no-broadcasting theorems do not hold in the same generality. In particular for mixed enough states, it is possible to broadcast.
\begin{theorem*}[Superbroacasting \cite{d2005superbroadcasting}]
    For $0 < M < N$ large enough, there exists a quantum channel $\Phi: {\1( \mathcal{M}_2 \1)}^{\otimes M} \to {\1( \mathcal{M}_2 \1)}^{\otimes N}$ such that for all $2$-dimensional mixed states $\rho \in \mathcal{D}_2$, where $\rho = \frac{1}{2}(I_2 + r_x \cdot \sigma_x + r_y \cdot \sigma_y + r_z \cdot \sigma_z)$ and with $\norm{r}$ small enough, the mixed state, 
    \begin{equation*}
        \sigma \coloneqq \Phi_i \1( \rho^{\otimes M} \1),
    \end{equation*}
    commutes with $\rho$, for all marginals $\Phi_i$. I.e. they are collinear in the spherical representation:
    \begin{equation*}
        \mathcal{D}_2 = \set{\frac{1}{2}(I_2 + r_x \cdot \sigma_x + r_y \cdot \sigma_y + r_z \cdot \sigma_z)}{r \coloneqq (r_x, r_y, r_z) \in \mathbb{R}^3 \text{ and } \norm{r} \leq 1}.
    \end{equation*}
    Moreover, if $\sigma = \frac{1}{2}(I_2 + r^\prime_x \cdot \sigma_x + r^\prime_y \cdot \sigma_y + r^\prime_z \cdot \sigma_z)$, then $\norm{r^\prime} \geq \norm{r}$.
\end{theorem*}

From \hyperref[chap3:prop:quantumFidelityProperties]{Proposition~\ref*{chap3:prop:quantumFidelityProperties}} and the joint concavity quantum fidelity, given two quantum cloning channels $\Phi$ and $\Psi$ from $\mathcal{M}_d$ to ${\1( \mathcal{M}_d \1)}^{\otimes N}$, then for all subset $\Gamma$ of the pure quantum states and for all $1 \leq i \leq N$,
\begin{equation*}
    \operatornamewithlimits{\mathbb{E}}_{\rho \in \Gamma} \4[ F \3( \frac{(\Phi + \Psi)_i (\rho)}{2}, \rho \3) \4] \geq \frac{\operatornamewithlimits{\mathbb{E}}_{\rho \in \Gamma} \2[ F \1( \Phi_i(\rho), \rho \1) \2] + \operatornamewithlimits{\mathbb{E}}_{\rho \in \Gamma} \2[ F \1( \Psi_i(\rho), \rho \1) \2]}{2}.
\end{equation*}
Hence, in order to address the quantum cloning optimization problem, the approach would be to identify the largest uniform sum of quantum cloning channels.

Let $\Gamma$ be a subset of the pure quantum states, and $G$ be a compact subgroup of the unitary group $\mathrm{U}_d$ acting on $\mathbb{C}^d$, such that for all $\rho \in \Gamma$ and for all $M \in G$,
\begin{equation*}
    M \rho M^* \in \Gamma.
\end{equation*}
The \emph{twirling} of a quantum channel $\Phi: \mathcal{M}_d \to {\1( \mathcal{M}_d \1)}^{\otimes N}$, with respect to $\Gamma$ and $G$, denoted $\emph{\widetilde{\Phi}}$, is defined for all $\rho \in \Gamma$ by
\begin{equation*}
    \widetilde{\Phi}(\rho) \coloneqq \int_G {\1( M^* \1)}^{\otimes N} \2( \Phi \1( M \rho M^* \1) \2) M^{\otimes N} ~\mathrm{d}M,
\end{equation*}
where the integral is taken with respect to the normalized Haar measure on the group $G$.
Then for all $1 \leq i \leq N$,
\begin{equation*}
    \operatornamewithlimits{\mathbb{E}}_{\rho \in \Gamma} \2[ F \1( \widetilde{\Phi}_i (\rho), \rho \1) \2] \geq \operatornamewithlimits{\mathbb{E}}_{\rho \in \Gamma} \2[ F \1( \Phi_i(\rho), \rho \1) \2].
\end{equation*}
The approach would be to consider the largest group $G$. In particular the partially transposed Choi matrix $C^\Tpartial_{\widetilde{\Phi}}$ is in the commutant of $G$.
\begin{proposition} \label{chap4:prop:twirledChoiMatrixCommutationRelation}
    Let $\Gamma$ be a subset of the pure quantum states, let $G$ be a compact subgroup of $\mathrm{U}_d$, and $\Phi: \mathcal{M}_d \to {\1( \mathcal{M}_d \1)}^{\otimes N}$ a quantum cloning channel. If $\widetilde{\Phi}$ is the twirling of $\Phi$ with respect to $\Gamma$ and $G$, then for all $M \in G$,
    \begin{equation*}
        \1[ C^\Tpartial_{\widetilde{\Phi}}, M^{\otimes (N+1)} \1] = 0.
    \end{equation*}
\end{proposition}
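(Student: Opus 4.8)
The plan is to compute the Choi matrix of $\widetilde{\Phi}$ explicitly, push the partial transpose through the twirling integral, and then exploit the invariance of the normalized Haar measure. The starting point is the covariance of the Choi construction under pre- and post-composition by unitaries: if $\Psi(X) = U\,\Phi(V X V^*)\,U^*$ for a fixed unitary $U$ acting on the output legs and $V$ on the input leg, then a direct index computation (equivalently, the ricochet identity $(A\otimes I)\ket{\Omega} = (I\otimes A^\T)\ket{\Omega}$) gives $C_\Psi = (V^\T \otimes U)\,C_\Phi\,(V^\T\otimes U)^*$. Reading the twirling $\widetilde{\Phi}(X) = \int_G (M^*)^{\otimes N}\,\Phi(M X M^*)\,M^{\otimes N}\,\mathrm{d}M$ as a Haar average of such maps with $U = (M^*)^{\otimes N}$ and $V = M$, and using that the Choi matrix is linear in the map, I would obtain
\begin{equation*}
    C_{\widetilde{\Phi}} = \int_G \big(M^\T \otimes (M^*)^{\otimes N}\big)\,C_\Phi\,\big(\bar{M} \otimes M^{\otimes N}\big)\,\mathrm{d}M.
\end{equation*}

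Next I would apply the partial transpose $\Tpartial$ on the first tensor leg. The key elementary identity is that, for operators $A,A'$ on the first space and $B,B'$ on the remaining $N$ spaces,
\begin{equation*}
    \big[(A\otimes B)\,C\,(A'\otimes B')\big]^{\Tpartial} = ({A'}^{\T}\otimes B)\,C^{\Tpartial}\,(A^{\T}\otimes B'),
\end{equation*}
that is, transposing the first leg swaps the left and right factors acting on that leg and transposes each, while leaving the $B,B'$ factors untouched. Applying this with $A = M^\T$ and $A' = \bar{M}$, so that ${A'}^{\T} = M^*$ and $A^{\T} = M$, turns the mixed conjugation on the input leg into the same kind of conjugation as on the output legs, yielding
\begin{equation*}
    C^{\Tpartial}_{\widetilde{\Phi}} = \int_G (M^*)^{\otimes (N+1)}\,C^{\Tpartial}_\Phi\,M^{\otimes(N+1)}\,\mathrm{d}M.
\end{equation*}
This is the step where the partial transpose is essential, and I expect it to be the main obstacle: it is precisely what converts the walled-Brauer-type covariance $\bar{M}\otimes M^{\otimes N}$ of $C_{\widetilde{\Phi}}$ into the diagonal Brauer-type action $M^{\otimes(N+1)}$. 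One must track the conjugates and transposes on the input leg carefully, since the partial transpose does not reverse operator products in the naive way.

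Finally, to conclude, I would fix $L \in G$ and conjugate $C^{\Tpartial}_{\widetilde{\Phi}}$ by $L^{\otimes(N+1)}$. Using $L\,M^* = (M L^*)^*$ together with the substitution $M' = M L^*$, the right-invariance of the Haar measure on the compact group $G$ gives
\begin{equation*}
    L^{\otimes(N+1)}\,C^{\Tpartial}_{\widetilde{\Phi}}\,(L^*)^{\otimes(N+1)} = \int_G \big((M')^{*}\big)^{\otimes(N+1)}\,C^{\Tpartial}_\Phi\,(M')^{\otimes(N+1)}\,\mathrm{d}M' = C^{\Tpartial}_{\widetilde{\Phi}}.
\end{equation*}
Since $L$ is unitary, this is exactly $\big[C^{\Tpartial}_{\widetilde{\Phi}}, L^{\otimes(N+1)}\big] = 0$, which is the claim. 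The only genuinely delicate point is the bookkeeping of conjugates and transposes in the two covariance identities; the Haar-invariance argument at the end is then routine.
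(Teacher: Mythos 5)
Your proof is correct and follows essentially the same route as the paper's: both arguments hinge on pushing the unitaries through the Choi construction, using the transpose-swap identity on the input leg to convert the $\bar{M}\otimes M^{\otimes N}$-type covariance into conjugation by $M^{\otimes(N+1)}$, and invoking Haar invariance of the twirl. The only difference is bookkeeping --- the paper phrases Haar invariance as the covariance property $\widetilde{\Phi}(M\rho M^*) = M^{\otimes N}\widetilde{\Phi}(\rho)(M^*)^{\otimes N}$ together with the invariance $(\bar{M}\otimes M)\ket{\Omega} = \ket{\Omega}$, while you write the explicit integral formula for $C_{\widetilde{\Phi}}$ and perform the change of variables at the end; the mathematical content is identical.
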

\begin{proof}
    Since $G$ is a subgroup of $\mathrm{U}_d$, for any $M$ in $G$, the following two equalities hold,
    \begin{equation*}
        \1( \bar{M} \otimes M \1) \ket{\Omega} = \ket{\Omega} \qquad \text{ and } \qquad \bra{\Omega} \1( M^\T \otimes M^* \1) = \bra{\Omega}.
    \end{equation*}
    Then for any $M \in G$,
    \begin{align*}
        C^\Tpartial_{\widetilde{\Phi}} &= {\3( \2( \id_d \otimes \widetilde{\Phi} \2) \1( d \cdot \ketbra{\Omega}{\Omega} \1) \3)}^\Tpartial \\
        &= {\3( \2( \id_d \otimes \widetilde{\Phi} \2) \2( d \1( \bar{M} \otimes M \1) \ketbra{\Omega}{\Omega} \1( M^\T \otimes M^* \1) \2) \3)}^\Tpartial.
    \end{align*}
    From the definition of the twirling $\widetilde{\Phi}$, then for all $M \in G$ and for all $\rho \in \Gamma$
    \begin{equation*}
        \widetilde{\Phi} \1( M \rho M^* \1) = M^{\otimes N} \1( \widetilde{\Phi} (\rho ) \1) {\1( M^* \1)}^{\otimes N}.
    \end{equation*}
    Then the following commutation relation on the partially transposed Choi matrix $C^\Tpartial_{\widetilde{\Phi}}$ holds for any $M \in G$,
    \begin{align*}
        C^\Tpartial_{\widetilde{\Phi}} &= {\4( \1( \bar{M} \otimes M^{\otimes N} \1) \3( \2( \id_d \otimes \widetilde{\Phi} \2) \1( d \cdot \ketbra{\Omega}{\Omega} \1) \3) \2( M^\T \otimes {\1( M^* \1)}^{\otimes N} \2) \4)}^\Tpartial \\
        &= \1( M \otimes M^{\otimes N} \1) {\3( \2( \id_d \otimes \widetilde{\Phi} \2) \1( d \cdot \ketbra{\Omega}{\Omega} \1) \3)}^\Tpartial \2( M^* \otimes {\1( M^* \1)}^{\otimes N} \2) \\
        &= \1( M \otimes M^{\otimes N} \1) C^\Tpartial_{\widetilde{\Phi}} \2( M^* \otimes {\1( M^* \1)}^{\otimes N} \2).
    \end{align*}
    That is $\1[ C^\Tpartial_{\widetilde{\Phi}}, M^{\otimes (N+1)} \1] = 0$.
\end{proof}
\begin{corollary}
    Let $\Gamma$ be a subset of the pure quantum states, let $G$ be a compact subgroup of $\mathrm{U}_d$, and $\Phi: \mathcal{M}_d \to {\1( \mathcal{M}_d \1)}^{\otimes N}$ a quantum cloning channel. If $\widetilde{\Phi}$ is the twirling of $\Phi$ with respect to $\Gamma$ and $G$, then for all $M \in G$ and all $i \in \{1, \ldots, N \}$,
    \begin{equation*}
        \1[ C^\Tpartial_{\widetilde{\Phi}_i}, M^{\otimes 2} \1] = 0.
    \end{equation*}
\end{corollary}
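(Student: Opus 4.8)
The plan is to obtain the two-system relation by reducing the $(N{+}1)$-system relation of Proposition~\ref{chap4:prop:twirledChoiMatrixCommutationRelation} to the $i$-th output, via tracing out the other outputs. Write the reference copy as $\mathcal{H}_0$ and the $N$ output copies as $\mathcal{H}_1, \ldots, \mathcal{H}_N$, so that $C_{\widetilde{\Phi}}$ acts on $\mathcal{H}_0 \otimes \mathcal{H}_1 \otimes \cdots \otimes \mathcal{H}_N$ and the partial transpose $\Tpartial$ acts on the reference leg $\mathcal{H}_0$.

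First I would record that the marginal Choi matrix is a partial trace of the full Choi matrix. Since $\widetilde{\Phi}_i = \Tr_{[N] \setminus \{i\}} \circ\, \widetilde{\Phi}$ by definition of the marginal, and the Choi construction leaves the reference leg untouched, one gets $C_{\widetilde{\Phi}_i} = \Tr_{\{1, \ldots, N\} \setminus \{i\}}\1[ C_{\widetilde{\Phi}} \1]$, the trace running over the output systems $\mathcal{H}_j$ with $j \neq i$. Because $\Tpartial$ acts only on $\mathcal{H}_0$ while this partial trace acts on the disjoint output legs, the two operations commute, so that $C^\Tpartial_{\widetilde{\Phi}_i} = \Tr_{\{1, \ldots, N\} \setminus \{i\}}\1[ C^\Tpartial_{\widetilde{\Phi}} \1]$.

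Next I would feed in Proposition~\ref{chap4:prop:twirledChoiMatrixCommutationRelation} in its conjugation form: since $M$ is unitary, $\1[ C^\Tpartial_{\widetilde{\Phi}}, M^{\otimes (N+1)} \1] = 0$ is equivalent to $C^\Tpartial_{\widetilde{\Phi}} = M^{\otimes (N+1)} \, C^\Tpartial_{\widetilde{\Phi}} \, {\1( M^* \1)}^{\otimes (N+1)}$. Factoring $M^{\otimes (N+1)}$ as the tensor of its two legs on $\mathcal{H}_0 \otimes \mathcal{H}_i$, namely $M \otimes M$, and its $N{-}1$ legs on the remaining output systems, I would pull the $M \otimes M$ factor outside the trace $\Tr_{\{1, \ldots, N\} \setminus \{i\}}$ (legal, since they act on disjoint factors), and then invoke the unitary invariance of the partial trace, $\Tr_B\1[ (I \otimes U) X (I \otimes U^*) \1] = \Tr_B\1[ X \1]$ for $U$ unitary, to discard the remaining $M$ factors on the traced-out legs. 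This collapses the right-hand side to $(M \otimes M)\, C^\Tpartial_{\widetilde{\Phi}_i} \,(M^* \otimes M^*)$, and comparing with $C^\Tpartial_{\widetilde{\Phi}_i}$ itself gives $C^\Tpartial_{\widetilde{\Phi}_i} = (M \otimes M)\, C^\Tpartial_{\widetilde{\Phi}_i} \,(M^* \otimes M^*)$, i.e.\ $\1[ C^\Tpartial_{\widetilde{\Phi}_i}, M^{\otimes 2} \1] = 0$.

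The argument is essentially bookkeeping of which tensor factor each operation touches; the only points that genuinely need care are the commutation of the partial transpose on $\mathcal{H}_0$ with the partial trace over the other output legs, and the unitary invariance of the partial trace used after splitting $M^{\otimes(N+1)}$ into its $(0,i)$-part and its complement. I expect no obstacle beyond keeping this factorisation straight.
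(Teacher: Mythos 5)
Your proof is correct and is exactly the derivation the paper leaves implicit: the corollary is stated there without proof, and the intended route is precisely yours — identify $C_{\widetilde{\Phi}_i}$ as the partial trace of $C_{\widetilde{\Phi}}$ over the other output legs (which commutes with the partial transpose on the reference leg), then conjugate by $M^{\otimes(N+1)}$ via Proposition~\ref{chap4:prop:twirledChoiMatrixCommutationRelation} and absorb the factors on the traced-out legs by unitary invariance of the partial trace. No gaps; the two points you flag as needing care are indeed the only ones, and you handle both correctly.
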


\begin{remark*}
    The importance of twirling a quantum channel is twofold: first, it allows to improve the performance of the quantum channel with respect to the optimization problem, and second, it allows to simplify the optimization problem through the induced symmetries. The study of optimization problems under symmetries has been the subject of substantial work \cite{fawzi2022hierarchy,grinko2022linear}.
\end{remark*}

\subsection{Universal quantum cloning problem} \label{chap4:sec:universalQuantumCloningProblem}

When the subset $\Gamma$ of the pure quantum states is the full set of pure quantum states, it can choosen the subgroup $G$ of $\mathrm{U}_d$ to be the full unitary group. The quantum cloning problem becomes the \emph{universal quantum cloning problem} Then from \hyperref[chap4:prop:twirledChoiMatrixCommutationRelation]{Proposition~\ref*{chap4:prop:twirledChoiMatrixCommutationRelation}}, the partially transposed Choi matrix $C^\Tpartial_{\widetilde{\Phi}}$ of a twirled quantum cloning channel $\widetilde{\Phi}$ commutes with all the unitary matrices $U^{\otimes (N+1)}$, i.e. it is in the commutant of the algebra
\begin{equation*}
    \Span_{\mathbb{C}} \set{U^{\otimes (N+1)}}{U \in \mathrm{U}_d}.
\end{equation*}

From \hyperref[chap2:thm:commutantSchurWeylDuality]{Theorem~\ref*{chap2:thm:commutantSchurWeylDuality}} and \hyperref[chap2:thm:generalAUnitaryMatrixAlgebra]{Theorem~\ref*{chap2:thm:generalAUnitaryMatrixAlgebra}}, the partially transposed Choi matrix $C^\Tpartial_{\widetilde{\Phi}}$ is in the algebra
\begin{equation*}
    \Span_{\mathbb{C}} \set{\psi(\sigma)}{\sigma \in \mathfrak{S}_{N+1}}.
\end{equation*}
That is, it exists a family of complex numbers $c_\sigma$ indexed by the permutations of $\mathfrak{S}_{N+1}$ such that
\begin{equation*}
    C_{\widetilde{\Phi}} = \sum_{\sigma \in \mathfrak{S}_{N+1}} c_\sigma \cdot \psi(\sigma)^\Tpartial.
\end{equation*}

Therefore, the Choi matrix $C_{\widetilde{\Phi}}$ is a sum of partially transposed tensor representation of the symmetric group $\mathfrak{S}_{N+1}$. Hence, a sum of partially transposed tensor representation of the symmetric group $\mathfrak{S}_{N+1}$ is a Choi matrix of a quantum channel if both the positivity and the trace conditions of \hyperref[chap3:thm:quantumChannelStructure]{Theorem~\ref*{chap3:thm:quantumChannelStructure}} hold. These conditions depend on the $(N+1)!$ coefficients, and in particular it does not depend on the dimension of the quantum system. Recall from the Stirling's formula that the asymptotic growth of the factorial function is,
\begin{equation*}
    n! \sim \sqrt{2 \pi n} {\2( \frac{n}{e} \2)}^n.
\end{equation*}

From \hyperref[chap4:prop:twirledChoiMatrixCommutationRelation]{Proposition~\ref*{chap4:prop:twirledChoiMatrixCommutationRelation}}, the partially transposed Choi matrix $C_{\widetilde{\Phi}_i}$ of each marginal $\widetilde{\Phi}_i$ has the form
\begin{align*}
    C_{\widetilde{\Phi}_i} &= \alpha_i \cdot \psi {\1( (1)(2) \1)}^\Tpartial + \beta_i \cdot \psi {\1( (1\:2) \1)}^\Tpartial \\[0.5em]
    &= \alpha_i \cdot \psi {\left(
    % [inline block 13: 4 envs, 4574 chars -> data_tex | \begin{tikzpicture}[baseline = (baseline.center),                         site/.style = {circle,...]

    \right),
\end{align*}
for some complex numbers $\alpha_i$ and $\beta_i$. For all $\rho \in \Gamma$, the marginals $\widetilde{\Phi}_i$ on $\rho$ are
\begin{align*}
    \widetilde{\Phi}_i(\rho) &= \Tr_{\{0\}} \1[ C_{\widetilde{\Phi}_i} \1( \rho^\T \otimes I_d \1) \1] \\
    &= \alpha_i \cdot \Tr_{\{0\}} \2[ \psi {\1( (1)(2) \1)}^\Tpartial \1( \rho \otimes I_d \1) \2] + \beta_i \cdot \Tr_{\{0\}} \2[ \psi {\1( (1\:2) \1)}^\Tpartial \1( \rho \otimes I_d \1) \2] \\
    &= \alpha_i \cdot I_d + \beta_i \cdot \rho,
\end{align*}
and the fidelity $F \1( \widetilde{\Phi}_i(\rho), \rho \1)$ becomes,
\begin{align*}
    F \1( \widetilde{\Phi}_i(\rho), \rho \1) &= \Tr \2[ \1( \widetilde{\Phi}_i(\rho) \1) \rho \2] \\
    &= \Tr \3[ C_{\widetilde{\Phi}_i} \2( \1( \rho^\T \cdot \rho  \1) \otimes I_d \2) \3] \\
    &= \Tr \2[ C_{\widetilde{\Phi}_i} \1( \rho \otimes I_d \1) \2] \\
    &= \alpha_i \cdot \Tr \2[ \psi {\1( (1)(2) \1)}^\Tpartial \1( \rho \otimes I_d \1) \2] + \beta_i \cdot \Tr \2[ \psi {\1( (1\:2) \1)}^\Tpartial \1( \rho \otimes I_d \1) \2] \\
    &= d \cdot \alpha_i + \beta_i.
\end{align*}

\subsection{Equatorial quantum cloning problem} \label{chap4:sec:equatorialQuantumCloningProblem}

Recall that in the case $\mathcal{H} \coloneqq \mathbb{C}^2$, the pure quantum states can be written,
\begin{equation*}
    \frac{1}{2}(I_2 + r_x \cdot \sigma_x + r_y \cdot \sigma_y + r_z\cdot \sigma_z),
\end{equation*}
with $r \coloneqq (r_x, r_y, r_z) \in \mathbb{R}^3$ and $\norm{r} = 1$, and thus are isomorphic to the unit sphere in $\mathbb{R}^3$. The \emph{equatorial pure quantum states} are the pure quantum states located on the $x$ -- $y$ equator of this sphere, i.e. $r_z = 0$. The equatorial pure quantum states are of the form,
\begin{equation*}
    \frac{e^{i \theta_0} \ket{0} + e^{i \theta_1} \ket{1}}{\sqrt{2}},
\end{equation*}
for some $\theta \in [0,2\pi)^2$.

\begin{remark*}
    The $4$ states used in the BB84 protocol \cite{bennett1984quantum} are all in the $x$ -- $z$ equator, and the two equators $x$ -- $y$ and $x$ -- $z$ are connected by a change of basis.
\end{remark*}

When the subset $\Gamma$ of the pure quantum states is the set of states of the form,
\begin{equation*}
    \frac{1}{\sqrt{d}} \sum^{d-1}_{k=0} e^{i \theta_k} \ket{k},
\end{equation*}
for some $\theta \in [0,2\pi)^d$, the subgroup $G$ of $\mathrm{U}_d$ can be chosen to be the group of monomial matrices in $[0,2\pi)$. The quantum cloning problem becomes the \emph{equatorial quantum cloning problem}. Then from \hyperref[chap4:prop:twirledChoiMatrixCommutationRelation]{Proposition~\ref*{chap4:prop:twirledChoiMatrixCommutationRelation}}, the partially transposed Choi matrix $C^\Tpartial_{\widetilde{\Phi}}$ of a twirled quantum cloning channel $\widetilde{\Phi}$ commutes with all the diagonal unitary matrices $U^{\otimes (N+1)}$ and all the tensor representation of permutations $\psi(\sigma)^{\otimes (N+1)}$, i.e. it is in the commutant of the algebra
\begin{equation*}
    \Span_{\mathbb{C}} \set{U^{\otimes n} \cdot \phi(\sigma)^{\otimes n}}{U \in \text{diag.}~\mathrm{U}_d \text{ and } \sigma \in \mathfrak{S}_d}.
\end{equation*}

From \hyperref[chap2:thm:commutantSchurWeylDuality]{Theorem~\ref*{chap2:thm:commutantSchurWeylDualityUn}}, the partially transposed Choi matrix $C^\Tpartial_{\widetilde{\Phi}}$ is in the algebra
\begin{equation*}
    \Span_{\mathbb{C}} \set{\psi(p)}{p \in \mathbb{U}_n}.
\end{equation*}
That is, it exists a family of complex numbers $c_p$ indexed by the uniform block permutations of $\mathbb{U}_{N+1}$ such that
\begin{equation*}
    C_{\widetilde{\Phi}} = \sum_{p \in \mathbb{U}_{N+1}} c_p \cdot \psi(p)^\Tpartial.
\end{equation*}

From \hyperref[chap4:prop:twirledChoiMatrixCommutationRelation]{Proposition~\ref*{chap4:prop:twirledChoiMatrixCommutationRelation}}, the partially transposed Choi matrix $C_{\widetilde{\Phi}_i}$ of each marginal $\widetilde{\Phi}_i$ has the form
\begin{align*}
    C_{\widetilde{\Phi}_i} &= \alpha_i \cdot \psi {\1( 1\,3\:|\:2\,4 \1)}^\Tpartial + \beta_i \cdot \psi {\1( 1\,4\:|\:2\,3 \1)}^\Tpartial + \gamma_i \cdot \psi {\1( 1\,2\,3\,4 \1)}^\Tpartial \\[0.5em]
    &= \alpha_i \cdot \psi {\left(
    % [inline block 14: 6 envs, 7880 chars -> data_tex | \begin{tikzpicture}[baseline = (baseline.center),                         site/.style = {circle,...]

    \right),
\end{align*}
for some complex numbers $\alpha_i, \beta_i$ and $\gamma_i$. For all $\rho \in \Gamma$, the marginals $\widetilde{\Phi}_i$ on $\rho$ are
\begin{align*}
    \widetilde{\Phi}_i(\rho) &= \Tr_{\{0\}} \1[ C_{\widetilde{\Phi}_i} \1( \rho^\T \otimes I_d \1) \1] \\
    &= \Tr_{\{0\}} \3[ \2( \alpha_i \cdot \psi {\1( 1\,3\:|\:2\,4 \1)}^\Tpartial + \beta_i \cdot \psi {\1( 1\,4\:|\:2\,3 \1)}^\Tpartial + \gamma_i \cdot \psi {\1( 1\,2\,3\,4 \1)}^\Tpartial \2) \1( \rho \otimes I_d \1) \3] \\
    &= \alpha_i \cdot I_d + \beta_i \cdot \rho + \gamma_i \cdot \mathrm{diag} (\rho_{11}, \ldots, \rho_{dd}) \\
    &= \alpha_i \cdot I_d + \beta_i \cdot \rho + \gamma_i \cdot I_d,
\end{align*}
and the fidelity $F \1( \widetilde{\Phi}_i(\rho), \rho \1)$ becomes,
\begin{align*}
    F \1( \widetilde{\Phi}_i(\rho), \rho \1) &= \Tr \2[ \1( \widetilde{\Phi}_i(\rho) \1) \rho \2] \\
    &= \Tr \3[ C_{\widetilde{\Phi}_i} \2( \1( \rho^\T \cdot \rho  \1) \otimes I_d \2) \3] \\
    &= \Tr \2[ C_{\widetilde{\Phi}_i} \1( \rho \otimes I_d \1) \2] \\
    &= \Tr \3[ \2( \alpha_i \cdot \psi {\1( 1\,3\:|\:2\,4 \1)}^\Tpartial + \beta_i \cdot \psi {\1( 1\,4\:|\:2\,3 \1)}^\Tpartial + \gamma_i \cdot \psi {\1( 1\,2\,3\,4 \1)}^\Tpartial \2) \1( \rho \otimes I_d \1) \3] \\
    &= d \cdot \alpha_i + \beta_i + \gamma_i.
\end{align*}

\subsection{Average vs. worst fidelity} \label{chap4:sec:averageVsWorstFidelity}

In \hyperref[chap4:sec:universalQuantumCloningProblem]{Section~\ref*{chap4:sec:universalQuantumCloningProblem}} and \hyperref[chap4:sec:equatorialQuantumCloningProblem]{Section~\ref*{chap4:sec:equatorialQuantumCloningProblem}}, the marginals $\widetilde{\Phi}_i$ and $\widetilde{\Psi}_i$ of a twirled quantum cloning channel $\widetilde{\Phi}$ for the universal quantum cloning problem and a twirled quantum cloning channel $\widetilde{\Psi}$ for the equatorial quantum cloning problem, were shown to be, on all $\rho \in \Gamma$, some linear combinations of the trace, identity maps. Using the trace preserving property of quantum channels, the marginals becomes on all $\rho \in \Gamma$, some affine combinations
\begin{equation*}
    \widetilde{\Phi}_i(\rho) = p \cdot \rho + (1 - p) \frac{I_d}{d} \qquad \text{ and } \qquad \widetilde{\Psi}_i(\rho) = q \cdot \rho + (1 - q) \frac{I_d}{d},
\end{equation*}
for some $p,q \in \mathbb{R}$, and with fidelities $F \1( \widetilde{\Phi}_i(\rho), \rho \1)$ and $F \1( \widetilde{\Psi}_i(\rho), \rho \1)$, 
\begin{equation*}
    F \1( \widetilde{\Phi}_i(\rho), \rho \1) = p  + \frac{1 - p}{d} \qquad \text{ and } \qquad F \1( \widetilde{\Psi}_i(\rho), \rho \1) = q + \frac{1 - q}{d},
\end{equation*}
In particular, none of the fidelities depend on the quantum state. Hence both optimization problems,
\begin{align*}
    &\textit{(average)} & \sup_{\Phi~\textsc{cptp}} \sum^N_{i = 1} a_i \cdot \operatornamewithlimits{\mathbb{E}}_{\rho \in \Gamma} \2[ F \1( \Phi_i(\rho), \rho \1) \2] \\
    &\textit{(worst)} & \sup_{\Phi~\textsc{cptp}} \sum^N_{i = 1} a_i \cdot \inf_{\rho \in \Gamma} \2[ F \1( \Phi_i(\rho), \rho \1) \2],
\end{align*}
are equal on universal and equatorial $\Gamma$'s. However, the expectation value will be more convenient to manipulate, especially for to determine an upper bound.

\section{Upper bound} \label{chap4:sec:upperBound}

This section is dedicated to the determination of an upper bound for both the universal and equatorial quantum cloning problems, given a direction vector $a \in [0, 1]^N$. For all quantum cloning channels $\Phi: \mathcal{M}_d \to {\1( \mathcal{M}_d \1)}^{\otimes N}$, and for all subset $\Gamma$ of the pure quantum states,
\begin{align*}
    \sum^N_{i = 1} a_i \cdot \operatornamewithlimits{\mathbb{E}}_{\rho \in \Gamma} \2[ F \1( \Phi_i(\rho), \rho \1) \2] &= \sum^N_{i = 1} a_i \cdot \operatornamewithlimits{\mathbb{E}}_{\rho \in \Gamma} \3[ \Tr \2[ \1( \Phi_i(\rho) \1) \rho \2] \3] \\
    &= \sum^N_{i = 1} a_i \cdot \operatornamewithlimits{\mathbb{E}}_{\rho \in \Gamma} \3[ \Tr \2[ \1( \Phi(\rho) \1) \1( \rho_{(i)} \otimes I^{\otimes (N-1)}_d \1) \2] \3] \\
    &= \sum^N_{i = 1} a_i \cdot \operatornamewithlimits{\mathbb{E}}_{\rho \in \Gamma} \3[ \Tr \2[ C_{\Phi} \1( \rho^\T_{(0)} \otimes \rho_{(i)} \otimes I^{\otimes (N-1)}_d \1) \2] \3] \\
    &= \sum^N_{i = 1} a_i \cdot \Tr \3[ C_{\Phi} \2( \operatornamewithlimits{\mathbb{E}}_{\rho \in \Gamma} \1[ \rho^\T_{(0)} \otimes \rho_{(i)} \1] \otimes I^{\otimes (N-1)}_d \2) \3].
\end{align*}

\begin{theorem} \label{chap4:thm:universalQuantumCloningProblemUpperBound}
    For any direction vector $a \in [0, 1]^N$ the universal quantum cloning problem is upper bounded by
    \begin{equation*}
        \sup_{\Phi~\textsc{cptp}} \sum^N_{i = 1} a_i \cdot \operatornamewithlimits{\mathbb{E}}_{\rho \in \Gamma} \2[ F \1( \Phi_i(\rho), \rho \1) \2] \leq \frac{\lambda_{\text{max}}(R_a)}{d + 1},
    \end{equation*}
    where $\lambda_{\text{max}}(R_a)$ is the largest eigenvalue of the matrix
    \begin{equation*}
        R_a \coloneqq \sum^N_{i=1} a_i \cdot \1( d^2 \cdot \mathrm{I}_{(0,i)} + d \cdot \omega_{(0,i)} \1) \otimes I^{\otimes (N-1)}_d,
    \end{equation*}
    with quantum states $\mathrm{I}_{(0,i)}$ and $ \omega_{(0,i)}$, respectively maximally mixed and maximally entangled, between between the $0$-th and $i$-th quantum systems.
\end{theorem}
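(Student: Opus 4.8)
The plan is to start from the reformulation of the objective derived immediately before the statement,
\[
\sum_{i=1}^N a_i \cdot \operatornamewithlimits{\mathbb{E}}_{\rho\in\Gamma}\left[F\left(\Phi_i(\rho),\rho\right)\right] = \sum_{i=1}^N a_i\cdot\Tr\left[C_\Phi\left(\operatornamewithlimits{\mathbb{E}}_{\rho\in\Gamma}\left[\rho^\T_{(0)}\otimes\rho_{(i)}\right]\otimes I^{\otimes(N-1)}_d\right)\right],
\]
and to evaluate the second moment in the universal case, where $\Gamma$ is the full set of pure states equipped with its unitarily invariant uniform measure. Because the map $X\mapsto\Tr[C_\Phi X]$ is linear, I would pull the expectation inside and reduce the whole problem to computing the single two-copy integral $\operatornamewithlimits{\mathbb{E}}_\rho\left[\rho^\T_{(0)}\otimes\rho_{(i)}\right]$, which does not depend on $i$.

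The key computation is the second moment of a Haar-random pure state. Unitary invariance forces $\operatornamewithlimits{\mathbb{E}}_\rho[\rho\otimes\rho]$ to be proportional to the projection onto the symmetric subspace, giving $\operatornamewithlimits{\mathbb{E}}_\rho[\rho\otimes\rho]=\frac{1}{d(d+1)}\left(I^{\otimes 2}_d+F\right)$, where $F=(d\cdot\omega)^\Tpartial$ is the flip operator between systems $0$ and $i$. Since the transpose on system $0$ commutes with the expectation and acts trivially on the factor $\rho_{(i)}$, I would apply the partial transpose on system $0$ to this identity: using $(I^{\otimes 2}_d)^\Tpartial=I^{\otimes 2}_d=d^2\cdot\mathrm{I}_{(0,i)}$ and $F^\Tpartial=\left((d\cdot\omega)^\Tpartial\right)^\Tpartial=d\cdot\omega$, this yields
\[
\operatornamewithlimits{\mathbb{E}}_\rho\left[\rho^\T_{(0)}\otimes\rho_{(i)}\right]=\frac{1}{d(d+1)}\left(d^2\cdot\mathrm{I}_{(0,i)}+d\cdot\omega_{(0,i)}\right).
\]
Substituting back and factoring out the scalar, the objective becomes exactly $\frac{1}{d(d+1)}\Tr\left[C_\Phi\, R_a\right]$ with $R_a$ as defined in the statement.

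The final step is a one-line spectral bound. By \hyperref[chap3:thm:quantumChannelStructure]{Theorem~\ref*{chap3:thm:quantumChannelStructure}}, the Choi matrix satisfies $C_\Phi\geq 0$ and $\Tr_{d^N}[C_\Phi]=I_d$, so $\Tr[C_\Phi]=d$. Since $R_a$ is Hermitian and $C_\Phi$ is positive semidefinite, $\Tr[C_\Phi R_a]\leq\lambda_{\text{max}}(R_a)\,\Tr[C_\Phi]=d\cdot\lambda_{\text{max}}(R_a)$, whence the objective is at most $\frac{d\cdot\lambda_{\text{max}}(R_a)}{d(d+1)}=\frac{\lambda_{\text{max}}(R_a)}{d+1}$; taking the supremum over all \textsc{cptp} maps $\Phi$ gives the claim. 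The main obstacle is entirely in the second step: one must get the index bookkeeping of the partial transpose right so that the flip $F$ turns into $d\cdot\omega$ rather than staying a flip, which is precisely what makes $R_a$ itself (and not its partial transpose) appear. A useful sanity check is the trivial case $N=1$, $a_1=1$, where the bound reads $\frac{1+d}{d+1}=1$, matching the fact that $1\to 1$ cloning is perfect.
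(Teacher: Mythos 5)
Your proposal is correct and follows essentially the same route as the paper's own proof: both reduce the objective to $\frac{1}{d(d+1)}\Tr\1[ C_\Phi R_a \1]$ by computing the Haar second moment $\operatornamewithlimits{\mathbb{E}}_{\rho}[\rho\otimes\rho]$ via Schur's Lemma on the symmetric subspace and then partially transposing, and both finish with the bound $\Tr[C_\Phi R_a]\leq \Tr[C_\Phi]\cdot\lambda_{\text{max}}(R_a) = d\cdot\lambda_{\text{max}}(R_a)$. Your identification of the flip-to-$d\cdot\omega$ bookkeeping as the delicate step, and the $N=1$ sanity check, are both apt.
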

\begin{proof}
    Let $\emph{\vee^N \mathbb{C}^d}$ be the \emph{symmetric subspace} of ${\1( \mathbb{C}^d \1)}^{\otimes N}$ defined by
    \begin{equation*}
        \vee^N \mathbb{C}^d \coloneqq \Span_{\mathbb{C}} \set{v^{\otimes N}}{v \in \mathbb{C}^d}.
    \end{equation*}
    It is well known \cite{harrow2013church} that $\vee^N \mathbb{C}^d$ is an irreducible representation vector space for the representation $U \mapsto U^{\otimes N}$ of the unitary group $\mathrm{U}_d$. Since the pure quantum states $\Gamma$ are generated by the unitary matrices, i.e. $\Gamma \simeq \set{U \rho U^*}{U \in \mathrm{U}_d}$ for any pure quantum state $\rho$, then
    \begin{align*}
        \operatornamewithlimits{\mathbb{E}}_{\rho \in \Gamma} \1[ \rho^\T_{(0)} \otimes \rho_{(i)} \1] &= {\2( \operatornamewithlimits{\mathbb{E}}_{\rho \in \Gamma} \1[ \rho_{(0)} \otimes \rho_{(i)} \1] \1] \2)}^\Tpartial \\
        &= {\3( \int_\Gamma \rho \otimes \rho ~\mathrm{d}\rho \3)}^\Tpartial \\
        &= {\3( \int_{\mathrm{U}_d} U \ketbra{0}{0} U^* \otimes U \ketbra{0}{0} U^* ~\mathrm{d}U \3)}^\Tpartial.
    \end{align*}
    Note that the integral, before taking the partial transpose, commutes with all the unitary matrices $U \otimes U$ and lives in $\mathrm{End} \1( \vee^2 \mathbb{C}^d \1)$, by Schur's Lemma it must be a multiple of the identity in $\vee^2 \mathbb{C}^d$. The identity of the bipartite symmetric subspace is
    \begin{equation*}
        \frac{\psi\1( (1)(2) \1) + \psi\1( (1\:2) \1)}{2}.
    \end{equation*}
    The unit trace condition, together with the partial transpose, give
    \begin{align*}
        \operatornamewithlimits{\mathbb{E}}_{\rho \in \Gamma} \1[ \rho \otimes \rho \1] &= \frac{\psi {\1( (1)(2) \1)}^\Tpartial + \psi {\1( (1\:2) \1)}^\Tpartial}{d (d + 1)} \\
        &= \frac{d^2 \cdot \mathrm{I}_{(0,1)} + d \cdot \omega_{(0,1)}}{d (d + 1)}.
    \end{align*}
    Finally the universal quantum cloning problem becomes
    \begin{align*}
        &\sup_{\Phi~\textsc{cptp}} \sum^N_{i = 1} a_i \cdot \operatornamewithlimits{\mathbb{E}}_{\rho \in \Gamma} \2[ F \1( \Phi_i(\rho), \rho \1) \2] \\
        &= \sup_{\Phi~\textsc{cptp}} \sum^N_{i = 1} a_i \cdot \Tr \3[ C_{\Phi} \2( \operatornamewithlimits{\mathbb{E}}_{\rho \in \Gamma} \1[ \rho^\T_{(0)} \otimes \rho_{(i)} \1] \otimes I^{\otimes (N-1)}_d \2) \3] \\
        &= \sup_{\Phi~\textsc{cptp}} \frac{\Tr \1[ C_\Phi R_a \1)}{d (d + 1)}.
    \end{align*}
    Then from the inequality $\Tr [CR] \leq \Tr [C] \cdot \lambda_\text{max}(R)$ that holds for any positive semidefinite matrix $C$ and symmetric matrix $R$, and the equality $\Tr \1[ C_\Phi \1] = d$ for any Choi matrix $C_\Phi$ of a quantum channel $\Phi: \mathcal{M}_d \to \mathcal{M}_{d^\prime}$,
    \begin{align*}
        \sup_{\Phi~\textsc{cptp}} \frac{\Tr \1[ C_\Phi R_a \1]}{d (d + 1)} &\leq \sup_{\Phi~\textsc{cptp}} \frac{\Tr \1[ C_\Phi \1]}{d (d + 1)} \lambda_\text{max}(R_a) \\
        &= \frac{\lambda_\text{max}(R_a)}{d + 1}.
    \end{align*}
\end{proof}

\begin{remark*}
    The upper bound in \hyperref[chap4:thm:universalQuantumCloningProblemUpperBound]{Theorem~\ref*{chap4:thm:universalQuantumCloningProblemUpperBound}} is a special case of the result of Jaromír Fiurášek on the extremal equation for optimal completely-positive maps \cite{fiuravsek2001extremal}.
\end{remark*}

The spectrum of the matrix $R_a$ has been considered in a recent series of papers for the port-based teleportation protocol \cite{studzinski2017port,mozrzymas2018optimal,leditzky2022optimality}. In particular, in \cite[Lemma 3.6]{christandl2021asymptotic}, all the eigenvalues of the operator $R_\alpha$, up to shift factor, are given in the special case of $a = \frac{1}{N}(1, \ldots, 1)$.

\begin{theorem} \label{chap4:thm:equatorialQuantumCloningProblemUpperBound}
    For any direction vector $a \in [0, 1]^N$ the equatorial quantum cloning problem is upper bounded by
    \begin{equation*}
        \sup_{\Phi~\textsc{cptp}} \sum^N_{i = 1} a_i \cdot \operatornamewithlimits{\mathbb{E}}_{\rho \in \Gamma} \2[ F \1( \Phi_i(\rho), \rho \1) \2] \leq \frac{\lambda_{\text{max}}(R_a)}{d},
    \end{equation*}
    where $\lambda_{\text{max}}(R_a)$ is the largest eigenvalue of the matrix
    \begin{equation*}
        R_a \coloneqq \sum^N_{i=1} a_i \cdot \1( d^2 \cdot \mathrm{I}_{(0,i)} + d \cdot \omega_{(0,i)} - d \cdot \mathrm{X}_{(0,i)} \1) \otimes I^{\otimes (N-1)},
    \end{equation*}
    with quantum states $\mathrm{I}_{(0,i)}$ and $ \omega_{(0,i)}$, respectively maximally mixed and maximally entangled, between between the $0$-th and $i$-th quantum systems, and quantum state $\mathrm{X}_{(0,i)}$ defined by
    \begin{equation*}
        \mathrm{X} \coloneqq \frac{1}{d} \sum^{d-1}_{i=0} \ketbra{ii}{ii},
    \end{equation*}
    between between the $0$-th and $i$-th quantum systems.
\end{theorem}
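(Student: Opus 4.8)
The plan is to follow the template of the proof of \hyperref[chap4:thm:universalQuantumCloningProblemUpperBound]{Theorem~\ref*{chap4:thm:universalQuantumCloningProblemUpperBound}}, replacing the Haar average over $\mathrm{U}_d$ by the phase average defining the equatorial states. Starting from the reformulation established just before the statement,
\begin{equation*}
    \sum^N_{i=1} a_i \cdot \operatornamewithlimits{\mathbb{E}}_{\rho \in \Gamma} \2[ F \1( \Phi_i(\rho), \rho \1) \2] = \sum^N_{i=1} a_i \cdot \Tr \3[ C_\Phi \2( \operatornamewithlimits{\mathbb{E}}_{\rho \in \Gamma} \1[ \rho^\T_{(0)} \otimes \rho_{(i)} \1] \otimes I^{\otimes (N-1)}_d \2) \3],
\end{equation*}
the whole argument reduces to computing the single bipartite average $\operatornamewithlimits{\mathbb{E}}_{\rho \in \Gamma}[\rho \otimes \rho]$ and then taking its partial transpose on the $0$-th system.

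The main step is to evaluate this average explicitly. I would parametrize an equatorial state as $\rho = \frac{1}{d} \sum_{k,l} e^{i(\theta_k - \theta_l)} \ketbra{k}{l}$ with $\theta \in [0,2\pi)^d$, so that
\begin{equation*}
    \rho \otimes \rho = \frac{1}{d^2} \sum_{k,l,m,n} e^{i(\theta_k - \theta_l + \theta_m - \theta_n)} \ketbra{km}{ln}.
\end{equation*}
Integrating each phase independently against the uniform measure on $[0,2\pi)$, the coefficient of $\ketbra{km}{ln}$ equals $1$ precisely when the multiset $\{k,m\}$ coincides with $\{l,n\}$, and vanishes otherwise. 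Enumerating the surviving patterns — the diagonal pairing $k=l,\,m=n$ giving $\psi\1( (1)(2) \1)$, the crossed pairing $k=n,\,m=l$ giving $\psi\1( (1\:2) \1)$, and correcting by inclusion–exclusion for their overlap on the fully diagonal terms — yields
\begin{equation*}
    \operatornamewithlimits{\mathbb{E}}_{\rho \in \Gamma} \1[ \rho \otimes \rho \1] = \frac{\psi \1( (1)(2) \1) + \psi \1( (1\:2) \1) - \psi \1( 1\,2\,3\,4 \1)}{d^2},
\end{equation*}
where $\psi \1( 1\,2\,3\,4 \1) = \sum_k \ketbra{kk}{kk}$ is the fully connected uniform block permutation diagram already seen in the marginal analysis. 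That the average must lie in $\Span_{\mathbb{C}} \set{\psi(p)}{p \in \mathbb{U}_2}$ is guaranteed a priori by the Schur–Weyl duality of \hyperref[chap2:thm:commutantSchurWeylDualityUn]{Theorem~\ref*{chap2:thm:commutantSchurWeylDualityUn}}, which can serve as a cleaner justification for the choice of these three basis diagrams.

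Taking the partial transpose converts these diagrams into the operators of the statement: as in the universal case $\psi \1( (1)(2) \1)^\Tpartial = d^2 \cdot \mathrm{I}_{(0,1)}$ and $\psi \1( (1\:2) \1)^\Tpartial = d \cdot \omega_{(0,1)}$, while the diagonal diagram is invariant, $\psi \1( 1\,2\,3\,4 \1)^\Tpartial = \psi \1( 1\,2\,3\,4 \1) = d \cdot \mathrm{X}_{(0,1)}$. Hence
\begin{equation*}
    \operatornamewithlimits{\mathbb{E}}_{\rho \in \Gamma} \1[ \rho^\T_{(0)} \otimes \rho_{(i)} \1] = \frac{d^2 \cdot \mathrm{I}_{(0,i)} + d \cdot \omega_{(0,i)} - d \cdot \mathrm{X}_{(0,i)}}{d^2},
\end{equation*}
and substituting into the reformulation identifies the supremum with $\sup_\Phi \Tr \1[ C_\Phi R_a \1] / d^2$.

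Finally I would close exactly as in \hyperref[chap4:thm:universalQuantumCloningProblemUpperBound]{Theorem~\ref*{chap4:thm:universalQuantumCloningProblemUpperBound}}, using $\Tr[CR] \leq \Tr[C] \cdot \lambda_{\text{max}}(R)$ for positive semidefinite $C$ and symmetric $R$, together with $\Tr \1[ C_\Phi \1] = d$, to reach the bound $\lambda_{\text{max}}(R_a)/d$; the denominator is $d$ rather than $d+1$ precisely because the equatorial average normalizes by $d^2$ instead of $d(d+1)$. The main obstacle is the bipartite average itself: the careful bookkeeping of which phase monomials integrate to a nonzero value, and in particular the inclusion–exclusion correction on the fully diagonal indices that produces the $-\,d \cdot \mathrm{X}$ term — this subtraction is the one genuinely new feature relative to the universal computation and is the delicate point either way one proceeds.
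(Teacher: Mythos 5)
Your proposal is correct and follows the same overall route as the paper's proof: reduce everything to the single bipartite average $\operatornamewithlimits{\mathbb{E}}_{\rho \in \Gamma} [ \rho^\T_{(0)} \otimes \rho_{(i)} ]$, identify it with $\frac{1}{d^2} ( d^2 \cdot \mathrm{I}_{(0,i)} + d \cdot \omega_{(0,i)} - d \cdot \mathrm{X}_{(0,i)} )$, and close with $\Tr[CR] \leq \Tr[C] \cdot \lambda_{\text{max}}(R)$ together with $\Tr [ C_\Phi ] = d$, the normalization $d^2$ (in place of $d(d+1)$) giving the denominator $d$.

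The one point of divergence is how that average is justified. The paper writes every equatorial state as $U \ket{+}$ for a diagonal unitary $U$ and invokes the reference on integration over random diagonal unitary matrices, whereas you derive the formula from scratch by Fourier orthogonality of the phases: the coefficient of $\ketbra{km}{ln}$ survives exactly when $\{k,m\} = \{l,n\}$ as multisets, the diagonal pairing gives $\psi( (1)(2) )$, the crossed pairing gives $\psi( (1\:2) )$, and their overlap $\sum_k \ketbra{kk}{kk} = \psi( 1\,2\,3\,4 )$, being counted twice, is subtracted once --- which is precisely the $-\,d \cdot \mathrm{X}$ term, and it passes through the partial transposition unchanged since each $\ketbra{kk}{kk}$ is invariant under it. Your bookkeeping is correct, so your version is self-contained where the paper's key step is outsourced to a citation; your a priori appeal to the Schur--Weyl duality for the uniform block permutation monoid to predict the three-diagram form of the answer is also consistent with the paper's framework, though not needed once the explicit computation is done.
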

\begin{proof}
    Let $\ket{+}$ be the pure quantum state defined by $\ket{+} \coloneqq \frac{1}{\sqrt{d}} \sum^{d-1}_{i=0} \ket{i}$, then for any pure quantum state $\ket{\psi} \in \Gamma$, i.e., of the form $\frac{1}{\sqrt{d}} \sum^{d-1}_{k=0} e^{i \theta_k} \ket{k}$, there exists a diagonal unitary matrix $U$ in $\text{diag.}~\mathrm{U}_d$ such that $\ket{\psi} = U \ket{+}$. Then
    \begin{align*}
        \operatornamewithlimits{\mathbb{E}}_{\rho \in \Gamma} \1[ \rho^\T_{(0)} \otimes \rho_{(i)} \1] &= \int_\Gamma \rho^\T \otimes \rho ~\mathrm{d}\rho \\
        &= \int_{\substack{\text{diag.} \\ \mathrm{U}_d}} \bar{U} \ketbra{+}{+} U^\T \otimes U \ketbra{+}{+} U^* ~\mathrm{d}U.
    \end{align*}
    Using integration over random diagonal unitary matrices \cite{nechita2021graphical},
    \begin{equation*}
        \int_{\substack{\text{diag.} \\ \mathrm{U}_d}} \bar{U} \ketbra{+}{+} U^\T \otimes U \ketbra{+}{+} U^* ~\mathrm{d}U = \frac{d^2 \cdot \mathrm{I}_{(0,1)} + d \cdot \omega_{(0,1)} - d \cdot \mathrm{X}_{(0,1)}}{d^2}.
    \end{equation*}
    Which leads to the upper bound,
    \begin{equation*}
        \sup_{\Phi~\textsc{cptp}} \sum^N_{i = 1} a_i \cdot \operatornamewithlimits{\mathbb{E}}_{\rho \in \Gamma} \2[ F \1( \Phi_i(\rho), \rho \1) \2] \leq \frac{\lambda_{\text{max}}(R_a)}{d}.
    \end{equation*}
\end{proof}

\begin{remark*}
    For both the upper bounds in \hyperref[chap4:thm:universalQuantumCloningProblemUpperBound]{Theorem~\ref*{chap4:thm:universalQuantumCloningProblemUpperBound}} and \hyperref[chap4:thm:equatorialQuantumCloningProblemUpperBound]{Theorem~\ref*{chap4:thm:equatorialQuantumCloningProblemUpperBound}}, the identity terms in the matrices $R_a$ only lead to a shift in the largest eigenvalues $\lambda_{\text{max}}(R_a)$.
\end{remark*}

\section{Universal \texorpdfstring{$1 \longrightarrow 2$}{1 to 2} quantum cloning problem} \label{chap4:sec:universal1to2QuantumCloningProblem}

This section is devoted to the universal $1 \to 2$ quantum cloning problem, in the general case of quantum systems of arbitrary dimension \cite{nechita2021geometrical}. The problem is studied from the following perspective: given some target pair of fidelities $\emph{(f_1, f_2)}$, does there exist a quantum cloning channel $\Phi$ from $\mathcal{M}_d$ to ${\1( \mathcal{M}_d \1)}^{\otimes 2}$, such that $F \1( \Phi_i(\rho), \rho \1) = f_i$ for all pure quantum state $\rho$. As seen in \hyperref[chap4:sec:averageVsWorstFidelity]{Section~\ref*{chap4:sec:averageVsWorstFidelity}}, by twirling the quantum cloning channel $\Phi$, the marginals becomes for all pure quantum states $\rho$,
\begin{equation*}
    \widetilde{\Phi}_1(\rho) = p_1 \cdot \rho + (1 - p_1) \frac{I_d}{d} \qquad \text{ and } \qquad \widetilde{\Phi}_2(\rho) = p_2 \cdot \rho + (1 - p_2) \frac{I_d}{d},
\end{equation*}
for some $p_1,p_2 \in \mathbb{R}$. Therefore, the following transformations rules for $\widetilde{\Phi}$ hold: 
\begin{align*}
    f_1 &= p_1  + \frac{(1 - p_1)}{d} & f_2 &= p_2 + \frac{(1 - p_2)}{d} \\
    p_1 &= \frac{d f_1 - 1}{d - 1} & p_2 &= \frac{d f_2 - 1}{d - 1}
\end{align*}

The main result is expressed as the following: the \emph{achievable fidelity region}, defined as
\begin{equation*}
    \set{(f_1, f_2)}{\exists \Phi: \mathcal{M}_d \xrightarrow{\textsc{cptp}} {\1( \mathcal{M}_d \1)}^{\otimes 2} \text{ such that } \operatornamewithlimits{\mathbb{E}}_{\rho \in \mathrm{U}_d} \2[ F \1( \Phi_i(\rho), \rho \1) \2] = f_i},
\end{equation*}
is a \emph{union of ellipses}, with the optimal one coming from a restricted class of quantum cloning channel (see \hyperref[chap4:fig:achievableFidelityRegion]{Figure~\ref*{chap4:fig:achievableFidelityRegion}}).
\begin{theorem*}[\cite{nechita2021geometrical}]
    The achievable fidelity region for the universal $1 \to 2$ quantum cloning problem is the union of a family of ellipses indexed by $\lambda \in (0,d]$, given by
    \begin{equation*}
        \qquad \frac{x^2}{a^2_\lambda} + \frac{(y - c_\lambda)^2}{b^2_\lambda} \leq 1,
    \end{equation*}
    with $a_\lambda \coloneqq \frac{\lambda}{\sqrt{d^2 - 1}}, b_\lambda \coloneqq \frac{\lambda}{d^2 - 1}$ and $c_\lambda \coloneqq \frac{\lambda d - 2}{d^2 - 1}$. The parameters $x$ and $y$ can be expressed as,
    \begin{equation*}
        \begin{cases}
            x &= \frac{d(f_1-f_2) }{d-1} \\
            y &= \frac{d(f_1+f_2) - 2}{d-1}
        \end{cases}
        \qquad \text{ or } \qquad
        \begin{cases}
		      x &=  p_1 - p_2 \\
            y &=  p_1 + p_2.
        \end{cases}
    \end{equation*}
    The optimal quantum cloning channels correspond to $\lambda = d$. 
\end{theorem*}

\begin{figure}[h]
    \centering
    % [inline block 15: 1 envs, 4655 chars -> data_tex | \begin{tikzpicture}         \begin{axis}[title = {Achievable fidelity region $[d = 2]$},...]

    \caption
        [The achievable fidelity region in orange of the universal $1 \to 2$ quantum cloning problem is the union of a continuous family of ellipses in blue.]
        {The achievable fidelity region \tikz{\draw[fill = orange!50!white, opacity = 0.5] (0,0) rectangle (1.5ex,1.5ex);} of the universal $1 \to 2$ quantum cloning problem is the union of a continuous family of ellipses \tikz{\draw[fill = blue!30!white, opacity = 0.5] (0,0) rectangle (1.5ex,1.5ex);}.} \label{chap4:fig:achievableFidelityRegion}
\end{figure}

In \hyperref[chap4:sec:universalQuantumCloningProblem]{Section~\ref*{chap4:sec:universalQuantumCloningProblem}}, the Choi matrix $C_{\widetilde{\Phi}}$ of a twirled quantum cloning channel $\widetilde{\Phi}$ from $\mathcal{M}_d$ to ${\1( \mathcal{M}_d \1)}^{\otimes 2}$, was shown to be a linear combination of the $6$ partially transposed tensor representations of the symmetric group $\mathfrak{S}_3$:
\begin{equation*}
    C_{\widetilde{\Phi}} = c_1 \cdot \psi \left(
    % [inline block 16: 6 envs, 8469 chars -> data_tex | \begin{tikzpicture}[baseline = (baseline.center),                         site/.style = {circle,...]

    \right),
\end{equation*}
for complex numbers $c_1, \ldots, c_6 \in \mathbb{C}$. Such linear combination is the Choi matrix of a quantum channel if both the positivity and the trace conditions of \hyperref[chap3:thm:quantumChannelStructure]{Theorem~\ref*{chap3:thm:quantumChannelStructure}} are satified.

By taking the corresponding partial traces, the two marginals of $\widetilde{\Phi}$ becomes on all pure quantum states $\rho$,
\begin{align*}
    \widetilde{\Phi}_1(\rho) = \underbrace{(d c_2 + c_5 + c_6)}_{p_1} \rho + \underbrace{(d^2 c_1 + d c_3 + d c_4)}_{1-p_1} \frac{I_d}{d} \\
    \widetilde{\Phi}_2(\rho) = \underbrace{(d c_3 + c_5 + c_6)}_{p_2} \rho + \underbrace{(d^2 c_1 + d c_2 + d c_4)}_{1-p_2} \frac{I_d}{d}.
\end{align*}
The partially transposed tensor representations $
\psi \1(
% [inline block 17: 4 envs, 6086 chars -> data_tex | \begin{tikzpicture}[baseline = (baseline.center),                     site/.style = {circle,...]

    \right)
\end{equation*}
is positive semidefinite. Hence, contributions from the other $4$ partially transposed tensor representations are needed to ensure the positivity of the Choi matrix $C_{\widetilde{\Phi}}$. The trace condition implies that
\begin{equation*}
    d^2 c_1 + d (c_2 + c_3 + c_4) + c_5 + c_6 = 1.
\end{equation*}

In the following, the tensor representation function $\psi$ is dropped, and instead the partially transposed permutations are depicted using the graphical calculus from \hyperref[chap3:sec:graphicalCalculus]{Section~\ref*{chap3:sec:graphicalCalculus}}. The focus now shifts to the problem of characterizing the positivity of the Choi matrix $C_{\widetilde{T}}$ using the coefficients $c_i$. To achieve this, let the vectors $u_i$ and $v_i$ of be defined for all $i \in \{0, \ldots, d-1\}$ by,
\begin{equation*}
    u_i \coloneqq \sqrt{d} \cdot \ket{\Omega}_{(0,1)} \otimes \ket{i} \qquad \text{ and } \qquad v_i \coloneqq \sqrt{d} \cdot \ket{\Omega}_{(0,2)} \otimes \ket{i},
\end{equation*}
depicted as
\begin{equation*}
    u_i = 
    % [inline block 18: 48 envs, 66016 chars -> data_tex | \begin{tikzpicture}[baseline = (baseline.center),                         site/.style = {circle,...]

$, are unitary matrices and thus have full rank.

In the next two Sections the positivity of the Choid matrix $C_{\widetilde{\Phi}}$ is characterize in terms of the coefficients $c_i$, first by restricting to the first $4$, and then considering the general case. 

\subsection{Restricted quantum cloning channels}

In this Section, the universal $1 \to 2$ quantum cloning problem will be solved when the Choi matrix $C_{\widetilde{\Phi}}$ is a linear combination of only $4$ partially transposed permutations of $\mathfrak{S}_3$, that is,
\begin{equation*}
    C_{\widetilde{\Phi}} = c_1 \cdot
    % [inline block 19: 18 envs, 20474 chars -> data_tex | \begin{tikzpicture}[baseline = (baseline.center),                         site/.style = {circle,...]
.
\end{equation*}

It is well known that a $2 \times 2$ hermitian matrix $M$ is positive semidefinite if and only if $\det (M) \geq 0$ and $\Tr [M] \geq 0$. Then $C_{\widetilde{\Phi}}$ is positive semi-definite if and only if each of its $2 \times 2$ blocks, in its block diagonal decomposition, are is positive semidefinite. That is
\begin{equation*}
    \Tr \2[ {\1( C_{\widetilde{\Phi}} \1)}_i \2] = d (\alpha + \beta) + 2  \Re(\gamma) \geq 0
\end{equation*}
and
\begin{equation*}
     \det \2( {\1( C_{\widetilde{\Phi}} \1)}_i \2) = \alpha \beta - |\gamma|^2  \geq 0.
\end{equation*}

The first condition is always true since $d (\alpha + \beta) + 2 \Re(\gamma) = 1$. Finally, the Choi matrix $C_{\widetilde{\Phi}}$ is the Choi matrix of a quantum channel, and thus a quantum cloning channel, when both
\begin{equation*}
    d (\alpha + \beta) + 2  \Re(\gamma) = 1 \qquad \text{ and } \qquad \alpha \beta \geq |\gamma|^2.
\end{equation*}

The two marginals of $\widetilde{\Phi}$ becomes on all pure quantum states $\rho$,
\begin{align*}
    \widetilde{\Phi}_1(\rho) = \underbrace{\1( d \alpha + 2  \Re(\gamma) \1)}_{p_1} \rho + \underbrace{d \beta}_{1-p_1}  \frac{I_d}{d} \\
    \widetilde{\Phi}_2(\rho) = \underbrace{\1( d \beta + 2  \Re(\gamma) \1)}_{p_2} \rho + \underbrace{d \alpha}_{1-p_2} \frac{I_d}{d}.
\end{align*}

\begin{theorem*}
    The achievable fidelity region for the restricted universal $1 \to 2$ quantum cloning problem is the ellipse given by:
    \begin{equation*}
        \qquad \frac{x^2}{a^2} + \frac{(y - c)^2}{b^2} \leq 1,
    \end{equation*}
    with $a \coloneqq \frac{d}{\sqrt{d^2 - 1}}, b \coloneqq \frac{d}{d^2 - 1}$ and $c \coloneqq \frac{d^2 - 2}{d^2 - 1}$. The parameters $x$ and $y$ can be expressed as,
    \begin{equation*}
        \begin{cases}
            x &= \frac{d(f_1-f_2) }{d-1} \\
            y &= \frac{d(f_1+f_2) - 2}{d-1}
        \end{cases}
        \qquad \text{ or } \qquad
        \begin{cases}
		      x &=  p_1 - p_2 \\
            y &=  p_1 + p_2.
        \end{cases}
    \end{equation*}
    Equivalently, the achievable fidelity region for the restricted universal $1 \to 2$ quantum cloning problem is the set,
    \begin{equation*}
        \set{(p_1, p_2) \in \left[\frac{-1}{d^2-1}, 1\right]}{\frac{(1 - p_1)(1 - p_2)}{d^2} \geq {\3( \frac{p_1 + p_2 - 1}{2} \3)}^2}.
    \end{equation*}     
\end{theorem*}
\begin{proof}
	A pair $(p_1, p_2) \in  \1[ \frac{-1}{d^2-1}, 1 \1]$ is in the achievable fidelity region for the restricted universal $1 \to 2$ quantum cloning problem if and only if there exists coefficients $\alpha, \beta \in \mathbb{R}$ and $\gamma \in \mathbb{C}$ satisfying both,
    \begin{equation*}
        d (\alpha + \beta) + 2  \Re(\gamma) = 1 \qquad \text{ and } \qquad \alpha \beta \geq |\gamma|^2,
    \end{equation*}
    with,
    \begin{align*}
        p_1  &= d \alpha + 2 \Re(\gamma)  & p_2 &= d \beta + 2 \Re(\gamma) \\
        &= 1 - d \beta & &= 1 - d \alpha.
    \end{align*}
 
	Such a complex number $\gamma$ exists if and only if,
    \begin{equation*}
        {\3( \frac{1-d(\alpha+\beta)}{2} \3)}^2 \leq \alpha \beta.
    \end{equation*}
    Rewriting the this inequality in terms of $p_1$ and $p_2$ yields,
    \begin{equation*}
        \frac{(1 - p_1)(1 - p_2)}{d^2} \geq {\3( \frac{p_1 + p_2 - 1}{2} \3)}^2.
    \end{equation*}
\end{proof}

\begin{figure*}[h]
    \centering
    % [inline block 20: 15 envs, 21569 chars -> data_tex | \begin{tikzpicture}         \begin{axis}[title = {Restricted Achievable fidelity region},...]

    \2) =
    \begin{cases}
        \varepsilon_1 + \varepsilon_2 &\times d \frac{d (d + 1)}{2} - 2 d \\
        \varepsilon_1 - \varepsilon_2 &\times d \frac{d (d - 1)}{2}.
    \end{cases}
\end{equation*}
Then the complete block diagonal decomposition of $C_{\widetilde{\Phi}}$ is made of $2 \times 2$ and $1 \times 1$ blocks. The $1 \times 1$ blocks $(\varepsilon_1 + \varepsilon_2)$ and $(\varepsilon_1 - \varepsilon_2)$ are positive when $\varepsilon_1 \geq |\varepsilon_2|$. On $\boldsymbol{V}$ the block diagonalization of $C_{\widetilde{\Phi}}$ becomes,
\begin{equation*} 
    {\1( C_{\widetilde{\Phi}} \1)}_i =
    % [inline block 21: 8 envs, 7684 chars -> data_tex | \begin{pmatrix}         d \alpha + \bar{\gamma} + \varepsilon_1 & \alpha + d \bar{\gamma} + \varepsilon_2 \\...]
.
\end{equation*}

The two marginals of $\widetilde{\Phi}$ becomes on all pure quantum states $\rho$,
\begin{align*}
    \widetilde{\Phi}_1(\rho) = \underbrace{\1( d \alpha + 2  \Re(\gamma) \1)}_{p_1} \rho + \underbrace{)d \beta + d^2 \varepsilon_1 + d \varepsilon_2)}_{1-p_1}  \frac{I_d}{d} \\
    \widetilde{\Phi}_2(\rho) = \underbrace{\1( d \beta + 2  \Re(\gamma) \1)}_{p_2} \rho + \underbrace{(d \alpha + d^2 \varepsilon_1 + d \varepsilon_2)}_{1-p_2} \frac{I_d}{d}.
\end{align*}

\begin{theorem*}
    The achievable fidelity region for the universal $1 \to 2$ quantum cloning problem is the union of a family of ellipses indexed by $\lambda \in (0,d]$, given by
    \begin{equation*}
        \qquad \frac{x^2}{a^2_\lambda} + \frac{(y - c_\lambda)^2}{b^2_\lambda} \leq 1,
    \end{equation*}
    with $a_\lambda \coloneqq \frac{\lambda}{\sqrt{d^2 - 1}}, b_\lambda \coloneqq \frac{\lambda}{d^2 - 1}$ and $c_\lambda \coloneqq \frac{\lambda d - 2}{d^2 - 1}$. The parameters $x$ and $y$ can be expressed as,
    \begin{equation*}
        \begin{cases}
            x &= \frac{d(f_1-f_2) }{d-1} \\
            y &= \frac{d(f_1+f_2) - 2}{d-1}
        \end{cases}
        \qquad \text{ or } \qquad
        \begin{cases}
		      x &=  p_1 - p_2 \\
            y &=  p_1 + p_2.
        \end{cases}
    \end{equation*}
\end{theorem*}
\begin{proof}
    Setting $x \coloneqq p_1 - p_2$ and $y \coloneqq p_1 + p_2$, together with the relation $d (\alpha + \beta) + 2  \Re(\gamma) + d^2 \varepsilon_1 + d \varepsilon_2 = 1$ yields,
    \begin{align*}
        {\1( C_{\widetilde{\Phi}} \1)}_i &=
        \begin{pmatrix}
            \frac{(d^2 - 1) y + d (d - 1) \1( (d^2 - 2) \varepsilon_1 + d \varepsilon_2 - 1 \1) + 2}{2 d} & \frac{\sqrt{d^2 - 1}}{2 d} x \\[1em]
            \frac{\sqrt{d^2 - 1}}{2 d} x & - \frac{(d^2 - 1) y + d (d + 1) \1( (d^2 - 2) \varepsilon_1 + d \varepsilon_2 - 1 \1) + 2}{2 d}
        \end{pmatrix}.
    \end{align*}
    Let $\lambda \coloneqq - d \1( (d^2 - 2) \varepsilon_1 + d \varepsilon_2 - 1 \1)$, then the block diagonal decomposition of $C_{\widetilde{\Phi}}$ reduces to 
    \begin{equation*}
        {\1( C_{\widetilde{T}} \1)}_i = \frac{1}{2 d}
        \begin{pmatrix}
            (d^2 - 1) y - (d - 1) \lambda + 2 & \sqrt{d^2 - 1} x \\[1em]
            \sqrt{d^2 - 1} x & - \1( (d^2 - 1) y - (d + 1) \lambda + 2 \1)
        \end{pmatrix}.
    \end{equation*}
    In this way, the positivity of each of the $2 \times 2$ blocks becomes
    \begin{equation*}
        0 \leq \lambda \leq d \qquad \text{ and } \qquad \frac{x^2}{a^2_\lambda} + \frac{(y - c_\lambda)^2}{b^2_\lambda} \leq 1,
    \end{equation*}
    with $a_\lambda \coloneqq \frac{\lambda}{\sqrt{d^2 - 1}}, b_\lambda \coloneqq \frac{\lambda}{d^2 - 1}$ and $c_\lambda \coloneqq \frac{\lambda d - 2}{d^2 - 1}$.
\end{proof}

\section{Universal \texorpdfstring{$1 \longrightarrow N$}{1 to N} quantum cloning problem}

This section covers the arbitrary universal $1 \to N$ quantum cloning problem, in the general case of quantum systems of any dimension \cite{nechita2022asymmetric}.

\subsection{Partially transposed permutations}

This section is devoted to the study of the partially transposed tensor representations $\psi(\sigma)^\Tpartial$ of the symmetric group $\mathfrak{S}_{N+1}$, that appear in the Choi matrix of the twirled quantum cloning channels. In this section, the symmetric group $\mathfrak{S}_{N+1}$ is the group of permutations of the set $\emph{\{ 0, 1, \ldots, N \}}$, starting from $0$.

\begin{remark*}
    In the Choi matrix of a twirled quantum cloning channel, the \textit{input} tensor corresponds to the set $\{ 0 \}$ of its partially transposed tensor representations, and the \textit{ouput} tensors correspond to the set $\{ 1, \ldots, N \}$.
\end{remark*}

Let $\sigma \in \mathfrak{S}_{N+1}$ such that $0$ is a fixed point of $\sigma$, i.e. $\sigma(0)$. Assume that $\psi(\sigma)^\Tpartial$ appears in the Choi matrix of a twirled quantum cloning channel, then $\sigma$ does not contribute to the performance of the quantum cloning channel. Indeed on all pure quantul states $\rho$,
\begin{equation*}
    \Tr_{\{0\}} \2[ \psi(\sigma)^\Tpartial \1( \rho^\T \otimes I^{\otimes N}_d\1) \2] = \Tr \1[ \rho^\T \1] \cdot \psi(\hat{\sigma}) = \psi(\hat{\sigma}),
\end{equation*}
where $\hat{\sigma}$ is the permutation of the symmetric group $\mathfrak{S}_N$ on $\{1, \ldots, N \}$, obtained from $\sigma$ by dropping $\{0\}$. Each marginals in thus a scalar multiple of the identity.

Let $\emph{\Sigma_{a,b}}$ be the subet of $\mathfrak{S}_{N+1}$ be defined for all $1 \leq a,b \leq N$ by,
\begin{equation*}
    \Sigma_{a,b} \coloneqq \set{\sigma \in \mathfrak{S}_{N+1}}{\sigma(0) = a \text{ and } \sigma^{\shortminus 1}(0) = b}.
\end{equation*}
This gives a partition of
\begin{equation*}
    \set{\sigma \in \mathfrak{S}_{N + 1}}{\sigma(0) \neq 0} = \bigcup_{1 \leq a,b \leq N} \Sigma_{a,b},
\end{equation*}
where each set $\Sigma_{a,b}$ contains $(N - 1)!$ permutations.

Note that for all $1 \leq a,b \leq N$ and for all $\sigma \in \Sigma_{a,b}$, there exists a unique $\hat{\sigma} \in \mathfrak{S}_{N - 1}$ such that the partially transposed tensor representation $\psi(\sigma)^\Tpartial$ decomposes into, 
\begin{equation} \label{chap4:eq:sigmaHat}
    \psi(\sigma)^\Tpartial = \psi \1( (1 \: a) \1) \cdot \1( d \cdot \omega_{(0,1)} \otimes \psi(\hat{\sigma}) \1) \cdot \psi \1( (1 \: b) \1),
\end{equation}
where $\1( d \cdot \omega_{(0,1)} \otimes \psi(\hat{\sigma}) \1)$ is the partially transposes tensor representation of a permutation in $\Sigma_{1,1}$.

\begin{lemma} \label{chap4:lem:partialTracePermutation}
    Let distinct $1 \leq a,b,c \leq N$, then
    \begin{align*}
        \Tr_{[N+1] \setminus \{0\}} \3[ \sum_{\sigma \in \Sigma_{a,b}} \psi(\sigma)^\Tpartial \3] &= \frac{1}{d} \sum_{\sigma \in \mathfrak{S}_{N-1}} \Tr \1[ \psi(\sigma) \1] \cdot I_d \\
        \Tr_{[N+1] \setminus \{0\}} \3[ \sum_{\sigma \in \Sigma_{c,c}} \psi(\sigma)^\Tpartial \3] &= \sum_{\sigma \in \mathfrak{S}_{N-1}} \Tr \1[ \psi(\sigma) \1] \cdot I_d.
    \end{align*}
\end{lemma}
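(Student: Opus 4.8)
The plan is to reduce everything to the single-point decomposition in \hyperref[chap4:eq:sigmaHat]{Eq.~(\ref*{chap4:eq:sigmaHat})} and then to a short trace computation involving the unnormalised maximally entangled operator $d \cdot \omega_{(0,1)}$. Since the assignment $\sigma \mapsto \hat{\sigma}$ supplied by that equation is a bijection from $\Sigma_{a,b}$ (which has cardinality $(N-1)!$) onto $\mathfrak{S}_{N-1}$, I would first set $P \coloneqq \sum_{\hat{\sigma} \in \mathfrak{S}_{N-1}} \psi(\hat{\sigma})$, acting on the output positions $\{2, \ldots, N\}$, and pull the two fixed transpositions out of the sum to obtain
\[
    \sum_{\sigma \in \Sigma_{a,b}} \psi(\sigma)^\Tpartial = \psi \bigl( (1 \: a) \bigr) \, \bigl( d \cdot \omega_{(0,1)} \otimes P \bigr) \, \psi \bigl( (1 \: b) \bigr).
\]
Everything then rests on evaluating $\Tr_{[N+1] \setminus \{0\}}$ of the right-hand side. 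A preliminary lemma I would record is that this partial trace is cyclic in operators supported on the traced positions: if $A, B$ act only on $\{1, \ldots, N\}$ then $\Tr_{[N+1] \setminus \{0\}} [A X B] = \Tr_{[N+1] \setminus \{0\}} [B A X]$, which follows by writing $X = \sum_k C_k \otimes D_k$ with $C_k$ on position $0$ and applying ordinary cyclicity on the rest. Because $\psi$ is a homomorphism, this lets me replace the outer transpositions by the single permutation $\tau \coloneqq (1 \: b)(1 \: a)$ and study $\Tr_{[N+1] \setminus \{0\}} [\psi(\tau) (d \cdot \omega_{(0,1)} \otimes P)]$.

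For the diagonal case $a = b = c$ we have $\tau = e$, so the computation collapses immediately:
\[
    \Tr_{[N+1] \setminus \{0\}} \bigl[ d \cdot \omega_{(0,1)} \otimes P \bigr] = \Tr_{\{1\}} \bigl[ d \cdot \omega_{(0,1)} \bigr] \cdot \Tr_{\{2, \ldots, N\}} [P] = I_d \cdot \sum_{\hat{\sigma} \in \mathfrak{S}_{N-1}} \Tr \bigl[ \psi(\hat{\sigma}) \bigr],
\]
using the standard facts $\Tr_{\{1\}} [d \cdot \omega_{(0,1)}] = I_d$ and that $\Tr[\psi(\hat{\sigma})]$ is a power of $d$ coming from closing the diagram. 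This is exactly the second identity.

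The off-diagonal case $a \neq b$ is the main obstacle, since there $\tau$ genuinely moves position $1$ (it is the $3$-cycle $(1 \: a \: b)$ when $a, b \geq 2$, and a transposition $(1 \: b)$ or $(1 \: a)$ in the degenerate subcases $a = 1$ or $b = 1$). The point I would exploit is that left multiplication by a permutation of $\{2, \ldots, N\}$ only permutes the summands of $P$ and hence fixes it, $\psi((a \: b)) \, P = P$; combined with the fact that $\psi((a \: b))$ commutes with $d \cdot \omega_{(0,1)}$, this gives the operator identity $\psi((1 \: a \: b)) (d \cdot \omega_{(0,1)} \otimes P) = \psi((1 \: s)) (d \cdot \omega_{(0,1)} \otimes P)$, reducing $\tau$ to a single swap of position $1$ with some $s \in \{2, \ldots, N\}$. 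The crux is then the ``teleportation'' evaluation of $\Tr_{[N+1] \setminus \{0\}} [\psi((1 \: s)) (d \cdot \omega_{(0,1)} \otimes P)]$: writing the swap as $\sum_{p,q} \ketbra{q}{p}_{(1)} \otimes \ketbra{p}{q}_{(s)}$ and contracting position $1$ against $d \cdot \omega_{(0,1)}$ produces $\ketbra{p}{q}_{(0)}$, so the result becomes $\sum_{p,q} \ketbra{p}{q}_{(0)} \cdot \Tr_{\{2, \ldots, N\}} [\ketbra{p}{q}_{(s)} P]$. Here the entangled loop that contributed the factor $d$ in the diagonal case has been opened, and this is precisely where the factor $\frac{1}{d}$ arises.

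To finish the off-diagonal case I would invoke the $\mathrm{U}_d$-covariance of $P$ (Schur's lemma, since $P$ commutes with $U^{\otimes (N-1)}$): its single-site marginal satisfies $\Tr_{\{2, \ldots, N\} \setminus \{s\}} [P] = \frac{1}{d} \Tr[P] \cdot I_d$, whence $\Tr_{\{2, \ldots, N\}} [\ketbra{p}{q}_{(s)} P] = \frac{1}{d} \Tr[P] \, \delta_{pq}$, and the whole expression reduces to $\frac{1}{d} \sum_{\hat{\sigma}} \Tr[\psi(\hat{\sigma})] \cdot I_d$, which is the first identity. The degenerate subcases $a = 1$ or $b = 1$ are handled identically, once $\tau$ is recognised as a single transposition so that the swap-reduction step is vacuous. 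Thus the only genuine work is the one teleportation identity on $d \cdot \omega_{(0,1)}$ together with the covariance/marginal computation for $P$; all the surrounding manipulations are formal consequences of \hyperref[chap4:eq:sigmaHat]{Eq.~(\ref*{chap4:eq:sigmaHat})} and the cyclicity of the partial trace.
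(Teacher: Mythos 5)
Your proof is correct, and while your treatment of the diagonal case $\Sigma_{c,c}$ essentially coincides with the paper's (both conjugate the decomposition of Eq.~(\ref{chap4:eq:sigmaHat}) away and contract $d \cdot \omega_{(0,1)}$ against position $1$), your off-diagonal case follows a genuinely different route. The paper first commutes the partial trace with the partial transpose and observes that $\Tr_{[N+1] \setminus \{0\}} \bigl[ \psi(\sigma) \bigr]$ is always a multiple of $I_d$, so the transpose on position $0$ is irrelevant; the problem then collapses to the scalar quantity $\sum_{\sigma \in \Sigma_{a,b}} d^{\# \sigma}$, which is evaluated combinatorially: the bijection $\Sigma_{a,a} \to \Sigma_{a,b}$, $\sigma \mapsto \sigma \circ (a \: b)$, merges the $2$-cycle $(0 \: a)$ with the cycle containing $b$, so it loses exactly one cycle and produces the factor $\tfrac{1}{d}$. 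You instead keep the partial transpose throughout, collapse the conjugating transpositions by cyclicity of the partial trace, reduce the resulting $3$-cycle to a single swap via the invariance $\psi\bigl((a \: b)\bigr) P = P$, and extract the $\tfrac{1}{d}$ from the teleportation contraction together with a Schur's-lemma computation of the single-site marginal of $P$. What the paper's route buys is that everything becomes permutation combinatorics, with no need for the commutation of $P$ with $U^{\otimes (N-1)}$; what your route buys is a structural explanation of where the $\tfrac{1}{d}$ comes from (the closed entangled loop of the diagonal case is opened into a teleportation wire), and it avoids the cycle-counting step altogether. One small remark: your appeal to Schur's lemma for $\Tr_{\{2,\ldots,N\} \setminus \{s\}} [P] = \tfrac{1}{d} \Tr[P] \cdot I_d$ can be replaced by the same diagrammatic fact the paper uses (closing every loop of $\psi(\hat{\sigma})$ except the one through $s$ yields $d^{\# \hat{\sigma} - 1} \cdot I_d$), which would make the two proofs share even more; but as written both steps are sound.
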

\begin{proof}
    For the second equation, using the decomposition of \hyperref[chap4:eq:sigmaHat]{Eq.~(\ref*{chap4:eq:sigmaHat})},
    \begin{align*}
        &\Tr_{[N+1] \setminus \{0\}} \3[ \sum_{\sigma \in \Sigma_{c,c}} \psi(\sigma)^\Tpartial \3] \\
        &= \Tr_{[N+1] \setminus \{0\}} \3[ \sum_{\hat{\sigma} \in \mathfrak{S}_{N-1}} \psi \1( (1 \: c) \1) \cdot \1( d \cdot \omega_{(0,1)} \otimes \psi(\hat{\sigma}) \1) \cdot \psi \1( (1 \: c) \1) \3] \\ 
        &= \Tr_{[N+1] \setminus \{0\}} \3[ \sum_{\hat{\sigma} \in \mathfrak{S}_{N-1}} \sum^{d-1}_{i,j = 0} \psi \1( (1 \: c) \1) \cdot \1( \ketbra{ii}{jj} \otimes \psi(\hat{\sigma}) \1) \cdot \psi \1( (1 \: c) \1) \3] \\ 
        &= \Tr \3[ \sum_{\hat{\sigma} \in \mathfrak{S}_{N-1}} \sum^{d-1}_{i,j = 0} \psi \1( (0 \: (c-1)) \1) \cdot \1( \ketbra{i}{j} \otimes \psi(\hat{\sigma}) \1) \cdot \psi \1( (0 \: (c-1)) \1) \3] \cdot \ketbra{i}{j} \\ 
        &= \sum_{\sigma \in \mathfrak{S}_{N-1}} \Tr \1[ \psi(\sigma) \1] \cdot I_d,
    \end{align*}
    where $\psi \1( (0 \: (c-1)) \1)$ is the tensor representations of the permutation $(0 \: (c-1)$ on $\{0, \ldots, (N-1)\}$. For the first equation, since the partial transpose is a linear operator,
    \begin{equation*}
        \Tr_{[N+1] \setminus \{0\}} \3[ \sum_{\sigma \in \Sigma_{a,b}} \psi(\sigma)^\Tpartial \3] = {\4( \Tr_{[N+1] \setminus \{0\}} \3[ \sum_{\sigma \in \Sigma_{a,b}} \psi(\sigma) \3] \4)}^\Tpartial.
    \end{equation*}
    For any $\sigma \in \mathfrak{S}_{N+1}$, the partial trace of the tensor representation $\psi(\sigma)$ is a multiple of the identity, i.e., 
    \begin{equation*}
        \Tr_{[N+1] \setminus \{0\}} \1[ \psi(\sigma) \1] = c \cdot I_d,
    \end{equation*}
    with $c = \frac{1}{d} \Tr \1[ \psi(\sigma) \1]$. Then
    \begin{equation*}
        {\4( \Tr_{[N+1] \setminus \{0\}} \3[ \sum_{\sigma \in \Sigma_{a,b}} \psi(\sigma) \3] \4)}^\Tpartial = \frac{1}{d} \Tr \3[ \sum_{\sigma \in \Sigma_{a,b}} \psi(\sigma) \3] \cdot I_d.
    \end{equation*}
    For a permutation $\sigma \in \mathfrak{S}_{N+1}$, let $\# \sigma$ denotes the number of disjoint cycles of $\sigma$. Then
    \begin{align*}
        \Tr \3[ \sum_{\sigma \in \Sigma_{a,b}} \psi(\sigma) \3] &= \sum_{\sigma \in \Sigma_{a,b}} d^{\# \sigma} \\
        &= \sum_{\sigma \in \Sigma_{a,a}} d^{\# [\sigma \circ (a \: b)]}.
    \end{align*}
    Let distinct $a,b \in \{0, \ldots, N\}$ and a permutation $\sigma \in \mathfrak{S}_{N+1}$ with its decomposition into disjoint cycles $\sigma = c_1 \circ \cdots \circ c_k$. If there exists $i \in \{1, \ldots, k\}$ such that both $a,b \in c_i$, then the permutation $c_i \circ (a \: b)$ can be decomposed into two disjoint cycles. Otherwise, if there exist distinct $i,j \in \{1, \ldots, k\}$ such that $a \in c_i$ and $b \in c_j$, then the permutation $c_i \circ c_j \circ (a \: b)$ can be decomposed into only one disjoint cycle. Finally
    \begin{equation*}
        \# \1[ \sigma \circ (a \: b) \1] =
        \begin{cases}
            \# \sigma + 1 &\text{if } \exists i \in \{1, \ldots, k\} \text{ s.t. } a,b \in c_i \\
            \# \sigma - 1 &\text{otherwise}
        \end{cases}.
    \end{equation*}
    But when $\sigma \in \Sigma_{a,a}$ and since $b \neq a$, in the decomposition of $\sigma$ into disjoint cycles, $a$ is in the cycle $(0 \: a)$, and $\# \1[ \sigma \circ (a \: b) \1] = \# \sigma - 1$. That is
    \begin{equation*}
        \Tr \3[ \sum_{\sigma \in \Sigma_{a,b}} \psi(\sigma) \3] = \frac{1}{d} \Tr \3[ \sum_{\sigma \in \Sigma_{a,a}} \psi(\sigma) \3].
    \end{equation*}
    Using the first equation,
    \begin{equation*}
        \Tr_{[N+1] \setminus \{0\}} \3[ \sum_{\sigma \in \Sigma_{a,b}} \psi(\sigma)^\Tpartial \3] = \frac{1}{d} \sum_{\sigma \in \mathfrak{S}_{N-1}} \Tr \1[ \psi(\sigma) \1] \cdot I_d.
    \end{equation*}
\end{proof}

The next lemma establishes the relationship between a Choi matrix of a partially transposed tensor representation $\psi(\sigma)^\Tpartial$, and its corresponding quantum channel.
\begin{lemma} \label{chap4:lem:ChoiMatrixQuantumChannelPermutation}
    Let some $1 \leq a, b \leq N$ and $\sigma \in \Sigma_{a,b}$, then there exist $\mu, \nu \in \mathfrak{S}_N$ satisfying $\mu(0) = a - 1$ and $\nu^{\shortminus 1}(0) = b - 1$, such that $\psi(\sigma)^\Tpartial$ is the Choi matrix of the linear map,
    \begin{equation*}
        X \mapsto \psi(\mu) \cdot \1( X \otimes I^{\otimes (N - 1)}_d \1) \cdot \psi(\nu).
    \end{equation*}
\end{lemma}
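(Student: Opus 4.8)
The plan is to derive the claim directly from the decomposition in \hyperref[chap4:eq:sigmaHat]{Eq.~(\ref*{chap4:eq:sigmaHat})} together with the Choi--Jamio\l{}kowski formula $C_\Phi = (\id_d \otimes \Phi)(d \cdot \omega)$ from \hyperref[chap3:thm:quantumChannelStructure]{Theorem~\ref*{chap3:thm:quantumChannelStructure}}. First I would compute the Choi matrix of the candidate map $\Phi(X) = \psi(\mu)\bigl(X \otimes I^{\otimes(N-1)}_d\bigr)\psi(\nu)$. Writing $d \cdot \omega = \sum_{i,j}\ketbra{i}{j} \otimes \ketbra{i}{j}$ on the reference factor $0$ and the input factor of $\Phi$, and pulling the permutation operators $\psi(\mu),\psi(\nu)$ (which act only on the $N$ output factors, not on the reference) out of the sum, one obtains
\begin{equation*}
    C_\Phi = \bigl(I_{(0)} \otimes \psi(\mu)\bigr) \cdot \bigl(d \cdot \omega_{(0,1)} \otimes I^{\otimes(N-1)}_d\bigr) \cdot \bigl(I_{(0)} \otimes \psi(\nu)\bigr),
\end{equation*}
where position $1$ is the first output factor, into which $X$ is fed. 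Thus the lemma reduces to bringing the right-hand side of \hyperref[chap4:eq:sigmaHat]{Eq.~(\ref*{chap4:eq:sigmaHat})} into exactly this shape for suitable $\mu,\nu$.

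Next I would isolate the bare entangled block in \hyperref[chap4:eq:sigmaHat]{Eq.~(\ref*{chap4:eq:sigmaHat})}. Since $\psi(\hat\sigma)$ acts only on the factors $\{2,\ldots,N\}$ while $d \cdot \omega_{(0,1)}$ acts on $\{0,1\}$, the two commute, so
\begin{equation*}
    d \cdot \omega_{(0,1)} \otimes \psi(\hat\sigma) = \bigl(I_{(0,1)} \otimes \psi(\hat\sigma)\bigr)\cdot\bigl(d \cdot \omega_{(0,1)} \otimes I^{\otimes(N-1)}_d\bigr).
\end{equation*}
Letting $\tilde\sigma \in \mathfrak{S}_{N+1}$ be the extension of $\hat\sigma$ fixing $0$ and $1$, so that $I_{(0,1)} \otimes \psi(\hat\sigma) = \psi(\tilde\sigma)$, I would merge the left transposition with this factor using the fact that $\psi$ is a homomorphism: $\psi\bigl((1\:a)\bigr)\,\psi(\tilde\sigma) = \psi(\mu')$ with $\mu' \coloneqq (1\:a)\circ\tilde\sigma$. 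Because both $(1\:a)$ and $\tilde\sigma$ fix $0$, so does $\mu'$, and $\tilde\sigma(1) = 1$ gives $\mu'(1) = a$. Substituting back yields
\begin{equation*}
    \psi(\sigma)^\Tpartial = \psi(\mu')\cdot\bigl(d \cdot \omega_{(0,1)} \otimes I^{\otimes(N-1)}_d\bigr)\cdot\psi\bigl((1\:b)\bigr).
\end{equation*}

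Finally I would translate the permutations fixing $0$ into elements of $\mathfrak{S}_N$ by relabelling the output positions $\{1,\ldots,N\}$ as $\{0,\ldots,N-1\}$, i.e. subtracting one. Under this relabelling $\psi(\mu')$ becomes $I_{(0)} \otimes \psi(\mu)$ with $\mu(0) = a-1$, and the right transposition $(1\:b)$ becomes $\nu \coloneqq (0\:(b-1)) \in \mathfrak{S}_N$, which is an involution, so $\nu^{\shortminus 1}(0) = \nu(0) = b-1$. Comparing this with the computed $C_\Phi$ then gives $\psi(\sigma)^\Tpartial = C_\Phi$ for the stated map, which is the assertion. I expect the only delicate point to be the index bookkeeping across the two labellings ($\mathfrak{S}_{N+1}$ on $\{0,\ldots,N\}$ versus $\mathfrak{S}_N$ on $\{0,\ldots,N-1\}$), together with tracking the representation convention so that $\mu$ and $\nu$ (rather than their inverses) appear on the correct sides and the fixed-point values come out as $a-1$ and $b-1$; the remaining steps are just the commutation of disjointly-supported operators and the homomorphism property of $\psi$.
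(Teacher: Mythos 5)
Your proposal is correct and is essentially the paper's own argument run in the opposite direction: both hinge on the decomposition of \hyperref[chap4:eq:sigmaHat]{Eq.~(\ref*{chap4:eq:sigmaHat})} together with the Choi--Jamio\l{}kowski correspondence, the paper recovering the map from $\psi(\sigma)^\Tpartial$ via the partial-trace formula $X \mapsto \Tr_{\{0\}}\bigl[\psi(\sigma)^\Tpartial \bigl(X^\T \otimes I^{\otimes N}_d\bigr)\bigr]$, while you compute the Choi matrix of the candidate map $C_\Phi = (\id_d \otimes \Phi)(d \cdot \omega)$ and match operator expressions using disjoint-support commutation and the homomorphism property of $\psi$. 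The resulting witnesses differ immaterially --- your $\mu$ absorbs $\hat{\sigma}$ on the left, whereas the paper sets $\psi(\mu) = \psi\bigl((0\:(a-1))\bigr)$ and lets $\psi(\nu) = \bigl(I_d \otimes \psi(\hat{\sigma})\bigr) \cdot \psi\bigl((0\:(b-1))\bigr)$ absorb it on the right --- which is fine, since the lemma only asserts existence and both choices satisfy $\mu(0) = a-1$ and $\nu^{\shortminus 1}(0) = b-1$.
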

\begin{proof}
    Using the decomposition of \hyperref[chap4:eq:sigmaHat]{Eq.~(\ref*{chap4:eq:sigmaHat})},
    \begin{equation*}
        \psi(\sigma)^\Tpartial = \psi \1( (1 \: a) \1) \cdot \1( d \cdot \omega_{(0,1)} \otimes \psi(\hat{\sigma}) \1) \cdot \psi \1( (1 \: b) \1),
    \end{equation*}
    for some unique permutation $\hat{\sigma} \in \mathfrak{S}_{N-1}$. Then,
    \begin{align*}
        &\Tr_{\{0\}} \2[ \psi(\sigma)^\Tpartial \1( X^\T \otimes I^{\otimes N}_d \1) \2] \\
        &= \Tr_{\{0\}} \2[ \psi \1( (1 \: a) \1) \cdot \1( d \cdot \omega_{(0,1)} \otimes \psi(\hat{\sigma}) \1) \cdot \psi \1( (1 \: b) \1) \1( X^\T \otimes I^{\otimes N}_d \1) \2] \\
        &= \Tr_{\{0\}} \3[ \sum^{d-1}_{i,j = 0} \psi \1( (1 \: a) \1) \cdot \1( \ketbra{ii}{jj} \otimes \psi(\hat{\sigma}) \1) \cdot \psi \1( (1 \: b) \1) \1( X^\T \otimes I^{\otimes N}_d \1) \3] \\
        &= \sum^{d-1}_{i,j = 0} \bra{i} X \ket{j} \cdot \psi \1( (0 \: (a-1)) \1) \cdot \1( \ketbra{i}{j} \otimes \psi(\hat{\sigma}) \1) \cdot \psi \1( (0 \: (b-1)) \1) \\
        &= \psi \1( (0 \: (a-1)) \1) \cdot \1( X \otimes \psi(\hat{\sigma}) \1) \cdot \psi \1( (0 \: (b-1)) \1).
    \end{align*}
    Thus, by setting $\psi(\mu) \coloneqq \psi \1( (0 \: (a-1)) \1)$ and $\psi(\nu) \coloneqq \1( I_d \otimes \psi(\hat{\sigma}) \1) \cdot \psi \1( (0 \: (b-1)) \1)$, the result holds.
\end{proof}

Recall that the upper bound of the universal quantum cloning problem for a direction vector $a$, from \hyperref[chap4:sec:upperBound]{Section~\ref*{chap4:sec:upperBound}}, is given as the largest eigenvalue of the the matrix $\emph{R_a}$ defined by,
\begin{equation*}
    R_a \coloneqq \sum^N_{i=1} a_i \cdot \1( d^2 \cdot \mathrm{I}_{(0,i)} + d \cdot \omega_{(0,i)} \1) \otimes I^{\otimes (N-1)}_d.
\end{equation*}
The eigenvectors of $R_a$ are the same as those of the matrix $\emph{S_a}$ defined by,
\begin{equation*}
    S_a \coloneqq \sum^N_{i=1} a_i \cdot \1( d \cdot \omega_{(0,i)} \1) \otimes I^{\otimes (N-1)}.
\end{equation*}
\begin{lemma} \cite{nechita2022asymmetric} \label{chap4:lem:largestEigenvector}
    The normalized largest eigenvectors of $S_a$ are of the form
    \begin{equation*}
        \chi = \sum^N_{i=1} b_i \cdot \1( \sqrt{d} \cdot \ket{\Omega}_{(0,i)} \1) \otimes \ket{v},
    \end{equation*}
    for some vector $\ket{v}$ in the symmetric subspace $\vee^{(N - 1)} \mathbb{C}^d$, and some positive real numbers $b_i$ satisfy the equation,
    \begin{equation*}
        (d - 1) \sum^N_{i = 1} b^2_i + \3( \sum^N_{i = 1} b_i \3)^2 = 1.
    \end{equation*}
    The largest eigenvalue becomes $\lambda_{\text{max}} = \sum^N_{i = 1} a_i {\2( (d - 1) b_i + \sum^N_{j = 1} b_j \2)}^2$.
\end{lemma}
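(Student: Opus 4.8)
The plan is to work directly with the matrix $S_a$, since (as recorded just before the statement) it has the same eigenvectors as $R_a$. Writing $P_j \coloneqq a_j\,(d\cdot\omega_{(0,j)})\otimes I^{\otimes(N-1)}$ so that $S_a=\sum_j P_j$, I would first note that each $d\cdot\omega_{(0,j)}$ is $d$ times the rank-one projector onto $\ket\Omega_{(0,j)}$, so every eigenvector of $S_a$ with nonzero eigenvalue lies in $\mathcal R\coloneqq\sum_{j=1}^N \mathrm{range}(P_j)=\Span\{\sqrt d\,\ket\Omega_{(0,j)}\otimes\ket w\}$. For a fixed symmetric $\ket v\in\vee^{N-1}\mathbb C^d$, set $\ket{\chi_i^v}\coloneqq\sqrt d\,\ket\Omega_{(0,i)}\otimes\ket v$, i.e. the entangled pair on legs $0,i$ with $\ket v$ placed on the remaining $N-1$ output legs.

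The crux is an invariance computation carried out in index notation. Applying $d\cdot\omega_{(0,j)}=\sum_{k,l}\ketbra{kk}{ll}_{(0,j)}$ to $\ket{\chi_i^v}=\sum_m\ket m_0\ket m_i\otimes\ket v$: for $j=i$ the projector fixes the pair and returns $d\,\ket{\chi_i^v}$; for $j\ne i$ it contracts leg $j$ of $\ket v$ against leg $0$, transplanting that index onto leg $i$ and re-entangling legs $0,j$. Because $\ket v$ is totally symmetric, relabelling leg $j$ to leg $i$ leaves it unchanged, so the result is exactly $\ket{\chi_j^v}$. Hence $\mathcal K_v\coloneqq\Span\{\ket{\chi_i^v}\}_{i=1}^N$ is $S_a$-invariant and $S_a$ acts there by the $N\times N$ matrix $M\coloneqq (d-1)\,\mathrm{diag}(a_1,\dots,a_N)+a\,\mathbf 1^{\T}$. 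The same bookkeeping gives the Gram matrix $\braket{\chi_i^v}{\chi_j^v}=d$ for $i=j$ and $=1$ for $i\ne j$, i.e. $G=(d-1)I+J$.

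From here the finite-dimensional statement is immediate. For $\ket\chi=\sum_i b_i\ket{\chi_i^v}$ with real $b_i$, the Gram matrix gives $\|\chi\|^2=(d-1)\sum_i b_i^2+(\sum_i b_i)^2$, which is precisely the stated normalisation when $\|\chi\|=1$; and projecting onto the pair $0,j$ yields $\sqrt d\,\bra\Omega_{(0,j)}\ket\chi=\frac1{\sqrt d}((d-1)b_j+\sum_i b_i)\ket v$, whence $\bra\chi S_a\ket\chi=\sum_j a_j\,d\,\|\bra\Omega_{(0,j)}\ket\chi\|^2=\sum_j a_j((d-1)b_j+\sum_i b_i)^2$. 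Taking $b=(b_i)$ to be the top eigenvector of $M$, which is positive by Perron--Frobenius since $M$ is nonnegative and irreducible, produces an eigenvector of $S_a$ of the claimed form whose eigenvalue equals this Rayleigh quotient.

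The main obstacle is to show this eigenvalue is the global maximum, i.e. that the top eigenvector really sits in a symmetric sector $\mathcal K_v$ and not elsewhere in $\mathcal R$. My plan is to model $S_a|_{\mathcal R}$ by the ``hole-moving'' operator on $\bigoplus_{i=1}^N(\mathbb C^d)^{\otimes(N-1)}$ sending $(i,w)\mapsto a_i d\,(i,w)+\sum_{j\ne i}a_j\,(j,(i\,j)w)$, which intertwines with $S_a$ through $(i,w)\mapsto\sqrt d\,\ket\Omega_{(0,i)}\otimes\ket w$ and is a nonnegative matrix in the computational basis. It is block-diagonal according to the multiset of index values on the output legs, each block is irreducible, and Perron--Frobenius gives a positive top eigenvector per block; the value-relabelling symmetry $\mathfrak S_d$ then forces that eigenvector into the uniform-arrangement (symmetric) sector on which the operator reduces to $M$, so every block attains top eigenvalue $\lambda_{\max}(M)$ while non-symmetric sectors contribute strictly smaller eigenvalues. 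The delicate point I expect to require the most care is the blocks with repeated index values of unequal multiplicity, where the relabelling group is not transitive on arrangements; there I would instead invoke the mixed Schur--Weyl (walled Brauer $\mathbb B_{N,1}(d)$) decomposition of $\bar{\mathbb C^d}\otimes(\mathbb C^d)^{\otimes N}$ to index the sectors by Young diagrams and identify the single-row, fully symmetric diagram as the one carrying $\lambda_{\max}$.
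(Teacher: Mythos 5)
The thesis itself gives no proof of this lemma---it is imported wholesale from \cite{nechita2022asymmetric}---so your proposal has to stand on its own merits. Most of it does. The invariance computation (that $(d\cdot\omega_{(0,j)})\otimes I$ multiplies $\ket{\chi^v_i}$ by $d$ when $j=i$ and maps it to $\ket{\chi^v_j}$ when $j\neq i$, which is exactly where total symmetry of $\ket{v}$ enters), the resulting action of $S_a$ on $\mathcal{K}_v$ by your matrix $M$, the Gram matrix $(d-1)I+J$, the Rayleigh-quotient identities, the reduction to the range $\mathcal{R}$, and the passage to the hole-moving operator with its block decomposition by multisets of values and the irreducibility of each block (connectivity of the hole-sliding puzzle on $K_N$) are all correct, granted $a_i>0$ for every $i$, an assumption the lemma needs anyway for the $b_i$ to be strictly positive. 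The genuine gap is the sentence ``the value-relabelling symmetry $\mathfrak{S}_d$ then forces that eigenvector into the uniform-arrangement (symmetric) sector.'' The subgroup of $\mathfrak{S}_d$ that preserves a block is the stabiliser of its multiset, and it acts transitively on the arrangements inside a block only when the multiset has no repeated value. As soon as a value repeats, the orbits are proper---for $\{1,1,2\}$ the stabiliser even acts trivially on the block---so Perron--Frobenius uniqueness combined with this symmetry says nothing about where the Perron vector sits. (The failure is also not confined to unequal multiplicities as you suggest: for $\{1,1,2,2\}$ the stabiliser's orbits on the six arrangements all have size two.)

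Your proposed rescue via mixed Schur--Weyl / walled Brauer does not repair this as described: since the $a_i$ are unequal, $S_a$ commutes with $\bar{U}\otimes U^{\otimes N}$ but not with leg permutations, so although the isotypic decomposition does block-diagonalise $S_a$ (it acts as $I\otimes X_\lambda$ on each sector), those sectors are unrelated to your multiset blocks, and deciding which Young diagram maximises $\lambda_{\text{max}}(X_\lambda)$ for asymmetric $a$ is the original problem in new clothes. Fortunately the gap closes inside your own framework with no symmetry at all. In each block, let $u_i$ be the sum of all arrangements with the hole at leg $i$; your relabelling computation shows the hole-moving operator sends $u_i\mapsto d a_i u_i+\sum_{j\neq i}a_j u_j$, so with $b$ the Perron vector of $M$ the vector $\sum_i b_i u_i$ is an eigenvector of the block with eigenvalue $\lambda_{\text{max}}(M)$, and it has strictly positive coordinates on \emph{every} basis element of the block. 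Now invoke the converse direction of Perron--Frobenius: if an irreducible nonnegative matrix $A$ has a strictly positive eigenvector $v$ with $Av=\lambda v$, then pairing with the left Perron vector $w$ (so that $w^{\mathsf{T}}Av=\rho(A)\,w^{\mathsf{T}}v=\lambda\,w^{\mathsf{T}}v$ with $w^{\mathsf{T}}v>0$) forces $\lambda=\rho(A)$, and $v$ spans the Perron line. Hence every block has spectral radius exactly $\lambda_{\text{max}}(M)$, with one-dimensional top eigenspace lying in the symmetric sector. Since the spectrum of $S_a$ restricted to $\mathcal{R}$ is contained in that of the hole-moving operator (which splits over the blocks; if the parametrisation $T$ has a kernel, that kernel is invariant, so pass to the quotient), this pins $\lambda_{\text{max}}(S_a)=\lambda_{\text{max}}(M)$; each block is primitive (positive diagonal $da_i$), so its Perron root is simple, the number of blocks is $\dim\vee^{N-1}\mathbb{C}^d=\binom{N+d-2}{d-1}$, and together with the injectivity of $v\mapsto\sum_i b_i\ket{\chi^v_i}$ and self-adjointness of $S_a$, a dimension count identifies the $\lambda_{\text{max}}$-eigenspace of $S_a$ as exactly the claimed family. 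With that replacement your argument becomes a complete proof of the lemma.
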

\begin{remark*}
    Note that the positive real numbers $b_i$ depend on the direction vector $a$.
\end{remark*}

\subsection{Optimal symmetric quantum cloning channels}

The \emph{symmetric} universal $1 \to N$ quantum cloning problem is a special case of the quantum cloning problem, where all the marginals $\Phi_i$ of the quantum cloning channel are asked to be equal on all pure quantum states $\rho$, i.e.,
\begin{equation*}
    \Phi_i(\rho) = p \cdot \rho + (1 - p) \frac{I_d}{d},
\end{equation*}
where $p$ does not depend on the choice of the marginal $\Phi_i$.

\begin{theorem*}[\cite{keyl1999optimal}] \label{chap4:thm:optimalSymmetricQuantumCloningChannel}
    The optimal quantum cloning channel $\Phi_{\text{opt}}$ from $\mathcal{M}_d$ to ${\1( \mathcal{M}_d \1)}^{\otimes N}$, for the symmetric universal $1 \to N$ quantum cloning problem, is defined on all pure quantum states $\rho$ by
    \begin{equation*}
        \Phi_{\text{opt}}(\rho) \coloneqq \frac{d}{\Tr \1[ P^+_{\mathfrak{S}_N} \1]} P^+_{\mathfrak{S}_N} \1( \rho \otimes I^{\otimes (N - 1)}_d \1) P^+_{\mathfrak{S}_N},
    \end{equation*}
    where $P^+_{\mathfrak{S}_N}$ is the orthogonal projector onto the symmetric subspace $\vee^N \mathbb{C}^d$, defined by
    \begin{equation*}
        P^+_{\mathfrak{S}_N} \coloneqq \frac{1}{N!} \sum_{\sigma \in \mathfrak{S}_N} \psi(\sigma).
    \end{equation*}
    Then each marginal ${\1( \Phi_{\text{opt}} \1)}_i$ is equal, on all pure quantum states $\rho$, to,
    \begin{equation*}
        {\1( \Phi_{\text{opt}} \1)}_i(\rho) = \underbrace{\frac{d + N}{N (d + 1)}}_{p_{\text{opt}}} \rho + \underbrace{\3( 1 - \frac{d + N}{N (d + 1)} \3)}_{1 - p_{\text{opt}}} \frac{I_d}{d}.
    \end{equation*}
\end{theorem*}

\begin{figure*}[h]
    \centering
    \begin{tikzpicture}
        \begin{axis}[title = {Optimal symmetric fidelity},
                     xlabel = $N$,
                     ylabel style = {anchor = north},
                     ylabel = $p_{\text{opt}}$,
                     ylabel style = {rotate = -90, anchor = east},
                     xmin = 2,
                     xmax = 20,
                     ymin = 0.2,
                     ymax = 0.8,
                     xtick = {2, 5, 10, 15, 20},
                     ]
            \addplot[domain = 2:20, 
                    samples = 19, 
                    color = red!60!white,
                    smooth,
                    ultra thick]
                    {(2 + x) / (x * (2 + 1))};
            \addlegendentry{$d = 2$}
            \addplot[domain = 2:20, 
                    samples = 19, 
                    color = green!60!white,
                    dashed,
                    smooth,
                    ultra thick]
                    {(3 + x) / (x * (3 + 1))};
            \addlegendentry{$d = 3$}
            \addplot[domain = 2:20, 
                    samples = 19, 
                    color = blue!60!white,
                    dotted,
                    smooth,
                    ultra thick]
                    {(4 + x) / (x * (4 + 1))};
            \addlegendentry{$d = 4$}
        \end{axis}
    \end{tikzpicture}
    \caption{The optimal fidelity for the symmetric universal $1 \to N$ quantum cloning problem.}
\end{figure*}
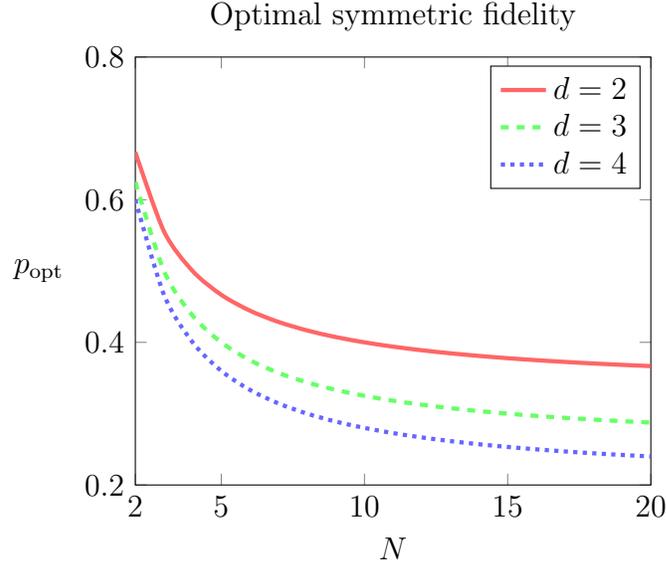

\subsection{Optimal asymmetric quantum cloning channels}

The \emph{asymmetric} universal $1 \to N$ quantum cloning problem is the general case of the quantum cloning problem, where the marginals $\Phi_i$ of the quantum cloning channel can be arbitrary.

In \hyperref[chap4:sec:universal1to2QuantumCloningProblem]{Section~\ref*{chap4:sec:universal1to2QuantumCloningProblem}}, the main technical difficulty was the positivity condition of the Choi matrix of the quantum cloning channels. Let $\Phi: \mathcal{M}_d \to {\1( \mathcal{M}_d \1)}^{\otimes N}$ be is a linear map defined by,
\begin{equation*}
    \Phi(X) \coloneqq A \1( X \otimes I^{\otimes (N - 1)}_d \1) A^*,
\end{equation*}
for some matrix $A \in \mathcal{M}_{d^N}$. Let $\Psi: \mathcal{M}_{d^N} \to \mathcal{M}_{d^N}$ be a linear map defined by $\Psi(X) \coloneqq A X A^*$, then,
\begin{equation*}
    \Psi \1( x \otimes I^{\otimes (N - 1)}_d \1) = \Phi(X).
\end{equation*}
Through this conjugacy  form $\Psi$ and in particular $\Phi$ are both completely positive.

\begin{theorem} \label{chap4:thm:optimalAsymmetricQuantumCloningChannel}
    Given a direction vector $a$, let $\chi = \sum^N_{i=1} b_i \cdot \1( \sqrt{d} \cdot \ket{\Omega}_{(0,i)} \1) \otimes \ket{v}$ be normalized largest eigenvectors of $S_a$. The optimal quantum cloning channel $\Phi^a_{\text{opt}}$ from $\mathcal{M}_d$ to ${\1( \mathcal{M}_d \1)}^{\otimes N}$, for the asymmetric universal $1 \to N$ quantum cloning problem in the direction $a$, is defined on all pure quantum states $\rho$ by
    \begin{equation*}
        \Phi^a_{\text{opt}}(\rho) \coloneqq \frac{d N (N + d - 1)}{\Tr \1[ P^+_{\mathfrak{S}_N} \1]} P^a_{\mathfrak{S}_N} \1( \rho \otimes I^{\otimes (N - 1)}_d \1) {\1( P^a_{\mathfrak{S}_N} \1)}^*,
    \end{equation*}
    where,
    \begin{equation*}
        P^a_{\mathfrak{S}_N} \coloneqq \frac{1}{N!} \sum_{\sigma \in \mathfrak{S}_N} b_{\sigma(0) + 1} \cdot \psi(\sigma).
    \end{equation*}
\end{theorem}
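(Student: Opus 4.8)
The plan is to verify two things: that $\Phi^a_{\text{opt}}$ is a genuine quantum cloning channel (\textsc{cptp}), and that it saturates the upper bound of \hyperref[chap4:thm:universalQuantumCloningProblemUpperBound]{Theorem~\ref*{chap4:thm:universalQuantumCloningProblemUpperBound}}. Complete positivity is immediate, since $\Phi^a_{\text{opt}}$ is written in the conjugation form $X \mapsto A\,(X \otimes I^{\otimes(N-1)}_d)\,A^*$ with $A$ proportional to $P^a_{\mathfrak{S}_N}$, so by the completely-positive conjugacy form discussed just before the statement it is completely positive. Trace preservation then amounts to fixing the normalization constant, and the bulk of the work is optimality, which I reduce to a support condition on the Choi matrix.

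First I would compute the Choi matrix of $\Phi^a_{\text{opt}}$. Writing $c \coloneqq dN(N+d-1)/\Tr[P^+_{\mathfrak{S}_N}]$ and applying $\id_d \otimes \Phi^a_{\text{opt}}$ to $d\cdot\omega$ gives
\[
C_{\Phi^a_{\text{opt}}} = c\,(I_d \otimes P^a_{\mathfrak{S}_N})\bigl(d\,\omega_{(0,1)} \otimes I^{\otimes(N-1)}_d\bigr)(I_d \otimes (P^a_{\mathfrak{S}_N})^*).
\]
Since the range of $d\,\omega_{(0,1)} \otimes I^{\otimes(N-1)}_d$ is spanned by the vectors $\sqrt{d}\,\ket{\Omega}_{(0,1)} \otimes \ket{w}$, the range of $C_{\Phi^a_{\text{opt}}}$ is spanned by $(I_d \otimes P^a_{\mathfrak{S}_N})(\sqrt{d}\,\ket{\Omega}_{(0,1)} \otimes \ket{w})$. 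The crucial computation is that the weighted symmetrizer $P^a_{\mathfrak{S}_N} = \frac{1}{N!}\sum_{\sigma} b_{\sigma(0)+1}\,\psi(\sigma)$ redistributes the entangled pair across the output slots: grouping the permutations by the output position to which the distinguished slot is sent, the entanglement lands between position $0$ and position $i$ with total weight proportional to $b_i$, while the remaining factors are symmetrized. This produces a vector of the exact form $\chi = \sum_{i=1}^N b_i\,(\sqrt{d}\,\ket{\Omega}_{(0,i)})\otimes \ket{v}$ with $\ket{v} \in \vee^{N-1}\mathbb{C}^d$ the symmetrization of $\ket{w}$. By \hyperref[chap4:lem:largestEigenvector]{Lemma~\ref*{chap4:lem:largestEigenvector}} these are precisely the largest eigenvectors of $S_a$, and since $R_a = S_a + I$ they are also the largest eigenvectors of $R_a$; hence the range of $C_{\Phi^a_{\text{opt}}}$ lies in the top eigenspace of $R_a$.

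Optimality then follows from the equality case of the inequality $\Tr[C R_a] \leq \Tr[C]\,\lambda_{\text{max}}(R_a)$ used in \hyperref[chap4:thm:universalQuantumCloningProblemUpperBound]{Theorem~\ref*{chap4:thm:universalQuantumCloningProblemUpperBound}}: when $C = C_{\Phi^a_{\text{opt}}}$ is supported on the $\lambda_{\text{max}}$-eigenspace the inequality is tight, so $\sum_{i=1}^N a_i \operatornamewithlimits{\mathbb{E}}_{\rho}[F(\Phi_i(\rho),\rho)] = \Tr[C_{\Phi^a_{\text{opt}}} R_a]/(d(d+1)) = \lambda_{\text{max}}(R_a)/(d+1)$, matching the upper bound. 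It remains to fix $c$ by imposing trace preservation $\Tr_{d^N}[C_{\Phi^a_{\text{opt}}}] = I_d$ from \hyperref[chap3:thm:quantumChannelStructure]{Theorem~\ref*{chap3:thm:quantumChannelStructure}}; this is a direct computation using $\Tr[P^+_{\mathfrak{S}_N}] = \dim \vee^N\mathbb{C}^d = \binom{N+d-1}{N}$ together with the normalization $(d-1)\sum_{i} b_i^2 + (\sum_{i} b_i)^2 = 1$ of \hyperref[chap4:lem:largestEigenvector]{Lemma~\ref*{chap4:lem:largestEigenvector}}, and it reproduces the stated value of $c$. The symmetric case \hyperref[chap4:thm:optimalSymmetricQuantumCloningChannel]{Theorem~\ref*{chap4:thm:optimalSymmetricQuantumCloningChannel}}, where all $b_i$ are equal and $P^a_{\mathfrak{S}_N}$ collapses to $P^+_{\mathfrak{S}_N}$, serves as a consistency check.

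The main obstacle is the combinatorial identity of the second paragraph: tracking carefully, with the convention that $\mathfrak{S}_N$ acts on $\{0,\dots,N-1\}$, how the weights $b_{\sigma(0)+1}$ conspire so that $(I_d \otimes P^a_{\mathfrak{S}_N})(\sqrt{d}\,\ket{\Omega}_{(0,1)}\otimes\ket{w})$ is exactly of the eigenvector form $\chi$, and not merely a vector lying in the symmetric-type subspace. A clean way to organize this is via the decomposition of \hyperref[chap4:lem:ChoiMatrixQuantumChannelPermutation]{Lemma~\ref*{chap4:lem:ChoiMatrixQuantumChannelPermutation}}: write each $\psi(\sigma)$ as a transposition moving the entangled slot composed with a residual permutation $\hat\sigma$ acting on the spectator positions, so that summing over $\hat\sigma$ yields the symmetric projector on the spectators and summing the transposition weights yields the coefficients $b_i$. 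Once this structural statement is established, the trace-preservation normalization is routine.
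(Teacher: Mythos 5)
Your proposal is correct, but your optimality argument takes a genuinely different route from the paper's. The paper never analyses the support of the Choi matrix: it expands $C_{\Phi^a_{\text{opt}}}$ as a weighted sum of partially transposed permutations via \hyperref[chap4:lem:ChoiMatrixQuantumChannelPermutation]{Lemma~\ref*{chap4:lem:ChoiMatrixQuantumChannelPermutation}}, namely $C_{\Phi^a_{\text{opt}}} \propto \sum_{1\le a,b\le N}\sum_{\sigma\in\Sigma_{a,b}} b_a b_b\,\psi(\sigma)^\Tpartial$, verifies the trace condition with \hyperref[chap4:lem:partialTracePermutation]{Lemma~\ref*{chap4:lem:partialTracePermutation}}, and then computes each marginal fidelity explicitly, $\Tr [ C_{\Phi^a_{\text{opt}}} ( d\cdot\omega_{(0,i)}\otimes I^{\otimes(N-1)}_d ) ] = d ( (d-1)b_i+\sum_j b_j )^2$, matching $\sum_i a_i f_i$ against $\lambda_{\text{max}}(S_a)=\bra{\chi}S_a\ket{\chi}$. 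You instead prove the structural statement that the range of $C_{\Phi^a_{\text{opt}}}$ consists of vectors $\sum_i b_i (\sqrt{d}\ket{\Omega}_{(0,i)} )\otimes P^+_{\mathfrak{S}_{N-1}}\ket{w}$ --- and your coset computation is exactly the right proof, since summing over $\{\sigma:\sigma(0)=i-1\}$ symmetrizes the spectator slots --- so that $C_{\Phi^a_{\text{opt}}}$ is supported in the top eigenspace of $S_a$, hence of $R_a$ (which is $S_a$ plus the identity, as $\sum_i a_i=1$), and saturation of $\Tr[CR_a]\le\Tr[C]\,\lambda_{\text{max}}(R_a)$ is automatic. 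Your route is conceptually cleaner: it exhibits optimality as the equality case of the eigenvalue bound, in the spirit of the Fiur\'a\v{s}ek remark after \hyperref[chap4:thm:universalQuantumCloningProblemUpperBound]{Theorem~\ref*{chap4:thm:universalQuantumCloningProblemUpperBound}}, and avoids the cycle-counting combinatorics. What the paper's heavier computation buys is the explicit marginal fidelities $f_i = \frac{1+((d-1)b_i+\sum_j b_j)^2}{d+1}$, which the following subsection needs in order to derive the necessary condition \hyperref[chap4:eq:necessaryConditionOptimalQuantumCloningChannel]{Eq.~(\ref*{chap4:eq:necessaryConditionOptimalQuantumCloningChannel})}. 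Two points you should make explicit to close the argument: trace preservation is the matrix identity $\Tr_{[N+1]\setminus\{0\}} [ C_{\Phi^a_{\text{opt}}} ] = I_d$, not a scalar normalization, and it reduces to ``fixing $c$'' only once you observe that $P^a_{\mathfrak{S}_N}$ commutes with every $U^{\otimes N}$, so the channel is unitarily covariant and its Choi partial trace is forced to be a multiple of $I_d$ (alternatively, argue directly with \hyperref[chap4:lem:partialTracePermutation]{Lemma~\ref*{chap4:lem:partialTracePermutation}} as the paper does); and your spectral step uses $\Tr[C_{\Phi^a_{\text{opt}}}]=d$, which is itself a consequence of trace preservation, so the normalization must logically precede the optimality claim.
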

\begin{proof}
    The complete positivity is given by the conjugacy form of the quantum channel $\Phi^a_{\text{opt}}$.

    For the trace preserving property, the Choi matrix $C_{\Phi^a_{\text{opt}}}$ of $\Phi^a_{\text{opt}}$ will be first determined. For all pure quantum states $\rho$,
    \begin{align*}
        \Phi^a_{\text{opt}}(\rho) &= \frac{d N (N + d - 1)}{\Tr \1[ P^+_{\mathfrak{S}_N} \1]} P^a_{\mathfrak{S}_N} \1( \rho \otimes I^{\otimes (N - 1)}_d \1) {\1( P^a_{\mathfrak{S}_N} \1)}^* \\
        &= \frac{d N (N + d - 1)}{(N!)^2 \cdot \Tr \1[ P^+_{\mathfrak{S}_N} \1]} \sum_{\sigma,\tau \in \mathfrak{S}_N} \1( b_{\sigma(0)+1} \cdot b_{\tau^{\shortminus 1}(0)+1} \1) \cdot \psi(\sigma) \1( \rho \otimes I^{\otimes (N - 1)}_d \1) \psi(\tau).
    \end{align*}
    As a consequence of \hyperref[chap4:lem:ChoiMatrixQuantumChannelPermutation]{Lemma~\ref*{chap4:lem:ChoiMatrixQuantumChannelPermutation}}, for any $1 \leq a,b \leq N$, the matrix 
    \begin{equation*}
        (N-1)! \sum_{\sigma \in \Sigma_{a,b}} \psi(\sigma)^\Tpartial,
    \end{equation*}
    is the Choi matrix of the linear map
    \begin{equation*}
        X \mapsto \sum_{\substack{\mu,\nu \in \Sigma_{a,b} \\ \mu(0) = a - 1 \\ \nu^{\shortminus 1}(0) = b - 1}} \psi(\mu) \cdot \1( X \otimes I^{\otimes (N - 1)}_d \1) \cdot \psi(\nu).
    \end{equation*}
    This implies that the Choi matrix $C_{\Phi^a_{\text{opt}}}$ of $\Phi^a_{\text{opt}}$ is,
    \begin{equation*}
        C_{\Phi^a_{\text{opt}}} = \frac{d (N + d - 1)}{N! \cdot \Tr \1[ P^+_{\mathfrak{S}_N} \1]} \sum_{\substack{1 \leq a,b \leq N \\ \sigma \in \Sigma_{a,b}}} b_a b_b \cdot \psi(\sigma)^\Tpartial.
    \end{equation*}
    Finaly, using \hyperref[chap4:lem:partialTracePermutation]{Lemma~\ref*{chap4:lem:partialTracePermutation}}, the equation $(d - 1) \sum^N_{i = 1} b^2_i + \1( \sum^N_{i = 1} b_i \1)^2 = 1$ satisfied by the positive real numbers $b_i$, and the relation $\Tr \1[ P^+_{\mathfrak{S}_{N - 1}} \1] = \frac{N}{N + d - 1} \Tr \1[ P^+_{\mathfrak{S}_{N}} \1]$, the trace condition on the Choi matrix $C_{\Phi^a_{\text{opt}}}$ holds, i.e.,
    \begin{equation*}
        \Tr_{[N+1] \setminus \{0\}} \1[ C_{\Phi^a_{\text{opt}}} \1] = I_d.
    \end{equation*}
    Hence $\Phi^a_{\text{opt}}$ is trace preserving, and thus a quantum channel.

    It remains to prove that the optimal quantum cloning channel $\Phi^a_{\text{opt}}$ saturates the upper bound,
    \begin{equation*}
        \sum^N_{i = 1} a_i \cdot \operatornamewithlimits{\mathbb{E}}_{\rho \in \Gamma} \2[ F \1( {(\Phi^a_{\text{opt}})}_i(\rho), \rho \1) \2] \leq \frac{\lambda_{\text{max}}(R_a)}{d + 1}.
    \end{equation*}
    For all marginals ${(\Phi^a_{\text{opt}})}_i$, the average fidelity $\operatornamewithlimits{\mathbb{E}}_{\rho \in \Gamma} \2[ F \1( {(\Phi^a_{\text{opt}})}_i(\rho), \rho \1) \2]$ becomes,
    \begin{align*}
        \operatornamewithlimits{\mathbb{E}}_{\rho \in \Gamma} \2[ F \1( {(\Phi^a_{\text{opt}})}_i(\rho), \rho \1) \2] &= \operatornamewithlimits{\mathbb{E}}_{\rho \in \Gamma} \3[ \Tr \2[ C_{\Phi^a_{\text{opt}}} \1( \rho^\T_{(0)} \otimes \rho_{(i)} \otimes I^{\otimes (N-1)}_d \1) \2] \3] \\
        &= \frac{1}{d(d+1)} \Tr \3[ C_{\Phi^a_{\text{opt}}} \2( \1( d^2 \cdot \mathrm{I}_{(0,i)} + d \cdot \omega_{(0,i)} \1) \otimes I^{\otimes (N-1)}_d \2) \3] \\
        &= \frac{\Tr\2[ C_{\Phi^a_{\text{opt}}} \2] + \Tr \2[ C_{\Phi^a_{\text{opt}}} \1( d \cdot \omega_{(0,i)} \otimes I^{\otimes (N-1)}_d \1) \2]}{d(d+1)} \\
        &= \frac{d + \Tr \2[ C_{\Phi^a_{\text{opt}}} \1( d \cdot \omega_{(0,i)} \otimes I^{\otimes (N-1)}_d \1) \2]}{d(d+1)}
    \end{align*}
    Using the form of the Choi matrix $C_{\Phi^a_{\text{opt}}}$,
    \begin{align*}
        C_{\Phi^a_{\text{opt}}} \1( d \cdot \omega_{(0,i)} \otimes I^{\otimes (N-1)}_d \1) &= \frac{d (N + d - 1)}{N! \cdot \Tr \1[ P^+_{\mathfrak{S}_N} \1]} \sum_{\substack{1 \leq a,b \leq N \\ \sigma \in \Sigma_{a,b}}} b_a b_b \cdot \2( \psi(\sigma)^\Tpartial \cdot \psi \1( (0 \: i) \1)^\Tpartial \2) \\
        &= \frac{d (N + d - 1)}{N! \cdot \Tr \1[ P^+_{\mathfrak{S}_N} \1]} \sum_{\substack{1 \leq a \leq N \\ \sigma \in \Sigma_{a,i}}} b_a \3( (d - 1) b_i + \sum_{1 \leq b \leq N} b_b \3) \psi(\sigma)^\Tpartial.
    \end{align*}
    Such that taking the trace yields, using \hyperref[chap4:lem:partialTracePermutation]{Lemma~\ref*{chap4:lem:partialTracePermutation}},
    \begin{equation*}
        \Tr \2[ C_{\Phi^a_{\text{opt}}} \1( d \cdot \omega_{(0,i)} \otimes I^{\otimes (N-1)}_d \1) \2] = d {\3( (d - 1) b_i + \sum^N_{j = 1} b_j \3)}^2,
    \end{equation*}
    and finally the average fidelity becomes,
    \begin{equation*}
        \operatornamewithlimits{\mathbb{E}}_{\rho \in \Gamma} \2[ F \1( {(\Phi^a_{\text{opt}})}_i(\rho), \rho \1) \2] = \frac{1 + {\1( (d - 1) b_i + \sum^N_{j = 1} b_j \1)}^2}{d+1}.
    \end{equation*}
    But since
    \begin{equation*}
        \lambda_{\text{max}}(S_a) = \bra{\chi} S_a \ket{\chi} = \sum^N_{i = 1} a_i {\3( (d - 1) b_i + \sum^N_{j = 1} b_j \3)}^2,
    \end{equation*}
    then the equality $\sum^N_{i = 1} a_i \cdot \operatornamewithlimits{\mathbb{E}}_{\rho \in \Gamma} \2[ F \1( {(\Phi^a_{\text{opt}})}_i(\rho), \rho \1) \2] = \frac{\lambda_{\text{max}}(R_a)}{d + 1}$ holds.
\end{proof}

\subsection{Optimal quantum cloning channel necessary condition}

Let $a$ be a direction vector and $\Phi^a_{\text{opt}}$ be the optimal quantum cloning channel for the universal $1 \to N$ quantum cloning problem in the direction $a$. For all marginals ${(\Phi^a_{\text{opt}})}_i$, the average fidelity $f_i \coloneqq \operatornamewithlimits{\mathbb{E}}_{\rho \in \Gamma} \2[ F \1( {(\Phi^a_{\text{opt}})}_i(\rho), \rho \1) \2]$ is
\begin{equation*}
    f_i = \frac{1 + {\1( (d - 1) b_i + \sum^N_{j = 1} b_j \1)}^2}{d+1}.
\end{equation*}
Summing over all $i$ yields,
\begin{equation*}
    \sum^N_{i = 1} \sqrt{(d + 1) f_i  - 1} = (N + d - 1) \sum^N_{i = 1} b_i.
\end{equation*}
Using the equation $(d - 1) \sum^N_{i = 1} b^2_i + \1( \sum^N_{i = 1} b_i \1)^2 = 1$ satisfied by the positive real numbers $b_i$,
\begin{align*}
    \sum^N_{i = 1} \1( (d + 1) f_i - 1 \1) &= \sum^N_{i = 1} {\3( (d - 1) b_i + \sum^N_{j = 1} b_j \3)}^2 \\
    &= (d - 1) + (N + d - 1) {\3( \sum^N_{i = 1} b_i \3)}^2 \\
    &= (d - 1) + \frac{{\2( \sum^N_{i = 1} \sqrt{(d + 1) f_i  - 1} \2)}^2}{N + d - 1}.
\end{align*}
Finally from the relation $f_i = p_i + \frac{1 - p_i}{d}$, the necessary condition for the $p_i$'s of the optimal quantum cloning channels becomes,
\begin{equation} \label{chap4:eq:necessaryConditionOptimalQuantumCloningChannel}
    N + (d^2 - 1) \sum^N_{i = 1} p_i = d (d - 1) + \frac{{\2( \sum^N_{i=1} \sqrt{(d^2 - 1) p_i + 1} \2)}^2}{N + d - 1}.
\end{equation}

\begin{figure*}[h]
    \centering
    % [inline block 22: 1 envs, 20611 chars -> data_tex | \begin{tikzpicture}         \begin{axis}[title = {Necessary condition optimal quantum cloning channel},...]

    \caption
        {The necessary condition \hyperref[chap4:eq:necessaryConditionOptimalQuantumCloningChannel]{Eq.~(\ref*{chap4:eq:necessaryConditionOptimalQuantumCloningChannel})} of the optimal quantum cloning channels for the universal $1 \to 2$ quantum cloning problem.}
\end{figure*}

\subsection{The \texorpdfstring{$\mathcal{Q}$}{Q}-norm}

Let $x \in \mathbb{R}^N$, and define its $\emph{\mathcal{Q}}$-norm by
\begin{equation*}
    \norm{x}_{\mathcal{Q}} \coloneqq \frac{d \lambda_{\text{max}}(S_x) - \norm{x}_1}{d^2 - 1},
\end{equation*}
where $\norm{x}_1 = \sum_{i=1}^N |x_i|$ is the $\ell_1$ norm of the vector $x$ and the matrix $S_x \in \mathcal M_{d^{N+1}}$ is given by, 
\begin{equation*}
    S_x \coloneqq \sum^N_{i=1} |x_i| \cdot \1( d \cdot \omega_{(0,i)} \1) \otimes I^{\otimes (N-1)}_d.
\end{equation*}
Note that the $\mathcal{Q}$-norm depends implicitly on a parameter $d$, and for non-negative $x \in \mathbb{R}^N$, the matrix $S_x$ coincides with that of \hyperref[chap4:lem:largestEigenvector]{Lemma~\ref*{chap4:lem:largestEigenvector}}.

\begin{remark*}
   Note that on its own $ \lambda_{\text{max}}(S_x)$ defines a norm, but in general, except for trivial examples, subtracting two norms don't provide a new norm.
\end{remark*}

\begin{proposition}
	For all $N \geq 1$, the quantity $\norm{\cdot}_{\mathcal{Q}}$ is a norm on $\mathbb{R}^N$.
\end{proposition}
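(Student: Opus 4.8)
\textit{The plan is} to recognize $(d^2-1)\norm{\cdot}_{\mathcal Q}$ as a \emph{symmetric, monotone, sublinear} function of $|x|$ and to read off the norm axioms from these three properties. Since each summand $A_i\coloneqq\1(d\cdot\omega_{(0,i)}\1)\otimes I^{\otimes(N-1)}_d$ is positive semidefinite, so is $S_x=\sum_i|x_i|A_i$, and $\lambda_{\max}(S_x)=\max_{\norm{\chi}=1}\sum_i|x_i|\beta_i(\chi)$ with $\beta_i(\chi)\coloneqq\bra{\chi}A_i\ket{\chi}\geq0$. Absorbing the constant $\norm{x}_1=\sum_i|x_i|$ into the maximum gives
\[
(d^2-1)\norm{x}_{\mathcal Q}=d\lambda_{\max}(S_x)-\norm{x}_1=h(|x|),\qquad h(z)\coloneqq\max_{\norm{\chi}=1}\sum_{i=1}^N z_i\1(d\beta_i(\chi)-1\1).
\]
As a supremum of linear functionals, $h$ is sublinear on $\mathbb{R}^N$ (convex and positively homogeneous). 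Because $S_x$ depends on $x$ only through $|x|$, two axioms are immediate: $\norm{x}_{\mathcal Q}$ is unchanged under sign flips of the coordinates of $x$, and $\norm{\lambda x}_{\mathcal Q}=\tfrac1{d^2-1}h(|\lambda|\,|x|)=|\lambda|\norm{x}_{\mathcal Q}$ by positive homogeneity of $h$.

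For positive definiteness I would test the variational formula against the uniform eigenvector of \hyperref[chap4:lem:largestEigenvector]{Lemma~\ref*{chap4:lem:largestEigenvector}}: taking all $b_i$ equal, the normalization $(d-1)\sum_i b_i^2+\1(\sum_i b_i\1)^2=1$ forces $b_i^2=\tfrac1{N(N+d-1)}$, whence $c_i^2=\tfrac{N+d-1}{N}$; the corresponding normalized $\chi$ then has $\bra{\chi}S_x\ket{\chi}=\sum_i|x_i|c_i^2=\tfrac{N+d-1}{N}\norm{x}_1$, so that
\[
(d^2-1)\norm{x}_{\mathcal Q}=d\lambda_{\max}(S_x)-\norm{x}_1\geq\frac{(d-1)(N+d)}{N}\norm{x}_1,
\]
which is $>0$ for $x\neq0$ and $=0$ for $x=0$; this also yields non-negativity.

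The main obstacle is the triangle inequality, where the warning of the preceding remark bites: subadditivity of $\lambda_{\max}\circ S$ and of $\norm{\cdot}_1$ push in opposite directions because of the minus sign, so they cannot simply be combined. I would instead deduce it from the \emph{monotonicity of $h$ on the positive orthant}. Granting that, and using $|x+y|\leq|x|+|y|$ coordinatewise together with the sublinearity of $h$,
\[
(d^2-1)\norm{x+y}_{\mathcal Q}=h(|x+y|)\leq h(|x|+|y|)\leq h(|x|)+h(|y|)=(d^2-1)\1(\norm{x}_{\mathcal Q}+\norm{y}_{\mathcal Q}\1).
\]
Monotonicity itself reduces to one fact: for every $z\geq0$ the maximum defining $h(z)$ is attained at some $\chi^{*}$ with $w_i^{*}\coloneqq d\beta_i(\chi^{*})-1\geq0$ for all $i$; for then $0\leq z\leq z'$ gives $h(z')\geq\sum_i z_i'w_i^{*}=h(z)+\sum_i(z_i'-z_i)w_i^{*}\geq h(z)$.

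The crux is therefore to show $\beta_i(\chi^{*})\geq\tfrac1d$ at a top eigenvector of $S_z$. By \hyperref[chap4:lem:largestEigenvector]{Lemma~\ref*{chap4:lem:largestEigenvector}} (applied to the direction vector $z/\norm{z}_1$, whose eigenvectors coincide with those of $S_z$) I may take $\chi^{*}=\sum_i b_i\1(\sqrt{d}\ket{\Omega}_{(0,i)}\1)\otimes\ket{v}$ with all $b_i\geq0$, so that $\beta_i(\chi^{*})=c_i^2$ with $c_i=(d-1)b_i+s$ and $s\coloneqq\sum_j b_j$. Nonnegativity of the $b_i$ gives $\sum_i b_i^2\leq s^2$, so the normalization $\tfrac{1-s^2}{d-1}=\sum_i b_i^2\leq s^2$ forces $s^2\geq\tfrac1d$; since $c_i\geq s$, this yields $\beta_i(\chi^{*})=c_i^2\geq s^2\geq\tfrac1d$, i.e.\ $w^{*}\geq0$, as required. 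I expect the delicate points to be the nonnegativity of the coordinates $b_i$ furnished by the lemma and the limiting argument needed when some $z_i$ vanish; everything else is routine.
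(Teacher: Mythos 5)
Your proof is correct, and it is worth recording how it relates to the paper's. The technical heart is the same: monotonicity of the $\mathcal{Q}$-functional on the positive orthant, obtained from Lemma~\ref{chap4:lem:largestEigenvector} by writing a top eigenvector as $\chi^* = \sum_i b_i (\sqrt{d}\ket{\Omega}_{(0,i)})\otimes\ket{v}$ with $b_i \geq 0$ and showing
$\bra{\chi^*}(d\cdot\omega_{(0,i)})\otimes I^{\otimes(N-1)}_d\ket{\chi^*} = ((d-1)b_i+\sum_j b_j)^2 \geq \tfrac{1}{d}$
via $1=(d-1)\sum_i b_i^2+(\sum_i b_i)^2\leq d(\sum_i b_i)^2$; this is the paper's argument verbatim. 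What you do differently is the scaffolding around this crux, and your version is cleaner in two respects. First, for the triangle inequality the paper reduces to midpoints of unit-ball elements, normalizes so that $x+y\in\mathbb{R}^N_+$, intersects the segments $[x,\tfrac{x+y}{2}]$ and $[y,\tfrac{x+y}{2}]$ with $\partial\mathbb{R}^N_+$, and then needs the elementwise-scaling claim $x'=t_x\cdot x$ with $t_x\in[0,1]^N$ --- a step that is actually delicate (it can fail when a coordinate of $x$ vanishes, or when the segment moves a coordinate of $x$ upward). Your representation $(d^2-1)\norm{x}_{\mathcal Q}=h(|x|)$, with $h$ a supremum of linear functionals, makes sublinearity automatic, and the chain $h(|x+y|)\leq h(|x|+|y|)\leq h(|x|)+h(|y|)$ has no such edge cases; only monotonicity is left to prove, which is exactly the shared crux. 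Second, for positive definiteness the paper argues by rigidity ($\norm{x}_{\mathcal Q}=0$ forces $S_x$ to be a multiple of the identity, which the test vector $\ket{ij\cdots j}$ rules out), whereas you plug the uniform-$b_i$ unit vector into the variational principle and obtain the stronger quantitative statement $(d^2-1)\norm{x}_{\mathcal Q}\geq\frac{(d-1)(N+d)}{N}\norm{x}_1$, i.e.\ equivalence with the $\ell_1$ norm with explicit constants. The two caveats you flag are benign: the lemma's $b_i$ are nonnegative by its statement, which is all your inequality $c_i\geq\sum_j b_j$ uses, and vectors with vanishing coordinates are covered either directly (the paper itself applies the lemma to arbitrary nonnegative vectors, including direction vectors with zero entries) or by your limiting argument combined with continuity of the convex function $h$.
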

\begin{proof}
    The absolute homogeneity of $\norm{\cdot}_{\mathcal{Q}}$ comes from the fact that both $\lambda_{\text{max}}(S_x)$ and $\norm{x}_1$ are also absolutely homogeneous.

    Let $x \in \mathbb{R}^N$ such that $\norm{x}_{\mathcal{Q}} = 0$, then $d \cdot \lambda_{\text{max}}(S_x) = \norm{x}_1 = \Tr [S_x]$. In particular all eigenvalues of $S_x$ are equal to $\lambda_{\text{max}}(S_x)$, and hence
    \begin{equation*}
        S_x = \lambda_{\text{max}}(S_x) \cdot I^{\otimes (N+1)}_d,
    \end{equation*}
    with $\lambda_{\text{max}}(S_x) \geq 0$. Let distinct $i,j = \{0, \ldots, d-1\}$ and $\ket{\psi}$ defined by,
    \begin{equation*}
        \ket{\psi} = \ket{i \smash{\underbrace{j \cdots j}_{N \text{ times}}}}.
    \end{equation*}
    Then
    \begin{align*}
        \bra{\psi\mathclap{\phantom{\sum}}} \sum^N_{i=1} |x_i| \cdot \1( d \cdot \omega_{(0,i)} \1) \otimes I^{\otimes (N-1)}_d \ket{\mathclap{\phantom{\sum}}\psi} &= 0 \\
        \bra{\psi} \lambda_{\text{max}}(S_x) \cdot I^{\otimes (N+1)}_d \ket{\psi} &= \lambda_{\text{max}}(S_x).
    \end{align*}
    Finally $\lambda_{\text{max}}(S_x) = 0$ implies $x = 0$, and $\norm{\cdot}_{\mathcal{Q}}$ is positive definite.

    For the subadditivity of $\norm{\cdot}_{\mathcal{Q}}$, it is sufficient to prove that for any $x,y \in \mathbb{R}^N$ such that $\norm{x}_{\mathcal{Q}} \leq 1$ and $\norm{y}_{\mathcal{Q}} \leq 1$, then $\norm{\frac{x + y}{2}}_{\mathcal{Q}} \leq 1$. It is possible to assert, without loss of generality, that by multiplying the vectors by a scalar, the vector $x + y$ is in $\mathbb{R}^N_+$. Let $x^{\prime} \in \mathbb{R}^N_+$ be defined by
    \begin{equation*}
        x^{\prime} \coloneqq
        \begin{cases}
            \1[ x,\frac{x+y}{2} \1] \cap \partial \mathbb{R}^N_+ &\text{ if } \1[ x,\frac{x+y}{2} \1] \cap \partial \mathbb{R}^N_+ \neq \emptyset \\
            x &\text{ otherwise}
        \end{cases}
    \end{equation*}
    and similarly for $y^{\prime} \in \mathbb{R}^N_+$, where $\partial \mathbb{R}^N_+$ denote the boundary of $\mathbb{R}^N_+$. Then $\frac{x + y}{2} \in [x^{\prime} , y^{\prime}]$, and there exists $\lambda \in[0,1]$ such that $\frac{x+y}{2} = \lambda \cdot x^{\prime} + (1 - \lambda) \cdot y^{\prime}$. Then using,
    \begin{align*}
        &\lambda_{\text{max}} \3( \sum^{N}_{i = 1} \1( x^{\prime}_i + y^{\prime}_i \1) \cdot \1( d \cdot \omega_{(0,i)} \1) \otimes I^{\otimes (N-1)}_d \3) \\
        &\leq \lambda_{\text{max}} \3( \sum^{N}_{i = 1} x^{\prime}_i \cdot \1( d \cdot \omega_{(0,i)} \1) \otimes I^{\otimes (N-1)}_d \3) + \lambda_{\text{max}} \3( \sum^{N}_{i = 1} y^{\prime}_i \cdot \1( d \cdot \omega_{(0,i)} \1) \otimes I^{\otimes (N-1)}_d \3),
    \end{align*}
    and the absolute homogeneity of $\norm{\cdot}_{\mathcal{Q}}$,
    \begin{equation*}
        \norm[\Big]{\frac{x+y}{2}}_{\mathcal{Q}} \leq \lambda \cdot \norm{x^{\prime}}_{\mathcal{Q}} + (1 - \lambda) \cdot \norm{y^\prime}_{\mathcal{Q}} \leq \max \1(  \norm{x^{\prime}}_{\mathcal{Q}},  \norm{y^{\prime}}_{\mathcal{Q}} \1).
    \end{equation*}
    Now there exists $t_x \in [0,1]^N$ and $t_y \in [0,1]^N$ such that element-wise product equations $x^{\prime} = t_x \cdot x$ and $y^{\prime} = t_y \cdot y$ hold. It remains to prove that, for all $t \in [0,1]^N$ and $x \in \mathbb{R}^N_+$, the inequality $\norm{t \cdot x}_{\mathcal{Q}} \leq \norm{x}_{\mathcal{Q}}$ hold. With this both $\norm{x^{\prime}}_{\mathcal{Q}} \leq 1$ and $\norm{y^{\prime}}_{\mathcal{Q}} \leq 1$, such that $\norm{\frac{x+y}{2}}_{\mathcal{Q}} \leq 1$.

    Let $x,y \in \mathbb{R}^N_+$ such that $x_i \leq y_i$ for all $i \in \{1, \ldots, N\}$, then $\norm{x}_{\mathcal{Q}} \leq \norm{y}_{\mathcal{Q}}$. Indeed, let $\chi = \sum^N_{i=1} b_i \cdot \1( \sqrt{d} \cdot \ket{\Omega}_{(0,i)} \1) \otimes \ket{v}$ be normalized largest eigenvectors of $S_x$, from \hyperref[chap4:lem:largestEigenvector]{Lemma~\ref*{chap4:lem:largestEigenvector}}, then using that $\ket{v} \in \vee^{(N-1)} \mathbb{C}^d$ and than all the $b_i$ are positive real numbers,
    \begin{align*}
        &\bra{\chi} \1( d \cdot \omega_{(0,i)} \1) \otimes I^{\otimes (N-1)}_d \ket{\chi} \\
        &= \sum^N_{k,j = 1} (d^2 b_k b_l) \cdot \1( \bra{\Omega}_{(0,i)} \otimes \bra{v} \1) \1( \omega_{(0,i)} \otimes I^{\otimes (N-1)}_d \1) \1( \ket{\Omega}_{(0,i)} \otimes \ket{v} \1) \\
        &= {\3( (d - 1) b_i + \sum^N_{j=1} b_k \3)}^2 \\
        &\geq {\3( \sum^N_{j=1} b_k \3)}^2.
    \end{align*}
    Now, using the equation $(d - 1) \sum^N_{i = 1} b^2_i + \1( \sum^N_{i = 1} b_i \1)^2 = 1$ satisfied by the positive real numbers $b_i$,
    \begin{align*}
        1 &= \braket{\chi}{\chi} \\
        &= (d - 1) \sum^N_{i = 1} b^2_i + \3( \sum^N_{i = 1} b_i \3)^2 \\
        &\leq d \3( \sum^N_{i = 1} b_i \3)^2.
    \end{align*}
    this implies that $\bra{\chi} \1( d \cdot \omega_{(0,i)} \1) \otimes I^{\otimes (N-1)}_d \ket{\chi} \geq \dfrac{1}{d}$. This way since $x_i \leq y_i$ for all $i \in \{1, \ldots, N\}$, then $\lambda_{\text{max}}(S_y) \geq \bra{\chi} S_y \ket{\chi}$ and,
    \begin{align*}
        \lambda_{\text{max}}(S_y) - \lambda_{\text{max}}(S_x) & \bra{\chi} S_y - S_x \ket{\chi} \\
        &=\sum_{i=1}^N (y_i - x_i) \cdot \bra{\chi} \1( d \cdot \omega_{(0,i)} \1) \otimes I^{\otimes (N-1)}_d \ket{\chi} \\
        &\geq \sum_{i=1}^N (y_i - x_i) \frac{1}{d}.
    \end{align*}
    Then $ \norm{y}_{\mathcal{Q}} = \lambda_{\text{max}}(S_y) - \dfrac{1}{d} \norm{y}_1 \geq \lambda_{\text{max}}(S_x) - \dfrac{1}{d} \norm{x}_1 = \norm{x}_{\mathcal{Q}}$, which conclude the proof since $t \cdot y = x$ for some $t \in [0,1]^N$.
\end{proof}

With the help of the $\mathcal{Q}$-norm, the upper bound from \hyperref[chap4:thm:universalQuantumCloningProblemUpperBound]{Theorem~\ref*{chap4:thm:universalQuantumCloningProblemUpperBound}} can be reformulated as: for any direction vector $a \in [0, 1]^N$ the universal quantum cloning problem is upper bounded by
\begin{equation*}
    \sup_{\Phi~\textsc{cptp}} \sum^N_{i = 1} a_i \cdot \operatornamewithlimits{\mathbb{E}}_{\rho \in \Gamma} \2[ F \1( \Phi_i(\rho), \rho \1) \2] \leq \frac{1}{d} \norm{a}_1 + \1( 1 - \frac{1}{d} \1) \norm{a}_{\mathcal{Q}}.
\end{equation*}

Let the \emph{dual} $\mathcal{Q}$-norm, be defined on $y \in \mathbb{R}^N$ by,
\begin{equation*}
    \norm{y}^*_{\mathcal{Q}} \coloneqq \sup_{\substack{x \in \mathbb{R}^N \\ x \neq 0}} \frac{\scalar{y}{x}}{\norm{x}_\mathcal{Q}}.
\end{equation*}

\begin{figure}[h]
    \centering
    % [inline block 23: 1 envs, 15473 chars -> data_tex | \begin{tikzpicture}         \begin{axis}[title = {$\mathcal{Q}$-norm \& dual $\mathcal{Q}$-norm $1$-dimensional unit sph...]

    \caption
        [The $1$-dimensional unit spheres of the $\mathcal{Q}$-norm in red and the dual $\mathcal{Q}$-norm in green.]
        {The $1$-dimensional unit spheres of the $\mathcal{Q}$-norm \tikz{\draw[fill = red!60!white] (0,0) rectangle (1.5ex,1.5ex);} and the dual $\mathcal{Q}$-norm \tikz{\draw[fill = green!60!white] (0,0) rectangle (1.5ex,1.5ex);}.}
\end{figure}

\subsection{Achievable fidelity region}

In this Section, the characterization of the achievable fidelity region for the universal $1 \to N$ quantum cloning problem will be given geometrically, as the non-negative part of the unit ball of the dual $\mathcal{Q}$-norm.

Let $\emph{\mathcal{R}_{N,d}}$ be the achievable fidelity region for the universal $1 \to N$ quantum cloning problem, that is,
\begin{equation*}
    \mathcal{R}_{N,d} \coloneqq \set{p \in [0,1]^N}{\exists \Phi: \mathcal{M}_d \xrightarrow{\textsc{cptp}} {\1( \mathcal{M}_d \1)}^{\otimes N} \text{ s.t. } \Phi_i(\rho) = p_i \cdot \rho + (1 - p_i) \frac{I_d}{d}}.
\end{equation*}
\begin{theorem} \label{chap4:thm:achievableFidelityRegion}
    $\mathcal{R}_{N,d}$ is the non-negative part of the unit ball of the dual $\mathcal{Q}$-norm, i.e. $p \in [0,1]^N$ is in $\mathcal{R}_{N,d}$ if and only if $\norm{p}^*_{\mathcal{Q}} \leq 1$.
\end{theorem}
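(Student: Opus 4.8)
The plan is to recast both the upper bound of \hyperref[chap4:thm:universalQuantumCloningProblemUpperBound]{Theorem~\ref*{chap4:thm:universalQuantumCloningProblemUpperBound}} and the optimality of $\Phi^a_{\text{opt}}$ from \hyperref[chap4:thm:optimalAsymmetricQuantumCloningChannel]{Theorem~\ref*{chap4:thm:optimalAsymmetricQuantumCloningChannel}} as a single statement about the support function of $\mathcal{R}_{N,d}$, and then to identify $\mathcal{R}_{N,d}$ with the non-negative part of the dual unit ball via convex geometry. Using the affine relation $f_i = \frac{(d-1) p_i + 1}{d}$ between the average fidelity and the depolarizing parameter, one has $\sum_i a_i f_i = \frac{d-1}{d} \langle a, p \rangle + \frac{1}{d} \norm{a}_1$ for any $a \geq 0$. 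Substituting this into the reformulated bound $\sum_i a_i f_i \leq \frac{1}{d} \norm{a}_1 + (1 - \frac{1}{d}) \norm{a}_{\mathcal{Q}}$ and cancelling the common $\frac{1}{d}\norm{a}_1$ term yields the clean inequality $\langle a, p \rangle \leq \norm{a}_{\mathcal{Q}}$ for every $p \in \mathcal{R}_{N,d}$ and every $a \geq 0$. Since $\Phi^a_{\text{opt}}$ saturates the bound, this is an equality at the optimum, so the support function of $\mathcal{R}_{N,d}$ in non-negative directions is exactly $h_{\mathcal{R}}(a) = \max_{p \in \mathcal{R}_{N,d}} \langle a, p \rangle = \norm{a}_{\mathcal{Q}}$.

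For the necessity direction ($\Rightarrow$), I would exploit that $\norm{x}_{\mathcal{Q}}$ depends on $x$ only through its componentwise absolute value $|x|$ (both $S_x$ and $\norm{x}_1$ do), together with the non-negativity of every $p \in \mathcal{R}_{N,d}$. Replacing any $x$ by $|x|$ leaves $\norm{x}_{\mathcal{Q}}$ unchanged while not decreasing $\langle p, x \rangle$, so the supremum defining $\norm{p}^*_{\mathcal{Q}}$ is attained on the non-negative orthant. Hence $\norm{p}^*_{\mathcal{Q}} \leq 1$ is equivalent to $\langle a, p \rangle \leq \norm{a}_{\mathcal{Q}}$ for all $a \geq 0$, which is precisely the inequality supplied by the upper bound; this gives $\mathcal{R}_{N,d} \subseteq \{ p : \norm{p}^*_{\mathcal{Q}} \leq 1 \}$.

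For the sufficiency direction ($\Leftarrow$) I would first record three structural properties of $\mathcal{R}_{N,d}$: it is compact and convex (the CPTP maps form a compact convex set and $\Phi \mapsto p$ is affine and continuous), it contains the origin (the completely depolarizing channel), and it is \emph{downward closed} — if $p \in \mathcal{R}_{N,d}$ and $0 \leq p' \leq p$ then $p' \in \mathcal{R}_{N,d}$, witnessed by post-composing the channel for $p$ with the product of local depolarizing channels $D_{t_i}$ with $t_i = p'_i / p_i \in [0,1]$, which is again CPTP and scales the $i$-th marginal parameter by $t_i$. The geometric core is then the lemma that a compact convex downward-closed set $K \subseteq \mathbb{R}^N_{\geq 0}$ containing $0$ is recovered from its support function on non-negative directions, i.e. $K = \{ p \geq 0 : \langle a, p \rangle \leq h_K(a) \text{ for all } a \geq 0 \}$. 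I would prove this by upgrading any separating functional to a non-negative one: given $p \notin K$ separated by $a$, set $\bar a \geq 0$ by zeroing the negative components of $a$; for $q \in K$ the truncated point $q'$ (with $q_i$ replaced by $0$ wherever $a_i < 0$) lies in $K$ by downward closure and satisfies $\langle \bar a, q \rangle = \langle a, q' \rangle < \langle a, p \rangle \leq \langle \bar a, p \rangle$, so $\bar a$ still separates.

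Combining these, both $\mathcal{R}_{N,d}$ and the non-negative part $B^*$ of the dual unit ball are compact convex downward-closed sets containing $0$ (downward closure of $B^*$ being immediate from the first paragraph's reduction), and their support functions agree on $a \geq 0$, both equal to $\norm{a}_{\mathcal{Q}}$; the lemma then forces $\mathcal{R}_{N,d} = B^*$. The main obstacle I anticipate is this sufficiency direction: the channels of \hyperref[chap4:thm:optimalAsymmetricQuantumCloningChannel]{Theorem~\ref*{chap4:thm:optimalAsymmetricQuantumCloningChannel}} realize only the boundary points of $\mathcal{R}_{N,d}$ exposed by a non-negative supporting direction, so the whole argument rests on (i) verifying downward closure by local depolarizing post-processing and (ii) the separation lemma that lets a family of non-negative supporting hyperplanes reconstruct the entire region; the legitimacy of restricting the dual-norm supremum to non-negative $x$, which hinges on $\norm{\cdot}_{\mathcal{Q}}$ depending only on $|x|$, is the other point requiring care.
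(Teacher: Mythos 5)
Your argument is correct, and it rests on the same two external inputs as the paper's proof: the upper bound of \hyperref[chap4:thm:universalQuantumCloningProblemUpperBound]{Theorem~\ref*{chap4:thm:universalQuantumCloningProblemUpperBound}}, rewritten via the $\mathcal{Q}$-norm, gives $\scalar{p}{a} \leq \norm{a}_{\mathcal{Q}}$ for all $p \in \mathcal{R}_{N,d}$ and $a \geq 0$, and the channels of \hyperref[chap4:thm:optimalAsymmetricQuantumCloningChannel]{Theorem~\ref*{chap4:thm:optimalAsymmetricQuantumCloningChannel}} give saturation. Where you genuinely diverge is the convex-geometry step that turns these facts into set equality. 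The paper manufactures a gauge-type norm $\norm{\cdot}_{\mathcal{R}}$ from $\mathcal{R}_{N,d}$ and invokes an abstract two-norm duality criterion (a H\"older-type inequality everywhere plus attainment for each dual vector forces $\norm{\cdot}_{\mathcal{R}} = \norm{\cdot}^*_{\mathcal{Q}}$); for this to apply one must already know $\norm{\cdot}_{\mathcal{R}}$ is a norm whose unit ball meets the non-negative orthant exactly in $\mathcal{R}_{N,d}$, which the paper asserts rather than proves. You instead reconstruct $\mathcal{R}_{N,d}$ from its support function on non-negative directions, via a separation lemma for compact, convex, downward-closed subsets of $\mathbb{R}^N_{\geq 0}$, and you supply exactly the structural facts this needs: convexity, compactness, and above all downward closure, proved constructively by post-composition with local depolarizing channels. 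Your lemma genuinely requires that hypothesis (the segment from $0$ to $(1,\ldots,1)$ is not recovered from its non-negative support function), so identifying and verifying it is real content, not pedantry. Your restriction of the dual-norm supremum to $x \geq 0$, using that $\norm{x}_{\mathcal{Q}}$ depends only on $|x|$, is the same device the paper uses for arbitrary $a \in \mathbb{R}^N$. The cost of your route is length; its benefit is self-containedness — indeed the paper's abstract lemma, as written, establishes $\norm{x}_{\mathcal{R}} = \norm{x}^*_{\mathcal{Q}}$ only at the points furnished by the attainment condition, and the upgrade to all points is precisely the support-function argument you make explicit.
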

\begin{proof}
    The following more general result will be first proved. Let $X$ be a finite dimensional real vector space and let $X^*$ denote its dual space. Let $\norm{\cdot}_1$ be a norm on $X$, let $\norm{\cdot}_2$ be a norm on $X^*$, and define the dual norms as follows,
    \begin{align*}
        \norm{y}^*_1 &\coloneqq \sup_{\substack{x \in X \\ x \neq 0}} \frac{\scalar{y}{x}}{\norm{x}_1}, & &\forall y \in X^*, \\
        \norm{y}^*_2 &\coloneqq \sup_{\substack{x \in X^* \\ x \neq 0}} \frac{\scalar{y}{x}}{\norm{x}_2}, & &\forall y \in X.
    \end{align*}
    Provided that both conditions,
    \begin{align}
        \forall x \in X, \forall y \in X^*, &\quad \scalar{x}{y} \leq \norm{x}_1 \norm{y}_2 \label{chap4:eq:pairNormConditions1} \\
        \forall y \in X^*, \exists x \in X, &\quad \scalar{x}{y} = \norm{x}_1 \norm{y}_2, \label{chap4:eq:pairNormConditions2}
    \end{align}
    are satisfied, then $\forall x \in X, \norm{x}_1 = \norm{x}^*_2$. In particular the unit balls of $\norm{\cdot}_1$ and $\norm{\cdot}^*_2$ are equal. Indeed, from \hyperref[chap4:eq:pairNormConditions1]{Eq.~(\ref*{chap4:eq:pairNormConditions1})}, then $\forall x \in X$,
    \begin{equation*}
        \norm{x}_1 \geq \sup_{y \in X^*} \frac{\scalar{x}{y}}{\norm{y}_2} = \norm{x}^*_2,
    \end{equation*}
    and from \hyperref[chap4:eq:pairNormConditions2]{Eq.~(\ref*{chap4:eq:pairNormConditions2})}, then $\forall y \in X^*, \exists x \in X$,
    \begin{equation*}
        \norm{x}_1 = \frac{\scalar{x}{y}}{\norm{y}_2} \leq \sup_{y \in X^*} \frac{\scalar{x}{y}}{\norm{y}_2} \leq \norm{x}^*_2,
    \end{equation*}
    which conclude the proof of this first result.

    Now to prove the theorem, define the norm $\norm{\cdot}_{\mathcal{R}}$ on $x \in \mathbb{R}^N$ by,
    \begin{equation*}
        \norm{x}_{\mathcal{R}} \coloneqq \max \Big\{ t > 0 \: \big{|} \: t \cdot |x| \in \mathcal{R}_{N,d} \Big\},
    \end{equation*}
    where $t \cdot |x|$ denote the vector of $\mathbb{R}^N$ with coefficients $t \cdot |x_i|$. The norms $\norm{\cdot}_{\mathcal{R}}$ and $\norm{\cdot}_{\mathcal{Q}}$ are as in the settings above. If both conditions  \hyperref[chap4:eq:pairNormConditions1]{Eq.~(\ref*{chap4:eq:pairNormConditions1})} and \hyperref[chap4:eq:pairNormConditions2]{Eq.~(\ref*{chap4:eq:pairNormConditions2})}, i.e.
    \begin{align*}
        \forall p \in \mathbb{R}^N, \forall a \in \mathbb{R}^N, &\quad \scalar{p}{a} \leq \norm{p}_{\mathcal{R}} \norm{a}_{\mathcal{Q}} \\
        \forall a \in \mathbb{R}^N, \exists p \in \mathbb{R}^N, &\quad \scalar{p}{a} = \norm{p}_{\mathcal{R}} \norm{a}_{\mathcal{Q}},
    \end{align*}
    are satisfied, then the theorem holds.

    Let $p \in \mathcal{R}_{N,d}$ and a quantum cloning channel $\Phi_p : \mathcal{M}_d \to {\1( \mathcal{M}_d \1)}^{\otimes N}$ such that for all marginals ${(\Phi_p)}_i$ and all pure quantum states $\rho$,
    \begin{equation*}
        {(\Phi_p)}_i (\rho) = p_i \cdot \rho + (1 - p_i) \frac{I_d}{d}.
    \end{equation*}
    Then for all direction vectors $a \in [0,1]^N$,
    \begin{align*}
        \frac{1}{d} \norm{a}_1 + \1( 1 - \frac{1}{d} \1) \norm{a}_{\mathcal{Q}} &\geq \sup_{\Phi~\textsc{cptp}} \sum^N_{i = 1} a_i \cdot \operatornamewithlimits{\mathbb{E}}_{\rho \in \Gamma} \2[ F \1( \Phi_i(\rho), \rho \1) \2] \\
        &\geq \sum^N_{i = 1} a_i \cdot \operatornamewithlimits{\mathbb{E}}_{\rho \in \Gamma} \2[ F \1( {(\Phi_p)}_i (\rho), \rho \1) \2] \\
        &= \frac{1}{d} \norm{a}_1  + \1( 1 - \frac{1}{d} \1) \scalar{p}{a}.
    \end{align*}
    This gives $\scalar{p}{a} \leq \norm{a}_{\mathcal{Q}}$ for all direction vectors $a \in [0,1]^N$. For arbitrary $a \in \mathbb{R}^N$, note that
    \begin{equation*}
        \scalar{p}{a} \leq \scalar{p}{|a|} \leq \norm{\,|a|\,}_{\mathcal Q} = \norm{a}_{\mathcal{Q}},
    \end{equation*}
    showing that the condition \hyperref[chap4:eq:pairNormConditions1]{Eq.~(\ref*{chap4:eq:pairNormConditions1})} is satisfied, since $\norm{p}_{\mathcal{R}} = 1$.

    Let some direction vector $a \in [0,1]^N$, then from \hyperref[chap4:thm:optimalAsymmetricQuantumCloningChannel]{Therem~\ref*{chap4:thm:optimalAsymmetricQuantumCloningChannel}}, there a quantum cloning channel $\Phi^a_{\text{opt}}$ such that,
    \begin{equation*}
        \sum^N_{i = 1} a_i \cdot \operatornamewithlimits{\mathbb{E}}_{\rho \in \Gamma} \2[ F \1( {(\Phi^a_{\text{opt}})}_i(\rho), \rho \1) \2] = \frac{1}{d} \norm{a}_1 + \1( 1 - \frac{1}{d} \1) \norm{a}_{\mathcal{Q}},
    \end{equation*}
    with some $p \in \mathcal{R}_{N,d}$, such that,
    \begin{equation*}
        \sum^N_{i = 1} a_i \cdot \operatornamewithlimits{\mathbb{E}}_{\rho \in \Gamma} \2[ F \1( {(\Phi^a_{\text{opt}})}_i(\rho), \rho \1) \2] = \frac{1}{d} \norm{a}_1 + \1( 1 - \frac{1}{d} \1) \scalar{p}{a}.
    \end{equation*}
    This shows that condition \hyperref[chap4:eq:pairNormConditions2]{Eq.~(\ref*{chap4:eq:pairNormConditions2})} is also satisfied.
\end{proof}

From \hyperref[chap4:thm:achievableFidelityRegion]{Therem~\ref*{chap4:thm:achievableFidelityRegion}}, the achievable fidelity region $\mathcal{R}_{N,d}$ is a convex set delimited by a family of hyperplanes (see \hyperref[chap4:fig:hyperplaneFamilly]{Figure~\ref*{chap4:fig:hyperplaneFamilly}}):
\begin{equation*}
    \scalar{p}{a} = \norm{a}_{\mathcal{Q}}.
\end{equation*}

\begin{figure}[h]
    \centering
    \begin{tikzpicture}
        \begin{axis}[title = {$\mathcal{R}_{2,2}$ is delimited by a family of hyperplanes},
                     xmin =  -0.1,
                     xmax = {1 + 0.1},
                     ymin = -0.1,
                     ymax = {1 + 0.1},
                     axis equal image,
                     xtick = {0, 1},
                     ytick = {0, 1},
                     extra x ticks = {0},
                     extra x tick style = {grid = major},
                     extra y ticks = {0},
                     extra y tick style = {grid = major},
                     ]            

            \fill[orange!50!white, opacity = 0.5] (0, 0) -- (0, {2 / 3}) -- ({2 / 3}, 0) -- cycle;
            
            \begin{scope}
                \clip (0, {2 / 3}) -- (0, 1) -- (1, 1) -- (1, 0) -- ({2 / 3}, 0) -- cycle;
                \fill[rotate around = {-45:(0,0)}, orange!50!white, opacity = 0.5] (0, {(2 / 3) / sqrt(2)}) ellipse [x radius = {(2 / sqrt(3)) / sqrt(2)}, y radius = {(2 / 3) / sqrt(2)}];
            \end{scope}

            \draw[variable=\x, blue!60!white, opacity = 0.5, thick] plot ({\x}, {(0.998 - 0.0566 * \x) / 0.997});
            \draw[variable=\x, blue!60!white, opacity = 0.5, thick] plot ({\x}, {(0.974 - 0.214 * \x) / 0.961});
            \draw[variable=\x, blue!60!white, opacity = 0.5, thick] plot ({\x}, {(0.923 - 0.429 * \x) / 0.857});
            \draw[variable=\x, blue!60!white, opacity = 0.5, thick] plot ({\x}, {(0.889 - 0.667 * \x) / 0.667});
            \draw[variable=\x, blue!60!white, opacity = 0.5, thick] plot ({\x}, {(0.915 - 0.834 * \x) / 0.464});
            \draw[variable=\x, blue!60!white, opacity = 0.5, thick] plot ({\x}, {(0.974 - 0.961 * \x) / 0.214});
            \draw[variable=\x, blue!60!white, opacity = 0.5, thick] plot ({\x}, {(0.998 - 0.997 * \x) / 0.0536});
        \end{axis}
    \end{tikzpicture}
    \caption
        [$\mathcal{R}_{2,2}$ in orange is a convex set delimited by a family of hyperplanes in blue $\scalar{p}{a} = \norm{a}_{\mathcal{Q}}$.]
        {$\mathcal{R}_{2,2}$ \tikz{\draw[fill = orange!50!white, opacity = 0.5] (0,0) rectangle (1.5ex,1.5ex);} is a convex set delimited by a family of hyperplanes \tikz{\draw[fill = blue!60!white, opacity = 0.5] (0,0) rectangle (1.5ex,1.5ex);} $\scalar{p}{a} = \norm{a}_{\mathcal{Q}}$.} \label{chap4:fig:hyperplaneFamilly}
\end{figure}

Note that within the formulation of the quantum cloning problem, i.e. 
\begin{equation*}
    \sup_{\Phi~\textsc{cptp}} \sum^N_{i = 1} a_i \cdot \operatornamewithlimits{\mathbb{E}}_{\rho \in \Gamma} \2[ F \1( \Phi_i(\rho), \rho \1) \2],
\end{equation*}
the optimal quantum cloning channels are twirled quantum channels, and thus their marginals are of the form $\rho \mapsto p_i \cdot \rho + (1 - p_i) \dfrac{I_d}{d}$, but the $p_i$'s are not asked to be collinear with the $a_i$'s. Instead they have to maximize,
\begin{equation*}
    \sum^N_{i = 1} a_i \cdot \2( p_i + \frac{1 - p_i}{d} \2).
\end{equation*}
The optimal quantum cloning channels from \hyperref[chap4:thm:optimalAsymmetricQuantumCloningChannel]{Therem~\ref*{chap4:thm:optimalAsymmetricQuantumCloningChannel}}, can indeed give $p_i$'s in a different direction than the $a_i$'s, especially if the direction of the $a_i$ does not intersect an extreme point of $\mathcal{R}_{N,d}$. As a consequence, the optimal quantum cloning channels from \hyperref[chap4:thm:optimalAsymmetricQuantumCloningChannel]{Therem~\ref*{chap4:thm:optimalAsymmetricQuantumCloningChannel}} do not fill the boundary of $\mathcal{R}_{N,d}$, since some points in this boundary are not optimal with respect to the optimisation problem.

For example, from \hyperref[chap4:thm:optimalSymmetricQuantumCloningChannel]{Therem~\ref*{chap4:thm:optimalSymmetricQuantumCloningChannel}},
\begin{equation*}
    \max \set{p \in [0, 1]}{(p, p, 0) \in \mathcal{R}_{3,2}} = \frac{2}{3},
\end{equation*}
that is the $p_{\text{opt}}$ for the symmetric universal $1 \to 2$ quantum cloning problem is $\nicefrac{2}{3}$. However $\1( \nicefrac{2}{3}, \nicefrac{2}{3}, 0 \1) \in \mathcal{R}_{3,2}$ is not optimal for the asymmetric universal $1 \to 3$ quantum cloning problem, given the direction vector $a \coloneqq \1( \nicefrac{1}{2}, \nicefrac{1}{2}, 0 \1)$. Indeed, the the optimal quantum cloning channel in this direction, from \hyperref[chap4:thm:optimalAsymmetricQuantumCloningChannel]{Therem~\ref*{chap4:thm:optimalAsymmetricQuantumCloningChannel}}, gives $\1( \nicefrac{2}{3}, \nicefrac{2}{3}, \nicefrac{1}{9} \1) \in \mathcal{R}_{3,2}$ (see \hyperref[chap4:fig:flatRegion]{Figure~\ref*{chap4:fig:flatRegion}}).

\begin{figure}[h]
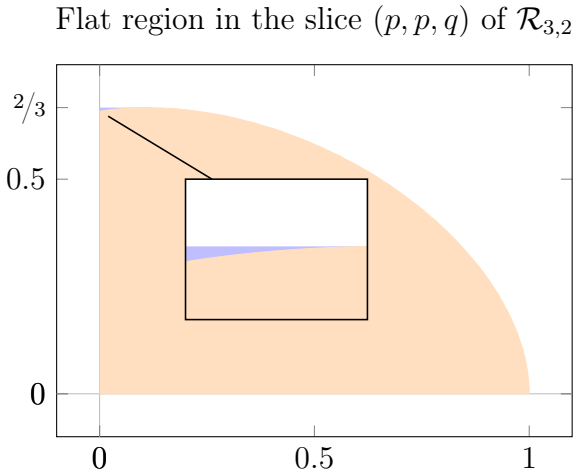

    \centering
    % [inline block 24: 1 envs, 5592 chars -> data_tex | \begin{tikzpicture}         \begin{axis}[title = {Flat region in the slice $(p, p, q)$ of $\mathcal{R}_{3,2}$},...]

    \caption
        [$\mathcal{R}_{3,2}$ is a convex set delimited by a family of hyperplanes, but some points may intersect more than $1$ hyperplanes. Here is a view of the slice $(p, p, q)$ in orage of $\mathcal{R}_{3,2}$, with a flat region in blue.]
        {$\mathcal{R}_{3,2}$ is a convex set delimited by a family of hyperplanes, but some points may intersect more than $1$ hyperplanes. Here is a view of the slice $(p, p, q)$ \tikz{\draw[fill = orange!25!white] (0,0) rectangle (1.5ex,1.5ex);} of $\mathcal{R}_{3,2}$, with a flat region \tikz{\draw[fill = blue!25!white] (0,0) rectangle (1.5ex,1.5ex);}.} \label{chap4:fig:flatRegion}
\end{figure}

In conclusion, as the previous example shows, given a $p \in [0,1]^N$, in order to decide whether $p$ is in the achievable fidelity region $\mathcal{R}_{N,d}$, it does not suffice to check the necessary condition \hyperref[chap4:eq:necessaryConditionOptimalQuantumCloningChannel]{Eq.~(\ref*{chap4:eq:necessaryConditionOptimalQuantumCloningChannel})}.

\chapter{Extendibility} \label{chap5}

\noindent\textbf{The present Section discusses the results of the papers ``Monogamy of highly symmetric states'' \cite{allerstorfer2023monogamy}, of which I am a co-author. At the time of writing this manuscript, that paper was in preparation. Only the section on the extendibility of isotropic states is presented below.}

\medskip

In recent years, the problem of entanglement in many-body systems has received particular attention in quantum physics and quantum information theory. This complex but fundamental quantum phenomenon has been the subject of much work because of its essential role in the description and understanding of quantum many-body systems \cite{wolf2003entanglement,buonsante2007ground,amico2008entanglement,tabia2022entanglement}.

\section{Extendibility of quantum states on a graph}

Let $\emph{G}$ be a graph with $N$ \emph{vertices}. A quantum state on the graph $G$ is a quantum state $\rho$ on a $N$-fold composite quantum system, each associated to a vertex of $G$, i.e.
\begin{equation*}
    \rho \in {\1( \mathbb{C}^d \1)}^{\otimes N}.
\end{equation*}
Let $e \coloneqq (u,v)$ be an \emph{edge} of the graph $G$, the reduced quantum state $\rho_e$ is the reduced quantum state on quantum systems $u$ and $v$, i.e.
\begin{equation*}
    \rho_e \coloneqq \Tr_{G \setminus \{u, v\}} \1[ \rho \1].
\end{equation*}

The \emph{complete graph} on $N$ vertices, denoted $\emph{K_N}$ is the graph $G$ where for all distinct vertices $u$ and $v$ in $G$, the pair $(u, v)$ is and edge of $G$, i.e. every pair of distinct vertices is connected by a unique edge (see \hyperref[chap5:fig:K5]{Figure~\ref*{chap5:fig:K5}}). The complete graph $K_N$ has $\frac{N (N - 1)}{2}$ edges.

The \emph{star graph} on $N$ vertices, denoted $\emph{S_N}$ is the graph $G$ with a distinct \emph{central} vertex $v \in G$ such that the pair $(u, v)$ is an edge of $G$ for all other vertex $u \in G$, i.e. it is the $1$-depth tree of order $N$ (see \hyperref[chap5:fig:S6]{Figure~\ref*{chap5:fig:S6}}). The star graph $S_N$ has $n - 1$ edges.

\begin{figure}[h]
    \centering
    \begin{subfigure}[b]{0.45\textwidth}
        \centering
        \begin{tikzpicture}[site/.style = {circle,
                                            draw = white,
                                            line width = 2pt,
                                            fill = black!80!white,
                                            inner sep = 5pt}]
        
            \coordinate (1) at (90:4em) {};
            \coordinate (2) at (162:4em) {};
            \coordinate (3) at (234:4em) {};
            \coordinate (4) at (306:4em) {};
            \coordinate (5) at (18:4em) {};

            \foreach \i in {1,...,5}
            \foreach \j in {\i,...,5} {
                \draw[-, line width = 4.5pt, draw = white] (\j.center) -- (\i.center);
                \draw[-, line width = 2.5pt, draw = black] (\j.center) -- (\i.center);
            }

            \foreach \i in {1,...,5}
                \node[site] (v\i) at (\i.center) {};
        \end{tikzpicture}
        \caption{$K_5$}
        \label{chap5:fig:K5}
     \end{subfigure}
     \hfill
     \begin{subfigure}[b]{0.45\textwidth}
         \centering
         \begin{tikzpicture}[site/.style = {circle,
                                            draw = white,
                                            line width = 2pt,
                                            fill = black!80!white,
                                            inner sep = 5pt}]

            \coordinate (0);
            \coordinate (1) at (90:4em);
            \coordinate (2) at (162:4em);
            \coordinate (3) at (234:4em);
            \coordinate (4) at (306:4em);
            \coordinate (5) at (18:4em);

            \foreach \i in {1,...,5} {
                \draw[-, line width = 4.5pt, draw = white] (0.center) -- (\i.center);
                \draw[-, line width = 2.5pt, draw = black] (0.center) -- (\i.center);
            }

            \foreach \i in {0,...,5}
                \node[site] (v\i) at (\i.center) {};
        \end{tikzpicture}
         \caption{$S_6$}
         \label{chap5:fig:S6}
     \end{subfigure}
     \caption{Complete graph on $5$ vertices $K_5$, and star graph on $6$ vertices $S_6$.}
\end{figure}
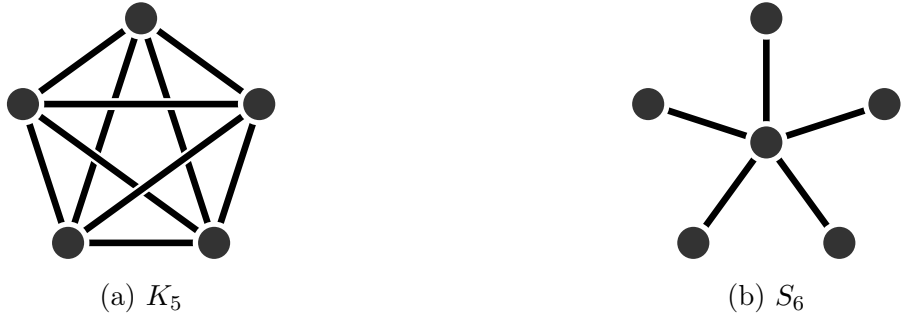

From \hyperref[chap3:sec:monogamyOfEntanglement]{Section~\ref*{chap3:sec:monogamyOfEntanglement}}, a bipartite quantum state $\rho$ is $k$-extendible if there exists a quantum state $\sigma$ on the star graph $S_{k+1}$ such that for all edges $e$, the reduced quantum states on $e$ is,
\begin{equation*}
    \sigma_e = \rho.
\end{equation*}

\section{Quantum cloning: star-graph extendibility of isotropic states} \label{chap5:sec:quantumCloningStarGraphExtendibilityOfIsotropicStates}

The quantum cloning problem can be seen as an extendibility problem, when considering the Choi matrix of the quantum cloning channel. Indeed, a perfect $1 \to N$ quantum cloning channel $\Phi: \mathcal{M}_d \to {\1( \mathcal{M}_d \1)}^{\otimes N}$ would have a Choi matrix $C_\Phi$ living on a star graph $S_{N+1}$ such that for all edges $e$, the reduced Choi matrix on $e$ is,
\begin{equation*}
    (C_{\Phi})_e = d \cdot \omega.
\end{equation*}
Because of the no-cloning \hyperref[chap4:thm:noCloning]{Theorem~\ref*{chap4:thm:noCloning}}, or alternatively the monogamy of the entanglement from \hyperref[chap3:sec:monogamyOfEntanglement]{Section~\ref*{chap3:sec:monogamyOfEntanglement}}, the normalized reduced Choi matrix on $e$ is required to be instead an isotropic state,
\begin{equation*}
    \frac{1}{d} (C_{\Phi})_e = \lambda_e \cdot \omega + (1 - \lambda_e) \mathrm{I},
\end{equation*}
where $\mathrm{I}$ is the maximally mixed state on $\mathbb{C}^d \otimes \mathbb{C}^d$

The symmetric $1 \to N$ quantum cloning problem can then be solved using a \emph{semi-definite programming} (\textsc{sdp}) optimization problem:
\begin{align*}
    \max_{C_\Phi,p} \quad & p \\
    \text{s.t.} \quad & (C_{\Phi})_e = d \1( p \cdot \omega + (1 - p) \mathrm{I} \1), & \textrm{for all edges}~e \\
    & \Tr_{[N+1] \setminus \{0\}} \1[ C_\Phi \1] = I_d, \quad C_\Phi \geq 0, \quad p \in [0,1].
\end{align*}
Note that in the previous optimization problem, the conditions $\Tr_{[N+1] \setminus \{0\}} \1[ C_\Phi \1] = I_d$ and $p \in [0,1]$ are superfluous, and the optimization problem reduces to,
\begin{align*}
    \max_{C_\Phi,p} \quad & p \\
    \text{s.t.} \quad & (C_{\Phi})_e = d \1( p \cdot \omega + (1 - p) \mathrm{I} \1), & \textrm{for all edges}~e \\
    & C_\Phi \geq 0.
\end{align*}
From \hyperref[chap4:thm:optimalSymmetricQuantumCloningChannel]{Therem~\ref*{chap4:thm:optimalSymmetricQuantumCloningChannel}}, this optimization problem has optimal solution:
\begin{equation*}
    p_{\text{opt}} \coloneqq \frac{d + N}{N (d + 1)}.
\end{equation*}

\section{\texorpdfstring{$K_N$}{Kn}-Extendibility of isotropic states}

The optimization problem of \hyperref[chap5:sec:quantumCloningStarGraphExtendibilityOfIsotropicStates]{Section~\ref*{chap5:sec:quantumCloningStarGraphExtendibilityOfIsotropicStates}}, can be extended to the following quantum state marginal problem. Given a complete graph $K_N$, what is the largest $p \in [0, 1]$, such that there exists a quantum state $\rho$ on $K_N$, with all reduced quantum state $\rho_e = p \cdot \omega + (1 - p) \mathrm{I}$, for all edges $e$? This problem depends on the number of quantum systems $N$ and their dimension $d$, and can be stated as the primal semi-definite problem:
\begin{equation} \label{chap5:eq:SDP}
    \begin{aligned}
        p(N,d) \coloneqq \max_{\rho,p} \quad & p \\
    \text{s.t.} \quad & \rho_e = p \cdot \omega + (1 - p) \mathrm{I}, & \textrm{for all edges}~e \\
    & \rho \geq 0.
    \end{aligned}
\end{equation}

\subsection{Lower bound} \label{chap5:sec:lowerBound}

A \emph{perfect matching} on a graph is a set of edges, such that every vertex is contained in exactly one of those edges.
\begin{proposition} \label{chap5:prop:perfectMatchingCount}
    There are $(2N - 1)!!$ perfect matchings on $K_{2N}$, and if $e$ is an edge on $K_{2N}$, then there are $(2N-3)!!$ perfect matchings on $K_{2N}$ containing $e$.
\end{proposition}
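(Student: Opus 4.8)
The plan is to prove both statements together by induction on $N$, organising the count according to the partner of one distinguished vertex. Write $m(N)$ for the number of perfect matchings on $K_{2N}$; the goal is $m(N) = (2N-1)!!$, together with the refined count for matchings through a prescribed edge.

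First I would set up the basic recursion. Fix any vertex $v$ of $K_{2N}$ (say the one of smallest label). In a perfect matching, $v$ is joined to exactly one of the remaining $2N-1$ vertices, and every such choice is admissible since $K_{2N}$ is complete. Once the partner $w$ of $v$ is chosen, the edge $(v,w)$ is determined and the other $2N-2$ vertices must themselves be perfectly matched; they span a copy of $K_{2N-2}$. This gives a bijection between perfect matchings of $K_{2N}$ in which $v$ is matched to $w$ and perfect matchings of $K_{2N-2}$, hence the recurrence $m(N) = (2N-1)\, m(N-1)$.

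Next I would close the induction. The base case is $m(1) = 1 = 1!!$, a single edge being the only matching on $K_2$; unwinding the recurrence then yields $m(N) = (2N-1)(2N-3)\cdots 3\cdot 1 = (2N-1)!!$, which is the first claim.

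Finally, the edge-constrained count follows from the very same deletion map. Given a fixed edge $e=(u,v)$, a perfect matching of $K_{2N}$ containing $e$ is precisely one in which $u$ and $v$ are paired, so deleting $u$ and $v$ establishes a bijection onto the perfect matchings of the complete graph on the remaining $2N-2$ vertices, namely $K_{2(N-1)}$. By the first claim this number is $m(N-1) = (2(N-1)-1)!! = (2N-3)!!$. There is no genuine obstacle beyond bookkeeping: the only points needing care are the double-factorial indexing in the base case and the recurrence, and the observation that both assertions are two instances of the same ``fix one vertex, respectively delete the two endpoints of $e$'' bijection.
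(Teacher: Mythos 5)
Your proof is correct and follows essentially the same approach as the paper: fixing a distinguished vertex, matching it with one of the $2N-1$ remaining vertices, deleting the pair to reduce to $K_{2(N-1)}$, and closing the recurrence $m(N)=(2N-1)\,m(N-1)$ with base case $m(1)=1$; the edge-constrained count is likewise obtained in both arguments by deleting the two endpoints of $e$ and invoking the first claim.
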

\begin{proof}
    Let $a_N$ be the number of perfect matching on $K_{2N}$; clearly $a_1 = 1$. Now assume $N > 1$ and let $v$ be a vertex in $K_{2N}$. This vertex can be matched with $2N - 1$ other vertices, let $u$ be such other vertex matched with $v$. Remove $u$ and $v$ from $K_{2N}$, the resulting graph $K_{2N} \setminus \{u, v\}$ is the complete graph $K_{2(N-1)}$. Thus, by induction on $N$, the number of perfect matchings on $K_{2N}$ satisfies the recursive relation:
    \begin{equation*}
        a_N = (2N - 1) a_{N-1} \quad \implies \quad a_N = (2N - 1)!!.
    \end{equation*}

    Assume $e = (u, v)$, thus the number of perfect matchings containing $e$ is the number of perfect matchings on $K_{2N} \setminus \{u, v\}$, that is $(2N - 3)!!$.
\end{proof}
\begin{remark*}
    There is no perfect matching on $K_N$ for odd $N$ (see \hyperref[chap5:fig:perfectMatching]{Figure~\ref*{chap5:fig:perfectMatching}}).
\end{remark*}

\begin{figure}[h]
    \centering
    \begin{subfigure}[b]{0.45\textwidth}
        \centering
        % [inline block 25: 2 envs, 2539 chars -> data_tex | \begin{tikzpicture}[site/.style = {circle,                                             draw = white,...]

         \caption{$K_5$}
    \end{subfigure}
    \caption
        [Perfect matching in red for complete graphs $K_N$, with even $N$. For $K_N$ with odd $N$, some vertices are not matched in gray.]
        {Perfect matching \tikz{\draw[fill = red!70!white] (0,0) rectangle (1.5ex,1.5ex);} for complete graphs $K_N$, with even $N$. For $K_N$ with odd $N$, some vertices are not matched \tikz{\draw[fill = gray] (0,0) rectangle (1.5ex,1.5ex);}.} \label{chap5:fig:perfectMatching}
\end{figure}

A lower bound on the optimisation problem \hyperref[chap5:eq:SDP]{Eq.~(\ref*{chap5:eq:SDP})} would be the following. For even $N$, let $E_1, \ldots, E_{(N - 1)!!}$ be all the perfect matchings on $K_N$, and for each perfect matching $E_k$, define the quantum state $\rho^{(k)}$ on $K_N$ by,
\begin{equation*}
    \rho^{(k)} \coloneqq \bigotimes_{e \in E_k} \omega_e.
\end{equation*}
For odd $N$, let $v$ be vertex of $K_N$, and let $E_{v, 1}, \ldots, E_{v, {(N - 2)!!}}$ be all the perfect matchings on $K_N \setminus \{ v \}$, define the quantum state $\rho^{(v, k)}$ on $K_N$ by,
\begin{equation*}
    \rho^{(v, k)} \coloneqq \bigotimes_{e \in E_{v, k}} \omega_e \otimes \mathrm{I}_v.
\end{equation*} 
That is a quantum state maximally entangled on the perfect matching edges, and maximally mixed on the remaining vertex in the odd case. Let $\rho$ be the quantum state defined on $K_N$, as a uniform combination of the previously constructed states:
\begin{align*}
    & (N~\textrm{even}) & \rho &\coloneqq \frac{1}{N - 1} \sum_{1 \leq k \leq (N - 1)!!} \rho^{(k)} \\[1em]
    & (N~\textrm{odd}) & \rho &\coloneqq \frac{1}{N} \sum_{\substack{v \in K_N \\ 1 \leq k \leq (N - 2)!!}} \rho^{(v,k)},
\end{align*}
where the corresponding normalization factors can be found using \hyperref[chap5:prop:perfectMatchingCount]{Proposition~\ref*{chap5:prop:perfectMatchingCount}}.
Then for all edges $e$ in $K_N$, the reduced quantum state $\rho_e$ is
\begin{equation*}
    \rho_e =
    \begin{cases}
        \frac{1}{N-1} \omega + \frac{N-2}{N-1} \mathrm{I} &N~\text{even} \\
        \frac{1}{N} \omega + \frac{N-1}{N} \mathrm{I} &N~\text{odd}.
    \end{cases}
\end{equation*}
The lower bound becomes
\begin{equation*}
    p(N, d) \geq
    \begin{cases}
        \frac{1}{N-1} &N~\text{even} \\
        \frac{1}{N} &N~\text{odd}.
    \end{cases}
\end{equation*}
In particular, the lower bound is independent of the dimension $d$.

\subsection{Symmetries}

Since it is a complete graph, any permutation of the vertices of $K_N$ is a graph isomorphism. Let $\rho$ be a quantum state on $K_N$ such that for all edges $e$, the reduced quantum state $\rho_e$ is,
\begin{equation*}
    \rho_e = p \cdot \omega + (1 - p) \mathrm{I}.
\end{equation*}
Then $\rho$ is invariant by vertex permutation, that is for all permutations $\sigma \in \mathfrak{S}_N$, the equality
\begin{equation*}
    \psi(\sigma) \cdot \rho \cdot \psi \1( \sigma^{\shortminus 1} \1) = \rho,
\end{equation*}
holds.

Recall that a maximally entangled state $\omega$ is invariant by $\bar{U} \otimes U$, for all unitary matrix $U$, i.e. $(\bar{U} \otimes U) \cdot \omega = \omega$. On the star graph $S_N$ with central vertex $v$, a quantum state $\rho$ with reduced quantum states $\rho_e = p \cdot \omega + (1 - p) \mathrm{I}$, for all edges $e$, would be invariant under conjugation by $\bar{U}$ on $v$ and $U^{\otimes (N-1)}$ on $S_N \setminus \{v\}$, i.e.
\begin{equation*}
    \2( \bar{U}_v \otimes {\1(U^{\otimes (N-1)} \1)}_{S_N \setminus \{v\}} \2) \cdot \rho \cdot \2( U^\T_v \otimes {\1({(U^*)}^{\otimes (N-1)} \1)}_{S_N \setminus \{v\}} \2) = \rho. 
\end{equation*}
But on the complete graph $K_N$, no distinct vertex can be chosen, instead, using the invariant on a maximally entangled state by $O \otimes O$, for all orthogonal matrix $O \in \mathrm{O}_d$, i.e. $(O \otimes O) \cdot \omega = \omega$, a quantum state $\rho$ with reduced quantum states $\rho_e = p \cdot \omega + (1 - p) \mathrm{I}$, for all edges $e$, is invariant under conjugation by $O^{\otimes N}$, that is,
\begin{equation*}
    O^{\otimes N} \cdot \rho \cdot {\1( O^\T \1)}^{\otimes N} = \rho.
\end{equation*}

Hence, with respect to the optimisation problem \hyperref[chap5:eq:SDP]{Eq.~(\ref*{chap5:eq:SDP})}, the two commutation relations,
\begin{align}
    \1[ \rho, \sigma \1] &= 0, & & \forall \sigma \in \mathfrak{S}_N \label{chap5:eq:symmetricCommutation} \\
    \1[ \rho, O^{\otimes N} \1] &= 0, & & \forall O \in \mathrm{O}_d. \label{chap5:eq:orthogonalCommutation}
\end{align}
hold for all $N$ and $d$.

\subsection{Dual \texorpdfstring{\textsc{sdp}}{SDP}}

The Lagrangian \cite{boyd2004convex} associated with the semi-definite problem \hyperref[chap5:eq:SDP]{Eq.~(\ref*{chap5:eq:SDP})} is defined as,
\begin{equation*}
    L \1( p, \rho, h_e, Z \1) \coloneqq p + \sum_{e~\text{edge}} \scalar{h_e}{\rho_e - p \cdot \omega + (1 - p) \mathrm{I}} + \scalar{Z}{\rho}.
\end{equation*}
The \emph{min-max principle} states that,
\begin{equation*}
    \max_{\rho,p} \min_{h_e, Z} L \1( p, \rho, h_e, Z \1) \leq \min_{h_e, Z} \max_{\rho,p} L \1( p, \rho, h_e, Z \1).
\end{equation*}
Then, the \emph{dual} semi-definite problem of \hyperref[chap5:eq:SDP]{Eq.~(\ref*{chap5:eq:SDP})} is,
\begin{align*}
    p^*(N,d) \coloneqq \max_{h_e,Z} \quad & - \frac{\Tr \2[ \sum\limits_{e~\text{edge}} h_e \otimes I^{\otimes (N-2)}_d \2]}{d^N} \\
    \text{s.t.} \quad & \sum_{e~\text{edge}} \Tr \1[ h_e \cdot (\omega - \mathrm{I}) \1] = 1 \\
    & Z + \sum_{e~\text{edge}} h_e \otimes I^{\otimes (N-2)}_d = 0, \quad  Z \geq 0,
\end{align*}
where for each edge $e$ of $K_N$, the matrix $h_e \in \mathcal{M}_{d^2}$ is Hermitian, as well as the matrix $Z \in \mathcal{M}_{d^N}$. Recall that for any Hermitian matrix $M$ the smallest $\lambda \in \mathbb{R}$ such that $\lambda \cdot I \geq M$ is equal to the largest eigenvalue of $M$. Then the dual optimization problem can be simplified to,
\begin{align*}
    p^*(N,d) = \min_{h_e} \quad & \lambda_{\text{max}} \3( \sum_{e~\text{edge}} h_e \otimes I^{\otimes (N-2)}_d \3) \\
    \text{s.t.} \quad & \sum_{e~\text{edge}} \Tr \1[ h_e \cdot \omega \1] = 1 \\
    & \Tr \1[ h_e \1] = 0, & \textrm{for all edges}~e.
\end{align*}

From the commutation relation \hyperref[chap5:eq:orthogonalCommutation]{Eq.~(\ref*{chap5:eq:orthogonalCommutation})} applied to the dual optimization problem, define for all edge $e$, the twirled Hermitian matrix $\widetilde{h}_e$:
\begin{equation*}
    \widetilde{h}_e \coloneqq \int_O \1( O \otimes O \1) \1( h_e \1) \1( O^\T \otimes O^\T \1) ~\mathrm{d}O,
\end{equation*}
where the integral is taken with respect to the normalized Haar measure on the orthogonal group $\mathrm{O}_d$. The constraints of the dual optimization problem are also satisfied with $\widetilde{h}_e$, and since by convexity of $\lambda_{\text{max}}$,
\begin{equation*}
    \lambda_{\text{max}} \3( \sum_{e~\text{edge}} h_e \otimes I^{\otimes (N-2)}_d \3) \geq \lambda_{\text{max}} \3( \sum_{e~\text{edge}} \widetilde{h}_e \otimes I^{\otimes (N-2)}_d \3),
\end{equation*}
hence the dual optimization problem can be restricted to twirled $\widetilde{h}_e$.

The twirled $\widetilde{h}_e$ commutes with all the orthogonal matrices $O \otimes O$, i.e. they are in the commutant of the algebra
\begin{equation*}
    \Span_{\mathbb{C}} \set{O \otimes O}{O \in \mathrm{O}_d}.
\end{equation*}
From \hyperref[chap2:thm:commutantSchurWeylDualityBn]{Theorem~\ref*{chap2:thm:commutantSchurWeylDualityBn}}, the twirled $\widetilde{h}_e$ is in the complex algebra spanned by the tensor representation of the Brauer monoid:
\begin{equation*}
    \Span_{\mathbb{C}} \set{\psi(p)}{p \in \mathbb{B}_2}.
\end{equation*}
Hence, the twirled $\widetilde{h}_e$ are a linear combination of the unormalized maximally mixed, maximally entangled, and flip state:
\begin{equation*}
    \widetilde{h}_e = \alpha_e (d^2 \cdot \mathrm{I}_e) + \beta_e (d \cdot \omega_e) + \gamma_e (d \cdot \mathrm{F}_e),
\end{equation*}
subject to the constraint $\alpha_e = - \frac{\beta_e + \gamma_e}{d}$. The dual optimization problem simplifies to,
\begin{align*}
    p^*(N,d) = \min_{\beta_e,\gamma_e} \quad & \lambda_{\text{max}} \3( \sum_{e~\text{edge}} \2( - \frac{\beta_e + \gamma_e}{d} (d^2 \cdot \mathrm{I}_e) + \beta_e (d \cdot \omega_e) + \gamma_e (d \cdot \mathrm{F}_e) \2) \otimes I^{\otimes (N-2)}_d \3) \\
    \text{s.t.} \quad & (d^2 - 1) \3( \sum_{e~\text{edge}} \beta_e \3) + (d-1) \3( \sum_{e~\text{edge}} \gamma_e \3) = d.
\end{align*}

From the commutation relation \hyperref[chap5:eq:orthogonalCommutation]{Eq.~(\ref*{chap5:eq:symmetricCommutation})}, applied to the dual optimization problem, all the $\beta_e$ and $\gamma_e$ must be equal. Writing $\beta \coloneqq \beta_e$ and $\gamma \coloneqq \gamma_e$ for all edges $e$, the dual optimization problem simplifies further to
\begin{align*}
    p^*(N,d) = \min_{\beta,\gamma} \quad & \lambda_{\text{max}} \3( \sum_{e~\text{edge}} \2( - \frac{\beta + \gamma}{d} (d^2 \cdot \mathrm{I}_e) + \beta (d \cdot \omega_e) + \gamma (d \cdot \mathrm{F}_e) \2) \otimes I^{\otimes (N-2)}_d \3) \\
    \text{s.t.} \quad & \frac{N (N - 1)}{2} \1( \beta (d^2 - 1) + \gamma (d-1) \1) = d.
\end{align*}

Using the constraint $\frac{N (N - 1)}{2} \1( \beta (d^2 - 1) + \gamma (d-1) \1) = d$ to eliminate $\gamma$, let the function $f$ be defined by $f(x) \coloneqq \lambda_{\text{max}} \1( H(x) \1)$ with
\begin{equation*}
    H(x) \coloneqq \sum_{e~\text{edge}} \3( \2( \frac{2}{N (N - 1) (1 - d)} - x \2) \1( (d^2 \cdot \mathrm{I}_e) - d (d \cdot \mathrm{F}_e) \1) + x \1( (d \cdot \mathrm{F}_e) - (d \cdot \omega) \1) \3) \otimes I^{\otimes (N-2)}_d.
\end{equation*}
Then the dual optimization problem becomes finally the minimization,
\begin{equation*}
    p^*(N,d) = \min_x f(x).
\end{equation*}

Recall from \hyperref[chap2:thm:SchurWeylDuality]{Theorem~\ref*{chap2:thm:SchurWeylDuality}}, that under the action of the tensor representation of the symmetric group algebra $\mathbb{S}_N$, the space of $N$-fold tensors over $\mathbb{C}^d$ decomposes as
\begin{equation*}
    {\1( \mathbb{C}^d \1)}^{\otimes N} \simeq \bigoplus_{\lambda \in \mathrm{Irr}(\mathfrak{S}_N)} V^{\oplus m_\lambda}_\lambda,
\end{equation*}
where $\mathrm{Irr}(\mathfrak{S}_N) \coloneqq \set{\lambda \vdash N}{\lambda^\prime_1 \leq d}$. Similarly, as a consequence of \hyperref[chap2:thm:commutantSchurWeylDualityBn]{Theorem~\ref*{chap2:thm:commutantSchurWeylDualityBn}}, under the action of the tensor representation of the Brauer algebra $\mathbb{B}_N(d)$, the space of $N$-fold tensors over $\mathbb{C}^d$ decomposes as
\begin{equation*}
    {\1( \mathbb{C}^d \1)}^{\otimes N} \simeq \bigoplus_{\lambda \in \mathrm{Irr}(\mathbb{B}_N(d))} W^{\oplus n_\lambda}_\lambda,
\end{equation*}
where $\mathrm{Irr}\1(\mathbb{B}_N(d)\1) \coloneqq \set{\lambda \vdash N - 2k}{k \in \1\{ 0, \ldots, \lfloor \frac{N}{2} \rfloor \1\} \text{ and } \lambda^\prime_1 + \lambda^\prime_2 \leq d}$ \cite{wenzl1988structure}. Both decompositions are indexed by some Young diagram $\lambda$.

Given a irreducible representation $W_\lambda$ of the Brauer algebra $\mathbb{B}_N(d)$, the restriction of $W_\lambda$ to the symmetric group algebra $\mathbb{S}_N$, as a subalgebra decompose into
\begin{equation*}
    \mathrm{Res}^{\mathbb{B}_N(d)}_{\mathbb{S}_N} \1( W_\lambda \1) \simeq \bigoplus_{\mu \in \mathrm{Irr}(\mathfrak{S}_N)} V^{\oplus m_{\lambda, \mu}}_\lambda.
\end{equation*}
The multiplicities have no known concise formula. Even the explicit characterization of the set $\emph{\Omega}$,
\begin{equation*}
    \Omega \coloneqq \set{(\lambda, \mu) \in \mathrm{Irr}\1(\mathbb{B}_N(d)\1) \times \mathrm{Irr}(\mathfrak{S}_N)}{m_{\lambda, \mu} \neq 0},
\end{equation*}
is still unknown. However Okada \cite{okada2016pieri} characterizes it using an algorithm, and found the subset $\emph{\Gamma} \subseteq \Omega$,
\begin{equation*}
        \Gamma \coloneqq \set{(\lambda, \mu) \in \mathrm{Irr}\1(\mathbb{B}_N(d)\1) \times \mathrm{Irr}(\mathfrak{S}_N)}{\lambda = (1^m), r(\mu) = m \text{ for some } m \in \{1, \dots, d\}},
\end{equation*}
where $r(\mu)$ is the number of rows with odd size in the Young diagram $\mu$. Moreover if $\1( (1^m), \mu \1) \in \Omega$ then $\1( (1^m), \mu \1) \in \Gamma$.

The \emph{content} of a Young diagram $\lambda$, denoted $\emph{c(\lambda)}$, is the the sum of the entries of the boxes of the Young tableau $T$, where the entry of the box in row $i$ and column $j$ is given by $j - i$. For example if $\lambda \coloneqq (3, 3, 1)$, then
\begin{align*}
    \lambda &= \ydiagram{3, 3, 1} & T &=
    \begin{ytableau}
        0 & 1 & 2 \\
        \shortminus 1 & 0 & 1 \\
        \shortminus 2
    \end{ytableau}
\end{align*}
and the content of $\lambda$ is $c(\lambda) = 1$.

The next lemmas describe how particular central elements of the symmetric group $\mathfrak{S}_N$ and the Brauer monoid $\mathbb{B}_N$, act on their irreducible representations.
\begin{lemma}[\cite{doty2019canonical}] \label{chap5:lem:eigenvalueCentralElementsSn}
    Let $V_\lambda$ be an irreducible representation of $\mathfrak{S}_N$, with $\lambda \in \mathrm{Irr}(\mathfrak{S}_N)$, and define $J$ an element of the symmetric group algebra $\mathbb{S}_N$, by
    \begin{equation*}
        J \coloneqq \sum_{1 \leq i, j \leq N} (i \: j).
    \end{equation*}
    Then the restriction of $J$ on $V_\lambda$ is the multiple of the identity,
    \begin{equation*}
        J|_{V_\lambda} = c(\lambda) \cdot I.
    \end{equation*}
\end{lemma}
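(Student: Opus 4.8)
The plan is to recognise $J$ as a central element of $\mathbb{S}_N$ and then pin down the scalar it induces on each irreducible component through the Jucys--Murphy elements. First I would observe that $J$ is the class sum of the transposition conjugacy class: conjugating a transposition $(i\:j)$ by any $\sigma \in \mathfrak{S}_N$ yields the transposition $(\sigma(i)\:\sigma(j))$, so the set of transpositions is permuted among itself and $\sigma J \sigma^{\shortminus 1} = J$. Hence $J$ lies in the centre of the group algebra $\mathbb{S}_N$, and in particular commutes with the full action of $\mathfrak{S}_N$ on the irreducible module $V_\lambda$. By Schur's lemma, in the form used in \hyperref[chap1:thm:doubleCommutant]{Theorem~\ref*{chap1:thm:doubleCommutant}}, the restriction $J|_{V_\lambda}$ must be a scalar multiple of the identity, say $J|_{V_\lambda} = \kappa_\lambda \cdot I$. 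The whole content of the lemma is therefore the identification $\kappa_\lambda = c(\lambda)$.

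To compute $\kappa_\lambda$, I would introduce the Jucys--Murphy elements $X_k \coloneqq \sum_{1 \leq i < k} (i\:k)$, for $1 \leq k \leq N$, which satisfy $J = \sum_{k=1}^N X_k$. The key input is the classical theorem of Jucys and Murphy: the $X_k$ generate a maximal commutative subalgebra of $\mathbb{S}_N$ and act diagonally in the Gelfand--Tsetlin (Young) basis of $V_\lambda$, the eigenvalue of $X_k$ on the vector indexed by a standard Young tableau $t$ being the content $j - i$ of the box of $t$ that contains the entry $k$. Summing over $k$, the element $J$ acts on that basis vector by $\sum_{k=1}^N c(\text{box of } k) = \sum_{b \in \lambda} c(b) = c(\lambda)$, a quantity that does not depend on the chosen tableau $t$. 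This both reconfirms that $J|_{V_\lambda}$ is scalar and shows $\kappa_\lambda = c(\lambda)$, completing the proof. As a consistency check, for $\lambda = (2)$ the single transposition acts as $+1 = 0 + 1 = c(\lambda)$, and for $\lambda = (1,1)$ it acts as $-1 = 0 + (\shortminus 1) = c(\lambda)$, in agreement with the sign representation.

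The main obstacle is the Jucys--Murphy eigenvalue statement, which is where all the representation-theoretic work sits; I would either cite it directly (as in \cite{doty2019canonical}) or, for a self-contained route, compute $\kappa_\lambda$ as the central character value $\kappa_\lambda = \binom{N}{2}\, \chi_\lambda(\tau) / \dim V_\lambda$ for a fixed transposition $\tau$, and then establish the combinatorial identity equating this ratio with the content sum $c(\lambda)$ via the Frobenius character formula (or the Murnaghan--Nakayama rule). That identity, rather than the centrality argument, is the genuinely delicate step, since the centrality and the reduction to a scalar are immediate from Schur's lemma.
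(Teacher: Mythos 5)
Your proof is correct. There is, in fact, no proof to compare against in the thesis: the lemma is quoted directly from \cite{doty2019canonical}, so your argument fills in what the text leaves to the literature. The route you take — observing that $J$ is the class sum of transpositions, hence central, invoking Schur's lemma to reduce to a scalar $\kappa_\lambda$, and then identifying $\kappa_\lambda = c(\lambda)$ by writing $J = \sum_{k=1}^N X_k$ and using the Jucys--Murphy spectral theorem on the Gelfand--Tsetlin basis — is the standard proof, and essentially the mechanism used in the cited reference as well; your fallback via the central character $\binom{N}{2}\chi_\lambda(\tau)/\dim V_\lambda$ and the Frobenius formula is an equally valid, more computational alternative. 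One caveat worth making explicit: the eigenvalue $c(\lambda)$ is correct only under your (intended) reading of $J$ as the sum over transpositions, i.e. over unordered pairs of distinct indices $i < j$; if the displayed sum $\sum_{1 \leq i, j \leq N}$ were taken literally over all ordered pairs including $i = j$, the scalar would instead be $2\,c(\lambda) + N$, so it is worth stating this normalization when writing the proof out.
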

\begin{lemma}[\cite{doty2019canonical}] \label{chap5:lem:eigenvalueCentralElementsBn}
    Let $W_\lambda$ be an irreducible representation of $\mathbb{B}_N(d)$, with $\lambda \in \mathrm{Irr}\1(\mathbb{B}_N(d)\1)$, and define $J$ an element of the Brauer algebra $\mathbb{B}_N(d)$, by
    \begin{equation*}
        J \coloneqq \sum_{1 \leq i, j \leq N} (i \: j) - (i \: j)^\Tpartial,
    \end{equation*}
    where $(i \: j)^\Tpartial$ denotes the partial transposition of the diagram $(i \: j)$. Then the restriction of $J$ on $W_\lambda$ is the multiple of the identity,
    \begin{equation*}
        J|_{W_\lambda} = \1( c(\lambda) - k(d-1) \1) \cdot I.
    \end{equation*}
\end{lemma}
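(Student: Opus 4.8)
The plan is to evaluate $J$ inside the tensor representation $\psi$ on ${(\mathbb{C}^d)}^{\otimes N}$ and to recognise $\psi(J)$ as a constant shift of the quadratic Casimir of $\mathfrak{so}_d$ acting diagonally; the eigenvalue on each Brauer module $W_\lambda$ is then read off from Schur--Weyl--Brauer duality (\hyperref[chap2:thm:commutantSchurWeylDualityBn]{Theorem~\ref*{chap2:thm:commutantSchurWeylDualityBn}}). Throughout I read the sum defining $J$ over unordered pairs $\{i,j\}$, so that each crossing $(i\:j)$ and its partial transpose $(i\:j)^\Tpartial$ occurs once (this is the convention under which the companion \hyperref[chap5:lem:eigenvalueCentralElementsSn]{Lemma~\ref*{chap5:lem:eigenvalueCentralElementsSn}} gives exactly $c(\lambda)$).

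First I would translate $J$ into operators. From \hyperref[chap2:sec:tensorRepresentation]{Section~\ref*{chap2:sec:tensorRepresentation}}, the crossing maps to the flip, $\psi((i\:j)) = F_{ij}$, and since $\psi$ commutes with partial transposition, $\psi((i\:j)^\Tpartial) = \psi((i\:j))^\Tpartial = F_{ij}^\Tpartial = d\,\omega_{ij}$, where $d\,\omega_{ij} = \sum_{a,b}\ketbra{aa}{bb}$ on positions $i,j$. Hence $\psi(J) = \sum_{i<j}(F_{ij} - d\,\omega_{ij})$.

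Next, write $L_{ab} \coloneqq E_{ab}-E_{ba}$ for the standard basis of $\mathfrak{so}_d$ and $C \coloneqq \tfrac12\sum_{a,b}L_{ab}^2$ for its quadratic Casimir, with the diagonal action $\rho(X) = \sum_{i=1}^N X^{(i)}$. The computation I would carry out is the two-site identity $\sum_{a,b}L_{ab}\otimes L_{ab} = 2(d\,\omega - F)$, which follows from $\sum_{a,b}E_{ab}\otimes E_{ba} = F$ and $\sum_{a,b}E_{ab}\otimes E_{ab} = d\,\omega$. Expanding $\rho(C) = \sum_i C^{(i)} + \tfrac12\sum_{i\neq j}\sum_{a,b}L_{ab}^{(i)}L_{ab}^{(j)}$ and using that $C$ acts on the vector representation $\mathbb{C}^d$ by the scalar $c_d = -(d-1)$ (a one-line computation), this yields $\rho(C) = N c_d - 2\,\psi(J)$, i.e. $\psi(J) = \tfrac12(N c_d - \rho(C))$.

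Finally, by \hyperref[chap2:thm:commutantSchurWeylDualityBn]{Theorem~\ref*{chap2:thm:commutantSchurWeylDualityBn}} the $\lambda$-isotypic component of ${(\mathbb{C}^d)}^{\otimes N}$ is $U_\lambda\otimes W_\lambda$ with $U_\lambda$ the $\mathrm{O}_d$-irrep labelled by $\lambda \vdash N-2k$; the central element $\rho(C)$ acts there by the orthogonal Casimir eigenvalue $c_\lambda = -\sum_i \lambda_i(\lambda_i + d - 2i)$. Because $\rho(C) = \sum_\lambda c_\lambda P_\lambda$ is a combination of the central isotypic projections, $\psi(J)$ is automatically central in $\psi(\mathbb{B}_N(d))$ and acts on $W_\lambda$ by $\tfrac12(N c_d - c_\lambda)$; substituting $c_d$, $c_\lambda$, the relation $|\lambda|=N-2k$, and the content identity $\sum_i\lambda_i(\lambda_i-2i) = 2c(\lambda)-|\lambda|$ collapses this scalar to $c(\lambda)-k(d-1)$. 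The one genuine subtlety, and the main obstacle, is that $\psi$ is faithful (and every $W_\lambda$ appears) only for $d$ large; the clean way around it is that the eigenvalue of the central element $J$ on $W_\lambda$ is a polynomial in the parameter $\delta$, so verifying the identity for all large integers $d$ via the Casimir computation forces it for all $\delta$ by polynomial interpolation. A heavier alternative that avoids the faithfulness issue is to expand $J$ in Nazarov's Jucys--Murphy elements for $\mathbb{B}_N(\delta)$ and telescope their shifted contents along an up-down tableau, which I would mention but not pursue.
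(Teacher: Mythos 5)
Your proposal is correct, and it supplies an actual proof where the thesis gives none: the lemma is stated there as an imported result, with \cite{doty2019canonical} as its justification, and the route taken in that reference is precisely the Jucys--Murphy/canonical-idempotent calculus that you mention in your closing sentence as the ``heavier alternative.'' Your computation checks out at every step: $\psi\bigl((i\:j)\bigr)=F_{ij}$ and $\psi\bigl((i\:j)^\Tpartial\bigr)=d\,\omega_{ij}$ (consistent with the partial-transposition rule of \hyperref[chap2:sec:tensorRepresentation]{Section~\ref*{chap2:sec:tensorRepresentation}}), the two-site identity $\sum_{a,b}L_{ab}\otimes L_{ab}=2(d\,\omega-F)$, the scalar $c_d=-(d-1)$ on the vector representation, and the final algebra $\tfrac12\bigl(Nc_d-c_\lambda\bigr)=c(\lambda)-k(d-1)$ using $|\lambda|=N-2k$ and $\sum_i\lambda_i(\lambda_i-2i)=2c(\lambda)-|\lambda|$. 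Reading the sum over unordered pairs is also the right convention, since it is the one under which the companion \hyperref[chap5:lem:eigenvalueCentralElementsSn]{Lemma~\ref*{chap5:lem:eigenvalueCentralElementsSn}} gives $c(\lambda)$ rather than $2c(\lambda)$. What your approach buys is self-containedness: it uses only the Brauer--Schur--Weyl duality already established in \hyperref[chap2:sec:SchurWeylDualities]{Section~\ref*{chap2:sec:SchurWeylDualities}} plus a Casimir computation, whereas the cited Jucys--Murphy approach is intrinsic to the abstract algebra and yields the central character at arbitrary parameter $\delta$ without passing through the tensor representation.

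One refinement on where the care is actually needed. Faithfulness of $\psi$ is not the obstacle for the lemma as stated: the set $\mathrm{Irr}\bigl(\mathbb{B}_N(d)\bigr)$ is defined in the thesis as the labels $\lambda\vdash N-2k$ with $\lambda^\prime_1+\lambda^\prime_2\leq d$, i.e.\ exactly the constituents of ${(\mathbb{C}^d)}^{\otimes N}$, so every relevant $W_\lambda$ occurs in the tensor space for that same $d$ and the scalar by which the central element $J$ acts can be read off there directly. The genuine subtlety for small $d$ is instead your identification of the $\mathfrak{so}_d$-Casimir eigenvalue on the $\mathrm{O}_d$-constituent $U_\lambda$: when $\lambda^\prime_1>d/2$, the restriction of $U_\lambda$ to $\mathrm{SO}_d$ is the module of the \emph{associated} partition $\tilde\lambda$ (first column replaced by $d-\lambda^\prime_1$), so the formula $c_\lambda=-\sum_i\lambda_i(\lambda_i+d-2i)$ needs justification in that range. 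It does hold, because $-c_\lambda=2c(\lambda)+(d-1)|\lambda|$ is invariant under $\lambda\mapsto\tilde\lambda$ (a short direct computation), and when $\lambda^\prime_1=d/2$ the two spin-like summands of the restriction share the same Casimir eigenvalue; alternatively, your interpolation-in-$\delta$ argument repairs this and the faithfulness concern simultaneously, since for $d\geq 2N$ neither arises, and the central character of $J$ on the cell module for $\lambda$ is indeed polynomial in $\delta$ by cellularity. With either patch the proof is complete.
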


Using \hyperref[chap5:lem:eigenvalueCentralElementsSn]{Lemma~\ref*{chap5:lem:eigenvalueCentralElementsSn}}, \hyperref[chap5:lem:eigenvalueCentralElementsBn]{Lemma~\ref*{chap5:lem:eigenvalueCentralElementsBn}}, the decomposition of the space of $N$-fold tensors over $\mathbb{C}^d$ under the action of the tensor representation of the Brauer algebra $\mathbb{B}_N(d)$, and decomposing of the restriction of the irreducible representations of the Brauer algebra $\mathbb{B}_N(d)$ to the symmetric group algebra $\mathbb{S}_N$, the function $f$ becomes,
\begin{equation*}
        f(x) = \max_{(\lambda,\mu) \in \Omega} \quad \underbrace{\frac{1}{d-1} \3( \frac{2 d \cdot c(\mu)}{N (N - 1)} - 1 \3) + x \3( c(\lambda) + d \cdot c(\mu) - k(d-1) - \frac{N (N - 1)}{2} \3)}_{\normalsize f_{\lambda, \mu}(x)},
\end{equation*}
where $f_{\lambda, \mu}(x)$ is an affine function. The dual optimization problem is then the minimum value of the max over a set of affine functions, i.e.,
\begin{equation} \label{chap5:eq:dualSDP}
    p^*(N,d) = \min_x \max_{(\lambda,\mu) \in \Omega} f_{\lambda, \mu}(x).
\end{equation}

\subsubsection{Case $d > N$ or either $d$ or $N$ is even}

An upper bound for the dual optimization problem \hyperref[chap5:eq:dualSDP]{Eq.~(\ref*{chap5:eq:dualSDP})}, can be found by setting $x = \frac{2}{N (N - 1)(1-d)}$. Then \hyperref[chap5:eq:dualSDP]{Eq.~(\ref*{chap5:eq:dualSDP})} becomes, since $x < 0$,
\begin{equation*}
    p^*(N,d) \leq \min_{\lambda \in \mathrm{Irr}\1(\mathbb{B}_N(d)\1)} \frac{2}{N (N - 1)(1-d)} \1( c(\lambda) - k(d - 1) \1).
\end{equation*}

\begin{lemma}\label{chap5:lem:upperBound}
    If $d > N$, or either $d$ or $N$ is even, then,
    \begin{equation*}
        p^*(N,d) \leq
        \begin{cases}
            \frac{1}{N} & \text{ if $N$ is odd} \\
            \frac{1}{N-1} & \text{ if $N$ is even}.
        \end{cases}
    \end{equation*}
\end{lemma}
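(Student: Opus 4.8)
The plan is to start from the upper bound displayed immediately before the statement, namely $p^*(N,d) \le \frac{2}{N(N-1)(1-d)}\min_{\lambda}\bigl(c(\lambda)-k(d-1)\bigr)$, where the minimum runs over $\lambda \in \mathrm{Irr}(\mathbb{B}_N(d))$ and $k = (N-|\lambda|)/2$. Since $d \ge 2$ the prefactor $\frac{2}{N(N-1)(1-d)}$ is strictly negative, so I would reduce the whole task to a \emph{lower} bound on $g(\lambda) := c(\lambda) - k(d-1)$: once $g(\lambda) \ge -\tfrac{N(d-1)}{2}$ is shown for every admissible $\lambda$ (the even case) or $g(\lambda) \ge -\tfrac{(N-1)(d-1)}{2}$ (the odd case), multiplying by the negative prefactor flips the inequality and yields exactly $p^* \le \tfrac{1}{N-1}$, resp. $p^* \le \tfrac1N$.

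First I would substitute $k=(N-|\lambda|)/2$ and write $g(\lambda) = h(\lambda) - \tfrac{N(d-1)}{2}$, where $h(\lambda) := c(\lambda) + \tfrac{d-1}{2}\,|\lambda|$. The key computational step is the classical content identity $c(\lambda) = \sum_i \binom{\lambda_i}{2} - \sum_j \binom{\lambda'_j}{2}$, which turns $h$ into a sum over rows and columns with both parts manifestly nonnegative:
\begin{equation*}
    h(\lambda) = \sum_i \binom{\lambda_i}{2} + \sum_j \frac{\lambda'_j\,(d - \lambda'_j)}{2}.
\end{equation*}
Because any $\lambda \in \mathrm{Irr}(\mathbb{B}_N(d))$ obeys $\lambda'_1 + \lambda'_2 \le d$, every column height satisfies $\lambda'_j \le \lambda'_1 \le d$, so each summand is nonnegative and hence $h(\lambda) \ge 0$. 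This immediately settles the even case: $g(\lambda) \ge -\tfrac{N(d-1)}{2}$, giving $p^* \le \tfrac{1}{N-1}$ (with the bound approached at $\lambda = \emptyset$, which is admissible precisely because $N$ is even).

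For $N$ odd the size $|\lambda| = N - 2k$ is odd, so $\lambda \ne \emptyset$ and $\lambda'_1 \ge 1$; I would aim for the sharper bound $h(\lambda) \ge \tfrac{d-1}{2}$. If $1 \le \lambda'_1 \le d-1$, the first-column term $\tfrac{\lambda'_1(d-\lambda'_1)}{2}$ is already $\ge \tfrac{d-1}{2}$ (the integer function $t \mapsto t(d-t)$ on $[1,d-1]$ is minimized at the endpoints), while all remaining terms are $\ge 0$. The only escape is $\lambda'_1 = d$, which by $\lambda'_1 + \lambda'_2 \le d$ forces $\lambda'_2 = 0$, i.e.\ $\lambda = (1^d)$ with $|\lambda| = d$; this single column gives $h = 0$ and would certify only $p^* \le \tfrac1{N-1}$. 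Here the hypotheses enter: if $d$ is even then $|\lambda| = d$ is even, contradicting $|\lambda|$ odd; and if $d > N$ then $d = |\lambda| \le N < d$ is absurd. Thus under ``$d > N$ or $d$ even'' the partition $(1^d)$ is inadmissible, $\lambda'_1 \le d-1$ is forced, $h(\lambda) \ge \tfrac{d-1}{2}$, and $g(\lambda) \ge -\tfrac{(N-1)(d-1)}{2}$ delivers $p^* \le \tfrac1N$.

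The main obstacle, and the reason for the peculiar hypotheses, is exactly the excluded case of $N$ and $d$ both odd with $d \le N$: there $(1^d)$ is a genuine irreducible label with $h = 0$, the lower bound drops to $-\tfrac{N(d-1)}{2}$, and this dual feasible point certifies only $p^* \le \tfrac1{N-1}$ rather than $\tfrac1N$ (matching the distinct formula proved separately for that regime). Recognizing that the entire difficulty is localized in this single-column diagram, and that the column-sum rewriting of $h$ collapses the apparently global minimization over all Young diagrams into a termwise nonnegativity check, is the crux; the remainder is the routine sign bookkeeping of multiplying the bound by the negative prefactor.
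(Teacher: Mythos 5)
Your proof is correct, but it takes a genuinely different route from the paper's. Both arguments start from the same dual evaluation at $\tilde{x} = \frac{2}{N(N-1)(1-d)}$, but the paper then evaluates $g(\lambda) = c(\lambda) - k(d-1)$ on single-column partitions only, computes the minimum over that family, and asserts that the minimization ``can be restricted to only single-column partitions''; in other words, it proves an upper bound on $\min_\lambda g(\lambda)$ together with an unproved claim that single columns are the global minimizers. Since the prefactor $\tilde{x}$ is negative, the inequality $p^*(N,d) \le \tilde{x}\,\min_\lambda g(\lambda)$ actually requires the complementary direction: a lower bound on $g(\lambda)$ valid for \emph{every} admissible $\lambda \in \mathrm{Irr}(\mathbb{B}_N(d))$, and this is exactly what you establish, via the content identity $c(\lambda) = \sum_i \binom{\lambda_i}{2} - \sum_j \binom{\lambda'_j}{2}$ and the termwise-nonnegative rewriting
\begin{equation*}
    h(\lambda) = \sum_i \binom{\lambda_i}{2} + \sum_j \frac{\lambda'_j\,(d - \lambda'_j)}{2},
\end{equation*}
with the single exceptional diagram $(1^d)$ excluded by the hypotheses. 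Your $h$ is in fact the same quantity the paper only introduces later, as $h(\lambda) = \frac{1}{2}\sum_i \lambda'_i(d - \lambda'_i + 2(i-1))$, in the analysis of the case $N \geq d$ with both odd, where its nonnegativity is invoked in Lemma~\ref{chap5:lem:lambdaMuOrders}; so in effect you apply the harder case's machinery uniformly to both regimes. What each approach buys: yours closes the direction-of-inequality gap in the paper's reduction and explains structurally why the hypotheses are precisely what is needed (they are equivalent to the inadmissibility of $(1^d)$), while the paper's computation identifies the actual minimizers, hence the tightness of the bound at $\tilde{x}$, which your estimate only notes in passing (at $\lambda = \emptyset$ for $N$ even, and $\lambda = (1)$ for $N$ odd). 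A minor bonus of your argument is that the even case comes out unconditional: $h(\lambda) \geq 0$ alone gives $p^*(N,d) \leq \frac{1}{N-1}$ for all $N$ and $d$, consistent with Theorem~\ref{chap5:thm:smallDimension}.
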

\begin{proof}
    It is enough to prove that
    \begin{equation*}
        \min_{\lambda \in \mathrm{Irr}\1(\mathbb{B}_N(d)\1)} c(\lambda) - k(d - 1) \leq
        \begin{cases}
            \frac{(1 - d) (N - 1)}{2} & \text{ if $N$ is odd} \\[0.5em]
            \frac{(1 - d) N}{2} & \text{ if $N$ is even},
        \end{cases}
    \end{equation*}
    
    If $d > N$, the minimization can be restricted to only single-column partitions $\lambda \coloneqq \1( 1^{(N - 2k)} \1)$, for all $k \in \{ 0, \dots, \lfloor \frac{N}{2} \rfloor \}$, which is always possible when $d > N$. Let $|\lambda| \coloneqq N - 2k$, then
    \begin{align*}
        \min_{\lambda \in \mathrm{Irr}\1(\mathbb{B}_N(d)\1)} c(\lambda) - k(d - 1) &\leq \min_{k \in \{0, \dots, \lfloor \frac{N}{2} \rfloor \}} c \1( 1^{(N - 2k)} \1) - k(d - 1) \\
        &= \min_{k \in \{0, \dots, \lfloor \frac{N}{2} \rfloor \}} - \frac{|\lambda| \big{(}|\lambda| - 1 \big{)}}{2} - (d-1) \frac{N - |\lambda|}{2} \\
        &=
        \begin{cases}
            \frac{(1 - d) (N - 1)}{2} &\text{if $N$ is odd} \\[0.5em]
            \frac{(1 - d) N}{2} &\text{if $N$ is even}.
        \end{cases}
    \end{align*}

    Otherwise, if $d$ is even, let $k^* = \lceil \frac{N}{2} \rceil - \frac{d}{2}$. Then the single-column partition $\lambda \coloneqq \1( 1^{(N - 2(k + k^*))} \1)$ satisfies $\lambda^\prime_1 + \lambda^\prime_2 \leq d$ for all $k \in \{ 0, \dots, \lfloor \frac{N}{2} \rfloor - k^* \}$, and,
    \begin{align*}
        \min_{\lambda \in \mathrm{Irr}\1(\mathbb{B}_N(d)\1)} c(\lambda) - k(d - 1) &\leq \min_{k \in \{ 0, \dots, \lfloor \frac{N}{2} \rfloor - k^* \}} c \1( 1^{(N - 2k)} \1) - (k+k^*)(d - 1) \\
        &=
        \begin{cases}
            \frac{(1 - d) (N - 1)}{2} &\text{if $N$ is odd} \\[0.5em]
            \frac{(1 - d) N}{2} &\text{if $N$ is even}.
        \end{cases}
    \end{align*}
    
    The same result holds if $N$ is even.
\end{proof}

\begin{theorem} \label{chap5:thm:largeDimension}
    If $d > N$, or either $d$ or $N$ is even, then,
    \begin{equation*}
        p^*(N,d) =
        \begin{cases}
            \frac{1}{N} & \text{ if $N$ is odd} \\
            \frac{1}{N-1} & \text{ if $N$ is even}.
        \end{cases}
    \end{equation*}
\end{theorem}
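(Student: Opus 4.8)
The plan is to prove the theorem by a sandwich argument, squeezing $p^*(N,d)$ between matching lower and upper bounds that are already available in the surrounding material. The decisive structural observation is that the explicit construction of Section~\ref{chap5:sec:lowerBound} controls the \emph{primal} value $p(N,d)$, whereas Lemma~\ref{chap5:lem:upperBound} controls the \emph{dual} value $p^*(N,d)$, and the two are tied together by the weak-duality (min--max) inequality $p(N,d) \le p^*(N,d)$ recorded when the dual \textsc{sdp} was derived. Throughout, let $v$ denote the target value, that is $v = \frac1N$ when $N$ is odd and $v = \frac{1}{N-1}$ when $N$ is even.

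Concretely, I would proceed in three steps. First, recall from Section~\ref{chap5:sec:lowerBound} that the explicit uniform combination of perfect-matching states yields an honest primal feasible point with
\begin{equation*}
    p(N,d) \ge
    \begin{cases}
        \frac{1}{N} & N \text{ odd} \\
        \frac{1}{N-1} & N \text{ even},
    \end{cases}
\end{equation*}
i.e.\ $p(N,d) \ge v$, unconditionally in $d$. Second, invoke Lemma~\ref{chap5:lem:upperBound}, which, precisely under the hypothesis that $d > N$ or that at least one of $d, N$ is even, delivers the reverse estimate $p^*(N,d) \le v$. Third, combine these through weak duality into the chain
\begin{equation*}
    v \;\le\; p(N,d) \;\le\; p^*(N,d) \;\le\; v.
\end{equation*}
Since the two outer terms coincide, every inequality collapses to an equality, and in particular $p^*(N,d) = v$, which is the assertion of the theorem. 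As a byproduct the argument also certifies strong duality $p(N,d) = p^*(N,d)$ in this regime.

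The substantive mathematics has already been discharged inside Lemma~\ref{chap5:lem:upperBound}, where the choice $x = \frac{2}{N(N-1)(1-d)}$ reduces the dual \hyperref[chap5:eq:dualSDP]{Eq.~(\ref*{chap5:eq:dualSDP})} to minimizing $c(\lambda) - k(d-1)$ over $\mathrm{Irr}\1(\mathbb{B}_N(d)\1)$, with the optimal single-column diagrams $\lambda = (1^{N-2k})$ being accessible exactly when $d > N$ or a parity condition holds. Consequently the remaining work in the present theorem is essentially assembly and bookkeeping: I must check that the two endpoints of the sandwich agree under the shared odd/even case split on $N$, and verify that the weak-duality inequality is oriented correctly — the primal is a maximization and the dual a minimization, so $p(N,d) \le p^*(N,d)$, and it is this orientation that makes the sandwich close. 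The only point deserving care, and the main place an argument of this shape could fail, is ensuring the primal lower bound is non-vacuous for all admissible $d$; this holds because the matching states of Section~\ref{chap5:sec:lowerBound} are dimension-independent, so the left endpoint $v$ is genuinely attained as a feasible value and no additional estimate beyond the two cited ingredients is required.
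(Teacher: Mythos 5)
Your proposal is correct and is essentially the paper's own argument: the paper proves the theorem by citing the perfect-matching lower bound of Section~\ref{chap5:sec:lowerBound} and the dual upper bound of Lemma~\ref{chap5:lem:upperBound}, with the weak-duality inequality $p(N,d) \le p^*(N,d)$ (implicit in the paper, explicit in your write-up) closing the sandwich. Your version merely spells out the primal/dual bookkeeping that the paper leaves tacit.
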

\begin{proof}
    Using \hyperref[chap5:sec:lowerBound]{Section.~\ref*{chap5:sec:lowerBound}} and \hyperref[chap5:lem:upperBound]{Lemma.~\ref*{chap5:lem:upperBound}}, the dual optimization problem is lower and upper bounded by,
    \begin{equation*}
        \begin{cases}
            \frac{1}{N} & \text{ if $N$ is odd} \\
            \frac{1}{N-1} & \text{ if $N$ is even}.
        \end{cases}
    \end{equation*}
\end{proof}

\subsubsection{Case $N \geq d$ and both $d$ and $N$ are odd}

Let $\tilde{x} \coloneqq \frac{2}{N (N - 1)(1-d)}$, then the affine functions $f_{\lambda,\mu}$ of \hyperref[chap5:eq:dualSDP]{Eq.~(\ref*{chap5:eq:dualSDP})} evaluated at the negative coordinate $\tilde{x}$ become,
\begin{align*}
     f_{\lambda,\mu}(\tilde{x}) &= \frac{1}{d-1} \3( \frac{2 d \cdot c(\mu)}{N (N - 1)} - 1 \3) + \tilde{x} \3( c(\lambda) + d \cdot c(\mu) - k(d-1) - \frac{N (N - 1)}{2} \3) \\
     &= \frac{1}{n-1} + \tilde{x} \cdot h(\lambda),
\end{align*}
where $h(\lambda)$ is defined by $h(\lambda) \coloneqq \frac{1}{2} \sum^{\lambda_1}_{i=0} \lambda^\prime_i (d - \lambda^\prime_i + 2 (i - 1))$. At this coordinate the affine functions $f_{\lambda,\mu}$ do not depend on $\mu$ anymore. Let $g$ be the function defined by,
\begin{equation*}
    g(\lambda) \coloneqq f_{\lambda,\mu} \1( \tilde{x} \1).
\end{equation*}
The offsets of the affine functions $f_{\lambda,\mu}$ do not depend on $\lambda$ either. Let $a$ be the function defined by,
\begin{equation*}
    a(\mu) \coloneqq \frac{1}{d-1} \3( \frac{2 d \cdot c(\mu)}{N (N - 1)} - 1 \3).
\end{equation*}

If $N \geq d$, $k \coloneqq \lfloor \frac{N-d}{2} / d \rfloor$ and $m \coloneqq \frac{N-d}{2} \bmod d$, let the two partitions $\lambda_1, \lambda_2$ in $\mathrm{Irr}\1(\mathbb{B}_N(d)\1)$ and the three partitions $\mu_1, \mu_2, \mu_3$ in $\mathrm{Irr}(\mathfrak{S}_N)$ be defined by\footnote{In the definition above, $\mu_3$ is given using the column notation $\mu^\prime_3$. Using the row notation it becomes $\mu_3 = \1( (2k+3)^m, (2k+1)^{d-m} \1)$: $m$ rows of size $(2k+3)$ and $d-m$ rows of size $(2k+1)$.},
\begin{equation} \label{chap5:eq:5partitions}
    \begin{aligned}
        \lambda_1 &\coloneqq \1( 1^{d} \1) & \mu_1 &\coloneqq \1( N \1) \\
        \lambda_2 &\coloneqq \1( 1 \1) & \mu_2 &\coloneqq \1( N-d+1, 1^{d-1} \1) \\
        & & \mu^\prime_3 &\coloneqq \1( d^{2k+1}, m^2 \1).
    \end{aligned}
\end{equation}

\begin{lemma} \label{chap5:lem:twoRegimes}
    If $N \geq d$ and both $d$ and $N$ are odd, then $\lambda_1, \mu_2$ from \hyperref[chap5:eq:5partitions]{Eq.~(\ref*{chap5:eq:5partitions})} satisfy
    \begin{equation*}
        g(\lambda_1) - a(\mu_2) = \frac{2 d + 2 - N}{N - 1}.
    \end{equation*}
    In particular $g(\lambda_1) < a(\mu_2)$ if $N \geq 2d + 3$, and $g(\lambda_1) > a(\mu_2)$ if $N \leq 2d + 1$.
\end{lemma}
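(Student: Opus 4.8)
The plan is to evaluate $g(\lambda_1)$ and $a(\mu_2)$ separately at the fixed negative coordinate $\tilde{x}$ and then subtract, the whole difference collapsing after a short factorization. Throughout I use that both $N$ and $d$ are odd, so that $k \coloneqq \tfrac{N-d}{2}$ is a genuine integer and $\lambda_1 = (1^d) \in \mathrm{Irr}(\mathbb{B}_N(d))$ really occurs (it sits at $|\lambda_1| = N - 2k = d$), together with the columnwise computation of the content $c(\cdot)$ from the conjugate partition.

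First I would compute $g(\lambda_1)$. Since $\lambda_1 = (1^d)$ is a single column, its conjugate is the single row $(d)$, so $\lambda'_1 = d$ and $\lambda'_i = 0$ for $i \geq 2$, while the largest part of $\lambda_1$ equals $1$. Hence in $h(\lambda) = \tfrac12 \sum_{i} \lambda'_i\big(d - \lambda'_i + 2(i-1)\big)$ only the $i = 1$ term survives, and it vanishes because $d - \lambda'_1 = d - d = 0$ and $2(i-1) = 0$ there; thus $h(\lambda_1) = 0$ and $g(\lambda_1) = \tfrac{1}{N-1}$. As a consistency check one may instead exploit that at $x = \tilde{x}$ the $\mu$–dependence of $f_{\lambda,\mu}$ cancels, giving $g(\lambda) = \tilde{x}\big(c(\lambda) - k(d-1)\big)$; with $c\big((1^d)\big) = -\tfrac{d(d-1)}{2}$ and $k = \tfrac{N-d}{2}$ this again returns $\tfrac{1}{N-1}$.

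Next I would compute $a(\mu_2)$ for the hook $\mu_2 = (N-d+1,\,1^{d-1})$. Reading contents off the first row and the first column gives $c(\mu_2) = \tfrac{(N-d)(N-d+1)}{2} - \tfrac{(d-1)d}{2}$, which I would substitute into $a(\mu) = \tfrac{1}{d-1}\big(\tfrac{2d\,c(\mu)}{N(N-1)} - 1\big)$.

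Finally, I would form $g(\lambda_1) - a(\mu_2)$ over the common denominator $(d-1)N(N-1)$. The numerator is $N(N+d-2) - 2d\,c(\mu_2)$; the key simplification is $2d\,c(\mu_2) = dN(N-2d+1)$, after which the numerator factors as $N\big[(N+d-2) - d(N-2d+1)\big] = N(d-1)(2d+2-N)$. Cancelling $N(d-1)$ yields exactly $\tfrac{2d+2-N}{N-1}$. For the sign statement I would note that $N-1 > 0$ and, since $N$ is odd while $2d+2$ is even, $N \neq 2d+2$; hence the sign of the difference is that of $2d+2-N$, which is negative precisely when $N \geq 2d+3$ and positive precisely when $N \leq 2d+1$. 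The only step needing genuine care is the factorization of the numerator (the rewriting $N(1-d) + 2(d^2-1) = (d-1)(2d+2-N)$) and fixing the convention that the $i=0$ term of $h$ contributes nothing; everything else is bookkeeping.
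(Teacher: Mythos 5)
Your proposal is correct and follows essentially the same route as the paper's proof: compute $c(\mu_2) = \tfrac{(N-d)(N-d+1)}{2} - \tfrac{d(d-1)}{2}$, observe $h(\lambda_1)=0$ so that $g(\lambda_1)=\tfrac{1}{N-1}$, and then simplify $g(\lambda_1)-a(\mu_2)$ over a common denominator to obtain $\tfrac{2d+2-N}{N-1}$. Your extra touches (the consistency check via $g(\lambda)=\tilde{x}\,(c(\lambda)-k(d-1))$ and the parity remark that $N \neq 2d+2$) are fine but not needed beyond what the paper does.
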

\begin{proof}
    The content of $\mu_2$ is $\frac{(N-d+1)(N-d)}{2} - \frac{d(d-1)}{2}$. Also $h(\lambda_1) = 0$, so $g(\lambda_1) = \frac{1}{N-1}$. Then
    \begin{align*}
        g(\lambda_1) - a(\mu_2) &= \frac{1}{N-1} - \frac{1}{d-1} \3( \frac{2d \cdot c(\mu_2)}{n (n-1)} - 1 \3) \\
        &=-d \frac{(N-d+1)(N-d)-d(d-1)}{N(N-1)(d-1)} + \frac{1}{d-1} + \frac{1}{N-1} \\
        &=\frac{2 d + 2 - N}{N - 1}.
    \end{align*}
\end{proof}

\begin{lemma} \label{chap5:lem:lambdaMuOrders}
    If $N \geq d$ and both $d$ and $N$ are odd, then the partitions from \hyperref[chap5:eq:5partitions]{Eq.~(\ref*{chap5:eq:5partitions})} satisfy $(\lambda_i, \mu_j) \in \Gamma$, and the relations,
    \begin{equation*}
        g(\lambda) < g(\lambda_2) < g(\lambda_1) \quad \text{ and } \quad a(\mu) < a(\mu_1),
    \end{equation*}
    for all $\lambda \neq \lambda_1,\lambda_2$ in $\mathrm{Irr}\1(\mathbb{B}_N(d)\1)$ and all $\mu \neq \mu_1$ in $\mathrm{Irr}(\mathfrak{S}_N)$. Moreover for all $\1( \lambda_1, \mu \1) \in \Omega$,
    \begin{equation*}
        a(\mu) \leq a(\mu_2).
    \end{equation*}
\end{lemma}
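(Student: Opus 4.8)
The plan is to dispatch the four assertions of the lemma in turn: the memberships $(\lambda_i,\mu_j)\in\Gamma$, the ordering of the values $g$, the ordering of the values $a$, and finally the ``moreover'' inequality, since each relies on a different computation. For the memberships I would simply evaluate $r(\mu)$, the number of odd-length rows, for each $\mu_j$ and match it against the column height $m$ of the relevant $\lambda_i$ in the definition of $\Gamma$. Because $N$ and $d$ are both odd, $N-d$ is even and $N-d+1$ is odd, so every row of $\mu_2=(N-d+1,1^{d-1})$ is odd, giving $r(\mu_2)=d$ and pairing it with $\lambda_1=(1^d)$; likewise all rows of $\mu_3$ have odd length $2k+1$ or $2k+3$, so $r(\mu_3)=d$ pairs with $\lambda_1$, while $\mu_1=(N)$ has $r(\mu_1)=1$ and pairs with $\lambda_2=(1)$. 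This part is routine bookkeeping.

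For the ordering of $g$, recall $g(\lambda)=\frac{1}{N-1}+\tilde x\,h(\lambda)$ with $\tilde x<0$, so the ordering of $g$ is the reverse of that of $h(\lambda)=\frac12\sum_{i\ge1}\lambda'_i\,(d-\lambda'_i+2(i-1))$. Every summand is nonnegative since $0\le\lambda'_i\le\lambda'_1\le d$, so $h\ge0$, and $h(\lambda)=0$ forces $\lambda'_i=0$ for $i\ge2$ and $\lambda'_1\in\{0,d\}$. The empty partition is excluded because $|\lambda|=N-2k$ is odd and nonzero, leaving $\lambda_1=(1^d)$ as the unique minimiser. For the strict second place I would split into cases: a single-column $\lambda=(1^j)$ with $j$ odd and $j\ne d$ satisfies $h=\tfrac12 j(d-j)\ge\tfrac{d-1}{2}=h(\lambda_2)$ with equality only at $j=1$; and any $\lambda$ with at least two columns, using $\lambda'_1+\lambda'_2\le d$ to bound the first two summands, gives $h(\lambda)\ge d>\tfrac{d-1}{2}$. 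Hence $h(\lambda)>h(\lambda_2)>h(\lambda_1)$ for all $\lambda\ne\lambda_1,\lambda_2$, equivalently $g(\lambda)<g(\lambda_2)<g(\lambda_1)$.

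The ordering of $a$ is immediate from the fact that $a(\mu)=\frac{1}{d-1}\big(\frac{2d\,c(\mu)}{N(N-1)}-1\big)$ is strictly increasing in the content $c(\mu)$, which among all $\mu\vdash N$ with $\mu'_1\le d$ is maximised uniquely by the single row $\mu_1=(N)$; this yields $a(\mu)<a(\mu_1)$ for $\mu\ne\mu_1$. For the ``moreover'' bound I would first invoke Okada's implication that $((1^d),\mu)\in\Omega$ forces $((1^d),\mu)\in\Gamma$, i.e. $r(\mu)=d$, which combined with $\mu'_1\le d$ means $\mu$ has exactly $d$ rows, all odd. Substituting $\mu_i=2\nu_i+1$ with $\nu_i\ge0$ decreasing and $\sum_i\nu_i=\frac{N-d}{2}=:M$, a short computation gives $c(\mu)=\sum_{i=1}^d\big(2\nu_i^2+(3-2i)\nu_i\big)-\tfrac{d(d-1)}2$. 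Both $\sum_i 2\nu_i^2\le 2M^2$ (extremality of a convex sum, equality iff all mass sits in one $\nu_i$) and $\sum_i(3-2i)\nu_i\le M$ (the coefficient is largest at $i=1$) are maximised simultaneously by $\nu_1=M$ and the rest zero, namely by $\mu_2$; hence $c(\mu)\le c(\mu_2)$ and $a(\mu)\le a(\mu_2)$.

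The main obstacle I anticipate is this last step: the bound $a(\mu)\le a(\mu_2)$ is only true once the competitors are confined to the odd-row-count family, so the nontrivial structural input $\Omega\subseteq\Gamma$ is essential and must be applied before any content estimate, and the constrained content-maximisation must then be handled through the odd-part substitution rather than by naive box-moving, which can violate the decreasing condition. The uniform second-place lower bound for $h$ over all of $\mathrm{Irr}(\mathbb B_N(d))$ is the other place demanding care, since it requires controlling multi-column diagrams as well as the single-column family.
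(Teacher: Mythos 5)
Your proof is correct, and its skeleton matches the paper's: $\Gamma$-membership by checking row parities, the $g$-ordering from nonnegativity of $h$ together with $\tilde{x}<0$, the $a$-ordering from the fact that the one-row diagram uniquely maximizes the content, and the final bound by first invoking Okada's implication that $(\lambda_1,\mu)\in\Omega$ forces $r(\mu)=d$. The two nontrivial steps are, however, executed differently, and in both cases your version is tighter. For the strict second place of $\lambda_2$ in the $g$-ordering, the paper only remarks that the first term of $h$ is minimized at the columns $(1)$ and $(1^d)$ and that all terms are nonnegative; this leaves implicit both the case $\lambda'_1=d-1$ (whose first term also equals $\tfrac{d-1}{2}$, and which must be excluded by the parity of $N$, since $d-1$ is even) and the multi-column case, which you dispose of explicitly via $\lambda'_1+\lambda'_2\le d$ giving $h(\lambda)\ge d$. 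For the \emph{moreover} inequality, the paper argues by contradiction through the claim that for two diagrams with the same number of boxes $c(\lambda)\le c(\mu)$ implies $\lambda_1\le\mu_1$; as literally stated this is false (the diagrams $(4,1,1)$ and $(3,3)$ have equal content but different first rows), although the argument can be repaired because it is only applied inside the family of partitions with exactly $d$ odd rows, where $\mu_2$ is the unique content maximizer. Your direct maximization — substituting $\mu_i=2\nu_i+1$ and observing that $\sum_i 2\nu_i^2\le 2M^2$ and $\sum_i(3-2i)\nu_i\le M$ are simultaneously saturated at $\nu_1=M$, which is precisely $\mu_2$ — proves exactly that maximality statement and so bypasses the dubious monotonicity claim altogether; it is the more robust route, at the cost of a slightly longer computation.
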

\begin{proof}
    By definition of $\Gamma$, all $(\lambda_i, \mu_j)$ are in $\Gamma$.

    Let $\lambda$ in $\mathrm{Irr}\1(\mathbb{B}_N(d)\1)$ then
    \begin{equation*}
        g(\lambda) = \frac{1}{N-1} + \tilde{x} \cdot h(\lambda).
    \end{equation*}
    But since $\lambda^\prime_1 \leq d$ holds for all $\lambda$ in $\mathrm{Irr}\1(\mathbb{B}_N(d)\1)$, then $h(\lambda) \geq 0$. In particular, $h(\lambda_2) = \frac{d-1}{2}$, and $h(\lambda) = 0$ iff $\lambda = \lambda_1$. Assume there exists $\lambda$ in $\mathrm{Irr}\1(\mathbb{B}_N(d)\1)$ such that
    \begin{equation*}
        g(\lambda_2) \leq g(\lambda) \leq g(\lambda_1),
    \end{equation*}
    then necessarily the first term of $h(\lambda)$ is either $h(\lambda_1)$ or $h(\lambda_2)$, since it minimized for the columns $(1)$ and $(1^d)$. But since all the terms in $h(\lambda)$ are positive, then either $g(\lambda) = g(\lambda_1)$ or $g(\lambda) = g(\lambda_2)$.

    Because $\mu_1$ is the $N$-box Young diagram that maximizes the content function, then
    \begin{equation*}
        a(\mu) < a(\mu_1),
    \end{equation*}
    for all $\mu \neq \mu_1$ in $\mathrm{Irr}(\mathfrak{S}_N)$.
    
    Assume there exists $\mu$ such that $\1( \lambda_1, \mu \1) \in \Omega$ and,
    \begin{equation*}
        a(\mu_2) \leq a(\mu) \leq a(\mu_1).
    \end{equation*}
    Since $\1( (1^d), \mu \1) \in \Omega$ implies $\1( (1^d), \mu \1) \in \Gamma$, then $\1( \lambda_1, \mu \1) \in \Gamma$, and by definition $r(\mu) = d$. Thus necessarily $c(\mu_2) \leq c(\mu)$, which implies that the first row of $\mu_2$ is of size at most $n-d+1$. But the content of a Young diagram is a non-increasing function of the first row's size, i.e. given two Young diagrams $\lambda$ and $\mu$ with the same number of boxes, if $c(\lambda) \leq c(\mu)$ then $\lambda_1 \leq \mu_1$. Then $\mu_2$ and $\mu$ share the same first row, and $\mu_2 = \mu$.
\end{proof}

\begin{theorem} \label{chap5:thm:smallDimension}
    If $N \geq d$ and both $d$ and $N$ are odd, then,
    \begin{equation*}
        p^*(N,d) = \min \3\{ \frac{2 d + 1}{2 d N + 1}, \frac{1}{N - 1} \3\}.
    \end{equation*}
\end{theorem}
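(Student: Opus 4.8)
The plan is to read the dual program \hyperref[chap5:eq:dualSDP]{Eq.~(\ref*{chap5:eq:dualSDP})}, $p^*(N,d) = \min_x \max_{(\lambda,\mu)\in\Omega} f_{\lambda,\mu}(x)$, as the minimization of the upper envelope of a finite family of affine functions. Such an envelope $\Phi(x) := \max_{(\lambda,\mu)\in\Omega} f_{\lambda,\mu}(x)$ is convex and piecewise linear, so its minimum is attained at a point $x^\star$ where the functions achieving the maximum include slopes of both signs (equivalently, $0$ lies in the subdifferential), and it is certified from below by $p^* \geq \min_x \max(f_A,f_B)$ for any two members $f_A,f_B$ of the family. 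I would exploit the two ingredients already assembled: evaluation at the distinguished point $\tilde x := \frac{2}{N(N-1)(1-d)}$, where $f_{\lambda,\mu}(\tilde x)=g(\lambda)$ becomes independent of $\mu$, together with the orderings of \hyperref[chap5:lem:lambdaMuOrders]{Lemma~\ref*{chap5:lem:lambdaMuOrders}} and the sign dichotomy of \hyperref[chap5:lem:twoRegimes]{Lemma~\ref*{chap5:lem:twoRegimes}}. Write $s_{\lambda,\mu}$ for the slope of $f_{\lambda,\mu}$.

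First I would record the exact slope/intercept data that drive everything. From \hyperref[chap5:lem:twoRegimes]{Lemma~\ref*{chap5:lem:twoRegimes}} and $g(\lambda_1)=\tfrac{1}{N-1}$ one gets $a(\mu_2)=\frac{N-2d-1}{N-1}$ and, using $\tilde x\, s_{\lambda_1,\mu_2}=g(\lambda_1)-a(\mu_2)$, the steepest active slope $s_{\lambda_1,\mu_2}=\frac{N(1-d)(2d+2-N)}{2}$; a direct content computation gives $a(\mu_1)=1$ and $s_{\lambda_2,\mu_1}=\frac{(d-1)(N-1)^2}{2}$. The sign of $s_{\lambda_1,\mu_2}$ is governed precisely by the position of $N$ relative to $2d+2$, which for odd $N,d$ (where $N=2d+2$ is impossible) splits into the regimes $N\geq 2d+3$ and $N\leq 2d+1$, matching the two branches of the claimed formula.

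In the regime $N\geq 2d+3$ I claim $\tilde x$ is the global minimizer and $p^*=\tfrac{1}{N-1}$. The upper bound is immediate: by \hyperref[chap5:lem:lambdaMuOrders]{Lemma~\ref*{chap5:lem:lambdaMuOrders}}, $\Phi(\tilde x)=\max_\lambda g(\lambda)=g(\lambda_1)=\tfrac{1}{N-1}$. For the matching lower bound I would use $p^*\geq \min_x\max(f_{\lambda_1,\mu_2},f_{\lambda_1,\mu_3})$: both functions equal $g(\lambda_1)$ at $\tilde x$, while here $s_{\lambda_1,\mu_2}>0$ and $s_{\lambda_1,\mu_3}<0$ (the latter because $\mu_3$ is the content-minimizing partner of $\lambda_1$, so $s_{\lambda_1,\mu_3}$ is the smallest active slope), whence this restricted envelope is minimized at $\tilde x$ with value $\tfrac{1}{N-1}$. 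Since $N\geq 2d+3$ forces $\tfrac{1}{N-1}\leq\tfrac{2d+1}{2dN+1}$, this is the asserted minimum.

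In the regime $N\leq 2d+1$ one has $s_{\lambda_1,\mu_2}<0$, so every function active at $\tilde x$ descends and the minimizer lies strictly to the right of $\tilde x$; the falling branch of the envelope there is $f_{\lambda_1,\mu_2}$ and the competing rising branch is $f_{\lambda_2,\mu_1}$. Solving $f_{\lambda_1,\mu_2}(x)=f_{\lambda_2,\mu_1}(x)$ with the data above gives $x^\star=\frac{4d}{(N-1)(1-d)(2dN+1)}$ and, after the simplification $N(2d+2-N)+(N-1)^2=2dN+1$, the crossing value $\tfrac{2d+1}{2dN+1}$; the lower bound $p^*\geq\tfrac{2d+1}{2dN+1}$ then follows since $\min_x\max(f_{\lambda_1,\mu_2},f_{\lambda_2,\mu_1})$ equals this crossing value. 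The hard part will be the matching upper bound, namely verifying $\Phi(x^\star)=\tfrac{2d+1}{2dN+1}$, i.e.\ that \emph{every} $f_{\lambda,\mu}$ satisfies $a(\mu)+x^\star s_{\lambda,\mu}\leq\tfrac{2d+1}{2dN+1}$ at the single abscissa $x^\star$. This uniform control over all of $\Omega$ is where the content orderings of \hyperref[chap5:lem:lambdaMuOrders]{Lemma~\ref*{chap5:lem:lambdaMuOrders}} — the maximality of $c(\mu_1)$ and of $c(\mu_2)$ among the partners of $\lambda_1$, and the chain $g(\lambda)<g(\lambda_2)<g(\lambda_1)$ — must be combined with the sign of $x^\star$ to dominate $a(\mu)+x^\star s_{\lambda,\mu}$; since $N\leq 2d+1$ gives $\tfrac{2d+1}{2dN+1}\leq\tfrac{1}{N-1}$, this closes the identification of the minimum in both regimes.
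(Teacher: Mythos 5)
Your strategy coincides with the paper's own proof: the same reading of \hyperref[chap5:eq:dualSDP]{Eq.~(\ref*{chap5:eq:dualSDP})} as minimizing the upper envelope of the affine family, the same five distinguished partitions, the same regime split at $N=2d+2$, and the same two certificates (evaluation at $\tilde{x}$ in the first regime, the crossing of $f_{\lambda_1,\mu_2}$ with $f_{\lambda_2,\mu_1}$ in the second). One point in your favour: your crossing abscissa $x^\star=\frac{4d}{(N-1)(1-d)(2dN+1)}<0$ is the correct one; the formula $x^*=\frac{4d}{(d-1)(N-1)(2dN+1)}$ printed in the paper's proof has the wrong sign, as its own \hyperref[chap5:fig:Nleq2dPlus1]{Figure~\ref*{chap5:fig:Nleq2dPlus1}} (intersection at $x=-3/62$ for $N=5$, $d=3$) confirms.

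There is, however, one step that would fail as written. In the regime $N\geq 2d+3$ your lower bound needs $s_{\lambda_1,\mu_3}<0$, and the justification you give --- that $\mu_3$ is the content-minimizing partner of $\lambda_1$, hence carries the smallest active slope --- does not deliver it: the smallest slope in a family need not be negative, and the content-minimality of $\mu_3$ is itself nowhere established. Since $s_{\lambda_1,\mu}=\frac{g(\lambda_1)-a(\mu)}{\tilde{x}}$ with $\tilde{x}<0$, what you actually need is $a(\mu_3)<g(\lambda_1)$, and this is exactly what the explicit computation in the paper's proof provides: from $c(\mu_3)=\frac{d(2k+1)(2k+1-d)}{2}+m(4k+4-m)$ one gets $g(\lambda_1)-a(\mu_3)\geq\frac{(d+2)(d-1)}{2N(N-1)}>0$ for all $N\geq d$ with both odd. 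Without this (or an equivalent) computation your first regime does not close; with it, your two-function bound $\min_x\max(f_{\lambda_1,\mu_2},f_{\lambda_1,\mu_3})=g(\lambda_1)=\frac{1}{N-1}$ is correct.

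The step you flag as the hard part --- the uniform bound $f_{\lambda,\mu}(x^\star)\leq\frac{2d+1}{2dN+1}$ over all of $\Omega$ --- does follow from the ingredients you list, and there is a one-line way to organize it. Since $\tilde{x}<x^\star<0$, set $t\coloneqq x^\star/\tilde{x}=\frac{2dN}{2dN+1}\in(0,1)$; by affinity every member of the family satisfies $f_{\lambda,\mu}(x^\star)=t\,g(\lambda)+(1-t)\,a(\mu)$, a convex combination of its values at $\tilde{x}$ and at $0$. For $\lambda=\lambda_1$, the bound $a(\mu)\leq a(\mu_2)$ (the final assertion of \hyperref[chap5:lem:lambdaMuOrders]{Lemma~\ref*{chap5:lem:lambdaMuOrders}}) gives $f_{\lambda_1,\mu}(x^\star)\leq f_{\lambda_1,\mu_2}(x^\star)$; for $\lambda\neq\lambda_1$, the bounds $g(\lambda)\leq g(\lambda_2)$ and $a(\mu)\leq a(\mu_1)$ give $f_{\lambda,\mu}(x^\star)\leq f_{\lambda_2,\mu_1}(x^\star)$. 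Both right-hand sides equal the crossing value $\frac{2d+1}{2dN+1}$, so together with your two-function lower bound the second regime closes and the theorem follows.
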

\begin{proof}
    Let $\lambda_1, \lambda_2$ and $\mu_1, \mu_2, \mu_3$ be the partitions from \hyperref[chap5:eq:5partitions]{Eq.~(\ref*{chap5:eq:5partitions})}. Two cases are considered in the proof, $p^*(N,d) = g(\lambda_1)$ when $N \geq 2d + 3$, and $p^*(N,d)$ lies at intersection of the affine functions $f_{\lambda_2, \mu_1}$ and $f_{\lambda_1, \mu_2}$, when $N \leq 2d + 1$ (see \hyperref[chap5:fig:twoRegimes]{Figure~\ref*{chap5:fig:twoRegimes}}).
    
    Since 
    \begin{align*}
        c(\mu_3) &= \sum_{i=1}^{2k+1} \3( - \frac{d(d-1)}{2}+(i-1)d \3) + \sum_{i=2k+2}^{2k+3} \3( - \frac{m(m-1)}{2}+(i-1)m \3) \\
        &= \frac{d(2k+1)(2k+1-d)}{2}+m(4k+4-m),
    \end{align*} 
    and $N-d=2kd+2m$ with $m \in \{ 0, \dots, d-1 \}$, then
    \begin{align*}
        g(\lambda_1) - a(\mu_3) &= \frac{1}{N-1} - \frac{1}{d-1} \3( \frac{2d \cdot c(\mu_3)}{N (N - 1)} -1 \3) \\
        &= \frac{N(d+N-2) - 2 d \cdot c(\mu_3)}{N(N-1)(d-1)} \\
        &= \frac{(d + 2)(2m^2 - 2dm - N + dN)}{N(N-1)(d-1)} \\
        &\geq \frac{(d + 2)(-(d-1)(d+1)/2 + d(d-1))}{N(N-1)(d-1)} \\
        &= \frac{(d + 2)(d - 1)}{2N(N-1)} > 0,
    \end{align*}
    where in the first inequality $N \geq d$ was used, and that the minimum of $2m^2 - 2dm$ on the domain $m \in \{0,\dots,d-1\}$ is achieved for $m=\frac{d-1}{2}$. Geometrically this means that the point $(0,a(\mu_3))$ is always lower than $(\tilde{x},g(\lambda_1))$.
    
    Suppose that $N \geq 2d + 3$, then the relation
    \begin{align*}
        g(\lambda_1) < a(\mu_2),
    \end{align*}
    holds by \hyperref[chap5:lem:twoRegimes]{Lemma~\ref*{chap5:lem:twoRegimes}}. Therefore $p^*(N,d) \geq g(\lambda_1)$ since the optimal point should be above the intersection of the affine functions $f_{\lambda_1, \mu_2}$ and $f_{\lambda_1, \mu_3}$ that is, above $g(\lambda_1)$. But since $g(\lambda) \leq g(\lambda_1)$ for all $\lambda$ in $\mathrm{Irr}\1(\mathbb{B}_N(d)\1)$ by \hyperref[chap5:lem:lambdaMuOrders]{Lemma~\ref*{chap5:lem:lambdaMuOrders}}, it must be $p^*(N,d) = g(\lambda_1)$.
    
    Suppose that $N \leq 2d + 1$, then the relation
    \begin{align*}
        a(\mu) \leq a(\mu_2),
    \end{align*}
    holds for all $\1( \lambda_1, \mu \1) \in \Omega$, by \hyperref[chap5:lem:lambdaMuOrders]{Lemma~\ref*{chap5:lem:lambdaMuOrders}}. Then $p^*(N,d)$ lies above the affine function $f_{\lambda_1, \mu_2}$. But $g(\lambda) \leq g(\lambda_1)$ for all $\lambda$ in $\mathrm{Irr}\1(\mathbb{B}_N(d)\1)$, by \hyperref[chap5:lem:lambdaMuOrders]{Lemma~\ref*{chap5:lem:lambdaMuOrders}}, then $p^*(N,d)$ must lies on the affine function $f_{\lambda_1, \mu_2}$, at the intersection with another affine function $f_{\lambda, \mu}$ with $g(\lambda) \leq a(\mu)$. Among all such affine functions there are no functions with $\lambda = \lambda_1$ due to \hyperref[chap5:lem:lambdaMuOrders]{Lemma~\ref*{chap5:lem:lambdaMuOrders}}. Because $g(\lambda) \leq g(\lambda_2)$ for all $\lambda \neq \lambda_1$ in $\mathrm{Irr}\1(\mathbb{B}_N(d)\1)$ and $a(\mu) \leq a(\mu_1)$ for all $\mu \in \mathrm{Irr}(\mathfrak{S}_N)$, by \hyperref[chap5:lem:lambdaMuOrders]{Lemma~\ref*{chap5:lem:lambdaMuOrders}}, it must be that this function is $f_{\lambda_2, \mu_1}$. Therefore $p^*(N,d)$ lies at intersection of the affine functions $f_{\lambda_2, \mu_1}$ and $f_{\lambda_1, \mu_2}$. In order to find the intersection of $f_{\lambda_2, \mu_1}$ and $f_{\lambda_1, \mu_2}$, the equation $f_{\lambda_2, \mu_1}(x^*) = f_{\lambda_1, \mu_2}(x^*)$ must be solved, which gives $x^* = \frac{4d}{(d-1)(N-1)(2dN+1)}$ and $p^*(N,d) = \frac{2d+1}{2dN+1}$.

    In conclusion, when $N \geq 2d + 3$ then $p^*(N,d) = \frac{1}{N-1}$, and when $N \leq 2 d + 1$ then $p^*(N,d) = \frac{2d+1}{2dn+1}$, which is equivalent to the statement of the theorem.
\end{proof}

\subsection{Optimal value}

Using \hyperref[chap5:thm:largeDimension]{Theorem~\ref*{chap5:thm:largeDimension}} and \hyperref[chap5:thm:smallDimension]{Theorem~\ref*{chap5:thm:smallDimension}}, the dual optimization problem \hyperref[chap5:eq:dualSDP]{Eq.~(\ref*{chap5:eq:dualSDP})} has solution,
\begin{equation*}
    p^*(N,d) =
    \begin{cases}
        \frac{1}{N + N \bmod 2 - 1} &\text{ if $d > N$ or either $d$ or $N$ is even} \\
        \min \1\{ \frac{2 d + 1}{2 d N + 1}, \frac{1}{N - 1} \1\} &\text{ if $N \geq d$ and both $d$ and $N$ are odd}.
    \end{cases}
\end{equation*}
Hence $p^*(N,d)$ is decreasing with respect to $N$, but is not monotonic with respect to $d$.

Slater's condition for strong duality of \textsc{sdp} holds true in this case, from \hyperref[chap5:sec:lowerBound]{Section.~\ref*{chap5:sec:lowerBound}}, hence both optimal values of the optimization problem \hyperref[chap5:eq:SDP]{Eq.~(\ref*{chap5:eq:SDP})} and the dual optimization problem \hyperref[chap5:eq:dualSDP]{Eq.~(\ref*{chap5:eq:dualSDP})} are equal, i.e. $p(N,d) = p^*(N,d)$. The first values of $p(N,d)$ are summarized in the following table (in \tikz{\draw[fill = gray!30!white] (0,0) rectangle (1.5ex,1.5ex);} the values of $p(N,d)$ for which the isotropic states $\rho_e$ are separable):
\begin{center}
    \begin{NiceTabular}{c|cccccccc}[columns-width = 2em, cell-space-limits = 0.25em]
        \CodeBefore
            \rectanglecolor{gray!30!white}{2-3}{2-9}
            \rectanglecolor{gray!30!white}{3-5}{3-9}
            \rectanglecolor{gray!30!white}{4-5}{4-9}
            \rectanglecolor{gray!30!white}{5-7}{5-9}
            \rectanglecolor{gray!30!white}{6-7}{6-9}
            \rectanglecolor{gray!30!white}{7-9}{7-9}
            \rectanglecolor{gray!30!white}{7-9}{8-9}
        \Body
        \diagbox{$d$}{$N$} & 2 & 3 & 4 & 5 & 6 & 7 & 8 & 9 \\ \hline
        2 &	1 &	\nicefrac{1}{3}  & \nicefrac{1}{3} & \nicefrac{1}{5}   & \nicefrac{1}{5} & \nicefrac{1}{7}   & \nicefrac{1}{7} & \nicefrac{1}{9} \\
        3 &	1 &	\nicefrac{7}{19} & \nicefrac{1}{3} & \nicefrac{7}{31}  & \nicefrac{1}{5} & \nicefrac{7}{43}  & \nicefrac{1}{7}& \nicefrac{1}{8} \\
        4 &	1 &	\nicefrac{1}{3}  & \nicefrac{1}{3} & \nicefrac{1}{5}   & \nicefrac{1}{5} & \nicefrac{1}{7}   & \nicefrac{1}{7}& \nicefrac{1}{9} \\
        5 &	1 &	\nicefrac{1}{3}  & \nicefrac{1}{3} & \nicefrac{11}{51} & \nicefrac{1}{5} & \nicefrac{11}{71} & \nicefrac{1}{7}& \nicefrac{11}{91} \\
        6 &	1 &	\nicefrac{1}{3}  & \nicefrac{1}{3} & \nicefrac{1}{5}   & \nicefrac{1}{5} & \nicefrac{1}{7}   & \nicefrac{1}{7}& \nicefrac{1}{9} \\
        7 & 1 & \nicefrac{1}{3}  & \nicefrac{1}{3} & \nicefrac{1}{5}   & \nicefrac{1}{5} & \nicefrac{5}{33}  & \nicefrac{1}{7}& \nicefrac{15}{127} \\
        8 & 1 & \nicefrac{1}{3}  & \nicefrac{1}{3} & \nicefrac{1}{5}   & \nicefrac{1}{5} & \nicefrac{1}{7}   & \nicefrac{1}{7}& \nicefrac{1}{9} \\
        9 & 1 & \nicefrac{1}{3}  & \nicefrac{1}{3} & \nicefrac{1}{5}   & \nicefrac{1}{5} & \nicefrac{1}{7}   & \nicefrac{1}{7}& \nicefrac{19}{163} \\
    \end{NiceTabular}
\end{center}

In general, it is not known which quantum state $\rho$ gives the optimal value $p(N, d)$, but using the commutation relation \hyperref[chap5:eq:orthogonalCommutation]{Eq.~(\ref*{chap5:eq:orthogonalCommutation})}, it must be in the algebra,
\begin{equation*}
    \Span_{\mathbb{C}} \set{\psi(\sigma)}{\sigma \in \mathbb{B}_{N}}.
\end{equation*}

For example when $N=3$ and $d=3$ the optimal quantum state $\rho$ for the optimisation problem \hyperref[chap5:eq:SDP]{Eq.~(\ref*{chap5:eq:SDP})} is:
\begin{align*}
    \rho = \frac{1}{57} \3[
    &% [inline block 26: 17 envs, 38960 chars -> data_tex | \begin{tikzpicture}[baseline = (baseline.center),                         site/.style = {circle,...]

         \caption{$N \leq 2 d + 1$}
         \label{chap5:fig:Nleq2dPlus1}
    \end{subfigure}
    \captionsetup{margin=0.5cm}
    \caption
        [The optimal value of the dual optimization problem lies at an intersection of the set of affine functions $f_{\lambda, \mu}(x)$. When $N \geq 2 d + 3$ (Figure~\ref*{chap5:fig:Ngeq2dPlus3}, with $N = 9$ and $d = 3$), then $p^*(N,d) = g(\lambda_1)$. When $N \leq 2 d + 1$ (Figure~\ref*{chap5:fig:Nleq2dPlus1}, with $N = 5$ and $d = 3$), then $p^*(N,d)$ lies at the intersection of $f_{\lambda_2, \mu_1}$ and $f_{\lambda_1, \mu_2}$. In in red the $\max$ over all affine functions, in in gray the affine functions from the set $\Gamma$, and in in blue the affine functions from the set $\Omega \setminus \Gamma$.]
        {The optimal value of the dual optimization problem lies at the intersection of the set of affine functions $f_{\lambda, \mu}(x)$. When $N \geq 2 d + 3$ (Figure~\ref*{chap5:fig:Ngeq2dPlus3}, with $N = 9$ and $d = 3$), then $p^*(N,d) = g(\lambda_1)$. When $N \leq 2 d + 1$ (Figure~\ref*{chap5:fig:Nleq2dPlus1}, with $N = 5$ and $d = 3$), then $p^*(N,d)$ lies at the intersection of $f_{\lambda_2, \mu_1}$ and $f_{\lambda_1, \mu_2}$. In \tikz{\draw[fill = red!60!white] (0,0) rectangle (1.5ex,1.5ex);} the $\max$ over all affine functions, in \tikz{\draw[fill = gray!80!white] (0,0) rectangle (1.5ex,1.5ex);} the affine functions from the set $\Gamma$, and in \tikz{\draw[fill = blue!80!white] (0,0) rectangle (1.5ex,1.5ex);} the affine functions from the set $\Omega \setminus \Gamma$.} \label{chap5:fig:twoRegimes}
\end{figure}

\titleformat{\chapter}
    [display]
    {\centering\Huge\bfseries}
    {\vspace{0pt plus 0.25fill} \appendixname\ \thechapter}
    {15pt}
    {}
    [
    \vfill
    \flushleft\normalsize\bfseries
    {
        \startcontents[chapters]
        \printcontents[chapters]{}{1}[2]{\rule[0.25em]{47pt}{2pt} \: {\Large Appendix contents} \: \leaders\hbox{\rule[0.25em]{2pt}{2pt}\kern-1pt}\hfill\null}
    }
    \clearpage
    ]
 
 % \appendix
 \chapter{Elements of representation theory} \label{appA}

This Appendix provides a comprehensive overview of the fundamental principles of representation theory that are used throughout the present thesis. Its aim is to focus primarily on the investigation of representations of finite groups, with a specific emphasis on the symmetric group, and on those of infinite matrix groups, with a specific emphasis on the unitary group.

The sections devoted to finite groups provide a complete set of proofs, while more detailed treatments of the subject can be found in the books \cite{fulton2013representation,serre1977linear,sagan2013symmetric,james2006representation}. Regarding matrix groups, only the principal results are presented, with books \cite{goodman2009symmetry,hall2013lie,brocker2013representations,baker2003matrix,bump2004lie,adams1982lectures,simon1996representations} serving as references for the corresponding proofs. A thorough investigation of the representation theory with a particular emphasis on its application in the domain of physics, may be found in the books \cite{landsberg2012tensors,boerner1970representations,zhelobenko1973compact,sternberg1995group,tung1985group}.

An alternative approach to address the representations of the symmetric group, which has been proposed by A. Okounkov and A. Vershik, can be found in \cite{vershik2005new,ceccherini2010representation}.

\section{Finite groups}

\subsection{Representations of finite groups}

Let $G$ be a finite group, a \emph{representation} of $G$ is a pair $\emph{(\rho, V)}$, where $V$ is a complex vector space with dimension $d$ and $\rho: G \to \mathrm{GL}(V)$ is a group homomorphism, i.e. $\rho(g \cdot h) = \rho(g) \rho(h)$ holds for all $g$ and $h$ in $G$. Specifically, $\rho$ satisfies the condition $\rho(1_G) = I$, where $I$ is the identity matrix on $V$. As a corollary of this property, it follows that $\rho \1( g^{\shortminus 1} \1) = \rho(g)^{\shortminus 1}$ for every $g \in G$.

Consider two representations $(\rho, V)$ and $(\sigma, W)$ of $G$, an \emph{intertwining map} between these representations is a linear map $\phi: V \to W$ that satisfies $\phi \circ \rho(g) = \sigma(g) \circ \phi$ for all $g \in G$. If such a map is an isomorphism, then the representations are said to be \emph{equivalent}, and $\rho(g) = \phi^{\shortminus 1} \circ \sigma(g) \circ \phi$ for all $g \in G$. The set of all intertwining maps between $(\rho, V)$ and $(\sigma, W)$ is denoted $\emph{\mathrm{Hom}_G(V, W)}$, and forms a complex vector space.

If $(\rho_V, V)$ is a representation of $G$, then the \emph{dual representation} $(\rho_{V^*}, V^*)$ is defined for all $g \in G$ by
\begin{equation*}
    \rho_{V^*} (g) \coloneqq \rho_V {\1( g^{\shortminus 1} \1)}^\T,
\end{equation*}
where $\cdot^\T$ denotes the transpose.

The natural bilinear pairing $\scalar{\cdot}{\cdot}$ between complex vector spaces $V$ and $V^*$ is defined as $\scalar{f}{x} \coloneqq f(x)$ for all $x \in V$ and $f \in V^*$. Then, the dual representation $(\rho_{V^*}, V^*)$ preserves this natural bilinear pairing, specifically, $\scalar{\rho_{V^*} (g) f}{\rho_V (g) x} = \scalar{f}{x}$ holds for any $g \in G$ and for all $x \in V$ and $f \in V^*$.

\begin{example*}
    Consider a finite group $G$ of order $n$, and let $(\rho, V)$ be a representation of $G$. Let $g$ be an arbitrary element of $G$, and let $A \in \mathrm{GL}(V)$ be the image of $g$ under $\rho$, i.e. $\rho(g) = A$. It follows that
    \begin{equation*}
        A^n = I.
    \end{equation*}
    The characteristic polynomial $X^n - 1$ of $A$ factors into $n$ distinct eigenvalues, which are the $n$-th roots of unity. As a result, it can be deduced that $A$ is diagonalizable. Furthermore, if $G$ is an abelian group, all of the matrices $\rho(g)$ are simultaneously diagonalizable. In \hyperref[appA:sec:reducibility]{Section~\ref*{appA:sec:reducibility}}, the concept of block diagonalization will be explored in the context of non-abelian groups.
\end{example*}

Given a finite group $G$ and a representation $(\rho, V)$ of $G$, the complex vector space $V$ is said to carry the representation $\rho$ of G. For clarity, it is appropriate to succinctly refer to the complex vector space $V$ as a representation of $G$ when no ambiguity surrounding the mapping $\rho$ exists. For any $g \in G$ and $v \in V$, the action of $g$ on $v$ can be denoted by $\emph{g \cdot v}$, as a compact alternative to the expression $\rho(g)(v)$.

\begin{remark*}
    The representation of a finite group $G$ is a mapping of the group onto a set of matrices that operate on a complex vector space $V$. This mapping preserves the underlying structure of the group, such that the group law of $G$ is equivalent to matrix multiplication on $V$. As a result, the study of representations of $G$ can be approached using the mathematical framework of linear algebra, rather than through direct examination of the group itself.
\end{remark*}

\subsection{Reducibility} \label{appA:sec:reducibility}

Let $V$ be a representation of a finite group $G$, and let $W$ be a subspace of $V$. If $W$ is invariant under $G$, that is to say, for all $g \in G$ and $w \in W$, the element $g \cdot w$ belongs to $W$, then $W$ is called a \emph{subrepresentation}. If $V$ possesses exactly two distinct subrepresentations, namely the trivial zero subspace and $V$ itself, then the representation is said to be \emph{irreducible}, otherwise the representation is considered to be \emph{reducible}.

\begin{theorem}[Maschke] \label{appA:thm:maschke}
    Let $V$ be a representation of a finite group $G$, and suppose that $W$ is a subrepresentation of $V$. Then, there exists a complementary subrepresentation $W^\perp$ of $V$, such that $V$ is the direct sum of $W$ and $W^\perp$, i.e. $V = W \oplus W^\perp$.
\end{theorem}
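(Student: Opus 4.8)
The plan is to prove Maschke's theorem by constructing an explicit $G$-invariant complement $W^\perp$, using the standard averaging trick over the finite group $G$. The key idea is that although an arbitrary linear projection onto $W$ need not commute with the $G$-action, one can average such a projection over the group to obtain a $G$-equivariant projection, and the kernel of that equivariant projection will be the desired invariant complement. This is the finite-group analogue of the unitarization argument, and it crucially uses that $|G| < \infty$ so that we may divide by $|G|$.

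First I would choose any vector-space complement $U$ of $W$ in $V$, so that $V = W \oplus U$ as vector spaces (ignoring the group action), and let $\pi_0 : V \to V$ denote the associated linear projection onto $W$ along $U$. Thus $\pi_0$ satisfies $\pi_0(V) = W$, $\pi_0|_W = \id_W$, and $\pi_0^2 = \pi_0$, but $\pi_0$ need not be an intertwiner. Next I would define the averaged map
\begin{equation*}
    \pi \coloneqq \frac{1}{|G|} \sum_{g \in G} \rho(g) \circ \pi_0 \circ \rho(g)^{\shortminus 1}.
\end{equation*}
The heart of the proof is then to verify three properties of $\pi$: that its image still lies in $W$ (because $W$ is a subrepresentation, each summand maps $V$ into $W$), that $\pi$ restricts to the identity on $W$ (because for $w \in W$ we have $\rho(g)^{\shortminus 1} w \in W$, hence $\pi_0$ fixes it after the conjugation collapses), and that $\pi$ is $G$-equivariant, i.e. $\rho(h) \circ \pi = \pi \circ \rho(h)$ for all $h \in G$ (by reindexing the averaging sum $g \mapsto h g$). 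From the first two properties it follows that $\pi^2 = \pi$, so $\pi$ is a projection onto $W$.

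Having established that $\pi$ is a $G$-equivariant projection onto $W$, I would set $W^\perp \coloneqq \ker \pi$. The equivariance $\rho(h) \circ \pi = \pi \circ \rho(h)$ immediately gives that $\ker \pi$ is invariant under $G$: if $\pi(v) = 0$ then $\pi(\rho(h) v) = \rho(h) \pi(v) = 0$, so $\rho(h) v \in \ker \pi$; hence $W^\perp$ is a subrepresentation. Finally, since $\pi$ is an idempotent with image $W$, standard linear algebra gives the vector-space decomposition $V = \im \pi \oplus \ker \pi = W \oplus W^\perp$, and this is now a decomposition of representations. The main obstacle, such as it is, is conceptual rather than computational: one must recognize that naive orthogonality (or a naive complement) fails to be $G$-invariant in general, and that the averaging construction is exactly what repairs this; the verification of equivariance via the substitution $g \mapsto hg$ in the sum is the one step that deserves care, but it is routine once set up. Everything else is bookkeeping that relies only on $W$ being a subrepresentation and on the finiteness of $G$.
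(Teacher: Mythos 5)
Your proof is correct, but it takes a genuinely different route from the one in the paper. The paper averages a Hermitian inner product: starting from an arbitrary inner product $\scalar{\cdot}{\cdot}$ on $V$, it forms $\scalar{v}{w}_G \coloneqq \frac{1}{|G|}\sum_{g \in G}\scalar{g \cdot v}{g \cdot w}$, observes that this is $G$-invariant, and takes $W^\perp$ to be the orthogonal complement of $W$ with respect to $\scalar{\cdot}{\cdot}_G$; invariance of the inner product then makes $W^\perp$ a subrepresentation. You instead average a linear projection: conjugating an arbitrary projection $\pi_0$ onto $W$ by $\rho(g)$ and averaging yields a $G$-equivariant idempotent $\pi$ with image $W$, and $W^\perp \coloneqq \ker \pi$ is the invariant complement. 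Each approach has its advantages. The paper's version simultaneously proves that every representation of a finite group is unitarizable, a fact it exploits later (the inner product $\scalar{\cdot}{\cdot}_G$ reappears verbatim in its discussion of unitary representations and in the character-theory section), but it is tied to the complex (or real) setting where Hermitian inner products exist. Your projection-averaging argument never mentions an inner product and therefore works over any field whose characteristic does not divide $|G|$, which is the natural level of generality for Maschke's theorem; the only place finiteness (and invertibility of $|G|$) enters is the division by $|G|$, exactly as you flag. Your verifications — image in $W$, identity on $W$, equivariance by the substitution $g \mapsto hg$, idempotence, and invariance of the kernel — are all sound and complete.
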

\begin{proof}
    Let $\scalar{\cdot}{\cdot}$ be any Hermitian inner product on the complex vector space $V$. The $G$-invariant Hermitian inner product $\scalar{\cdot}{\cdot}_G$ on $V$ is defined by
    \begin{equation*}
        \scalar{v}{w}_G \coloneqq \frac{1}{|G|} \sum_{g \in G} \scalar{g \cdot v}{g \cdot w}.
    \end{equation*}
    Then $\scalar{g \cdot v}{g \cdot w}_G = \scalar{v}{w}_G$ holds for any $g \in G$.
    
    Let $W^\perp = \set{v \in V}{\scalar{v}{w}_G = 0 \text{ for all } w \in W}$ be the orthogonal complement of $W$ with respect to the inner product $\scalar{\cdot}{\cdot}_G$. It follows that $V = W \oplus W^\perp$. Furthermore, for any $g \in G$ and for all $v \in W^\perp$ and $w \in W$, the $G$-invariance of $\scalar{\cdot}{\cdot}_G$ implies
    \begin{equation*}
        \scalar{g \cdot v}{w}_G = \scalar{v}{g^{\shortminus 1} \cdot w}_G = 0,
    \end{equation*}
    as $g^{\shortminus 1} \cdot w \in W$. Therefore $g \cdot v \in W^\perp$ for all $v \in W^\perp$ and $g \in G$, and thus $W^\perp$ is a subrepresentation.
\end{proof}
In accordance with Maschke's \hyperref[appA:thm:maschke]{Theorem~\ref*{appA:thm:maschke}}, it can be established that if a finite group $G$ possesses a reducible representation, it must necessarily be decomposed into the direct sum of at least two subrepresentations. These subrepresentations may in turn be further decomposed into further direct sums, if they are themselves reducible. This process of decomposition may be repeated until the representation is fully decomposed into the direct sum of irreducible representations of $G$. This result holds true for any finite group and is referred to as the concept of \emph{complete reducibility} of finite groups.

\begin{remark*}
    Maschke's \hyperref[appA:thm:maschke]{Theorem~\ref*{appA:thm:maschke}} asserts that for any finite group $G$ and its representation $(\rho, V)$, the matrices $\rho(g)$ for all $g \in G$ are simultaneously block-diagonalizable. Given the decomposition of the underlying complex vector space $V$ into $V = W \oplus W^\perp$, where $W$ is a subrepresentation of $V$, it follows that for each $g \in G$, the matrix $\rho(g)$ block-diagonalizes as 
    \begin{equation*}
        \rho(g) =
        \begin{pmatrix}
            \rho_W(g) & 0 \\
            0 & \rho_{W^\perp}(g)
        \end{pmatrix}.
    \end{equation*}
    Here, $\rho_W(g)$ and $\rho_{W^\perp}(g)$ are the restrictions of $\rho(g)$ onto $W$ and $W^\perp$, respectively. This result highlights the crucial property of simultaneously block-diagonalizability of matrices in representation theory.
\end{remark*}

Maschke's \hyperref[appA:thm:maschke]{Theorem~\ref*{appA:thm:maschke}} plays a crucial role in the field of representation theory of finite groups. The theorem states that every representation of a finite group can be decomposed into a direct sum of irreducible representations. This implies that to comprehend any representation of a finite group, it is sufficient to have a complete understanding of its irreducible representations. This result leads to the fundamental question of determining the number of irreducible representations for a given finite group, which is addressed in  \hyperref[appA:sec:characterTheory]{Section~\ref*{appA:sec:characterTheory}}.

\begin{lemma}[Schur] \label{appA:lem:Schur}
    If $V$ and $W$ are two irreducible representations of a finite group $G$, and $\phi \in \mathrm{Hom}_G(V, W)$ is a nonzero intertwining map, then $\phi$ is an isomorphism.
\end{lemma}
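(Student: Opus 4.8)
The plan is to prove Schur's Lemma by exploiting the irreducibility of both $V$ and $W$ together with the basic fact that the kernel and image of an intertwining map are themselves subrepresentations. The key observation is that an intertwining map respects the $G$-action, so the subspaces one naturally builds from it ($\ker\phi \subseteq V$ and $\operatorname{im}\phi \subseteq W$) are automatically $G$-invariant. Irreducibility then forces each of these subspaces to be either trivial or the whole space, and combining the nonzero hypothesis on $\phi$ with these dichotomies pins down $\phi$ as a bijection.

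First I would verify that $\ker\phi$ is a subrepresentation of $V$. For any $g \in G$ and any $v \in \ker\phi$, the intertwining relation $\phi \circ \rho_V(g) = \rho_W(g) \circ \phi$ gives $\phi(g \cdot v) = g \cdot \phi(v) = g \cdot 0 = 0$, so $g \cdot v \in \ker\phi$; hence $\ker\phi$ is $G$-invariant. Since $V$ is irreducible, $\ker\phi$ is either $\{0\}$ or all of $V$. Because $\phi$ is assumed nonzero, $\ker\phi \neq V$, and therefore $\ker\phi = \{0\}$, i.e.\ $\phi$ is injective.

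Next I would verify symmetrically that $\operatorname{im}\phi$ is a subrepresentation of $W$. For $w = \phi(v) \in \operatorname{im}\phi$, the same intertwining relation yields $g \cdot w = g \cdot \phi(v) = \phi(g \cdot v) \in \operatorname{im}\phi$, so $\operatorname{im}\phi$ is $G$-invariant. Since $W$ is irreducible, $\operatorname{im}\phi$ is either $\{0\}$ or all of $W$. As $\phi$ is nonzero, $\operatorname{im}\phi \neq \{0\}$, so $\operatorname{im}\phi = W$, i.e.\ $\phi$ is surjective. Being both injective and surjective, $\phi$ is an isomorphism, which is exactly the claim.

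I do not anticipate a genuine obstacle here: the entire argument rests on the elementary lemma that kernel and image of an intertwining map are subrepresentations, and the rest is the irreducibility dichotomy applied twice. If anything, the only point requiring a moment's care is making sure both irreducibility hypotheses are actually used — injectivity consumes the irreducibility of $V$ and surjectivity consumes that of $W$ — and that the nonzero hypothesis on $\phi$ is invoked in both halves to exclude the degenerate cases $\ker\phi = V$ and $\operatorname{im}\phi = \{0\}$.
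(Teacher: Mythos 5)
Your proof is correct and follows exactly the same route as the paper's: you show $\ker\phi$ is a subrepresentation of $V$ and $\operatorname{im}\phi$ is a subrepresentation of $W$, then apply the irreducibility dichotomy together with the nonzero hypothesis to get injectivity and surjectivity respectively. No gaps; this matches the paper's argument step for step.
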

\begin{proof}
    Let $v \in \ker \phi$, then for all $g \in G$, since $\phi$ is an intertwining map it follows that
    \begin{equation*}
        g \cdot \phi(v) = \phi (g \cdot v) = 0.
    \end{equation*}
    Thus $\ker \phi$ is a subrepresentation of $V$, and since $V$ is irreducible and $\phi$ is nonzero, necessarily $\ker \phi = \{0\}$. Therefore $\phi$ is injective.
    
    Conversely, let $v \in V$, then for all $g \in G$, since $\phi$ is an intertwining map it follows that
    \begin{equation*}
        g \cdot \phi(v) = \phi (g \cdot v) \in \Ima \phi.
    \end{equation*}
    Thus $\Ima \phi$ is a subrepresentation of $W$, and since $W$ is irreducible and $\phi$ is nonzero, necessarily $\Ima \phi = W$. Hence $\phi$ is surjective.
\end{proof}

Schur's \hyperref[appA:lem:Schur]{Lemma~\ref*{appA:lem:Schur}} is a foundational principle in the field of representation theory. It has two particularly noteworthy implications for a finite group $G$.

Firstly, if $V$ is an irreducible representation of a $G$, and $\phi$ belongs to $\mathrm{Hom}_G(V)$, then $\phi$ is a homothety, meaning that $\phi$ is proportional to the identity matrix, with a scalar factor $\lambda \in \mathbb{C}$, i.e. \begin{equation*}
    \phi = \lambda \cdot I.
\end{equation*}

Secondly, if $V$ is a representation of $G$, there exists a unique decomposition of $V$ into a direct sum of non-equivalent irreducible representations, expressed as
\begin{equation*}
    V \simeq V^{\oplus n_1}_1 \oplus \cdots \oplus V^{\oplus n_k}_k,    
\end{equation*}
where $n_i$ is referred to as the \emph{multiplicity} of the irreducible representation $V_i$.

\begin{remark*}
    In the decomposition of complex vector space $V$ as $V \simeq V^{\oplus n_1}_1 \oplus \cdots \oplus V^{\oplus n_k}_k$, both the subspaces $V_i$ and the multiplicities $n_i$ are unique. However, it must be noted that the direct sum decomposition of each $V^{\oplus n_i}_i$ into $n_i$ copies of $V_i$ is not guaranteed to be unique in general.
\end{remark*}

For any finite group $G$, there is a unique, up to isomorphism, representation referred to as the \emph{trivial representation}, which is invariant under the action of all elements of $G$. This representation is characterized by the property that all elements of $G$ act as the identity. Furthermore, this representation is one-dimensional, and thus irreducible.

Let $V$ be a representation of $G$. The invariant subspace of $V$ with respect to $G$, denoted as $\emph{V_G}$, can be defined as follows:
\begin{equation*}
    V_G \coloneqq \set{v \in V}{g \cdot v = v, \forall g \in G}.
\end{equation*}
Then $V_G$ can be decomposed into a direct sum of irreducible representations. In particular, each summand of this direct sum is isomorphic to the trivial representation.
\begin{proposition} \label{appA:prop:projectorTrivialIsotypicComponent}
    Let $V$ be a representation of a finite group $G$. Consider the map $\phi: V \to V$ defined as follows: for any $v \in V$, 
    \begin{equation*}
        \phi(v) \coloneqq \frac{1}{|G|} \sum_{g \in G} g \cdot v.
    \end{equation*}
    Then $\phi$ is a projector onto $V_G$.
\end{proposition}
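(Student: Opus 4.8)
The plan is to show two things: that $\phi$ maps $V$ into the invariant subspace $V_G$, and that $\phi$ restricts to the identity on $V_G$. Together these give that $\phi$ is idempotent with image exactly $V_G$, i.e. a projector onto $V_G$. The key tool is the standard averaging trick, already used in the proof of Maschke's \hyperref[appA:thm:maschke]{Theorem~\ref*{appA:thm:maschke}} to build a $G$-invariant inner product: summing over the group and exploiting that left multiplication by a fixed $h \in G$ merely permutes the group elements.

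First I would verify that $\phi(v) \in V_G$ for every $v \in V$. Fix $h \in G$ and compute
\begin{equation*}
    h \cdot \phi(v) = \frac{1}{|G|} \sum_{g \in G} (h g) \cdot v = \frac{1}{|G|} \sum_{g^\prime \in G} g^\prime \cdot v = \phi(v),
\end{equation*}
where the middle equality uses the reindexing $g^\prime \coloneqq h g$, which runs over all of $G$ as $g$ does, since left multiplication by $h$ is a bijection of $G$. Hence $\phi(v)$ is fixed by every element of $G$, so $\phi(v) \in V_G$, and in particular $\Ima \phi \subseteq V_G$.

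Next I would check that $\phi$ acts as the identity on $V_G$. If $v \in V_G$, then $g \cdot v = v$ for all $g \in G$, so
\begin{equation*}
    \phi(v) = \frac{1}{|G|} \sum_{g \in G} g \cdot v = \frac{1}{|G|} \sum_{g \in G} v = \frac{|G|}{|G|} v = v.
\end{equation*}
This shows $V_G \subseteq \Ima \phi$, which combined with the previous step gives $\Ima \phi = V_G$. It also shows $\phi$ fixes its own image: for any $v \in V$, since $\phi(v) \in V_G$ we get $\phi(\phi(v)) = \phi(v)$, so $\phi^2 = \phi$. Being an idempotent linear map whose image is $V_G$, the map $\phi$ is by definition a projector onto $V_G$.

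There is no serious obstacle here; the entire argument rests on the reindexing bijection of the group, and both inclusions are immediate once that is in hand. The only point requiring a word of care is that $\phi$ is genuinely linear (clear, as it is an average of the linear maps $\rho(g)$) and that ``projector onto $V_G$'' is understood as an idempotent with image $V_G$ rather than an orthogonal projector — though one could note that with respect to the $G$-invariant inner product of Maschke's theorem, $\phi$ is in fact the orthogonal projection, since $V = V_G \oplus V_G^\perp$ as representations and $\phi$ kills $V_G^\perp$.
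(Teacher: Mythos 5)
Your proof is correct and follows essentially the same route as the paper: both establish $\Ima\phi \subseteq V_G$ via the reindexing bijection $g \mapsto hg$ and $V_G \subseteq \Ima\phi$ by observing $\phi$ fixes invariant vectors. The only cosmetic difference is that you deduce idempotence from these two facts, whereas the paper verifies $\phi \circ \phi = \phi$ by a direct double-sum computation; your derivation is, if anything, slightly cleaner.
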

\begin{proof}
    Let $v \in V_G$, which implies that $\phi(v) = v$ according to the definition of $V_G$. As a result, $V_G \subset \Ima \phi$. For any $v \in V$, the action of any element $g \in G$ on $\phi(v)$ results in $g \cdot \phi(v) = \phi(v)$, thus $\Ima \phi \subset V_G$. Furthermore, for all $v \in V$,
    \begin{align*}
        \phi \circ \phi (v) &= \frac{1}{|G|^2} \sum_{g \in G} \sum_{g \in G} (g \cdot h) \cdot v \\
        &= \frac{|G|}{|G|^2} \sum_{g \in G} g \cdot v \\
        &= \phi(v),
    \end{align*}
    which implies that $\phi \circ \phi = \phi$.
\end{proof}

A representation $V$ of a finite group $G$ is considered to be \emph{unitary} if there exists a Hermitian inner product $\scalar{\cdot}{\cdot}$ defined on $V$ such that the property of unitarity is satisfied, that is, for all $g \in G$, it holds that $\scalar{g \cdot u}{g \cdot v} = \scalar{u}{v}$. Given any Hermitian inner product $\scalar{\cdot}{\cdot}$ in $V$, a new Hermitian inner product $\emph{\scalar{\cdot}{\cdot}_G}$ can be defined on $V$ as follows:
\begin{equation*}
    \scalar{u}{v}_G \coloneqq \frac{1}{|G|} \sum_{g \in G} \scalar{g \cdot u}{g \cdot v}.
\end{equation*}
Then for all $g \in G$, it holds that $\scalar{g \cdot u}{g \cdot v}_G = \scalar{u}{v}_G$. This means that every representation of a finite group can be considered to be unitary with respect to $\scalar{\cdot}{\cdot}_G$.

\begin{example*}
    Let $\mathfrak{S}_3$ be the symmetric group over $3$ elements, and $V$ be the complex vector space $\mathbb{C}^3$ with basis $e_1$, $e_2$ and $e_3$. The \emph{natural representation} of $\mathfrak{S}_3$ associates each element $\sigma$ in $\mathfrak{S}_3$ with its permutation matrix $P_\sigma$. This matrix represents the permutation of the $3$ coordinates of $V$ according to $\sigma$. Then the invariant subspace $V_{\mathfrak{S}_3}$ is the one-dimensional subspace:
    \begin{equation*}
        \Span_{\mathbb{C}} (1, 1, 1). 
    \end{equation*}
\end{example*}

Let $V$ be an irreducible representation of a finite group $G$, and consider any nonzero vector $v$ in $V$. It follows that $v$ generates $V$ under the action of $G$, that is, $G \cdot v = V$. Otherwise, if the orbit $G \cdot v$ were a proper subspace of $V$, then by the definition of irreducibility, $G \cdot v$ would be an invariant subspace of $V$, contradicting the assumption that $V$ is irreducible.

\subsection{Character theory} \label{appA:sec:characterTheory}

Let $(\rho, V)$ be a representation of a finite group $G$. The \emph{character} of $(\rho, V)$ is the map $\emph{\chi_V} : G \to \mathbb{C}$ defined on $g \in G$ as follows:
\begin{equation*}
    \chi_V (g) \coloneqq \Tr \1[ \rho(g) \1].
\end{equation*}
A \emph{class function} on $G$ is complex function $f$ that remains constant on the conjugacy classes of $G$, i.e. $f \1(h \cdot g \cdot h^{\shortminus 1} \1) = f(g)$, for all $g, h \in G$. The collection of all class functions on $G$ forms a complex vector space with dimension equal to the number of conjugacy classes of $G$. Due to the cyclic property of the trace, the characters are class functions. The concept of characters plays a central role in the representation theory of finite groups. It provides a means of calculating important quantities, such as the dimension of a representation. For instance, let $V$ be a representation of $G$, and let $\phi$ be the projector onto $V_G$, defined in \hyperref[appA:prop:projectorTrivialIsotypicComponent]{Proposition~\ref*{appA:prop:projectorTrivialIsotypicComponent}}, then $\chi_V (1_G) = \dim V$ and $\chi_V (\phi) = \dim V_G$, where $\chi_V (\phi)$ is defined linearly on each summand of $\phi$.

\begin{proposition} \label{appA:prop:characterProperties}
    Consider two representations, $(\rho_V, V)$ and $(\rho_W, W)$, of a finite group $G$. Then
    \begin{equation*}
        \chi_{V \oplus W} = \chi_V + \chi_W, \qquad \chi_{V \otimes W} = \chi_V \cdot \chi_W, \quad \text{and} \quad \chi_{V^*} = \bar{\chi}_V,
    \end{equation*}
    where $\bar{\cdot}$ denotes the complex conjugate.
\end{proposition}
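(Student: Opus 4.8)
The statement to prove is \hyperref[appA:prop:characterProperties]{Proposition~\ref*{appA:prop:characterProperties}}, which asserts three identities on characters: additivity under direct sum, multiplicativity under tensor product, and conjugation under dualization. The plan is to prove each of the three identities by a direct computation, unwinding the definition $\chi_V(g) = \Tr[\rho_V(g)]$ and exploiting elementary properties of the trace together with the way the representing matrices behave under each of the three vector-space constructions. All three reduce to a single fact evaluated at each fixed $g \in G$; the group structure plays essentially no role beyond guaranteeing that the constructions below are themselves representations (which follows from the homomorphism property already recorded in the definition of a representation).

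First I would handle the direct sum. For a fixed $g \in G$, the representation $\rho_{V \oplus W}(g)$ acts block-diagonally, i.e. in a basis adapted to the decomposition $V \oplus W$ it has the form $\begin{pmatrix} \rho_V(g) & 0 \\ 0 & \rho_W(g) \end{pmatrix}$, exactly as in the block-diagonalization observed in the remark following Maschke's \hyperref[appA:thm:maschke]{Theorem~\ref*{appA:thm:maschke}}. Since the trace of a block-diagonal matrix is the sum of the traces of its diagonal blocks, taking traces yields $\chi_{V \oplus W}(g) = \Tr[\rho_V(g)] + \Tr[\rho_W(g)] = \chi_V(g) + \chi_W(g)$, and since $g$ was arbitrary this gives the first identity.

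Next I would treat the tensor product. For fixed $g$, the matrix $\rho_{V \otimes W}(g)$ is the Kronecker product $\rho_V(g) \otimes \rho_W(g)$, because $g$ acts diagonally on decomposable tensors by $g \cdot (v \otimes w) = (g \cdot v) \otimes (g \cdot w)$. The key step is then the multiplicativity of the trace under Kronecker products, $\Tr[A \otimes B] = \Tr[A] \cdot \Tr[B]$, which one sees by choosing bases: the diagonal entries of $A \otimes B$ are products $A_{ii} B_{jj}$ and summing over the pair of indices factors as $\bigl( \sum_i A_{ii} \bigr) \bigl( \sum_j B_{jj} \bigr)$. This gives $\chi_{V \otimes W}(g) = \chi_V(g) \cdot \chi_W(g)$. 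For the dual, I would use the definition $\rho_{V^*}(g) = \rho_V(g^{\shortminus 1})^\T$ from the subsection on representations. Taking traces and using that transposition preserves the trace gives $\chi_{V^*}(g) = \Tr[\rho_V(g^{\shortminus 1})] = \chi_V(g^{\shortminus 1})$; it then remains to identify $\chi_V(g^{\shortminus 1})$ with $\bar{\chi}_V(g)$.

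The only genuine obstacle is this last identification, $\chi_V(g^{\shortminus 1}) = \overline{\chi_V(g)}$. The clean way to see it is to invoke that every representation of a finite group may be taken unitary with respect to the averaged inner product $\scalar{\cdot}{\cdot}_G$, as established in the preceding subsection; with respect to such an inner product $\rho_V(g)$ is a unitary matrix, so $\rho_V(g^{\shortminus 1}) = \rho_V(g)^{\shortminus 1} = \rho_V(g)^*$, and hence $\Tr[\rho_V(g^{\shortminus 1})] = \Tr[\rho_V(g)^*] = \overline{\Tr[\rho_V(g)]}$. Alternatively, and perhaps more elementarily, one uses the example preceding the reducibility subsection: since $G$ is finite, each $\rho_V(g)$ satisfies $\rho_V(g)^{|G|} = I$ and is therefore diagonalizable with eigenvalues that are roots of unity; the eigenvalues of $\rho_V(g^{\shortminus 1})$ are their inverses, which for a root of unity coincide with their complex conjugates, so the sum of eigenvalues transforms by conjugation. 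Either route closes the argument and completes the proof of all three identities.
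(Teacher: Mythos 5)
Your proof is correct and follows essentially the same route as the paper: the direct sum and tensor product identities via block-diagonal and Kronecker-product trace computations, and the dual identity via the fact that each $\rho_V(g)$ is diagonalizable with root-of-unity eigenvalues, whose inverses equal their conjugates --- which is exactly the ``more elementary'' second route you describe. Your additional unitarity-based argument (using the averaged inner product $\scalar{\cdot}{\cdot}_G$) is a valid alternative the paper does not use, but it is not needed.
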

\begin{proof}
    Let $g$ be in $G$. By the properties of the trace,
    \begin{align*}
        \chi_{V \oplus W} (g) &= \Tr \1[ \rho_V (g) \1] + \Tr \1[ \rho_W (g) \1] \\
        \chi_{V \otimes W} (g) &= \Tr \1[ \rho_V (g) \1] \cdot \Tr \1[ \rho_W (g) \1].
    \end{align*}
    Then $\chi_V (g) = \Tr \1[ \rho_V (g) \1]$ is the sum of the eigenvalues of the diagonalizable $\rho_V (g)$. Furthermore, since the eigenvalues, $\lambda_i$, of $\rho_V (g)$ are roots of unity, it follows that the inverse of each eigenvalue is equal to its complex conjugate, i.e. $\lambda^{\shortminus 1}_i = \bar{\lambda}_i$. Finally
    \begin{equation*}
        \chi_{V^*} (g) = \Tr \2[ \rho_V \1( g^{\shortminus 1} \1) \2] = \bar{\chi}_V (g).
    \end{equation*}
\end{proof}
\begin{corollary*}
    Let $(\rho_V, V)$ be a representations of a finite group $G$, then $\bar{\chi}_V(g) = \chi_V \1( g^{\shortminus 1} \1)$, for all $g \in G$.
\end{corollary*}

From the character of a tensor product and a dual space, the isomorphism $\mathrm{Hom}(V, W) \simeq V^* \otimes W$ between two finite dimensional complex vector spaces $V$ and $W$, implies
\begin{equation*}
    \chi_{\mathrm{Hom}(V, W)} = \bar{\chi}_V \cdot \chi_W.
\end{equation*}

Consider a finite group $G$ and the set of class functions defined on it. Let the Hermitian inner product $\emph{\scalar{\cdot}{\cdot}_G}$ on this set of class functions be defined as follows: given two class functions $f$ and $g$ on $G$, then, 
\begin{equation*}
    \scalar{f}{g}_G \coloneqq \frac{1}{|G|} \sum_{g \in G} \bar{f}(g) \cdot h(g).
\end{equation*}
\begin{theorem} \label{appA:thm:numberIrreducibleRepresentations}
    Given a finite group $G$, the characters of its irreducible representations serve as an orthonormal basis for the space of class functions of $G$, with respect to the Hermitian inner product $\scalar{\cdot}{\cdot}_G$.
\end{theorem}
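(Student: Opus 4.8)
The plan is to prove the two assertions of \hyperref[appA:thm:numberIrreducibleRepresentations]{Theorem~\ref*{appA:thm:numberIrreducibleRepresentations}} separately: first orthonormality of the irreducible characters, then the fact that they span the full space of class functions. Both rest on the machinery already assembled in the excerpt, namely Schur's \hyperref[appA:lem:Schur]{Lemma~\ref*{appA:lem:Schur}}, Maschke's \hyperref[appA:thm:maschke]{Theorem~\ref*{appA:thm:maschke}}, and the averaging projector of \hyperref[appA:prop:projectorTrivialIsotypicComponent]{Proposition~\ref*{appA:prop:projectorTrivialIsotypicComponent}}, together with the character identities of \hyperref[appA:prop:characterProperties]{Proposition~\ref*{appA:prop:characterProperties}}. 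The central bridge is the observation that for two representations $V$ and $W$ of $G$, the inner product of their characters computes the dimension of the space of intertwiners:
\begin{equation*}
    \scalar{\chi_V}{\chi_W}_G = \dim \mathrm{Hom}_G(V, W).
\end{equation*}

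First I would establish this bridging formula. Using $\mathrm{Hom}(V,W) \simeq V^* \otimes W$ and the corollary after \hyperref[appA:prop:characterProperties]{Proposition~\ref*{appA:prop:characterProperties}}, the character of $\mathrm{Hom}(V,W)$ is $\bar\chi_V \cdot \chi_W$. Now $\mathrm{Hom}_G(V,W)$ is exactly the $G$-invariant subspace $\mathrm{Hom}(V,W)_G$, since an intertwiner is precisely a fixed point of the conjugation action. By \hyperref[appA:prop:projectorTrivialIsotypicComponent]{Proposition~\ref*{appA:prop:projectorTrivialIsotypicComponent}} the averaging map $\phi$ projects onto this invariant subspace, so its trace equals $\dim \mathrm{Hom}_G(V,W)$; computing that trace as $\frac{1}{|G|}\sum_{g \in G} \chi_{\mathrm{Hom}(V,W)}(g) = \frac{1}{|G|}\sum_{g \in G} \bar\chi_V(g) \chi_W(g)$ yields exactly $\scalar{\chi_V}{\chi_W}_G$. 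Applying this with $V, W$ irreducible and invoking Schur's \hyperref[appA:lem:Schur]{Lemma~\ref*{appA:lem:Schur}} (which forces $\mathrm{Hom}_G(V,W)$ to be one-dimensional when $V \simeq W$, spanned by a scalar by the first consequence of Schur, and zero otherwise) gives $\scalar{\chi_{V_i}}{\chi_{V_j}}_G = \delta_{ij}$, i.e.\ orthonormality. The fact that the $\chi_{V_i}$ are class functions is already recorded in the text via the cyclic property of the trace, so orthonormality in particular shows the distinct irreducible characters are linearly independent.

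The harder half, and the main obstacle, is \emph{completeness}: showing the irreducible characters span all of the class function space, not merely that they form an orthonormal set. The plan is to argue that the orthogonal complement of their span is trivial. I would take a class function $f$ orthogonal to every irreducible character and, for each irreducible representation $(\rho_i, V_i)$, form the operator $T_{f,i} \coloneqq \sum_{g \in G} \bar f(g)\, \rho_i(g)$. Because $f$ is a class function, a short computation shows $T_{f,i}$ commutes with every $\rho_i(h)$, so by the homothety consequence of Schur's \hyperref[appA:lem:Schur]{Lemma~\ref*{appA:lem:Schur}} it is a scalar $\lambda_i \cdot I$; taking the trace expresses $\lambda_i$ in terms of $\scalar{f}{\chi_{V_i}}_G$, which vanishes by assumption, hence $T_{f,i} = 0$ for every irreducible $V_i$. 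By complete reducibility (Maschke's \hyperref[appA:thm:maschke]{Theorem~\ref*{appA:thm:maschke}}) the same vanishing holds for the regular representation of $G$ on $\mathbb{C}[G]$, and evaluating $T_f$ on the basis vector indexed by the identity forces $f(g) = 0$ for all $g$. Thus the only class function orthogonal to all irreducible characters is zero, so they span. The delicate points to get right are the conjugation-invariance computation that makes $T_{f,i}$ an intertwiner and the careful extraction of $f$ from the regular representation; once those are in hand, combining spanning with the already-proved orthonormality completes the proof that the irreducible characters form an orthonormal basis of the class function space.
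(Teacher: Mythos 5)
Your proposal is correct and follows essentially the same route as the paper's proof: orthonormality via the identity $\scalar{\chi_V}{\chi_W}_G = \dim \mathrm{Hom}_G(V,W)$ obtained from the averaging projector onto ${\1( V^* \otimes W \1)}_G$ together with Schur's \hyperref[appA:lem:Schur]{Lemma~\ref*{appA:lem:Schur}}, and completeness by showing that a class function $f$ orthogonal to all irreducible characters yields, via Schur, a vanishing averaged operator on every irreducible, hence on the regular representation, whose evaluation at the identity basis vector forces $f = 0$. The only cosmetic difference is that the paper phrases the regular representation as the function space spanned by the $\delta_g$ rather than $\mathbb{C}[G]$ itself, which are isomorphic.
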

\begin{proof}
    Let $V$ and $W$ be two irreducible representations of $G$. By Schur's \hyperref[appA:lem:Schur]{Lemma~\ref*{appA:lem:Schur}}, the dimension of $\mathrm{Hom}_G (V, W)$ is either $1$ if $V \simeq W$ and $0$ otherwise. But the dimension of $\mathrm{Hom}_G (V, W)$ is also equal to the character of the projector onto ${\1( V^* \otimes W \1)}_G$. That is
    \begin{equation*}
        \frac{1}{|G|} \sum_{g \in G} \bar{\chi}_V (g) \cdot \chi_W (g) =
        \begin{cases}
            1 &\text{if } V \simeq W \\
            0 &\text{otherwise}.
        \end{cases}
    \end{equation*}
    Thus the characters of the irreducible representations of $G$ are orthonormal with respect to the Hermtitan inner product $\scalar{\cdot}{\cdot}_G$.
    
    Hence, the number of irreducible representations of $G$ is finite, and in fact smaller than the number of conjugacy classes of $G$. Let $\1( \rho_{V_i}, {V_i} \1)$ denotes the irreducible representations of $G$, and $\chi_{V_i}$ the corresponding characters. Define
    \begin{equation*}
        V \coloneqq \Span_{\mathbb{C}} \1\{ \chi_{V_i} \1\},
    \end{equation*}
    the span of these characters over the field of complex numbers.

    Let $W$ be the complex vector space of all functions from $G$ to $\mathbb{C}$. The basis of this complex vector space are the elements $\delta_g$, where $g \in G$, defined on all $h \in G$ by
    \begin{equation*}
        \delta_g(h) =
        \begin{cases}
            1 &\text{ if } g = h \\
            0 &\text{ otherwise}.
        \end{cases}
    \end{equation*}
    Let $g \in G$ and define the representation of $G$ on $W$ by $(g \cdot f)(h) = f(g^{\shortminus 1} \cdot h)$, for all $f \in W$ and $h \in G$.
    
    Let $f \in V^\perp$ be a class function in the orthogonal complement of $V$, for all irreducible representations $V_i$ of $G$, define the linear map $\phi_i: V_i \to V_i$ by
    \begin{equation*}
        \phi_i \coloneqq \frac{1}{|G|} \sum_{g \in G} \bar{f} (g) \cdot \rho_{V_i} (g).
    \end{equation*}
    Let $h \in G$, then since $f$ is a class function it holds that 
    \begin{align*}
        \rho_{V_i} \1( h^{\shortminus 1} \1) \cdot \phi_i \cdot \rho_{V_i} (h) &= \frac{1}{|G|} \sum_{g \in G} \bar{f} (g) \cdot \rho_{V_i} \1( h^{\shortminus 1} \cdot g \cdot h \1) \\
        &= \frac{1}{|G|} \sum_{g \in G} \bar{f} \1( h \cdot g \cdot h^{\shortminus 1} \1) \cdot \rho_{V_i}(g) \\
        &= \phi_i,
    \end{align*}
    and in particular $\phi_i \in \mathrm{Hom}_G(V_i)$. From Schur's \hyperref[appA:lem:Schur]{Lemma~\ref*{appA:lem:Schur}} there exists $\lambda \in \mathbb{C}$ such that $\phi_i = \lambda \cdot I$, with the equality $\Tr \1[ \phi_i \1] = \lambda \cdot \dim V_i$, that is
    \begin{align*}
        \lambda &= \frac{1}{|G| \cdot \dim V_i} \sum_{g \in G} \bar{f} (g) \cdot \Tr \1[ \rho_{V_i} (g) \1] \\
        &= \frac{1}{|G| \cdot \dim V_i} \sum_{g \in G} \bar{f} (g) \cdot \chi_{V_i} (g) \\
        &= \frac{1}{\dim V_i} \scalar{f}{\chi_{V_i}}_G.
    \end{align*}
    Thus $\lambda = 0$, since $f \in V^\perp$. In particular, from the decomposition into a direct sum of irreducible representations $W \simeq V^{\oplus n_1}_1 \oplus \cdots \oplus V^{\oplus n_k}_k$, it holds that
    \begin{align*}
        \frac{1}{|G|} \sum_{g \in G} \bar{f} (g) \cdot (g \cdot \delta_{1_G}) &= \frac{1}{|G|} \sum_{g \in G} \bar{f} \1( g^{\shortminus 1} \1) \cdot \delta_g \\
        &= 0.
    \end{align*}
    But since $\set{\delta_g}{g \in G}$ forms a basis of $W$, necessarily $f(g) = 0$ for all $g \in G$. Thus $V^\perp = \{ 0 \}$.
\end{proof}
\begin{corollary*}
    Let $V$ be a representation of a finite group $G$. The representation $V$ is irreducible if and only if $\scalar{\chi_V}{\chi_V}_G = 1$. Otherwise, $V$ decomposes into a direct sum of irreducible representations $V \simeq V^{\oplus n_1}_1 \oplus \cdots \oplus V^{\oplus n_k}_k$, with
    \begin{equation*}
        \scalar{\chi_V}{\chi_V}_G = \sum^k_{i = 1} n^2_i \quad \text{and} \quad \scalar{\chi_{V_i}}{\chi_V}_G = n_i.
    \end{equation*}
\end{corollary*}

In light of \hyperref[appA:thm:numberIrreducibleRepresentations]{Theorem~\ref*{appA:thm:numberIrreducibleRepresentations}}, it has been established that the cardinality of the set of irreducible representations of a finite group is equal to the cardinality of the set of conjugacy classes of this group. The orthogonality of characters of irreducible representations of the finite group $G$ serves as the foundation for constructing the \emph{character table} of $G$, which assigns to each irreducible representation of $G$ a unique collection of numbers, namely the characters of each conjugacy class of $G$. Additionally, given a representation $V$ of a finite group, with its decomposition into a direct sum of irreducible representations $V \simeq V^{\oplus n_1}_1 \oplus \cdots \oplus V^{\oplus n_k}_k$, the character $\chi_i$ can be used to determine the multiplicity $n_i$, i.e., the number of times a particular irreducible representation $V_i$ appears in the representation $V$.

Despite \hyperref[appA:thm:numberIrreducibleRepresentations]{Theorem~\ref*{appA:thm:numberIrreducibleRepresentations}}, in general for a finite group $G$, there is no known correspondence between the conjugacy classes of $G$ and the irreducible representations of $G$. However, in \hyperref[appA:sec:RepresentationsOfTheSymmetricGroup]{Section~\ref*{appA:sec:RepresentationsOfTheSymmetricGroup}}, this correspondence will be explicitly demonstrated for the specific case of the symmetric group $\mathfrak{S}_n$.

Consider a finite group $G$. The \emph{group algebra} of $G$, denoted by $\emph{\mathbb{C}[G]}$, is the complex vector space with the elements of $G$ as its basis. The multiplication in $\mathbb{C}[G]$ is defined as the group law of $G$ on the basis. The \emph{regular representation} of $G$ is obtained by considering $\mathbb{C}[G]$ as a representation, where for all $g \in G$ and $x = \sum_{h \in G} c_h \cdot h$ an element of $\mathbb{C}[G]$, the action of $g$ on $x$ becomes,
\begin{equation*}
    g \cdot x = \sum_{h \in G} c_h \cdot (g \cdot h) = \sum_{h \in G} c_{(g^{\shortminus 1} \cdot h)} \cdot h.
\end{equation*}
\begin{remark*}
    In this formulation, the group $G$ serves as both the complex vector space of the representation as the group algebra, and the group homomorphism through its group law. 
\end{remark*}

Consider a finite group $G$. For any element $g \in G$, the action of $g$ on $G$ through the group law results in a bijective mapping. It follows that the regular representation of $g$ has no fixed points except in the case where $g$ is the identity element $1_G$ of $G$. Additionally, all elements of $G$ in the regular representation are represented as permutation matrices.

\begin{proposition} \label{appA:prop:groupAlgebraDecompositionMultiplicity}
    Let $\mathbb{C}[G]$ be the group algebra of a finite group $G$, and consider $\mathbb{C}[G] \simeq V^{\oplus n_1}_1 \oplus \cdots \oplus V^{\oplus n_k}_k$, its decomposition into a direct sum of irreducible representations $V_i$ with multiplicities $n_i$. Then for each irreducible representation $V_i$, its multiplicity in $\mathbb{C}[G]$ is equal the dimension $\dim V_i$.
\end{proposition}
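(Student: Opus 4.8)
The plan is to read off the multiplicity $n_i$ of $V_i$ in the regular representation directly from character theory, specifically from the corollary to \hyperref[appA:thm:numberIrreducibleRepresentations]{Theorem~\ref*{appA:thm:numberIrreducibleRepresentations}}, which states that for any representation $V$ with decomposition $V \simeq V^{\oplus n_1}_1 \oplus \cdots \oplus V^{\oplus n_k}_k$ the multiplicity is recovered as $n_i = \scalar{\chi_{V_i}}{\chi_{\mathbb{C}[G]}}_G$. Thus the whole computation reduces to knowing the character $\chi_{\mathbb{C}[G]}$ of the regular representation, so the first step is to pin this character down explicitly.

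The decisive observation, already noted just above the statement, is that every $g \in G$ acts on $\mathbb{C}[G]$ by a permutation matrix: it permutes the basis $\set{h}{h \in G}$ via $h \mapsto g \cdot h$. The trace of a permutation matrix counts its fixed basis vectors. For $g \neq 1_G$ the map $h \mapsto g \cdot h$ is fixed-point free (as $g \cdot h = h$ forces $g = 1_G$), so $\chi_{\mathbb{C}[G]}(g) = 0$; and since $1_G$ acts as the identity on the $|G|$-dimensional space $\mathbb{C}[G]$, one has $\chi_{\mathbb{C}[G]}(1_G) = \dim \mathbb{C}[G] = |G|$, using the fact from the character subsection that $\chi_V(1_G) = \dim V$.

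Substituting this character into the inner product collapses the defining sum to the single surviving term at $g = 1_G$:
\begin{equation*}
    n_i = \scalar{\chi_{V_i}}{\chi_{\mathbb{C}[G]}}_G = \frac{1}{|G|} \sum_{g \in G} \bar{\chi}_{V_i}(g) \cdot \chi_{\mathbb{C}[G]}(g) = \frac{1}{|G|} \bar{\chi}_{V_i}(1_G) \cdot |G| = \chi_{V_i}(1_G) = \dim V_i,
\end{equation*}
where the final equalities use that the dimension $\chi_{V_i}(1_G) = \dim V_i$ is a real positive integer and hence equals its own complex conjugate. There is no genuine obstacle here: the only content beyond routine manipulation is the fixed-point-free property of left multiplication, which is precisely what forces $\chi_{\mathbb{C}[G]}$ to vanish away from the identity, and that property has already been established in the excerpt. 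If one preferred to avoid characters altogether, an alternative route would be to exhibit, for each irreducible $V_i$, a $(\dim V_i)$-fold family of $G$-equivariant embeddings $V_i \hookrightarrow \mathbb{C}[G]$ indexed by a basis of $V_i^*$ and then invoke Schur's \hyperref[appA:lem:Schur]{Lemma~\ref*{appA:lem:Schur}} to see there are no more; but the character computation is by far the cleanest and is what I would carry out.
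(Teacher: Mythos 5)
Your proof is correct and follows essentially the same route as the paper: both compute the character of the regular representation via the fixed-point-free property of left multiplication, then extract the multiplicity $n_i = \scalar{\chi_{V_i}}{\chi_{\mathbb{C}[G]}}_G$, which collapses to the single term at $1_G$ and yields $\dim V_i$. No gaps to report.
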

\begin{proof}
    The character of $g \in G$ for the regular representation is equal to the cardinality of the set of fixed points of the action of $g$ on $G$, thus
    \begin{equation*}
        \chi_{\mathbb{C}[G]} (g) =
        \begin{cases}
            |G| &\text{if } g = 1_G\\
            0 &\text{otherwise}.
        \end{cases}
    \end{equation*}
    Consider an irreducible representation $V_i$, which appears in the decomposition of the group algebra $\mathbb{C}[G]$ as follows: $\mathbb{C}[G] \cong V^{\oplus n_1}_1 \oplus \cdots \oplus V^{\oplus n_k}_k$, then
    \begin{align*}
        n_i &= \scalar{\chi_{V_i}}{\chi_{\mathbb{C}[G]}}_G \\
        &= \frac{1}{|G|} \sum_{g \in G} \bar{\chi}_{V_i} (g) \cdot \chi_{\mathbb{C}[G]} (g) \\
        &= \frac{1}{|G|} \bar{\chi}_{V_i} (1_G) \cdot \chi_{\mathbb{C}[G]} (1_G) \\
        &= \dim V_i.
    \end{align*}
\end{proof}
\begin{corollary*}
    Consider the group algebra $\mathbb{C}[G]$ of a finite group $G$, and its decomposition into a direct sum of irreducible representations $\mathbb{C}[G] \simeq V^{\oplus n_1}_1 \oplus \cdots \oplus V^{\oplus n_k}_k$. Then, it follows that:
    \begin{equation*}
        \dim \mathbb{C}[G] = \sum_i \dim (V_i)^2.
    \end{equation*}
\end{corollary*}

The decomposition of the group algebra $\mathbb{C}[G]$ of a finite group $G$, into a direct sum of irreducible representations is a powerful tool for understanding the structure of finite groups, and the equation
\begin{equation*}
    \mathbb{C}[G] \simeq V^{\oplus \dim V_1}_1 \oplus \cdots \oplus V^{\oplus \dim V_k}_k,
\end{equation*}
provides important information about the size and multiplicity of irreducible representations in the regular representation.

\begin{theorem*}
    Let $\mathbb{C}[G]$ denote the group algebra of a finite group $G$, and let $\mathbb{C}[G] \simeq V^{\oplus n_1}_1 \oplus \cdots \oplus V^{\oplus n_k}_k$ be the decomposition of $\mathbb{C}[G]$ into a direct sum of irreducible representations. For each irreducible representation $W$ of $G$, there exists an index $i \in \{ 1, \ldots, k \}$ such that $W$ is equivalent to $V_i$.
\end{theorem*}
\begin{proof}
    Let $U, V$ and $W$ be three complex vector spaces of finite dimension, and define the inclusion and projection maps
    \begin{equation*}
        V \mathrel{\mathop{\leftrightarrows}^{\pi_v}_{\iota_V}} V \oplus W \mathrel{\mathop{\rightleftarrows}^{\pi_W}_{\iota_W}} W,
    \end{equation*}
    on all $v \in V$ and $w \in W$ by
    \begin{align*}
        \pi_V &: (v, w) \longmapsto v & \iota_V &: v \longmapsto v + 0 \\
        \pi_W &: (v, w) \longmapsto w & \iota_W &: w \longmapsto 0 + w.
    \end{align*}
    Now let $G$ be a finite group, and assume that $U, V$ and $W$ are representations of $G$. Using the previous inclusion and projection maps, the following isomorphisms holds:
    \begin{align*}
        \mathrm{Hom}(U, V \oplus W) &\simeq \mathrm{Hom}(U, V) \oplus \mathrm{Hom}(U, W) \\
        \mathrm{Hom}_G(U, V \oplus W) &\simeq \mathrm{Hom}_G(U, V) \oplus \mathrm{Hom}_G(U, W).
    \end{align*}

    Then, in the decomposition of $\mathbb{C}[G]$ into a direct sum of irreducible representations,
    \begin{equation*}
        \mathrm{Hom}_G \1( \mathbb{C}[G], W \1) \simeq \bigoplus^k_{i=1} {\mathrm{Hom}_G(V_i, W)}^{\oplus n_i}.
    \end{equation*}
    In particular, the equality of dimensions
    \begin{equation*}
        \dim \mathrm{Hom}_G \1( \mathbb{C}[G], W \1) = \bigoplus^k_{i=1} n_i \cdot \dim \mathrm{Hom}_G(V_i, W),
    \end{equation*}
    holds, and from the Schur's \hyperref[appA:lem:Schur]{Lemma~\ref*{appA:lem:Schur}}, $\dim \mathrm{Hom}_G(V_i, W) = 1$ if and only if $V_i \simeq W$, otherwise $\dim \mathrm{Hom}_G(V_i, W) = 0$. It follows that the dimension of $\mathrm{Hom}_G \1( \mathbb{C}[G], W \1)$ is positive if and only if there exists an index $i \in \{ 1, \ldots, k \}$ such that the irreducible representation $W$ is equivalent to $V_i$.

    Let $f: \mathrm{Hom}_G \1( \mathbb{C}[G], W \1) \to W$ be the linear map defined on $h$ in $\mathrm{Hom}_G \1( \mathbb{C}[G], W \1)$ by $f(h) = h(1_G)$, and let such $h$ in $\mathrm{Hom}_G \1( \mathbb{C}[G], W \1)$ satisfying $f(h) = 0$, then for all $g \in G$,
    \begin{equation*}
        h(g) = h(g \cdot e) = g \cdot h(e) = 0,
    \end{equation*}
    thus $f$ is injective. Let $x \in W$ and consider the function $h: G \to W$ defined on $g \in G$ by $h(g) = g \cdot x$, thus $h$ can be extended linearly on $\mathbb{C}[G]$, and in particular $h \in \mathrm{Hom}_G \1( \mathbb{C}[G], W \1)$, but since $f(h) = x$, then $f$ is surjective.
    Therefore $f$ is an isomorphism and the equality of dimensions
    \begin{equation*}
        \dim \mathrm{Hom}_G \1( \mathbb{C}[G], W \1) = \dim W,
    \end{equation*}
    holds. In particular $\dim \mathrm{Hom}_G \1( \mathbb{C}[G], W \1)$ is positive.
\end{proof}

The group algebra $\mathbb{C}[G]$, of a finite group $G$, provides a unified way to study the representations of $G$ by encapsulating all the irreducible representations within its structure.

\begin{example*}
    As established in \hyperref[appA:thm:numberIrreducibleRepresentations]{Theorem~\ref*{appA:thm:numberIrreducibleRepresentations}}, the number of irreducible representations of $\mathfrak{S}_3$ is equal to the number of its conjugacy classes, in the present case $3$. These three irreducible representations consist of the trivial representation, the \emph{sign representation}, and the \emph{standard representation}. The sign representation is the one-dimensional representation on which any $\sigma \in \mathfrak{S}_3$ is sent to $\emph{\mathrm{sign}(\sigma)}$, where $\mathrm{sign}(\sigma)$ is the \emph{signature} of the permutation $\sigma$. The standard representation is the two-dimensional orthogonal complement of $\Span_{\mathbb{C}} (1, 1, 1)$ in the natural representation. Then
    \begin{equation*}
        \mathbb{C} \1[ \mathfrak{S}_3 \1] \simeq V_{\text{trivial}} \oplus V_{\text{sign}} \oplus V^2_{\text{standard}}.
    \end{equation*}
    The character table of $\mathfrak{S}_3$ is defined as the table with the irreducible representations of $\mathfrak{S}_3$ as its rows, and the conjugacy classes of $\mathfrak{S}_3$ as its columns. The entries in the table are the character values for the corresponding irreducible representations and conjugacy classes of $\mathfrak{S}_3$. The conjugacy classes of $\mathfrak{S}_3$ are $\1\{ 1_{\mathfrak{S}_3} \1\}, \1\{ (1\:2), (1\:3), (2\:3) \1\}$ and $\1\{ (1\:2\:3), (3\:2\:1) \1\}$. The character table of $\mathfrak{S}_3$ becomes:
    \begin{equation*}
        \begin{array}{c|ccc}
             \mathfrak{S}_3 & \1\{ 1_{\mathfrak{S}_3} \1\} & \1\{ (1\:2), (1\:3), (2\:3) \1\} & \1\{ (1\:2\:3), (3\:2\:1) \1\} \\
             \hline
             \chi_{\text{trivial}} & 1 & 1 & 1\\
             \chi_{\text{sign}} & 1 & -1 & 1\\
             \chi_{\text{standard}} & 2 & 0 & -1
        \end{array}
    \end{equation*}
    Note that as expected from \hyperref[appA:thm:numberIrreducibleRepresentations]{Theorem~\ref*{appA:thm:numberIrreducibleRepresentations}}, the rows are orthogonal, with respect to Hermitian inner product $\scalar{\cdot}{\cdot}_G$.
\end{example*}

Similarly to \hyperref[appA:prop:projectorTrivialIsotypicComponent]{Propositition~\ref*{appA:prop:projectorTrivialIsotypicComponent}}, given a finite group $G$, it is possible to define a projector onto each direct sum of equivalent irreducible representations $V^{\oplus n_i}_i$ that appears in the decomposition of its group algebra, $\mathbb{C}[G]$, into a direct sum of irreducible representations, $\mathbb{C}[G] \simeq V^{\oplus n_1}_1 \oplus \cdots \oplus V^{\oplus n_k}_k$. The direct sum of equivalent irreducible representations $V^{\oplus n_i}_i$, in the decomposition of its group algebra $\mathbb{C}[G] \simeq V^{\oplus n_1}_1 \oplus \cdots \oplus V^{\oplus n_k}_k$, are called the \emph{isotypic components}.

\begin{theorem} \label{appA:thm:projectorIsotypicComponent}
    Given a finite group $G$ and its group algebra $\mathbb{C}[G]$ decomposed into a direct sum of irreducible representations $\mathbb{C}[G] \simeq V^{\oplus n_1}_1 \oplus \cdots \oplus V^{\oplus n_k}_k$, for each irreducible representation $V_i$, let the map $\phi_i$, defined on $\mathbb{C}[G]$ by
    \begin{equation*}
        \phi_i \coloneqq \frac{\dim V_i}{|G|} \sum_{g \in G} \bar{\chi}_{V_i} (g) \cdot g.
    \end{equation*}
    Then $\phi_i$ is a projector onto $V^{\oplus n_i}_i$.
\end{theorem}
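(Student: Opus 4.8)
The plan is to show that $\phi_i$, viewed as left multiplication on the group algebra $\mathbb{C}[G]$, acts as the identity on the isotypic component $V^{\oplus n_i}_i$ and as zero on every other component $V^{\oplus n_j}_j$ with $j \neq i$; this simultaneously identifies $\Ima \phi_i = V^{\oplus n_i}_i$ and yields the idempotency $\phi_i \circ \phi_i = \phi_i$. This mirrors the averaging argument of \hyperref[appA:prop:projectorTrivialIsotypicComponent]{Proposition~\ref*{appA:prop:projectorTrivialIsotypicComponent}}, which handled the trivial isotypic component, but the general case now requires Schur's Lemma together with the orthogonality of characters rather than a bare computation.

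First I would observe that, since $\chi_{V_i}$ is a class function, the element $\phi_i = \frac{\dim V_i}{|G|}\sum_{g \in G} \bar{\chi}_{V_i}(g)\, g$ lies in the center of $\mathbb{C}[G]$: for any $h \in G$ the substitution $g \mapsto h g h^{\shortminus 1}$, combined with the class-function identity $\bar{\chi}_{V_i}(h g h^{\shortminus 1}) = \bar{\chi}_{V_i}(g)$, gives $h \phi_i h^{\shortminus 1} = \phi_i$. Consequently left multiplication by $\phi_i$ commutes with the regular action of every $h \in G$, so $\phi_i$ is a $G$-endomorphism of $\mathbb{C}[G]$ and restricts to a $G$-endomorphism of each irreducible summand appearing in $\mathbb{C}[G] \simeq V^{\oplus n_1}_1 \oplus \cdots \oplus V^{\oplus n_k}_k$.

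Next, applying Schur's \hyperref[appA:lem:Schur]{Lemma~\ref*{appA:lem:Schur}} to each irreducible $V_j$, the restriction of $\phi_i$ to $V_j$ must be a homothety $\lambda_j \cdot I$. I would pin down the scalar $\lambda_j$ by taking the trace, obtaining
\begin{equation*}
    \lambda_j = \frac{\dim V_i}{|G| \cdot \dim V_j} \sum_{g \in G} \bar{\chi}_{V_i}(g) \cdot \chi_{V_j}(g) = \frac{\dim V_i}{\dim V_j} \scalar{\chi_{V_i}}{\chi_{V_j}}_G.
\end{equation*}
By the orthonormality of irreducible characters established in \hyperref[appA:thm:numberIrreducibleRepresentations]{Theorem~\ref*{appA:thm:numberIrreducibleRepresentations}}, the inner product equals $\delta_{ij}$, and since the ratio $\dim V_i / \dim V_j$ equals $1$ precisely when $i = j$, this forces $\lambda_j = \delta_{ij}$. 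Thus $\phi_i$ acts as the identity on every copy of $V_i$ and annihilates every copy of $V_j$ with $j \neq i$.

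Finally, transporting this through the full decomposition — noting that the scalar $\lambda_j$ is the same on all $n_j$ copies of $V_j$, as it depends only on the isomorphism class — shows that $\phi_i$ is the identity on $V^{\oplus n_i}_i$ and vanishes on the complementary summand. Hence $\phi_i \circ \phi_i = \phi_i$ and $\Ima \phi_i = V^{\oplus n_i}_i$, so $\phi_i$ is the projector onto the $i$-th isotypic component. The main obstacle is not any single computation but securing that left multiplication by the central element $\phi_i$ genuinely coincides, summand by summand, with the representation-theoretic action to which Schur's Lemma applies; once centrality is established, the character orthogonality relations finish the argument.
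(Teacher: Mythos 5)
Your proposal is correct and follows essentially the same route as the paper's own proof: centrality of $\phi_i$ via the class-function property, Schur's Lemma giving a homothety on each irreducible summand, the trace computation combined with character orthogonality to identify the scalar as $\delta_{ij}$, and the observation that the scalar depends only on the isomorphism class so that $\phi_i$ is the identity on $V^{\oplus n_i}_i$ and zero elsewhere. No gap to report.
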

\begin{proof}
    Let $h \in G$, then since $\chi_{V_i}$ is a class function it holds that 
    \begin{align*}
        h^{\shortminus 1} \cdot \phi_i \cdot h &= \frac{\dim V_i}{|G|} \sum_{g \in G} \bar{\chi}_{V_i} (g) \cdot \1( h^{\shortminus 1} \cdot g \cdot h \1) \\
        &= \frac{\dim V_i}{|G|} \sum_{g \in G} \bar{\chi}_{V_i} \1( h \cdot g \cdot h^{\shortminus 1} \1) \cdot g \\
        &= \phi_i,
    \end{align*}
    and in particular $\phi_i \in \mathrm{Hom}_G \1( \mathbb{C}[G] \1)$. Consider the decomposition of the group algebra $\mathbb{C}[G]$ into a direct sum of irreducible representations $\mathbb{C}[G] \simeq V^{\oplus n_1}_1 \oplus \cdots \oplus V^{\oplus n_k}_k$, for all $j \in \{ 1, \ldots, k \}$, from Schur's \hyperref[appA:lem:Schur]{Lemma~\ref*{appA:lem:Schur}}, the restriction of $\phi_i$ to $V_j$ is an homothety $\lambda \cdot I$, with $\lambda \in \mathbb{C}$, such that from \hyperref[appA:thm:numberIrreducibleRepresentations]{Theorem~\ref*{appA:thm:numberIrreducibleRepresentations}}
    \begin{align*}
        \lambda &= \frac{\dim V_i}{|G| \cdot \dim V_j} \sum_{g \in G} \bar{\chi}_{V_i} (g) \cdot \chi_{V_j} (g) \\
        &= \frac{\dim V_i}{\dim V_j} \scalar{\chi_{V_i}}{\chi_{V_j}}_G \\
        &=
        \begin{cases}
            1 &\text{ if } V_i \simeq V_j \\
            0 &\text{ otherwise}.
        \end{cases}
    \end{align*}
    Thus, the restriction of $\phi_i$ to $V_i$ is the identity if $V_j$ is equivalent to $V_i$, and the zero map otherwise. But, since $V_j$ is a representation, for all $g \in G$, the action of $g$ on $V_j$ is a subspace of $V_j$. That is, the map $\phi_i$ does not cause any intertwining between the representations $V_i$.
    
    As a consequence, $\phi_i$ is the identity on $V^{\oplus n_i}_i$ and the zero map on $V^{\oplus n_j}_j$ such that $i \neq j$, and thus $\phi_i$ is a projector.
\end{proof}

\subsection{Representations of \texorpdfstring{$\mathfrak{S}_n$}{the symmetric group}} \label{appA:sec:RepresentationsOfTheSymmetricGroup}

The \emph{symmetric group}, denoted as $\emph{\mathfrak{S}_n}$, is the group of order $n!$ consisting of permutations of the set $\{1, \ldots, n \}$, as illustrated in previous examples. The cyclic notation convention is employed throughout this thesis. For instance, $(1\:2\:3)(4\:5)$ indicates the permutation
\begin{equation*}
    \begin{pmatrix}
    1 & 2 & 3 & 4 & 5 \\[0.5em]
    2 & 3 & 1 & 5 & 4
    \end{pmatrix},
\end{equation*}
of $\mathfrak{S}_5$. According to \hyperref[appA:thm:numberIrreducibleRepresentations]{Theorem~\ref*{appA:thm:numberIrreducibleRepresentations}}, the number of irreducible representations of a finite group $G$ is equal to the number of conjugacy classes in $G$. For the symmetric group $\mathfrak{S}_n$, the conjugacy class of a permutation is uniquely determined by the length of its decomposition into a product of disjoint cycles, which is referred to as the \emph{cycle type} of the permutation. The cycle type of a permutation can be described by an integer \emph{partition} of $n$, where each part represents the length of a cycle in the permutation. For instance, the permutation $(1\:2\:3)(4\:5)$ in $\mathfrak{S}_5$ has cycle type $(3, 2)$. It is worth noting that this justifies the use of cyclic notation in this thesis.

The notation $\emph{\lambda \vdash n}$ is used to represent an ordered partition of the positive integer $n$ into $l$ parts, denoted by $\lambda = (\lambda_1, \ldots, \lambda_l)$. This partition is defined as a collection of non-increasing positive integers that sum up to $n$, i.e.
\begin{equation*}
    \lambda_1 \geq \cdots \geq \lambda_l \geq 1 \qquad \text{and} \qquad \sum^l_{i = 1} \lambda_i = n.
\end{equation*}
A partition $\lambda \vdash n$ may be represented as a \emph{Young diagram}, which is a collection of $n$ empty boxes arranged in left-justified rows such that the $i$-th row contains $\lambda_i$ boxes. This representation is illustrated in \hyperref[appA:ex:youngDiagramandTableaux]{Example~\ref*{appA:ex:youngDiagramandTableaux}}. The \emph{conjugate} of the partition $\lambda$, denoted $\emph{\lambda^\prime}$, is the partition corresponding to transposing the Young diagram representing $\lambda$. A \emph{Young tableau} is a Young diagram where each box is assigned an integer between $1$ and $n$. A Young tableau is classified as \emph{standard} if the entries in each row and each column are strictly increasing, \emph{semistandard} if the entries in each row are weakly increasing and the entries in each column are strictly increasing, and \emph{without repetitions} if each entry appears exactly once. Notably, standard Young tableaux are without repetitions. The \emph{canonical Young tableau} is a Young diagram that is filled with consecutive integers from left to right and top to bottom.

\begin{example} \label{appA:ex:youngDiagramandTableaux}
    Consider the permutation $\sigma \in \mathfrak{S}_9$, expressed as a product of disjoint cycles, arranged in non-increasing order, by $(1\:2\:3\:4)(5\:6\:7)(8)(9)$. The cycle type of $\sigma$ is the partition $\lambda \vdash 9$ given by $\lambda = (4,3,1,1)$, where the $i$-th entry of $\lambda$ denotes the length of the $i$-th cycle in the disjoint cycle decomposition of $\sigma$. Moreover, the partition $\lambda$ can be represented using a Young diagram, which consists of $4$ rows with $4$, $3$, $1$, and $1$ boxes, respectively
    \begin{equation*}
        \ydiagram{4, 3, 1, 1}
    \end{equation*}
    The conjugate of the partition $\lambda$, namely $\lambda^\prime$ is represented using a Young diagram, which consists of $4$ rows with $4$, $2$, $2$ and $1$ boxes, respectively:
    \begin{equation*}
        \ydiagram{4, 2, 2, 1}
    \end{equation*}
    Two instances of Young tableaux that correspond to the partition $\lambda$ are provided as examples
    \begin{equation*}
        \begin{ytableau}
            1 & 2 & 3 & 4 \\
            2 & 3 & 4 \\
            3 \\
            4
        \end{ytableau}
        \qquad \text{and} \qquad
        \begin{ytableau}
            1 & 2 & 3 & 4 \\
            3 & 4 & *(lightgray) 4 \\
            4 \\
            5
        \end{ytableau}
    \end{equation*}
    The previous Young tableau presented on the left satisfies the standard definition, whereas the one on the right solely complies with the semistandard definition, primarily due to the weakly increasing gray box. Neither of them adheres to the requirement of being without repetitions. For the partition $\lambda$, the canonical Young tableau can be expressed as follows
    \begin{equation*}
        \begin{ytableau}
            1 & 2 & 3 & 4 \\
            5 & 6 & 7 \\
            8 \\
            9
        \end{ytableau}
    \end{equation*}
    It is both standard and thus without repetition.
\end{example}

Consider a partition $\lambda \vdash n$ and a Young tableau $T$ without repetitions, on this partition. An action of the symmetric group $\mathfrak{S}_n$ on the Young tableau $T$ is defined as follows: given a permutation $\sigma \in \mathfrak{S}_n$, the action of $\sigma$ on $T$ is obtained by permuting the boxes in $T$ according to the permutation induced by the entries of the boxes. For $\sigma \in \mathfrak{S}_n$, the action of $\sigma$ on $T$ is denoted by $\emph{\sigma(T)}$.

A permutation $\sigma$ in $\mathfrak{S}_n$ is said to preserve each row of $T$ if each box of a given row is permuted by $\sigma$ on that same row. Similarly, a permutation $\sigma$ in $\mathfrak{S}_n$ is said to preserve each column of $T$ if each box of a given column is permuted by $\sigma$ on that same column. Two subgroups of $\mathfrak{S}_n$ can be defined as the sets of permutations that preserve the rows and columns of $T$, namely,
\begin{align*}
    \emph{R_T} &\coloneqq \set{\sigma \in \mathfrak{S}_n}{\sigma \text{ preserves each row of } T} \\
    \emph{C_T} &\coloneqq \set{\sigma \in \mathfrak{S}_n}{\sigma \text{ preserves each column of } T}.
\end{align*}
Consider two elements of the group algebra $\mathbb{C} \1[ \mathfrak{S}_n \1]$, the row and column symmetrizers of a given Young tableau $T$, defined as follows:
\begin{align*}
    \emph{r_T} &\coloneqq \sum_{\sigma \in R_T} \sigma \\
    \emph{c_T} &\coloneqq \sum_{\sigma \in C_T} \mathrm{sign}(\sigma) \cdot \sigma.
\end{align*}
The \emph{Young symmetrizer} associated with a given Young tableau $T$ is an element $\emph{s_T}$ of $\mathbb{C} \1[ \mathfrak{S}_n \1]$, defined as the product of the row and column symmetrizers of $T$, i.e. $s_T \coloneqq r_T \cdot c_T$. It is worth noting that the row and column symmetrizers satisfy certain properties under the action of the symmetric group $\mathfrak{S}_n$. Specifically, for any $\sigma \in \mathfrak{S}_n$, 
\begin{equation*}
    \sigma \cdot r_T \cdot \sigma^{\shortminus 1} = r_{\sigma(T)} \quad \text{ and } \quad \sigma \cdot c_T \cdot \sigma^{\shortminus 1} = c_{\sigma(T)}.
\end{equation*}
As a consequence, it follows that $\sigma \cdot s_T \cdot \sigma^{\shortminus 1} = s_{\sigma(T)}$.

\begin{example*}
    Consider a partition $\lambda \vdash 3$. When $\lambda = (3)$. In the case where $\lambda = (3)$, the associated canonical Young tableau $T$ is a single row with entries $1$, $2$, and $3$
    \begin{equation*}
        T =
        \begin{ytableau}
            1 & 2 & 3
        \end{ytableau}
    \end{equation*}
    As the identity permutation is the unique element of $\mathfrak{S}_3$ that leaves each column of $T$ unchanged, the Young symmetrizer can be expressed as
    \begin{align*}
        s_T &= r_T \cdot c_T \\
        &= r_T \\
        &= \sum_{\sigma \in \mathfrak{S}_3} \sigma.
    \end{align*}
    In the case where $\lambda = (1, 1, 1)$, the associated canonical Young tableau $T$ is a single column with entries $1$, $2$, and $3$
    \begin{equation*}
        T =
        \begin{ytableau}
            1 \\
            2 \\
            3
        \end{ytableau}
    \end{equation*}
    As the identity permutation is the unique element of $\mathfrak{S}_3$ that leaves each row of $T$ unchanged, the Young symmetrizer can be expressed as
    \begin{align*}
        s_T &= r_T \cdot c_T \\
        &= c_T \\
        &= \sum_{\sigma \in \mathfrak{S}_3} \mathrm{sign}(\sigma) \cdot \sigma.
    \end{align*}
    The two preceding Young symmetrizers correspond, to within a constant factor of $\tfrac{1}{6}$, to the projectors onto the irreducible representations $V_{\text{trivial}}$ and $V_{\text{sign}}$ in the decomposition of $\mathbb{C} \1[ \mathfrak{S}_3 \1]$ into a direct sum of irreducible representations:
        \begin{equation*}
        \mathbb{C} \1[ \mathfrak{S}_3 \1] \simeq V_{\text{trivial}} \oplus V_{\text{sign}} \oplus V^2_{\text{standard}},
    \end{equation*}
    as established by \hyperref[appA:thm:projectorIsotypicComponent]{Theorem~\ref*{appA:thm:projectorIsotypicComponent}}.
\end{example*}

In order to obtain a comprehensive collection of inequivalent irreducible representations of the symmetric group $\mathfrak{S}_n$, it is necessary to first establish several requisite technical lemmas.

\begin{lemma} \label{appA:lem:YoungSymmetrizerProductSameShapes}
    Let $\lambda \vdash n$ be a partition of $n$, let $T$ be a Young tableau on $\lambda$ without repetitions, and let $x \in \mathbb{C} \1[ \mathfrak{S}_n \1]$. Then, there exists $\mu \in \mathbb{C}$ such that the following equation holds: $s_T \cdot x \cdot s_T = \mu \cdot s_T$.
\end{lemma}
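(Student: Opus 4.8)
The plan is to prove the stronger structural fact that $s_T$ is, up to a scalar, the \emph{unique} element of $\mathbb{C}[\mathfrak{S}_n]$ transforming covariantly under left multiplication by $R_T$ and right multiplication by $C_T$, and then to observe that $s_T \cdot x \cdot s_T$ has exactly this covariance for every $x$. First I would record the elementary identities coming from the facts that $R_T$ is a group and that $C_T$ carries the sign twist: for all $p \in R_T$ and $q \in C_T$,
\[ p \cdot s_T = s_T \qquad \text{and} \qquad s_T \cdot q = \mathrm{sign}(q) \cdot s_T, \]
which follow from $p \cdot r_T = r_T$ and $c_T \cdot q = \mathrm{sign}(q) \cdot c_T$. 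Consequently, for arbitrary $x \in \mathbb{C}[\mathfrak{S}_n]$ the element $a \coloneqq s_T \cdot x \cdot s_T$ satisfies $p \cdot a \cdot q = \mathrm{sign}(q) \cdot a$ for all $p \in R_T$ and $q \in C_T$. The lemma follows once I show that any such $a$ is a scalar multiple of $s_T$.

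Second, I would prove this characterisation by comparing coefficients. Writing $a = \sum_{\tau \in \mathfrak{S}_n} a_\tau \cdot \tau$, the covariance relation is equivalent to $a_{p^{-1} \tau q^{-1}} = \mathrm{sign}(q) \cdot a_\tau$ for all $\tau \in \mathfrak{S}_n$ and all $p \in R_T$, $q \in C_T$. Taking $\tau = 1_{\mathfrak{S}_n}$ shows that on the subset $R_T C_T$ the coefficients are completely determined by $a_{1_{\mathfrak{S}_n}}$ and coincide with those of $a_{1_{\mathfrak{S}_n}} \cdot s_T$. Here I would use that $R_T \cap C_T = \{ 1_{\mathfrak{S}_n} \}$, since a permutation preserving every row and every column of a repetition-free tableau must fix each box (the unique intersection of its row and column); this makes the factorisation $\tau = p q$ unique, so that $\mathrm{sign}(q)$ is a well-defined function of $\tau \in R_T C_T$ and matches the coefficient of $\tau$ in $s_T$.

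It then remains to show that $a_\tau = 0$ whenever $\tau \notin R_T C_T$, and this is where the main obstacle lies, namely the combinatorial lemma: if $\tau \notin R_T C_T$, there exist two distinct entries lying in the same row of $T$ and in the same column of $\tau(T)$. Granting it, I obtain a transposition $t = (i\:j) \in R_T$ such that $q \coloneqq \tau^{-1} t \tau = (\tau^{-1}(i)\:\tau^{-1}(j)) \in C_T$ is also a transposition, whence $t \tau q = \tau$ because $t^2 = 1_{\mathfrak{S}_n}$. The covariance relation applied with this $p = t$ and this $q$ then gives $a_\tau = \mathrm{sign}(q) \cdot a_\tau = - a_\tau$, forcing $a_\tau = 0$. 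The combinatorial lemma is the delicate ingredient, and I would establish its contrapositive by a sorting procedure: assuming no two numbers share a row of $T$ and a column of $T' \coloneqq \tau(T)$, one aligns the tableaux row by row, applying column-preserving permutations of $T'$ to match the row-contents of $T'$ with those of $T$, producing $q' \in C_{T'}$ and $p \in R_T$ with $p(T) = q'(T')$; translating this equality of tableaux back into permutations yields $\tau \in R_T C_T$, the desired contradiction.

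Finally I would assemble the pieces: combining the two coefficient computations shows $a = s_T \cdot x \cdot s_T = a_{1_{\mathfrak{S}_n}} \cdot s_T$, so the lemma holds with $\mu \coloneqq a_{1_{\mathfrak{S}_n}}$, the coefficient of the identity in $s_T \cdot x \cdot s_T$, which is a scalar depending linearly on $x$. The conceptually routine parts are the covariance identities and the coefficient bookkeeping; the genuine work, and the step I expect to consume most effort, is the combinatorial lemma and its sorting argument, since it is purely about the geometry of fillings of the Young diagram of $\lambda$ rather than about group-algebra manipulation.
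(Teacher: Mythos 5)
Your proposal is correct and takes essentially the same route as the paper's proof: both reduce the lemma to the claim that any element transforming covariantly under left multiplication by $R_T$ and right (signed) multiplication by $C_T$ is a scalar multiple of $s_T$, and both establish that claim by comparing coefficients on $R_T C_T$ (using $R_T \cap C_T = \{ 1_{\mathfrak{S}_n} \}$ for uniqueness of the factorisation) and by killing the coefficients outside $R_T C_T$ with the transposition produced by the same combinatorial sorting lemma. The only cosmetic difference is that you check the covariance of $s_T \cdot x \cdot s_T$ directly from $p \cdot s_T = s_T$ and $s_T \cdot q = \mathrm{sign}(q) \cdot s_T$, whereas the paper applies the claim to $r_T \cdot y \cdot c_T$ and then specialises to $y = c_T \cdot x \cdot r_T$.
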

\begin{proof}
    To begin, it will be proven that if an element $y \in \mathbb{C} \1[ \mathfrak{S}_n \1]$ satisfies the condition $\sigma \cdot y \cdot \tau = \mathrm{sign}(\tau) \cdot y$, for all permutations $\sigma \in R_T$ and $\tau \in C_T$, then it follows that $y = \mu \cdot s_T$, for some scalar $\mu \in \mathbb{C}$. Consider such an element $y \coloneqq \sum_{\pi \in \mathfrak{S}_n} c_\pi \cdot \pi$ of the group algebra $\mathbb{C} \1[ \mathfrak{S}_n \1]$, then for all permutations $\sigma \in R_T$ and $\tau \in C_T$,
    \begin{equation*}
        \sigma \cdot \3( \sum_{\pi \in \mathfrak{S}_n} c_\pi \cdot \pi \3) \cdot \tau = \sum_{\pi \in \mathfrak{S}_n} c_\pi \cdot (\sigma \cdot \pi \cdot \tau) = \mathrm{sign}(\tau) \cdot \sum_{\pi \in \mathfrak{S}_n} c_\pi \cdot \pi.
    \end{equation*}
    This implies that, $c_{(\sigma \cdot \pi \cdot \tau)} = \mathrm{sign}(\tau) \cdot c_\pi$ for all permutation $\pi \in \mathfrak{S}_n$. Then in particular case if the identity permutation, $c_{(\sigma \cdot \tau)} = \mathrm{sign}(\tau) \cdot c_{1_{\mathfrak{S}_n}}$, and
    \begin{align*}
        \sum_{\substack{\sigma \in R_T \\ \tau \in C_T}} c_{(\sigma \cdot \tau)} \cdot (\sigma \cdot \tau) &= c_{1_{\mathfrak{S}_n}} \cdot \sum_{\substack{\sigma \in R_T \\ \tau \in C_T}} \mathrm{sign}(\tau) \cdot (\sigma \cdot \tau) \\
        &= c_{1_{\mathfrak{S}_n}} \cdot s_T.
    \end{align*}
    Assuming that $c_\pi = 0$ when $\pi \not\in \set{(\sigma \cdot \tau)}{\sigma \in R_T \text{ and } \tau \in C_T}$, and given that $R_T \cap C_T = \1\{ 1_{\mathfrak{S}_n} \1\}$, it follows that $y = c_{1_{\mathfrak{S}_n}} \cdot s_T$. Consider a permutation $\pi \not\in \set{(\sigma \cdot \tau)}{\sigma \in R_T \text{ and } \tau \in C_T}$, and suppose that there exist no distinct $i, j \in \{ 1, \ldots, n \}$ such that $i$ and $j$ belong to the same row of $T$ and the same column of $\pi(T)$. Then, all entries in the first row of $T$ must appear in distinct columns of $\pi(T)$. Hence, there exist two permutations $\sigma \in R_T$ and $\tau \in C{\pi(T)}$ such that the first row of $\sigma(T)$ and $(\tau \cdot \pi)(T)$ are identical. By iterating this process on the remaining rows of $T$, it follows that there exist two permutations $\sigma \in R_T$ and $\tau \in C_{\pi(T)}$ such that $\sigma(T) = (\tau \cdot \pi)(T)$. Consequently $\sigma = \tau \cdot \pi$ and $\pi \in \set{(\sigma \cdot \tau)}{\sigma \in R_T \text{ and } \tau \in C_T}$ since $C_{\pi(T)} = \pi \cdot C_T \cdot \pi^{\shortminus 1}$. Therefore, there exists a transposition $(i\:j) \in R_T \cap C_{\pi(T)}$, and in particular $(i\:j) \in C_{\pi(T)}$. Hence, $(i\:j) = \pi \cdot \tau \cdot \pi^{\shortminus 1}$ for some permutation $\tau \in C_T$, then equality
    \begin{equation*}
        c_{1_{\mathfrak{S}_n}} = (i\:j) \cdot (i\:j) = (i\:j) \cdot \pi \cdot \tau \cdot \pi^{\shortminus 1},
    \end{equation*}
    holds, and finally $\pi = (i\:j) \cdot \pi \cdot \tau$. Now since $(i\:j) \in R_T$ and $c_{(\sigma \cdot \pi \cdot \tau)} = \mathrm{sign}(\tau) \cdot c_\pi$ for all permutations $\sigma \in R_T$, $\tau \in C_T$ and $\pi \in \mathfrak{S}_n$, it holds that
    \begin{align*}
        c_\pi &= c_{((i\:j) \cdot \pi \cdot \tau)} \\
        &= \mathrm{sign}\1( (i\:j) \1) \cdot c_\pi \\
        &= - c_\pi,
    \end{align*}
    which implies that $c_\pi = 0$ when $\pi \not\in \set{(\sigma \cdot \tau)}{\sigma \in R_T \text{ and } \tau \in C_T}$.
    
    Let $\sigma \in R_T$ and $\tau \in C_T$, then, the following equations hold:
    \begin{equation*}
        \sigma \cdot r_T = r_T \cdot \sigma = r_T \quad \text{ and } \quad \tau \cdot c_T = c_T \cdot \tau = \mathrm{sign}(\tau) \cdot c_T.
    \end{equation*}
    Additionally $\sigma \cdot s_T \cdot \tau = \mathrm{sign}(\tau) \cdot s_T$. Furthermore, for $y$ an element of the group algebra $\mathbb{C} \1[ \mathfrak{S}_n \1]$, it follows that
    \begin{equation*}
        \sigma \cdot r_T \cdot y \cdot c_T \cdot \tau = \mathrm{sign}(\tau) \cdot r_T \cdot y \cdot c_T.
    \end{equation*}
    It is inferred that there exists a $\mu \in \mathbb{C}$ for which the equality $r_T \cdot y \cdot c_T = \mu \cdot s_T$ holds. Furthermore, when $y = c_T \cdot x \cdot r_T$,
    \begin{equation*}
        s_T \cdot x \cdot s_T = r_T \cdot c_T \cdot x \cdot r_T \cdot c_T = \mu \cdot s_T.
    \end{equation*}
\end{proof}
\begin{corollary*}
    Given a partition $\lambda \vdash n$, and an arbitrary Young tableau $T$ on $\lambda$ without repetitions, the product of the corresponding Young symmetrizer $s_T$ is nonzero, i.e. $s_T \cdot s_T \neq 0$.
\end{corollary*}
\begin{lemma} \label{appA:lem:YoungSymmetrizerProductDistinctShapes}
    Let $\lambda \vdash n$ and $\mu \vdash n$ be two distinct partitions of $n$, let $T_\lambda$ and $T_\mu$ be two Young tableaux without repetitions on partitions $\lambda$ and $\mu$, respectively, and let $x \in \mathbb{C} \1[ \mathfrak{S}_n \1]$. Then $s_{T_\mu} \cdot x \cdot s_{T_\lambda} = 0$.
\end{lemma}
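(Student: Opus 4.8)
The plan is to reduce the claim to the same combinatorial statement about rows and columns that already powers the proof of \hyperref[appA:lem:YoungSymmetrizerProductSameShapes]{Lemma~\ref*{appA:lem:YoungSymmetrizerProductSameShapes}}. Since $s_{T_\mu} \cdot x \cdot s_{T_\lambda} = r_{T_\mu} \cdot \1( c_{T_\mu} \cdot x \cdot r_{T_\lambda} \1) \cdot c_{T_\lambda}$, it suffices to show that $c_{T_\mu} \cdot y \cdot r_{T_\lambda} = 0$ for every $y \in \mathbb{C}\1[ \mathfrak{S}_n \1]$, and by linearity one may take $y = \sigma$ a single permutation. Using the relation $\sigma \cdot r_{T_\lambda} = r_{\sigma(T_\lambda)} \cdot \sigma$ already recorded in the excerpt, this further reduces to proving $c_{T_\mu} \cdot r_{\sigma(T_\lambda)} = 0$, where $\sigma(T_\lambda)$ is again a tableau of shape $\lambda$.

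First I would introduce the dominance order $\alpha \trianglelefteq \beta$ (meaning $\sum_{i \leq k} \alpha_i \leq \sum_{i \leq k} \beta_i$ for all $k$) together with the crossing lemma: if no two integers lie simultaneously in a common row of a $\lambda$-tableau $S$ and a common column of a $\mu$-tableau $T$, then $\lambda \trianglelefteq \mu$, since the entries of each row of $S$ must occupy distinct columns of $T$. Contrapositively, $\lambda \not\trianglelefteq \mu$ forces a transposition $t = (i\:j)$ lying in $C_{T_\mu}$ and in $R_{\sigma(T_\lambda)}$. This is the favourable case: because $c_{T_\mu} \cdot t = \mathrm{sign}(t) \cdot c_{T_\mu} = -c_{T_\mu}$ and $t \cdot r_{\sigma(T_\lambda)} = r_{\sigma(T_\lambda)}$, inserting $t \cdot t = 1_{\mathfrak{S}_n}$ gives $c_{T_\mu} \cdot r_{\sigma(T_\lambda)} = -c_{T_\mu} \cdot r_{\sigma(T_\lambda)}$, so the product vanishes and the lemma follows whenever $\lambda \not\trianglelefteq \mu$.

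The main obstacle is the complementary case $\lambda \triangleleft \mu$. Here the crossing lemma only produces a transposition lying in $R_{T_\mu}$ and $C_{\sigma(T_\lambda)}$, that is, in the two \emph{outer} factors $r_{T_\mu}$ and $c_{T_\lambda}$; this transposition cannot be slid across the non-commuting inner product $c_{T_\mu} \cdot r_{\sigma(T_\lambda)}$, so the one-line cancellation breaks down and a genuinely different argument is needed. To close this gap I would pass to the idempotent picture: by \hyperref[appA:lem:YoungSymmetrizerProductSameShapes]{Lemma~\ref*{appA:lem:YoungSymmetrizerProductSameShapes}} and its corollary one has $s_{T_\nu} \cdot s_{T_\nu} = \kappa \cdot s_{T_\nu}$ with $\kappa \neq 0$, so $e_\nu \coloneqq \kappa^{-1} s_{T_\nu}$ is an idempotent for each shape $\nu$, and the lemma is equivalent to $e_\mu \cdot \mathbb{C}\1[ \mathfrak{S}_n \1] \cdot e_\lambda = 0$.

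To exploit this I would use that $\mathbb{C}\1[ \mathfrak{S}_n \1]$ is semisimple by Maschke's \hyperref[appA:thm:maschke]{Theorem~\ref*{appA:thm:maschke}}, hence a direct sum of full matrix algebras. For idempotents $e, f$ in such an algebra a block-by-block rank count gives the symmetry $\dim \1( e \cdot \mathbb{C}\1[ \mathfrak{S}_n \1] \cdot f \1) = \dim \1( f \cdot \mathbb{C}\1[ \mathfrak{S}_n \1] \cdot e \1)$. Now in the obstructed case $\lambda \triangleleft \mu$ one has $\mu \not\trianglelefteq \lambda$, so the favourable case already proven (applied with the roles of $\lambda$ and $\mu$ exchanged) yields $e_\lambda \cdot \mathbb{C}\1[ \mathfrak{S}_n \1] \cdot e_\mu = 0$; the dimension symmetry then forces $e_\mu \cdot \mathbb{C}\1[ \mathfrak{S}_n \1] \cdot e_\lambda = 0$, which is exactly the desired vanishing. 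Since $\lambda \neq \mu$ guarantees that at least one of $\lambda \not\trianglelefteq \mu$ and $\mu \not\trianglelefteq \lambda$ holds, the two cases together establish $s_{T_\mu} \cdot x \cdot s_{T_\lambda} = 0$ for all $x$.
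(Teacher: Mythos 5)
Your proof is correct, and it diverges from the paper's own argument exactly where it needs to. The favourable half of your proposal (a bad pair forcing a transposition in $C_{T_\mu} \cap R_{\sigma(T_\lambda)}$, hence $c_{T_\mu} \cdot r_{\sigma(T_\lambda)} = 0$, hence the vanishing of $s_{T_\mu} \cdot x \cdot s_{T_\lambda}$) is precisely the paper's proof, except that the paper runs it under a lexicographic comparison of $\lambda$ and $\mu$ rather than your dominance hypothesis; lexicographic precedence implies non-domination but not conversely, so your crossing lemma covers strictly more pairs directly. The genuine difference is the complementary case: the paper disposes of it with the opening phrase ``without loss of generality, by interchanging $\lambda$ and $\mu$'', but that interchange also swaps which symmetrizer sits to the left of $x$, so in this case the paper's computation only yields $s_{T_\lambda} \cdot x \cdot s_{T_\mu} = 0$, which is not the assertion --- exactly the obstruction you identify. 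Your idempotent argument ($e_\nu$ a nonzero scalar multiple of $s_{T_\nu}$, together with the blockwise rank count $\dim ( e \cdot A \cdot f ) = \dim ( f \cdot A \cdot e )$ for idempotents in a direct sum of full matrix algebras) supplies the missing transfer, so your proof is in fact more complete than the one in the text; what it costs is the Wedderburn structure of $\mathbb{C}[\mathfrak{S}_n]$, and you should cite that semisimple matrix-algebra decomposition (recorded in Chapter 2 of the thesis) rather than Maschke's Theorem alone, which only gives complete reducibility. If you prefer a patch that stays inside the appendix's elementary toolkit, apply the linear anti-automorphism of $\mathbb{C}[\mathfrak{S}_n]$ extending $\sigma \mapsto \sigma^{-1}$: it fixes every $r_T$ and $c_T$, so it carries $s_{T_\mu} \cdot x \cdot s_{T_\lambda}$ to $c_{T_\lambda} \cdot ( r_{T_\lambda} \cdot x^{\ast} \cdot c_{T_\mu} ) \cdot r_{T_\mu}$, whose outer factors are the ones your crossing lemma annihilates when $\mu$ is not dominated by $\lambda$; injectivity of the anti-automorphism then finishes the obstructed case with no semisimplicity at all.
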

\begin{proof}
    It is possible to assert, without loss of generality, that by interchanging $\lambda$ and $\mu$, there is a $k \in \{ 1, \ldots, n \}$ for which $\lambda_k > \mu_k$ holds, and for all $i \in \{ 1, \ldots, k - 1 \}$, the equalities $\lambda_k = \mu_k$ holds.

    Suppose that there exists no pair of distinct indices $i, j \in \{ 1, \ldots, n \}$ such that $i$ and $j$ occupy the same row in $T_\lambda$ and the same column in $T_\mu$. Under this condition, it follows that every element within the first row of $T_\lambda$ occupies a unique column within $T_\mu$. When $k > 1$, there exist two permutations $\sigma \in R_{T_\lambda}$ and $\tau \in C_{T_\mu}$ such that $\sigma(T_\lambda)$ and $\tau(T_\mu)$ have identical first rows. By iteratively applying this procedure to the first $k-1$ rows of $T_\lambda$ and $T_\mu$, it follows that there exist two permutations $\sigma \in R_{T_\lambda}$ and $\tau \in C_{T_\mu}$ such that $\sigma(T_\lambda)$ and $\tau(T_\mu)$ have the same first $k-1$ rows.
    
    The equality $\lambda_k > 0$ holds, otherwise $\lambda = \mu$. Consequently, the $k$-th row of $\sigma(T_\lambda)$ contains $\lambda_k$ entries, which appear in $\lambda_k$ distinct columns of $\tau(T_\mu)$, and are located between the $k$-th and $n$-th rows of $\tau(T_\mu)$. However, this arrangement cannot occur since $\lambda_k > \mu_k$. Therefore, there exists a transposition $(i\:j) \in R_{T_\lambda} \cap C_{T_\mu}$ such that
    \begin{equation*}
        (i\:j) \cdot r_{T_\lambda} = r_{T_\lambda} \quad \text{ and } \quad c_{T_\mu} \cdot (i\:j) = -c_{T_\mu}.
    \end{equation*}
    Furthermore, since $(i\:j) \cdot (i\:j)$ is the identity element in the symmetric group $\mathfrak{S}_n$, then
    \begin{equation*}
        c_{T_\mu} \cdot r_{T_\lambda} = c_{T_\mu} \cdot (i\:j) \cdot (i\:j) \cdot r_{T_\lambda} = -c_{T_\mu} \cdot r_{T_\lambda},
    \end{equation*}
    which implies that $c_{T_\mu} \cdot r_{T_\lambda} = 0$.
    
    Let $\sigma$ be an element of the symmetric group $\mathfrak{S}_n$. Then $c_{T_\mu} \cdot \sigma \cdot r_{T_\lambda} \cdot \sigma^{\shortminus 1} = c_{T_\mu} \cdot r_{\sigma(T_\lambda)} = 0$, as the previous equality is independent of the entries of $T_\lambda$. Thus, it follows that $c_{T_\mu} \cdot \sigma \cdot r_{T_\lambda} = 0$ for any permutation $\sigma \in \mathfrak{S}_n$. Consequently for all $x \in \mathbb{C} \1[ \mathfrak{S}_n \1]$ it follows that $c_{T_\mu} \cdot x \cdot r_{T_\lambda} = 0$ and also $s_{T_\mu} \cdot x \cdot s_{T_\lambda} = 0$.
\end{proof}
\begin{corollary*}
    Given two distinct partitions $\lambda \vdash n$ and $\mu \vdash n$, and two arbitrary Young tableaux without repetitions on partitions $\lambda$ and $\mu$, respectively, the product of the corresponding Young symmetrizers $s_{T_\mu}$ and $s_{T_\lambda}$ is zero, i.e. $s_{T_\mu} \cdot s_{T_\lambda} = 0$.
\end{corollary*}

Let $\lambda \vdash n$ be a partition of $n$, and let $T$ be a Young tableau on $\lambda$ without repetitions. The operation of right multiplication by the Young symmetrizer $s_T$ on the group algebra $\mathbb{C} \1[ \mathfrak{S}_n \1]$, i.e., $\mathbb{C} \1[ \mathfrak{S}_n \1] \cdot s_T$, defines a complex vector space that constitutes a representation of the symmetric group $\mathfrak{S}_n$ through its left action on this space.

\begin{lemma} \label{appA:lem:spetchModuleIrreducibility}
    For any partition $\lambda \vdash n$, and any Young tableau $T$ on $\lambda$ without repetitions, the representation $\mathbb{C} \1[ \mathfrak{S}_n \1] \cdot s_T$ is an irreducible representation of the symmetric group $\mathfrak{S}_n$.
\end{lemma}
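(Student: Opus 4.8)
The plan is to show that $V_\lambda \coloneqq \mathbb{C}[\mathfrak{S}_n] \cdot s_T$ is irreducible by verifying the character-free criterion for irreducibility available in this context: a subrepresentation arising from a Young symmetrizer is irreducible precisely when the two-sided ``sandwich'' $s_T \cdot \mathbb{C}[\mathfrak{S}_n] \cdot s_T$ is one-dimensional. This is exactly the content of \hyperref[appA:lem:YoungSymmetrizerProductSameShapes]{Lemma~\ref*{appA:lem:YoungSymmetrizerProductSameShapes}}, which states that for every $x \in \mathbb{C}[\mathfrak{S}_n]$ there is a scalar $\mu \in \mathbb{C}$ with $s_T \cdot x \cdot s_T = \mu \cdot s_T$. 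So the heart of the argument is a standard reduction: irreducibility of $V_\lambda$ will follow from the one-dimensionality of $s_T \cdot V_\lambda$.

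First I would let $W \subseteq V_\lambda$ be a nonzero subrepresentation, so $W$ is invariant under the left action of $\mathfrak{S}_n$, hence under left multiplication by every element of the group algebra $\mathbb{C}[\mathfrak{S}_n]$. The key observation is that $s_T \cdot W \subseteq s_T \cdot V_\lambda = s_T \cdot \mathbb{C}[\mathfrak{S}_n] \cdot s_T$, and by \hyperref[appA:lem:YoungSymmetrizerProductSameShapes]{Lemma~\ref*{appA:lem:YoungSymmetrizerProductSameShapes}} this latter space is contained in $\mathbb{C} \cdot s_T$, i.e. it is at most one-dimensional. Moreover the corollary to that lemma guarantees $s_T \cdot s_T \neq 0$, so $s_T \cdot V_\lambda = \mathbb{C} \cdot s_T$ is exactly one-dimensional. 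Thus for the subrepresentation $W$ there are only two cases: either $s_T \cdot W = \mathbb{C} \cdot s_T$, or $s_T \cdot W = \{0\}$.

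In the first case, $s_T \in s_T \cdot W \subseteq W$, and since $W$ is stable under left multiplication by $\mathbb{C}[\mathfrak{S}_n]$, it follows that $\mathbb{C}[\mathfrak{S}_n] \cdot s_T \subseteq W$, that is $W = V_\lambda$. In the second case I would argue that $W$ must be zero. The point is that $W \subseteq V_\lambda = \mathbb{C}[\mathfrak{S}_n] \cdot s_T$, so any $w \in W$ has the form $w = x \cdot s_T$ for some $x$; applying $s_T$ on the right is not directly available, but one can use that $W \cdot s_T \subseteq \mathbb{C}[\mathfrak{S}_n] \cdot s_T \cdot s_T$ and analyze $w \cdot s_T$. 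The cleanest route is to show that the map $v \mapsto s_T \cdot v$ restricted to the two-sided ideal structure forces $W \cdot s_T = 0$, whence $w \cdot s_T = x \cdot s_T \cdot s_T = \mu_x \cdot s_T = 0$ gives $\mu_x = 0$; combined with the fact that $s_T \cdot s_T$ is a nonzero scalar multiple of $s_T$ (another application of the same lemma with $x = 1$), one concludes that elements killed by $s_T$ on the appropriate side must vanish.

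The main obstacle will be the second case, $s_T \cdot W = \{0\}$: ruling it out requires care because $V_\lambda$ is a \emph{left} ideal while $s_T$ acts naturally on the right, and one must track on which side the annihilation happens. The clean resolution is to observe that $s_T \cdot s_T = \mu \cdot s_T$ with $\mu \neq 0$ (a nonzero scalar, computable as a positive rational times $|R_T| \cdot |C_T|$, though its exact value is not needed), so $e \coloneqq \mu^{-1} s_T$ is idempotent and $V_\lambda = \mathbb{C}[\mathfrak{S}_n] \cdot e$; then for $w = x e \in W$ one has $s_T \cdot w = \mu \, s_T x e$, and $s_T \cdot W = 0$ would force $s_T x e = 0$ for all such $x$, in particular $s_T \cdot e = \mu^{-1} s_T s_T = s_T \neq 0$ taken with $e \in W$ — but $e \in W$ only holds if $W$ contains a scalar multiple of $s_T$, which is precisely the first case. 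So one shows directly that $s_T \cdot W = 0$ together with $W \subseteq \mathbb{C}[\mathfrak{S}_n] e$ yields a contradiction unless $W = 0$, completing the dichotomy and hence the irreducibility.
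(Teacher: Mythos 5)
Your setup and your first case are sound, and they follow the same route as the paper: every $w \in V_\lambda = \mathbb{C}[\mathfrak{S}_n] \cdot s_T$ has the form $x \cdot s_T$, so Lemma~\ref{appA:lem:YoungSymmetrizerProductSameShapes} gives $s_T \cdot W \subseteq s_T \cdot \mathbb{C}[\mathfrak{S}_n] \cdot s_T \subseteq \mathbb{C} \cdot s_T$ for any subrepresentation $W$, and in the case $s_T \cdot W = \mathbb{C} \cdot s_T$ your observation that $s_T \in W$ (because $W$ is stable under left multiplication) and hence $W = V_\lambda$ is correct, indeed slightly cleaner than the paper's phrasing.

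The genuine gap is the second case, $s_T \cdot W = \{0\}$, which is the crux of the lemma, and your proposed resolution is circular. You obtain a contradiction only under the assumption $e \in W$, which, as you yourself note, is exactly the first case; the closing sentence ``one shows directly that $s_T \cdot W = 0$ together with $W \subseteq \mathbb{C}[\mathfrak{S}_n] e$ yields a contradiction unless $W = 0$'' asserts the conclusion rather than proving it. The alternative route you sketch, deducing $W \cdot s_T = 0$, cannot succeed: for $w = x \cdot s_T$ one has $w \cdot s_T = x \cdot s_T \cdot s_T = \mu \cdot w$, where $\mu \neq 0$ is the scalar with $s_T \cdot s_T = \mu \cdot s_T$ (your identity $w \cdot s_T = \mu_x \cdot s_T$ is wrong; the sandwich lemma applies to $s_T \cdot x \cdot s_T$, not to $x \cdot s_T \cdot s_T$). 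So right multiplication by $s_T$ is a bijection of $V_\lambda$, and $W \cdot s_T = 0$ is \emph{equivalent} to $W = 0$, the statement to be proved. What is missing is a semisimplicity input, which is precisely what the paper supplies at this point: from $s_T \cdot W = \{0\}$ and $W \subseteq \mathbb{C}[\mathfrak{S}_n] \cdot s_T$ one gets $W \cdot W \subseteq \mathbb{C}[\mathfrak{S}_n] \cdot s_T \cdot W = \{0\}$, so $W$ is a left ideal with square zero, and one must then show that $\mathbb{C}[\mathfrak{S}_n]$ has no nonzero such ideal. This can be done by the adjoint argument (for $x = \sum_\pi c_\pi \cdot \pi \in W$, the left-ideal property gives $x^* \cdot x \in W$, hence $x \cdot x^* \cdot x \in W \cdot W = \{0\}$; comparing coefficients of the identity in $(x \cdot x^*) \cdot (x \cdot x^*)^* = 0$ forces $x \cdot x^* = 0$, and then the identity coefficient $\sum_\pi |c_\pi|^2$ of $x \cdot x^*$ forces $x = 0$), or by Maschke's theorem (a module projection of $\mathbb{C}[\mathfrak{S}_n]$ onto $W$ is right multiplication by some $\omega \in W$ with $\omega^2 = \omega$, and $\omega^2 \in W \cdot W = \{0\}$ gives $W = \mathbb{C}[\mathfrak{S}_n] \cdot \omega = \{0\}$). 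Without one of these ingredients the dichotomy does not close.
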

\begin{proof}
    Consider two permutations $\sigma \in R_T$ and $\tau \in C_T$. Then, their product $\sigma \cdot \tau$ is equal to the identity element $1_{\mathfrak{S}_n}$ if and only if both $\sigma$ and $\tau$ are equal to $1_{\mathfrak{S}_n}$. This equivalence is due to the fact that $R_T \cap C_T = \1\{ 1_{\mathfrak{S}_n} \1\}$. Consequently $s_T \neq 0$ and $\mathbb{C} \1[ \mathfrak{S}_n \1] \cdot s_T$ is nonzero.

    Consider $V$ as a subrepresentation of $\mathbb{C} \1[ \mathfrak{S}_n \1] \cdot s_T$, then using \hyperref[appA:lem:YoungSymmetrizerProductSameShapes]{Lemma~\ref*{appA:lem:YoungSymmetrizerProductSameShapes}}, for all $x \in V$ there exists a scalar $\mu \in \mathbb{C}$ such that $s_T \cdot x = \mu \cdot s_T$. Thus
    \begin{equation*}
        s_T \cdot V \subset \mathbb{C} \cdot s_T.
    \end{equation*}
    Since $\mathbb{C} \cdot s_T$ has dimension one, the subspace $s_T \cdot V$ is either equal to $\mathbb{C} \cdot s_T$ or to the zero space $\{ 0 \}$. In the former case, the inclusion $\mathbb{C}\1[ \mathfrak{S}_n \1] \cdot s_T \cdot V \subseteq V$ follows, since $V$ is a representation of the symmetric group $\mathfrak{S}_n$, and the equality $V = \mathbb{C}\1[ \mathfrak{S}_n \1] \cdot s_T$ holds. Similarly, in the latter case, since $V$ is a representation of the symmetric group $\mathfrak{S}_n$, the inclusion $V \cdot V \subseteq V$ holds. Moreover, as $V$ is a subrepresentation of $\mathbb{C}\1[ \mathfrak{S}_n \1] \cdot s_T$, i.e., $V \subseteq \mathbb{C} \1[ \mathfrak{S}_n \1] \cdot s_T$, it follows that the product of $\mathbb{C} \1[ \mathfrak{S}_n \1] \cdot s_T$ and $V$ yields the trivial space $\{ 0 \}$, and the product of $V$ with itself also yields $\{ 0 \}$, namely $\mathbb{C} \1[ \mathfrak{S}_n \1] \cdot s_T \cdot V = \{ 0 \}$ and $V \cdot V = \{ 0 \}$. Consider an element $x$ in the subrepresentation $V$, defined as $x \coloneqq \sum_{\pi \in \mathfrak{S}_n} c_\pi \cdot \pi$. The adjoint of $x$, denoted as $x^*$, is defined by
    \begin{equation*}
        x^* \coloneqq \sum_{\sigma \in \mathfrak{S}_n} \bar{c}_\sigma \cdot \sigma^{\shortminus 1}.
    \end{equation*}
    Then both $x^*$ and $x \cdot x^*$ belong to $V$. As $V \cdot V = \{ 0 \}$ it follows that $x \cdot x^* = 0$. Consequently, for all permutations $\sigma \in \mathfrak{S}_n$ it holds that $c_\sigma \cdot x^* = 0$. In particular
    \begin{equation*}
        c_e \cdot x^* = \sum_{\sigma \in \mathfrak{S}_n} \bar{c}_\sigma \cdot \sigma = 0.
    \end{equation*}
    Therefore $x = 0$, and thus $V = \{ 0 \}$.
\end{proof}

\begin{lemma} \label{appA:lem:spetchModuleEquivalent}
    For every partition $\lambda \vdash n$, every Young tableau $T$ on $\lambda$ without repetitions, and every permutation $\sigma$ in the symmetric group $\mathfrak{S}_n$, the representations $\mathbb{C} \1[ \mathfrak{S}_n \1] \cdot s_T$ and $\mathbb{C} \1[ \mathfrak{S}_n \1] \cdot s_{\sigma(T)}$ are equivalent.
\end{lemma}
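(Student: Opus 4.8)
The plan is to produce an explicit isomorphism of representations between $\mathbb{C}[\mathfrak{S}_n] \cdot s_T$ and $\mathbb{C}[\mathfrak{S}_n] \cdot s_{\sigma(T)}$, built directly from the conjugation relation $s_{\sigma(T)} = \sigma \cdot s_T \cdot \sigma^{-1}$ already established in the discussion of Young symmetrizers above. The guiding observation is that since $\sigma$ is an invertible element of the group algebra, the intertwining should be realized by \emph{right} multiplication by $\sigma^{-1}$ (right multiplication commutes with the left action that carries the representation). First I would set $\phi : \mathbb{C}[\mathfrak{S}_n] \cdot s_T \to \mathbb{C}[\mathfrak{S}_n]$, $\phi(x) \coloneqq x \cdot \sigma^{-1}$, and verify its image actually lands in $\mathbb{C}[\mathfrak{S}_n] \cdot s_{\sigma(T)}$: writing $x = y \cdot s_T$ and inserting $\sigma^{-1}\sigma = 1_{\mathfrak{S}_n}$ gives
\begin{equation*}
    \phi(y \cdot s_T) = y \cdot s_T \cdot \sigma^{-1} = (y \cdot \sigma^{-1}) \cdot (\sigma \cdot s_T \cdot \sigma^{-1}) = (y \cdot \sigma^{-1}) \cdot s_{\sigma(T)} \in \mathbb{C}[\mathfrak{S}_n] \cdot s_{\sigma(T)}.
\end{equation*}

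Next I would check that $\phi$ is an intertwining map for the left action of $\mathfrak{S}_n$: for any $g \in \mathfrak{S}_n$ and any $x$ in the source space, $\phi(g \cdot x) = g \cdot x \cdot \sigma^{-1} = g \cdot \phi(x)$, which is immediate because left and right multiplication in the group algebra commute. This is the step that makes the argument essentially automatic, and it is the reason right multiplication by $\sigma^{-1}$ is the natural candidate rather than any more elaborate construction.

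Finally I would establish bijectivity. Injectivity is clear since right multiplication by the invertible $\sigma^{-1}$ has the two-sided inverse given by right multiplication by $\sigma$, so $\phi(x) = 0$ forces $x = 0$. For surjectivity, any $z = w \cdot s_{\sigma(T)} = w \cdot \sigma \cdot s_T \cdot \sigma^{-1}$ is the image $\phi(w \cdot \sigma \cdot s_T)$ of an element of $\mathbb{C}[\mathfrak{S}_n] \cdot s_T$. Hence $\phi$ is an isomorphism of representations and the two modules are equivalent. I do not expect a genuine obstacle here; the only care needed is to confirm that the relation $\sigma \cdot s_T \cdot \sigma^{-1} = s_{\sigma(T)}$ is invoked correctly so that the image of $\phi$ is exactly $\mathbb{C}[\mathfrak{S}_n] \cdot s_{\sigma(T)}$ and not merely contained in it. As an alternative route, one could instead note that $\phi$ is a nonzero intertwiner between two irreducible representations (irreducibility coming from \hyperref[appA:lem:spetchModuleIrreducibility]{Lemma~\ref*{appA:lem:spetchModuleIrreducibility}}) and invoke Schur's \hyperref[appA:lem:Schur]{Lemma~\ref*{appA:lem:Schur}} to conclude it is an isomorphism; but the direct bijection argument above is cleaner and self-contained.
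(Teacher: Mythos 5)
Your proposal is correct and follows essentially the same route as the paper: the same intertwiner $\phi(x) = x \cdot \sigma^{\shortminus 1}$, the same use of the conjugation relation $\sigma \cdot s_T \cdot \sigma^{\shortminus 1} = s_{\sigma(T)}$ to locate the image, and the same observation that left and right multiplication commute to get the intertwining property. The only difference is cosmetic: you spell out injectivity and surjectivity explicitly, whereas the paper treats bijectivity of right multiplication by an invertible element as immediate.
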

\begin{proof}
    Consider the linear map $\phi$ from $\mathbb{C} \1[ \mathfrak{S}_n \1] \cdot s_T$ to $\mathbb{C} \1[ \mathfrak{S}_n \1] \cdot s_{\sigma(T)}$. Given $x \in \mathbb{C} \1[ \mathfrak{S}_n \1] \cdot s_T$, with $x = y \cdot s_T$ for some $y \in \mathbb{C} \1[ \mathfrak{S}_n \1]$, then $\phi$ is defined on $x$ by $\phi(x) = x \cdot \sigma^{\shortminus 1}$. Observe that
    \begin{align*}
        \phi(x) &= x \cdot \sigma^{\shortminus 1} \\
        &= y \cdot s_T \cdot \sigma^{\shortminus 1} \\
        &= y \cdot \sigma^{\shortminus 1} \cdot \sigma \cdot s_T \cdot \sigma^{\shortminus 1} \\
        &= y \cdot \sigma^{\shortminus 1} \cdot s_{\sigma(T)}.
    \end{align*}
    It follows that $\phi(x)$ belongs to $\mathbb{C} \1[ \mathfrak{S}_n \1] \cdot s_{\sigma(T)}$, and that $\phi$ is an isomorphism.

    Consider a permutation $\tau \in \mathfrak{S}_n$ and $x$ is an element of $\mathbb{C} \1[ \mathfrak{S}_n \1] \cdot s_T$. Then, it follows that
    \begin{align*}
        \phi \circ (\tau \cdot x) &= \tau \cdot x \cdot \sigma^{\shortminus 1} \\
        &= \tau \circ \phi(x),
    \end{align*}
    and hence $\phi$ satisfies the property of being intertwining.
\end{proof}

\begin{lemma} \label{appA:lem:spetchModuleInequivalent}
    Let $\lambda \vdash n$ and $\mu \vdash n$ be two distinct partitions of $n$. Let $T_\lambda$ and $T_\mu$ be two Young tableaux without repetitions on partitions $\lambda$ and $\mu$, respectively. Then, the two representations $\mathbb{C} \1[ \mathfrak{S}_n \1] \cdot s_{T_\lambda}$ and $\mathbb{C} \1[ \mathfrak{S}_n \1] \cdot s_{T_\mu}$ are inequivalent.
\end{lemma}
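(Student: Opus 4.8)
The plan is to argue by contradiction, exploiting the asymmetry between \hyperref[appA:lem:YoungSymmetrizerProductSameShapes]{Lemma~\ref*{appA:lem:YoungSymmetrizerProductSameShapes}} (same shape) and \hyperref[appA:lem:YoungSymmetrizerProductDistinctShapes]{Lemma~\ref*{appA:lem:YoungSymmetrizerProductDistinctShapes}} (distinct shapes). First I would suppose that there exists an intertwining isomorphism $\phi : \mathbb{C}[\mathfrak{S}_n] \cdot s_{T_\lambda} \to \mathbb{C}[\mathfrak{S}_n] \cdot s_{T_\mu}$. The key preliminary observation is that, since $\phi$ intertwines the action of every group element $\sigma \in \mathfrak{S}_n$, by linearity it commutes with left multiplication by \emph{any} element of the group algebra $\mathbb{C}[\mathfrak{S}_n]$; in particular $\phi(s_{T_\lambda} \cdot v) = s_{T_\lambda} \cdot \phi(v)$ for every $v$ in the source module.

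Next I would compute the image of each module under left multiplication by $s_{T_\lambda}$. For the source, a generic element has the form $y \cdot s_{T_\lambda}$ with $y \in \mathbb{C}[\mathfrak{S}_n]$, so that $s_{T_\lambda} \cdot (y \cdot s_{T_\lambda}) = s_{T_\lambda} \cdot y \cdot s_{T_\lambda}$, which by \hyperref[appA:lem:YoungSymmetrizerProductSameShapes]{Lemma~\ref*{appA:lem:YoungSymmetrizerProductSameShapes}} is a scalar multiple of $s_{T_\lambda}$; taking $y = 1_{\mathfrak{S}_n}$ and invoking its corollary $s_{T_\lambda} \cdot s_{T_\lambda} \neq 0$ shows the scalar is nonzero for at least one choice. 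Hence $s_{T_\lambda} \cdot \1( \mathbb{C}[\mathfrak{S}_n] \cdot s_{T_\lambda} \1) = \mathbb{C} \cdot s_{T_\lambda} \neq \{0\}$. For the target, a generic element is $y \cdot s_{T_\mu}$, and since $\lambda \neq \mu$, the distinct-shape \hyperref[appA:lem:YoungSymmetrizerProductDistinctShapes]{Lemma~\ref*{appA:lem:YoungSymmetrizerProductDistinctShapes}} (applied with the roles of $\lambda$ and $\mu$ swapped, which is legitimate as it holds for any ordered pair of distinct partitions) gives $s_{T_\lambda} \cdot y \cdot s_{T_\mu} = 0$, whence $s_{T_\lambda} \cdot \1( \mathbb{C}[\mathfrak{S}_n] \cdot s_{T_\mu} \1) = \{0\}$. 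The contradiction then follows by transporting across $\phi$: using surjectivity and the commutation property, $\phi \1( s_{T_\lambda} \cdot \mathbb{C}[\mathfrak{S}_n] \cdot s_{T_\lambda} \1) = s_{T_\lambda} \cdot \phi \1( \mathbb{C}[\mathfrak{S}_n] \cdot s_{T_\lambda} \1) = s_{T_\lambda} \cdot \mathbb{C}[\mathfrak{S}_n] \cdot s_{T_\mu} = \{0\}$, yet $\phi$ is injective and $s_{T_\lambda} \cdot \mathbb{C}[\mathfrak{S}_n] \cdot s_{T_\lambda}$ is nonzero, so its image must be nonzero; this is impossible, so no isomorphism exists and the two representations are inequivalent.

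The main subtlety, and the point most likely to trip up a careless write-up, is that the operator "left multiplication by $s_{T_\lambda}$'' is itself \emph{not} a module homomorphism, as it does not commute with left multiplication by group elements. The crux is that it does not need to be one: what is actually used is that the hypothetical intertwiner $\phi$ commutes with it, and that single fact suffices to carry a nonzero subspace to $\{0\}$, producing the contradiction. I would also make sure to state explicitly that \hyperref[appA:lem:spetchModuleIrreducibility]{Lemma~\ref*{appA:lem:spetchModuleIrreducibility}} guarantees both modules are irreducible, so that by Schur's \hyperref[appA:lem:Schur]{Lemma~\ref*{appA:lem:Schur}} ruling out an isomorphism is equivalent to ruling out any nonzero intertwiner, thereby confirming genuine inequivalence rather than merely the absence of one particular map.
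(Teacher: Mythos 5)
Your proof is correct and is essentially the paper's own argument: both rest on \hyperref[appA:lem:YoungSymmetrizerProductSameShapes]{Lemma~\ref*{appA:lem:YoungSymmetrizerProductSameShapes}} giving $s_T \cdot \mathbb{C}[\mathfrak{S}_n] \cdot s_T = \mathbb{C} \cdot s_T \neq \{0\}$ together with \hyperref[appA:lem:YoungSymmetrizerProductDistinctShapes]{Lemma~\ref*{appA:lem:YoungSymmetrizerProductDistinctShapes}} showing that the same left multiplication annihilates the other module, and then conclude that no isomorphism can exist; you simply multiply on the left by $s_{T_\lambda}$ where the paper uses $s_{T_\mu}$. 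Your explicit observation that a hypothetical intertwiner commutes with left multiplication by any element of $\mathbb{C}[\mathfrak{S}_n]$, and your closing appeal to irreducibility and Schur's lemma, merely spell out the steps the paper leaves implicit in its final sentence.
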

\begin{proof}
    According to \hyperref[appA:lem:YoungSymmetrizerProductSameShapes]{Lemma~\ref*{appA:lem:YoungSymmetrizerProductSameShapes}}, it follows that the left action of the Young symmetrizer $s_{T_\mu}$ on $\mathbb{C} \1[ \mathfrak{S}_n \1] \cdot s_{T_\mu}$ results in
    \begin{equation*}
        s_{T_\mu} \cdot \mathbb{C} \1[ \mathfrak{S}_n \1] \cdot s_{T_\mu} = \mathbb{C} \cdot s_{T_\mu}.
    \end{equation*}
    However, from \hyperref[appA:lem:YoungSymmetrizerProductDistinctShapes]{Lemma~\ref*{appA:lem:YoungSymmetrizerProductDistinctShapes}}, the left action of the Young symmetrizer $s_{T_\mu}$ on $\mathbb{C} \1[ \mathfrak{S}_n \1] \cdot s_{T_\lambda}$ is zero, i.e.
    \begin{equation*}
        s_{T_\mu} \cdot \mathbb{C} \1[ \mathfrak{S}_n \1] \cdot s_{T_\lambda} = 0.
    \end{equation*}
    As a consequence, two irreducible representations $\mathbb{C} \1[ \mathfrak{S}_n \1] \cdot s_{T_\lambda}$ and $\mathbb{C} \1[ \mathfrak{S}_n \1] \cdot s_{T_\mu}$ are not isomorphic, hence not equivalent.
\end{proof}

From \hyperref[appA:lem:spetchModuleEquivalent]{Lemma~\ref*{appA:lem:spetchModuleEquivalent}}, for a given partition $\lambda \vdash n$, the representations $\mathbb{C} \1[ \mathfrak{S}_n \1] \cdot s_T$ of the symmetric group $\mathfrak{S}_n$, are mutually equivalent across all Young tableaux $T$ on $\lambda$ without repetition, thereby solely dependent on $\lambda$, and consequently, it is possible to designate any representation of this type as $\emph{V_\lambda}$.

\begin{theorem} \label{appA:thm:spetchModule}
    All irreducible representations of $\mathfrak{S}_n$ can be expressed as $V_\lambda$ for some partition $\lambda \vdash n$.
\end{theorem}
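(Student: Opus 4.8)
The plan is to combine the representation-theoretic machinery already assembled with a counting argument. Recall that \hyperref[appA:lem:spetchModuleIrreducibility]{Lemma~\ref*{appA:lem:spetchModuleIrreducibility}} shows each $V_\lambda \coloneqq \mathbb{C}\1[\mathfrak{S}_n\1] \cdot s_T$ is irreducible, and \hyperref[appA:lem:spetchModuleInequivalent]{Lemma~\ref*{appA:lem:spetchModuleInequivalent}} shows that for distinct partitions $\lambda \neq \mu$ the representations $V_\lambda$ and $V_\mu$ are inequivalent. This already produces at least one irreducible representation for each partition $\lambda \vdash n$, all pairwise inequivalent. So I have constructed exactly as many pairwise-inequivalent irreducibles as there are partitions of $n$.

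The key observation is that the number of partitions of $n$ equals the number of conjugacy classes of $\mathfrak{S}_n$: by the discussion preceding these lemmas, two permutations are conjugate in $\mathfrak{S}_n$ if and only if they share the same cycle type, and cycle types are in bijection with partitions $\lambda \vdash n$. Then \hyperref[appA:thm:numberIrreducibleRepresentations]{Theorem~\ref*{appA:thm:numberIrreducibleRepresentations}} asserts that the total number of inequivalent irreducible representations of any finite group equals its number of conjugacy classes. Therefore $\mathfrak{S}_n$ has exactly as many inequivalent irreducibles as there are partitions of $n$.

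The proof then closes by a pigeonhole/cardinality argument. First I would note that the family $\{V_\lambda\}_{\lambda \vdash n}$ consists of pairwise inequivalent irreducibles, so it contributes at least $|\{\lambda \vdash n\}|$ distinct isomorphism classes. Since this number already matches the total count of inequivalent irreducibles guaranteed by \hyperref[appA:thm:numberIrreducibleRepresentations]{Theorem~\ref*{appA:thm:numberIrreducibleRepresentations}}, the family $\{V_\lambda\}$ must exhaust all isomorphism classes. Hence every irreducible representation of $\mathfrak{S}_n$ is equivalent to $V_\lambda$ for some $\lambda \vdash n$, which is exactly the claim.

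The only genuine obstacle is making sure the three inputs are cleanly in place and correctly invoked: the irreducibility of each $V_\lambda$, the pairwise inequivalence across distinct $\lambda$, and the abstract character-theoretic count. All three are available (the first two as the immediately preceding lemmas, the third as \hyperref[appA:thm:numberIrreducibleRepresentations]{Theorem~\ref*{appA:thm:numberIrreducibleRepresentations}}), so no new hard estimate is required; the subtlety is purely the bijection between partitions and conjugacy classes, which follows from the cycle-type characterization of conjugacy in $\mathfrak{S}_n$. Once that bijection is stated, the argument is a one-line counting conclusion and there is no further computation to grind through.
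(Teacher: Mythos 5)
Your proposal is correct and follows essentially the same route as the paper's own proof: irreducibility of each $V_\lambda$ (Lemma~\ref{appA:lem:spetchModuleIrreducibility}), pairwise inequivalence across distinct partitions (Lemma~\ref{appA:lem:spetchModuleInequivalent}), and the cycle-type bijection between partitions and conjugacy classes combined with the character-theoretic count of Theorem~\ref{appA:thm:numberIrreducibleRepresentations}. The only cosmetic difference is that the paper also explicitly invokes Lemma~\ref{appA:lem:spetchModuleEquivalent} to justify that $V_\lambda$ is well-defined independently of the chosen tableau, a fact you use implicitly when writing $V_\lambda \coloneqq \mathbb{C}\1[\mathfrak{S}_n\1] \cdot s_T$.
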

\begin{proof}
    The irreducibility of the representation $V_\lambda$ is guaranteed for any partition $\lambda \vdash n$, as stated in \hyperref[appA:lem:spetchModuleIrreducibility]{Lemma~\ref*{appA:lem:spetchModuleIrreducibility}}. It follows from \hyperref[appA:lem:spetchModuleEquivalent]{Lemma~\ref*{appA:lem:spetchModuleEquivalent}} and \hyperref[appA:lem:spetchModuleInequivalent]{Lemma~\ref*{appA:lem:spetchModuleInequivalent}} that the number of inequivalent irreducible representations $V_\lambda$ for some partitions $\lambda \vdash n$ is equivalent to the number of conjugacy classes of the symmetric group $\mathfrak{S}_n$. This number, as stated in \hyperref[appA:thm:numberIrreducibleRepresentations]{Theorem~\ref*{appA:thm:numberIrreducibleRepresentations}}, corresponds to the number of irreducible representations of $\mathfrak{S}_n$.
\end{proof}

The \hyperref[appA:prop:groupAlgebraDecompositionMultiplicity]{Proposition~\ref*{appA:prop:groupAlgebraDecompositionMultiplicity}} establishes that upon decomposing the group algebra $\mathbb{C} \1[ \mathfrak{S}_n \1]$ into a direct sum of irreducible representations, namely $\mathbb{C} \1[ \mathfrak{S}_n \1] \simeq V^{\oplus n_1}_1 \oplus \cdots \oplus V^{\oplus n_k}_k$, the dimension of each irreducible representation $V_i$ coincides with its multiplicity $n_i$.

\begin{theorem} \label{appA:thm:spetchModuleDimension}
    Consider a partition $\lambda \vdash n$. Let $V_\lambda$ be the corresponding irreducible representation of a symmetric group $\mathfrak{S}_n$. The dimension of $V_\lambda$ is equivalent to the cardinality of the set of standard Young tableaux associated with $\lambda$.
\end{theorem}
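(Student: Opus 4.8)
The plan is to avoid computing the dimension of each $V_\lambda = \mathbb{C}[\mathfrak{S}_n] \cdot s_T$ individually, and instead to pin down all the numbers $\dim V_\lambda$ at once by a global squeeze between two counting identities. Write $f^\lambda$ for the number of standard Young tableaux of shape $\lambda$. From the Corollary to \hyperref[appA:prop:groupAlgebraDecompositionMultiplicity]{Proposition~\ref*{appA:prop:groupAlgebraDecompositionMultiplicity}}, together with \hyperref[appA:thm:spetchModule]{Theorem~\ref*{appA:thm:spetchModule}}, which guarantees that the $V_\lambda$ for $\lambda \vdash n$ are exactly the irreducible representations of $\mathfrak{S}_n$, one has $\sum_{\lambda \vdash n} (\dim V_\lambda)^2 = \dim \mathbb{C}[\mathfrak{S}_n] = n!$. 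I would match this with the purely combinatorial identity $\sum_{\lambda \vdash n} (f^\lambda)^2 = n!$ and with a uniform one-sided bound $\dim V_\lambda \ge f^\lambda$; since $t \mapsto t^2$ is strictly increasing on $[0,\infty)$, two equal sums of squares whose terms obey a termwise inequality must in fact agree term by term, yielding $\dim V_\lambda = f^\lambda$ for every $\lambda$ simultaneously.

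First I would establish $\sum_{\lambda} (f^\lambda)^2 = n!$ through the Robinson--Schensted correspondence. This is a bijection between $\mathfrak{S}_n$ and the set of pairs $(P, Q)$ of standard Young tableaux of the same shape $\lambda \vdash n$: given $\sigma$, one builds the insertion tableau $P$ by successively row-inserting (bumping) the values $\sigma(1), \ldots, \sigma(n)$, and the recording tableau $Q$ by noting, at each step, the box newly created. Reversibility of the bumping procedure shows the map is a bijection, and counting its target set gives $\sum_\lambda (f^\lambda)^2 = n!$. This step is self-contained combinatorics and presents no conceptual difficulty.

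The substantive step is the inequality $\dim V_\lambda \ge f^\lambda$, for which I would exhibit $f^\lambda$ linearly independent vectors inside $V_\lambda$. Fixing the canonical Young tableau $T_0$ of shape $\lambda$, each standard tableau $T$ of shape $\lambda$ determines a unique permutation $\sigma_T$ with $\sigma_T(T_0) = T$, and I would consider the \emph{standard} family $\{\sigma_T \cdot s_{T_0}\} \subset \mathbb{C}[\mathfrak{S}_n] \cdot s_{T_0} = V_\lambda$. Their linear independence is the heart of the matter and the main obstacle: it is proved by introducing the dominance order on tableaux (equivalently, on the tabloids underlying $s_{T_0}$) and showing that, expanded in a suitable basis of $\mathbb{C}[\mathfrak{S}_n]$, each standard element carries a distinguished leading term occurring in no other, so that the transition matrix is triangular with nonzero diagonal. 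The relations $\sigma \cdot r_{T_0} = r_{T_0}$ for $\sigma \in R_{T_0}$ and $\tau \cdot c_{T_0} = \mathrm{sign}(\tau) \cdot c_{T_0}$ for $\tau \in C_{T_0}$, already recorded in the proof of \hyperref[appA:lem:YoungSymmetrizerProductSameShapes]{Lemma~\ref*{appA:lem:YoungSymmetrizerProductSameShapes}}, are what control these leading terms; the remaining combinatorial bookkeeping is a straightening/Garnir-type argument, delicate but standard. Equivalently, one could prove the reverse bound $\dim V_\lambda \le f^\lambda$ by showing these same standard elements span $V_\lambda$ via straightening; either uniform one-sided bound closes the argument.

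Finally, combining the two identities with the uniform bound completes the squeeze: the chain $n! = \sum_\lambda (\dim V_\lambda)^2 \ge \sum_\lambda (f^\lambda)^2 = n!$ forces $(\dim V_\lambda)^2 = (f^\lambda)^2$, hence $\dim V_\lambda = f^\lambda$, for every $\lambda \vdash n$.
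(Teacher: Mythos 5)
Your proposal is correct, and its global architecture is exactly the paper's: both squeeze $n! = \sum_{\lambda \vdash n} {(\dim V_\lambda)}^2 \geq \sum_{\lambda \vdash n} {f(\lambda)}^2 = n!$ against the termwise bound $\dim V_\lambda \geq f(\lambda)$, where $f(\lambda)$ denotes the number of standard Young tableaux of shape $\lambda$. The two ingredients, however, are obtained by genuinely different routes. For the identity $\sum_{\lambda \vdash n} {f(\lambda)}^2 = n!$ you invoke the Robinson--Schensted bijection, which is conceptual and bijective; the paper instead runs an elementary double induction on $n$ through the branching identities $f(\lambda) = \sum_{\mu = \lambda - 1} f(\mu)$ and $(n+1) f(\lambda) = \sum_{\mu = \lambda + 1} f(\mu)$ --- less illuminating, but requiring no combinatorial machinery beyond what the paper already has. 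For the bound $\dim V_\lambda \geq f(\lambda)$ you place $f(\lambda)$ candidate vectors $\sigma_T \cdot s_{T_0}$ inside a single copy $V_\lambda = \mathbb{C} \1[ \mathfrak{S}_n \1] \cdot s_{T_0}$ and prove their independence by a dominance-order, leading-term (Garnir--straightening) argument; this is the classical Specht-module route, and it buys more than the theorem needs --- an explicit basis of $V_\lambda$ indexed by standard tableaux --- but it is also the most delicate step of your plan and the one you leave at sketch level. The paper avoids that machinery entirely: ordering the standard tableaux lexicographically as $T_1 < \cdots < T_{f(\lambda)}$, it exhibits a transposition in $R_{T_i} \cap C_{T_j}$ to get $s_{T_j} \cdot s_{T_i} = 0$ whenever $i < j$, deduces that the left ideals $\mathbb{C} \1[ \mathfrak{S}_n \1] \cdot s_{T_i}$ sum directly inside the group algebra, and concludes $f(\lambda) \leq \dim V_\lambda$ because the multiplicity of $V_\lambda$ in the regular representation equals $\dim V_\lambda$ (\hyperref[appA:prop:groupAlgebraDecompositionMultiplicity]{Proposition~\ref*{appA:prop:groupAlgebraDecompositionMultiplicity}}). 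Note, finally, that your independence step can be closed by that same trick, sparing the straightening apparatus: since $\sigma_T \cdot s_{T_0} = s_T \cdot \sigma_T$, left-multiplying a vanishing combination $\sum_T a_T \cdot \sigma_T \cdot s_{T_0} = 0$ by $s_{T'}$, where $T'$ is the lexicographically largest standard tableau present, kills every term with $T < T'$ and leaves $\mu \cdot a_{T'} \cdot s_{T'} \cdot \sigma_{T'} = 0$ with $\mu \neq 0$, forcing $a_{T'} = 0$; downward induction then finishes the argument.
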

\begin{proof}
    Consider a partition $\lambda \vdash n$ and let $f(\lambda)$ denote the number of standard Young tableaux associated with $\lambda$. The standard Young tableaux on this partition can be ordered lexicographically based on their entries. Specifically, let $T_1, T_2, \ldots, T_{f(\lambda)}$ be the standard Young tableaux on this partition, ordered such that $T_i < T_j$ if and only if the entries of $T_i$ are smaller in lexicographic order than those of $T_j$, from left to right and top to bottom. Note that the first tableau in this order, denoted by $T_1$, is the canonical Young tableau on this partition.
    
    Let $i$ and $j$ be arbitrary elements of $\1\{ 1, \ldots, f(\lambda) \1\}$, with $i < j$. Further, let $k$ and $l$ denote the first row and column, respectively, at which the two standard Young tableaux $T_i$ and $T_j$ differ, proceeding from left to right and top to bottom. Notably, $k$ and $l$ cannot be the first row or column, respectively, since the Young tableaux are standard. Let $a$ denote the entry of $T_i$ located in the $k$-th row and $l$-th column, and $b$ denote the entry of $T_j$ located in the same position.Given that $i<j$ and $T_i$ and $T_j$ are standard Young tableaux, it follows that $a<b$. Let $m$ and $n$ denote the row and column, respectively, of the entry $a$ in $T_j$. By virtue of $T_j$ being a standard Young tableau, it cannot hold that $m>k$ and $n\geq l$, as the entries of $T_j$ in rows greater than $k$ and columns greater than $l$ are strictly larger than $b$. Furthermore, it cannot be the case that $m<k$, or that $m=k$ and $n<l$, since $T_i$ and $T_j$ coincide on these rows and columns. Lastly, it cannot hold that $m=k$ and $n>l$, as $T_j$ is a standard Young tableau and $a<b$. Thus $m > k$ and $n < l$ hold necessarily. Notably, the entries located on the $k$-th row and $n$-th column of both $T_i$ and $T_j$ are equal and denoted by $c$. Consequently, the transposition $(a\:c)$ belongs to $R_{T_i} \cap C_{T_j}$, since $a$ and $c$ share the same $k$-th row of $T_i$ and the same $n$-th column of $T_j$. Then, the equations
    \begin{equation*}
        (a\:c) \cdot r_{T_i} = r_{T_i} \quad \text{ and } \quad c_{T_j} \cdot (a\:c) = -c_{T_j},
    \end{equation*}
    hold. Moreover, as $(a\:c) \cdot (a\:c)$ yields the identity element in the symmetric group $\mathfrak{S}_n$,
    \begin{align*}
        s_{T_j} \cdot s_{T_i} &= r_{T_j} \cdot c_{T_j} \cdot r_{T_i} \cdot c_{T_i} \\
        &= r_{T_j} \cdot c_{T_j} \cdot (a\:c) \cdot (a\:c) \cdot r_{T_i} \cdot c_{T_i} \\
        &= - s_{T_j} \cdot s_{T_i},
    \end{align*}
    implying that $s_{T_j} \cdot s_{T_i} = 0$.
    
    From \hyperref[appA:lem:spetchModuleEquivalent]{Lemma~\ref*{appA:lem:spetchModuleEquivalent}}, the irreducible representations $\mathbb{C} \1[ \mathfrak{S}_n \1] \cdot s_{T_i}$ and $\mathbb{C} \1[ \mathfrak{S}_n \1] \cdot s_{T_j}$ are equivalent for all $i,j \in \1\{ 1, \ldots, f(\lambda) \1\}$. 
    Let $x_1, x_2, \ldots, x_{f(\lambda)}$ denote certain elements of the group algebra $\mathbb{C} \1[ \mathfrak{S}_n \1]$, such that
    \begin{equation*}
        \sum^{f(\lambda)}_{i = 1} x_i \cdot s_{T_i} = 0.
    \end{equation*}
    By applying the result of \hyperref[appA:lem:YoungSymmetrizerProductSameShapes]{Lemma~\ref*{appA:lem:YoungSymmetrizerProductSameShapes}} and observing that $T_1 \leq T_i$ for all $i \in \1\{ 1, \ldots, f(\lambda) \1\}$, the existence of a nonzero $\mu \in \mathbb{C}$ is established, such that $\sum^{f(\lambda)}_{i = 1} x_i \cdot s_{T_i} \cdot s_{T_1} = \mu \cdot x_1 \cdot s_{T_1}$, which further implies that $x_1 = 0$. Right multiplication on both sides of the equation with $s_{T_i}$ yields $x_i = 0$ for all $i \in \1\{ 1, \ldots, f(\lambda) \1\}$. Consequently, $\dim V_\lambda \geq f(\lambda)$.

    Consider a Young diagram $\lambda$. Let $\mu$ be another Young diagram obtained by either removing a single box from $\lambda$, denoted as $\mu = \lambda - 1$, or adding a single box to $\lambda$, denoted as $\mu = \lambda + 1$. Then $f(\lambda)$ can be expressed as the sum of $f(\mu)$ for all possible Young diagrams $\mu$ obtained by removing a single box from $\lambda$, i.e.
    \begin{equation*}
        f(\lambda) = \sum_{\mu = \lambda - 1} f(\mu).
    \end{equation*}
    
    The proof of the identity $(n + 1) f(\lambda) = \sum_{\mu = \lambda + 1} f(\mu)$ shall be established through induction on $n$. The base case $n = 0$ holds, as $f \1( (0) \1) = f \1( (1) \1)$. For the inductive step, assume $n > 0$, it follows that
    \begin{align*}
        (n + 1) f(\lambda) &= n \cdot f(\lambda) + f(\lambda) \\
        &= n \sum_{\mu = \lambda - 1} f(\mu) + f(\lambda) \\
        &= \sum_{\mu = \lambda - 1} \sum_{\nu = \mu + 1} f(\nu) + f(\lambda),
    \end{align*}
    wherein the second summand enumerates partitions $\nu \vdash n$. There exist two distinct cases, namely $\nu = \lambda$ and $\nu \neq \lambda$. Consider the sets $\lambda_+$ and $\lambda_-$, defined as follows:
    \begin{align*}
        \lambda_+ &= \set{\mu}{\mu = \lambda + 1} \\
        \lambda_- &= \set{\mu}{\mu = \lambda - 1}.
    \end{align*}
    It is worth noting that for each box within a given Young diagram that may be extracted such that it does not alter its validity as a Young diagram, a box can be added to the next row of the Young diagram. Furthermore, it is always possible to add a box to the first row of the Young diagram. As a result, it follows that $\lambda_+ = \lambda_- + 1$. The occurrence of the first case, where $\nu = \lambda$, is equal to the cardinality of the set $\lambda_-$, namely $|\lambda_-|$. Thus
    \begin{align*}
        \sum_{\mu = \lambda - 1} \sum_{\nu = \mu + 1} f(\nu) + f(\lambda) &= \sum_{\mu = \lambda - 1} \sum_{\substack{\nu = \mu + 1 \\ \nu \neq \lambda}} f(\nu) + \1( |\lambda_-| + 1 \1) f(\lambda) \\
        &= \sum_{\mu = \lambda + 1} \sum_{\substack{\nu = \mu - 1 \\ \nu \neq \lambda}} f(\nu) + |\lambda_+| \cdot f(\lambda) \\
        &= \sum_{\mu = \lambda + 1} \sum_{\nu = \mu - 1} f(\nu) \\
        &= \sum_{\mu = \lambda + 1} f(\mu),
    \end{align*}
    concluding the induction proof.

    The formula $\sum_{\lambda \vdash n} f(\lambda)^2 = n!$ shall also be established through induction on $n$. As the base case, it is observed that $f \1( (0) \1) = 1$, and hence the formula is verified for $n = 0$. For the inductive step, suppose $n > 0$, then 
    \begin{align*}
        \sum_{\lambda \vdash n} f(\lambda)^2 &= \sum_{\lambda \vdash n} \sum_{\mu = \lambda - 1} f(\lambda) f(\mu) \\
        &= \sum_{\lambda \vdash (n-1)} \sum_{\mu = \lambda + 1} f(\lambda) f(\mu) \\
        &= n \sum_{\lambda \vdash (n-1)} f(\lambda)^2 \\
        &= n \cdot (n-1)!,
    \end{align*}
    concluding the induction proof.

    According to \hyperref[appA:prop:groupAlgebraDecompositionMultiplicity]{Proposition~\ref*{appA:prop:groupAlgebraDecompositionMultiplicity}} the following equation is satisfied: $\sum_{\lambda \vdash n} \dim(V_\lambda)^2 = n!$. However, for all partitions $\lambda \vdash n$, it is true that $\dim V_\lambda \geq f(\lambda)$. Hence, $\dim V_\lambda = f(\lambda)$.
\end{proof}

\begin{example*}
    There exist three partitions of $3$, which are denoted as $(3)$, $(2, 1)$ and $(1, 1, 1)$. Each partition has a corresponding Young diagram
    \begin{equation*}
        \begin{array}{ccccc}
            \ydiagram{3} & \qquad & \ydiagram{2, 1} & \qquad & \ydiagram{1, 1, 1} \\[3em]
            (3) && (2, 1) && (1, 1, 1)
        \end{array}
    \end{equation*}
    For the partitions $(3)$ and $(1, 1, 1)$, there exists only one standard Young tableau, which corresponds to the canonical Young tableau
    \begin{equation*}
        \begin{ytableau}
            1 & 2 & 3
        \end{ytableau}
        \qquad \text{ and } \qquad
        \begin{ytableau}
            1 \\
            2 \\
            3
        \end{ytableau}
    \end{equation*}
    However the partition $(2, 1)$ admits a pair of distinct standard Young tableaux, denoted as follows
    \begin{equation*}
        \begin{ytableau}
            1 & 2 \\
            3
        \end{ytableau}
        \qquad \text{ and } \qquad
        \begin{ytableau}
            1 & 3 \\
            2
        \end{ytableau}
    \end{equation*}
    
    From \hyperref[appA:thm:spetchModuleDimension]{Theorem~\ref*{appA:thm:spetchModuleDimension}}, it follows that the irreducible representations $V{(3)}$ and $V_{(1, 1, 1)}$, of $\mathfrak{S}_3$, have dimension 1, while $V_{(2, 1)}$ has dimension 2, as expected. The determination of the irreducible representations of $\mathfrak{S}_3$ corresponding to the canonical Young tableaux $T_{(3)}, T_{(2, 1)}$, and $T_{(1, 1, 1)}$ necessitates an examination of their respective Young symmetrizers $s_{T_{(3)}}, s_{T_{(1, 1, 1)}}$ and $s_{T_{(2, 1)}}$. This examination is conducted in conjunction with the construction of $V_\lambda$'s From \hyperref[appA:thm:spetchModule]{Theorem~\ref*{appA:thm:spetchModule}},
    \begin{align*}
        \mathbb{C} \1[ \mathfrak{S}_3 \1] \cdot s_{T_{(3)}} &= \mathbb{C} \1[ \mathfrak{S}_3 \1] \cdot \1( 1_{\mathfrak{S}_3} + (1\:2) + (1\:3) + (2\:3) + (1\:2\:3) + (3\:2\:1) \1) \\
        \mathbb{C} \1[ \mathfrak{S}_3 \1] \cdot s_{T_{(1, 1, 1)}} &= \mathbb{C} \1[ \mathfrak{S}_3 \1] \cdot \1( 1_{\mathfrak{S}_3} - (1\:2) - (1\:3) - (2\:3) + (1\:2\:3) + (3\:2\:1) \1) \\
        \mathbb{C} \1[ \mathfrak{S}_3 \1] \cdot s_{T_{(2, 1)}} &= \mathbb{C} \1[ \mathfrak{S}_3 \1] \cdot \1( 1_{\mathfrak{S}_3} + (1\:2) - (1\:3) + (3\:2\:1) \1).
    \end{align*}
    The action of the symmetric group $\mathfrak{S}_3$ on the irreducible representation $\mathbb{C} \1[ \mathfrak{S}_n \1] \cdot s_{T_{(3)}}$ yields a trivial action. It follows that $V_{(3)}$ is equivalent to $V_{\text{trivial}}$, while the irreducible representations $V_{(1, 1, 1)}$ and $V_{(2, 1)}$ are necessarily equivalent to $V_{\text{sign}}$ and $V_{\text{standard}}$, respectively.
\end{example*}

Let $\mathbb{C} \1[ \mathfrak{S}_n \1] \simeq V^{\oplus n_1}_1 \oplus \cdots \oplus V^{\oplus n_k}_k$ be the decomposition the group algebra $\mathbb{C} \1[ \mathfrak{S}_n \1]$ of the symmetric group $\mathfrak{S}_n$ into a direct sum of irreducible representations. By \hyperref[appA:thm:projectorIsotypicComponent]{Theorem~\ref*{appA:thm:projectorIsotypicComponent}} the projectors onto the isotypic components $V^{\oplus n_i}_i$ are given by
\begin{equation*}
    \phi_i \coloneqq \frac{\dim V_i}{n!} \sum_{\sigma \in \mathfrak{S}_n} \bar{\chi}_{V_i} (\sigma) \cdot \sigma.
\end{equation*}
\begin{theorem*}
    Let the group algebra $\mathbb{C} \1[ \mathfrak{S}_n \1]$ of the symmetric group $\mathfrak{S}_n$ and its decomposition into a direct sum of irreducible representations $\mathbb{C} \1[ \mathfrak{S}_n \1] \simeq \bigoplus_{\lambda \vdash n} V^{\oplus n_\lambda}_\lambda$. For each partition $\lambda \vdash n$, the map $\phi_\lambda$ defined on $\mathbb{C} \1[ \mathfrak{S}_n \1]$ by
    \begin{equation*}
        \phi_\lambda \coloneqq \frac{\dim V_\lambda}{n!} \sum_T s_T,
    \end{equation*}
    where the sum is taken over all Young tableaux without repetitions on $\lambda$, is the projector onto $V^{\oplus n_\lambda}_\lambda$. 
\end{theorem*}
\begin{proof}
    Let $\sigma \in \mathfrak{S}_n$, then
    \begin{align*}
        \sigma^{\shortminus 1} \cdot \sum_T s_T \cdot \sigma &= \sum_T s_{\sigma(T)} \cdot  \\
        &= \sum_T s_T,
    \end{align*}
    and in hence $\sum_T s_T \in \mathrm{Hom}_{\mathfrak{S}_n} \2( \mathbb{C} \1[ \mathfrak{S}_n \1] \2)$. Consider the decomposition of the group algebra $\mathbb{C} \1[ \mathfrak{S}_n \1]$ of the symmetric group $\mathfrak{S}_n$ into a direct sum of irreducible representations $\mathbb{C} \1[ \mathfrak{S}_n \1] \simeq \bigoplus_{\lambda \vdash n} V^{\oplus n_\lambda}_\lambda$, for all $\mu \vdash n$ partition of $n$, from Schur's \hyperref[appA:lem:Schur]{Lemma~\ref*{appA:lem:Schur}}, the restriction of $\sum_T s_T$ to $V_\mu$ is an homothety $\lambda \cdot I$, with $\lambda \in \mathbb{C}$. From \hyperref[appA:lem:YoungSymmetrizerProductDistinctShapes]{Lemma~\ref*{appA:lem:YoungSymmetrizerProductDistinctShapes}}, if $\lambda$ and $\mu$ are distinct, then for all $x \in V_\mu$,
    \begin{equation*}
        \sum_T s_T \cdot x = 0, 
    \end{equation*}
    and thus the restriction of $\sum_T s_T$ to $V_\mu$ is the zero map. Let $T_\lambda$ be any Young tableau without repetitions on $\lambda$, then the coefficient of $1_{\mathfrak{S}_n}$ in $s_{T_{\lambda}}$ is $1$, since $R_T \cap C_T = \1\{ 1_{\mathfrak{S}_n} \1\}$, and thus $\Tr \2[\sum_T s_T \2] = n!$. So the restriction of $\sum_T s_T$ to $V_\lambda$ is the homothety
    \begin{equation*}
        \frac{n!}{\dim V_\lambda} \cdot I.
    \end{equation*}

    As a consequence, since $\sum_T s_T$ does not cause any intertwining between the representations $V_\lambda$, the map $\phi_\lambda$ is the identity on $V^{\oplus n_\lambda}_\lambda$ and the zero map elsewhere, and thus is a projector.
\end{proof}

In the previous section, the complex vector spaces associated with the irreducible representations of the symmetric group $\mathfrak{S}_n$ were discussed. However, the corresponding matrices representing the permutation elements were not described. The construction of such matrices is far from trivial, and in fact, there exist a variety of constructions that may be employed based on the desired properties of the resulting matrices, i.e. integer matrix components, rational matrix components, or orthogonal matrices. 

\subsection{Restricted and induced representations}

The present Section concerns the correlation existing between a finite group $G$ and a subgroup $H$. Can a representation of $H$ be derived from a representation of $G$ or vice versa? Furthermore, if the original representation is irreducible, what conclusions can be drawn concerning the derived representation?

Consider a representation $(\rho, V)$ of a finite group $G$, and let $H$ be a subgroup of $G$. Given an element $h \in H$, it is worth noting that $H$ is a subgroup of $G$ and hence $h \in G$. Furthermore, as the representation $\rho$ induces an action of $G$ on $V$, it follows that the action of $\rho$ also restricts to only elements $h \in H$. The \emph{restricted representation} of $(\rho, V)$ on the subgroup $H$ is denoted $\emph{\1( \mathrm{Res}^G_H(\rho), V \1)}$ and defined on the element $h \in H$ as follows: 
\begin{equation*}
    \mathrm{Res}^G_H(\rho)(h) \coloneqq \rho(h).
\end{equation*}
In the case where $(\rho, V)$ is an irreducible representation of the group $G$, it is not always the case that the complex vector space $V$ manifests irreducibility as a representation of a subgroup $H$. This can occur when the size of $V$ becomes too large to maintain irreducibility with respect to the smaller subgroup $H$.

If $\chi$ is the character of the representation $(\rho, V)$ of $G$, the \emph{restricted character} of the representation $\mathrm{Res}^G_H(\rho)$, denoted $\emph{\mathrm{Res}^G_H(\chi)}$, becomes $\mathrm{Res}^G_H(\chi)(h) = \chi(h)$, for all $h \in H$.

\begin{example*}
    Let $\mathfrak{S_4}$ be the group of all permutations of $\{ 1, 2, 3, 4 \}$. From \hyperref[appA:sec:RepresentationsOfTheSymmetricGroup]{Section~\ref*{appA:sec:RepresentationsOfTheSymmetricGroup}}, the symmetric group $\mathfrak{S_4}$ has $5$ conjugacy classes, and consequently $5$ irreducible representations, namely
    \begin{equation*}
        V_{(4)}, \quad  V_{(3,  1)}, \quad V_{(2,  2)}, \quad V_{(2,  1,  1)} \quad \text{ and } \quad V_{(1,  1,  1,  1)}.
    \end{equation*}
    Let $\emph{D_4}$ be the \emph{dihedral group} of order $8$, whose elements are the symmetries of the square, generated by the $\tfrac{\pi}{4}$ counterclockwise rotation $r$ and the vertical reflection $s$:
    \begin{equation*}
        D_4 = \langle r, s \rangle = \1\{ 1_{D_4}, r, r^2, r^3, s, s r, s r^2, s r^3 \1\}.
    \end{equation*}
    It has $5$ conjugacy classes, which are given by 
    \begin{equation*}
        \1\{ 1_{D_4} \1\}, \quad \1\{ r, r^3 \1\}, \quad \1\{ r^2 \1\}, \quad \1\{ s r, s r^3 \1\} \quad \text{ and } \quad \1\{ s, s r^2 \1\},
    \end{equation*}
    and as many irreducible representations: $W_1, W_2, W_3, W_4$ and $W_5$. Then character table of $D_4$ can be expressed as follows
    \begin{equation*}
        % [inline block 27: 3 envs, 2059 chars -> data_tex | \begin{array}{c|ccccc}              D_4 & \1\{ 1_{D_4} \1\} & \1\{ r, r^3 \1\} & \1\{ r^2 \1\} & \1\{ s r, s r^3 \1\} & ...]

    \end{equation*}
    In the general case, the restricted representations $\mathrm{Res}^{\mathfrak{S_4}}_{D_4}(V_\lambda)$ of $D_4$, for every partition $\lambda \vdash 4$, do not constitute irreducible representations. According to Maschke's \hyperref[appA:thm:maschke]{Theorem~\ref*{appA:thm:maschke}}, these representations can be decomposed into a direct sum of irreducible representations. Specifically, 
    \begin{align*}
        \mathrm{Res}^{\mathfrak{S_4}}_{D_4} \1( V_{(4)} \1) &\simeq W_1 \\
    	\mathrm{Res}^{\mathfrak{S_4}}_{D_4} \1( V_{(3, 1)} \1) &\simeq W_3 \oplus W_5 \\
    	\mathrm{Res}^{\mathfrak{S_4}}_{D_4} \1( V_{(2, 2)} \1) &\simeq W_1 \oplus W_2 \\
    	\mathrm{Res}^{\mathfrak{S_4}}_{D_4} \1( V_{(2, 1, 1)} \1) &\simeq W_4 \oplus W_5 \\
    	\mathrm{Res}^{\mathfrak{S_4}}_{D_4} \1( V_{(1, 1, 1, 1)} \1) &\simeq W_2.
    \end{align*}
\end{example*}

The passage from a representation $(\rho, V)$ of a subgroup $H$ of a finite group $G$ to the group $G$ necessitates a more intricate approach. This arises due to the inability of $\rho$ to induce an action of $G \setminus H$ on $V$. Nevertheless, it is feasible to look at $\emph{G/H}$, the \emph{quotient group}, comprising elements in the form of $g \cdot h$ with $g \in G$ and $h \cdot H$, where $\rho(h)$ exhibits well-defined behavior.

Let $V, W$ be two complex vector space of finite dimension. The complex tensor product between $V$ and $W$, namely $V \otimes W$, is a condensed notation for the explicit notation $V \otimes_{\mathbb{C}} W$. It is worth noting that for any $v \in V$ and $w \in W$, the relation
\begin{equation*}
    (c \cdot v) \otimes w = v \otimes (c \cdot w),
\end{equation*}
holds for all $c \in \mathbb{C}$.

Given a subgroup $H$ of a finite group $G$, suppose that $(\rho, V)$ is a representation of $H$. In this case, it is possible to define the tensor product complex vector space $\mathbb{C}[G] \otimes_{\mathbb{C}[H]} V$ as follow: for all $g \in \mathbb{C}[G]$ and $v \in V$, the following relation,
\begin{equation*}
    (g \cdot h) \otimes v = g \otimes (h \cdot v),
\end{equation*}
holds for all $h \in H$. The \emph{induced representation} of $(\rho, V)$ on the group $G$ is denoted $\emph{\1( \mathrm{Ind}^G_H(\rho), \mathbb{C}[G] \otimes_{\mathbb{C}[H]} V \1)}$ and defined on element $g \in G$ by
\begin{equation*}
    \mathrm{Ind}^G_H(\rho)(g)(x) \coloneqq g \cdot x,
\end{equation*}
for all $x \in \mathbb{C}[G] \otimes_{\mathbb{C}[H]} V$.

In order to gain a more comprehensive understanding of the operation of group $G$ on $\mathbb{C}[G] \otimes_{\mathbb{C}[H]} V$, consider a \emph{coset representation} of $G$, i.e. $G = g_1 \cdot H \uplus \cdots \uplus g_k \cdot H$, where $\uplus$ denotes the disjoint union, with $g_i \in G$. For any element $g \in G$, there exists $i \in \{ 1, \ldots, k \}$ and $h \in H$ such that $g = g_i \cdot h$. Consequently, all elements $g \otimes v \in \mathbb{C}[G] \otimes_{\mathbb{C}[H]} V$ satisfy
\begin{equation*}
    g \otimes v = (g_i \cdot h) \otimes v = g_i \otimes (h \cdot v).
\end{equation*}
Thus the action of the group $G$ on the tensor product $\mathbb{C}[G] \otimes_{\mathbb{C}[H]} V$ is exclusively defined for the elements $g_i \otimes v \in \mathbb{C}[G] \otimes_{\mathbb{C}[H]} V$. Specifically, for all $g \in G$, the representation $\mathrm{Ind}^G_H(\rho)(g)$ takes the form of a $k \times k$ block matrix that corresponds to the coset representation $G = g_1 \cdot H \uplus \cdots \uplus g_k \cdot H$.
The action of an element $g \in G$ upon $g_i \otimes v \in \mathbb{C}[G] \otimes_{\mathbb{C}[H]} V$ is given by
\begin{align*}
    g \cdot (g_i \otimes v) &= (g \cdot g_i) \otimes v \\
    &= (g_j \cdot h) \otimes v
    = g_j \otimes (h \cdot v),
\end{align*}
where $h$ is the element of $H$ such that $g \cdot g_i = g_j \cdot h$ in the coset representation $G = g_1 \cdot H \uplus \cdots \uplus g_k \cdot H$. In other words, $h$ is defined as $h \coloneqq g^{\shortminus 1}_j \cdot g \cdot g_i$. Additionally, it follows that $h$ acts on a vector $v \in V$ via the representation $\rho \1( g^{\shortminus 1}_j \cdot g \cdot g_i \1) (v)$. Finally,
\begin{equation*}
    \mathrm{Ind}^G_H(\rho)(g) =
    \begin{pmatrix}
        \rho \1( g^{\shortminus 1}_1 \cdot g \cdot g_1 \1) & \rho \1( g^{\shortminus 1}_1 \cdot g \cdot g_2) & \ldots & \rho \1( g^{\shortminus 1}_1 \cdot g \cdot g_k \1) \\[0.5em]
        \rho \1( g^{\shortminus 2}_1 \cdot g \cdot g_1 \1) & \rho \1( g^{\shortminus 2}_1 \cdot g \cdot g_2) & \ldots & \rho \1( g^{\shortminus 2}_1 \cdot g \cdot g_k \1) \\[0.5em]
        \vdots & \vdots & \ddots & \vdots \\[0.5em]
        \rho \1( g^{\shortminus k}_1 \cdot g \cdot g_1 \1) & \rho \1( g^{\shortminus k}_1 \cdot g \cdot g_2) & \ldots & \rho \1( g^{\shortminus k}_1 \cdot g \cdot g_k \1)
    \end{pmatrix}.
\end{equation*}

If $\chi$ is the character of the representation $(\rho, V)$ of $H$, the \emph{induced character}  of the representation $\mathrm{Ind}^G_H(\rho)$, denoted $\emph{\mathrm{Ind}^G_H(\chi)}$ becomes, for all $g \in G$,
\begin{equation*}
    \mathrm{Ind}^G_H(\chi) = \Tr \1[ \mathrm{Ind}^G_H(\rho)(g) \1] = \sum^k_{i = 1} \chi \1( g^{\shortminus 1}_i \cdot g \cdot g_i \1),
\end{equation*}
in the coset representation $G = g_1 \cdot H \uplus \cdots \uplus g_k \cdot H$.

\begin{example*}
    Let $\mathbb{Z} / 4 \mathbb{Z}$ be the cyclic group of order $4$ generated by the element $z$. Formally, $\mathbb{Z} / 4 \mathbb{Z}$ is defined by $\mathbb{Z} / 4 \mathbb{Z} \coloneqq \1\{ 1_{\mathbb{Z} / 4 \mathbb{Z}}, z, z^2, z^3 \1\}$. Let  $\mathbb{Z} / 2 \mathbb{Z}$ the subgroup of $\mathbb{Z} / 4 \mathbb{Z}$ of order $2$ generated by $z^2$, i.e. $\mathbb{Z} / 2 \mathbb{Z} \coloneqq \1\{ 1_{\mathbb{Z} / 2 \mathbb{Z}}, z^2 \1\}$. Consider the one-dimensional irreducible representation $(\rho, \mathbb{C})$ of $\mathbb{Z} / 2 \mathbb{Z}$, defined as follows:
    \begin{equation*}
        \rho \1( 1_{\mathbb{Z} / 4 \mathbb{Z}} \1) = 1 \qquad \text{ and } \qquad \rho \1( z^2 \1) = -1.
    \end{equation*}
    The induced representation $\mathrm{Ind}^{\mathbb{Z} / 4 \mathbb{Z}}_{\mathbb{Z} / 2 \mathbb{Z}}(\rho)$ of $\mathbb{Z} / 4 \mathbb{Z}$ is defined on the tensor product $\mathbb{C}[\mathbb{Z} / 4 \mathbb{Z}] \otimes_{\mathbb{C}[\mathbb{Z} / 2 \mathbb{Z}]} \mathbb{C}$ generated by the vectors
    \begin{equation*}
        1_{\mathbb{Z} / 4 \mathbb{Z}} \otimes 1, \quad z \otimes 1, \quad z^2 \otimes 1 \quad \text{ and } \quad z^2 \otimes 1.
    \end{equation*}
    But by definition of this tensor product over the group algebra $\mathbb{C}[\mathbb{Z} / 2 \mathbb{Z}]$, the two following relations hold:
    \begin{equation*}
        z^2 \otimes 1 = 1_{\mathbb{Z} / 4 \mathbb{Z}} \otimes \1( z^2 \cdot 1 \1) = -1 \cdot \1( 1_{\mathbb{Z} / 4 \mathbb{Z}} \otimes 1 \1),
    \end{equation*}
    and
    \begin{equation*}
        z^3 \otimes 1 = z \otimes \1( z^2 \cdot 1 \1) = -1 \cdot \1( z \otimes 1 \1).
    \end{equation*}
    Thus $\mathbb{C}[\mathbb{Z} / 4 \mathbb{Z}] \otimes_{\mathbb{C}[\mathbb{Z} / 2 \mathbb{Z}]} \mathbb{C}$ is a $2$-dimensional complex vector space with basis $1_{\mathbb{Z} / 4 \mathbb{Z}} \otimes 1$ and $z \otimes 1$. The action of the elements of $\mathbb{Z} / 4 \mathbb{Z}$ on the tensor product $\mathbb{C}[\mathbb{Z} / 4 \mathbb{Z}] \otimes_{\mathbb{C}[\mathbb{Z} / 2 \mathbb{Z}]} \mathbb{C}$, defined with respect to the generator $z$, is given by
    \begin{equation*}
        \mathrm{Ind}^{\mathbb{Z} / 4 \mathbb{Z}}_{\mathbb{Z} / 2 \mathbb{Z}}(\rho)(z) =
        \begin{pmatrix}
            0 & -1 \\
            1 & 0
        \end{pmatrix}.
    \end{equation*}
\end{example*}

Let $C(G)$ denotes the complex vector space of class functions of a finite group $G$, and $C(H)$ denotes the complex vector space of class functions of a subgroup $H$ of $G$. There is an linear map $\phi: C(G) \to C(H)$ given by restriction of class functions. As linear map between Hermitian inner product complex vector spaces of finite dimension, there exists a unique adjoint map $\phi^*: C(H) \to C(G)$ satisfying
\begin{equation*}
    \scalar{\phi^*(g)}{f}_G = \scalar{g}{\phi(f)}_H,
\end{equation*}
for all $f \in C(G)$ and $g \in C(H)$.

\begin{theorem}[Frobenius reciprocity] \label{appA:thm:FrobeniusReciprocity}
    Consider a finite group $G$ and a subgroup $H$ of $G$. Let $(\rho, V)$ and $(\sigma, W)$ be representations of $G$ and $H$, respectively. Then
    \begin{equation*}
        \scalar{\mathrm{Ind}^G_H \1( \chi_W \1)}{\chi_V}_G = \scalar{\chi_W}{\mathrm{Res}^G_H \1( \chi_V \1)}_H.
    \end{equation*}
\end{theorem}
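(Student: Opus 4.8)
The plan is to prove Frobenius reciprocity by reducing both inner products to explicit character sums over the group and then matching them term by term. First I would recall that $\scalar{\cdot}{\cdot}_G$ on class functions is defined by $\scalar{f}{g}_G = \frac{1}{|G|} \sum_{x \in G} \bar{f}(x) g(x)$, and similarly for $H$. The strategy is to compute the left-hand side using the explicit formula for the induced character derived earlier in the excerpt, namely
\begin{equation*}
    \mathrm{Ind}^G_H(\chi_W)(x) = \sum^k_{i=1} \chi_W \1( g^{\shortminus 1}_i \cdot x \cdot g_i \1),
\end{equation*}
where $G = g_1 \cdot H \uplus \cdots \uplus g_k \cdot H$ is a fixed coset representation, with the convention that $\chi_W$ is extended by zero off $H$ (so that $\chi_W(g^{\shortminus 1}_i x g_i)$ vanishes unless $g^{\shortminus 1}_i x g_i \in H$). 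Substituting this into $\scalar{\mathrm{Ind}^G_H(\chi_W)}{\chi_V}_G$ gives a double sum over $x \in G$ and over the coset index $i$.

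Next I would perform the key change of variables. For each fixed $i$, setting $h = g^{\shortminus 1}_i x g_i$ runs $h$ over all of $H$ as $x$ runs over the coset $g_i \cdot H$ (and the extended character kills the contributions from other cosets). Using that $\chi_V$ is a class function on $G$, I have $\chi_V(x) = \chi_V(g_i h g^{\shortminus 1}_i) = \chi_V(h)$, so each of the $k$ coset blocks contributes the same quantity. This collapses the sum to
\begin{equation*}
    \scalar{\mathrm{Ind}^G_H(\chi_W)}{\chi_V}_G = \frac{k}{|G|} \sum_{h \in H} \bar{\chi}_W(h) \chi_V(h) = \frac{1}{|H|} \sum_{h \in H} \bar{\chi}_W(h) \chi_V(h),
\end{equation*}
where I use $|G| = k \cdot |H|$. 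The right-hand side is exactly $\scalar{\chi_W}{\mathrm{Res}^G_H(\chi_V)}_H$, since $\mathrm{Res}^G_H(\chi_V)(h) = \chi_V(h)$ for $h \in H$, which completes the argument.

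The cleanest way to package this, and the step I expect to require the most care, is the bookkeeping of the coset decomposition together with the zero-extension convention for $\chi_W$: one must check that $\chi_W(g^{\shortminus 1}_i x g_i)$ genuinely vanishes whenever $x \notin g_i \cdot H \cdot g^{\shortminus 1}_i$ is not conjugate-into-$H$ via $g_i$, and that summing $x$ over $g_i \cdot H$ recovers precisely one copy of the $H$-sum. An alternative, more conceptual route would be to invoke the adjointness $\scalar{\phi^*(g)}{f}_G = \scalar{g}{\phi(f)}_H$ stated just before the theorem, and identify $\mathrm{Ind}^G_H$ as the adjoint $\phi^*$ of the restriction map $\phi = \mathrm{Res}^G_H$ on characters; the theorem is then immediate from the definition of the adjoint. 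I would likely present the direct character computation as the main proof, since it is self-contained and makes the factor $|G|/|H| = k$ transparent, while remarking that it exhibits $\mathrm{Ind}^G_H$ and $\mathrm{Res}^G_H$ as adjoint operations on the space of class functions.
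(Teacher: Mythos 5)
Your proof is in substance the paper's own: both expand $\scalar{\mathrm{Ind}^G_H(\chi_W)}{\chi_V}_G$ through the coset formula for the induced character, use the zero-extension of $\chi_W$ together with the fact that $\chi_V$ is a class function on $G$, and collapse the double sum via $|G| = k\,|H|$. The paper merely organizes the bookkeeping differently: it first replaces the sum over the $k$ coset representatives by the average $\frac{1}{|H|}\sum_{h \in G}$ and passes through $\bar{\chi}_W(g) = \chi_W(g^{-1})$, whereas you work block by block in the coset index $i$, which lets you avoid that identity altogether.

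Two corrections are needed before your write-up can stand as stated. First, the sentence that ``$h = g_i^{-1} x g_i$ runs over all of $H$ as $x$ runs over the coset $g_i \cdot H$'' is false: if $x \in g_i H$, then $g_i^{-1} x g_i \in H g_i$, which is not $H$ in general. The set of $x$ that contribute for fixed $i$ is the conjugate subgroup $g_i H g_i^{-1}$ --- which you do name correctly once in your final paragraph, and which is what your displayed formulas $x = g_i h g_i^{-1}$ and $\chi_V(g_i h g_i^{-1}) = \chi_V(h)$ actually use; so only the description of the support needs fixing, not the computation (each such conjugate has $|H|$ elements, so the count $|G| = k\,|H|$ still produces the factor you need). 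Second, the ``more conceptual route'' you sketch is circular: the paragraph preceding the theorem only defines $\phi^*$ abstractly as the adjoint of the restriction map, and the assertion that $\phi^*$ coincides with $\mathrm{Ind}^G_H$ on characters is precisely the content of the theorem, so it cannot be invoked to prove it. It is fine as an interpretive remark after the computation, which is how you ultimately position it.
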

\begin{proof}
    By definition of the Hermitian inner product of class functions on finite groups, the properties of the characters of \hyperref[appA:prop:characterProperties]{Proposition~\ref*{appA:prop:characterProperties}}, and the restricted and induced characters in the coset representation $G = g_1 \cdot H \uplus \cdots \uplus g_k \cdot H$,
    \begin{align*}
        \scalar{\mathrm{Ind}^G_H \1( \chi_W \1)}{\chi_V}_G &= \frac{1}{|G|} \sum_{g \in G} \mathrm{Ind}^G_H \1( \bar{\chi}_W \1) (g) \cdot \chi_V(g) \\
        &= \frac{1}{|G|} \sum_{g \in G} \sum^k_{i = 1} \bar{\chi}_W \1( g^{\shortminus 1}_i \cdot g \cdot g_i \1) \cdot \chi_V(g) \\
        &= \frac{1}{|G|} \frac{1}{|H|} \sum_{g \in G} \sum_{h \in G} \bar{\chi}_W \1( h^{\shortminus 1} \cdot g \cdot h \1) \cdot \chi_V(g) \\
        &= \frac{1}{|G|} \frac{1}{|H|} \sum_{g \in G} \sum_{h \in G} \chi_W \2( {\1( h^{\shortminus 1} \cdot g \cdot h \1)}^{\shortminus 1} \2) \cdot \chi_V(g) \\
        &= \frac{1}{|G|} \frac{1}{|H|} \sum_{g \in G} \sum_{h \in G} \chi_W \1( h^{\shortminus 1} \cdot g^{\shortminus 1} \cdot h \1) \cdot \chi_V(g).
    \end{align*}
    Through a variable substitution, and the explicit definition $\chi_W$ as the zero class function on $G \setminus H$,
    \begin{align*}
        \frac{1}{|G|} \frac{1}{|H|} \sum_{g \in G} \sum_{h \in G} \chi_W \1( g^{\shortminus 1} \1) \cdot \chi_V \1( h \cdot g \cdot h^{\shortminus 1} \1) &= \frac{1}{|G|} \frac{1}{|H|} \sum_{g \in G} \sum_{h \in G} \chi_W \1( g^{\shortminus 1} \1) \cdot \chi_V(g) \\
        &= \frac{1}{|H|} \sum_{g \in G} \chi_W \1( g^{\shortminus 1} \1) \cdot \chi_V(g) \\
        &= \frac{1}{|H|} \sum_{g \in H} \chi_W \1( g^{\shortminus 1} \1) \cdot \chi_V(g) \\
        &= \frac{1}{|H|} \sum_{g \in H} \bar{\chi}_W(g) \cdot \mathrm{Res}^G_H \1( \chi_V \1) (g) \\
        &= \scalar{\chi_W}{\mathrm{Res}^G_H \1( \chi_V \1)}_H.
    \end{align*}
\end{proof}

It is worth noting that the induced representation is not the inverse operation of the restricted representation. However, they are regarded as adjoint in the sens made precise by the Frobenius reciprocity \hyperref[appA:thm:FrobeniusReciprocity]{Theorem~\ref*{appA:thm:FrobeniusReciprocity}}.

In the general case, the relationship between the irreducible representations of a finite group $G$ and those of its subgroups $H$ is not established. Nevertheless, a distinct circumstance emerges when $H$ is a \emph{normal} subgroup of $G$.

\begin{proposition*}
   Consider a finite group $G$ and a normal subgroup $H$ of $G$ Let $(\rho, V)$ be a representation of the quotient group $G/N$. Define a representation $(\sigma, V)$ of $G$ as follows: for all $g \in G$, let $\bar{g}$ denote the representative of $g$ in $G/N$, and define $\sigma(g) \coloneqq \rho(\bar{g})$. Then $\sigma$ is irreducible if and only if $\rho$ is irreducible.
\end{proposition*}
\begin{proof}
    Let $\chi$ be the character of $\sigma$, then by definition of $\sigma$,
    \begin{align*}
        \scalar{\chi}{\chi}_G &= \frac{1}{|G|} \sum_{g \in G} \bar{\chi} (g) \cdot \chi(g) \\
        &= \frac{1}{|G|} \sum_{\bar{g} \in G/N} \sum_{h \in H} \bar{\chi} \1( \bar{g} \cdot h \1) \cdot \chi \1( \bar{g} \cdot h \1) \\
        &= \frac{|H|}{|G|} \sum_{\bar{g} \in G/N} \bar{\chi} \1( \bar{g} \1) \cdot \chi \1( \bar{g} \1) \\
        &= \frac{1}{|G/N|} \sum_{\bar{g} \in G/N} \bar{\chi} \1( \bar{g} \1) \cdot \chi \1( \bar{g} \1),
    \end{align*}
    where the last equality is the Hermitian inner product of the character of $\rho$.
\end{proof}

\section{Matrix groups}

\subsection{Representations of connected compact matrix groups}

The preceding Sections were specifically devoted to the study of the representation theory of finite groups. Specifically, the concepts of complete reducibility of group representations and the interplay between irreducible representations and characters were thoroughly examined. Moving forward, this section aims to extend the understanding of representation theory by examining the representations of infinite matrix groups, which are closed subgroups of the group consisting of invertible matrices.

A representation of a matrix group $G$ is a pair $(\rho, V)$, where $V$ is a complex vector space with dimension $d$, and $\rho: G \to \mathrm{GL}(V)$ is a group continuous homomorphism. The continuity of $\rho$ is equivalent to each of the component maps $g \mapsto {\1( \rho(g) \1)}_{i,j}$ being continuous, with $i, j \in \{ 1, \ldots, d \}$. If the component maps are rational functions of the matrix components, then the representation $(\rho, V)$ is referred to as being \emph{rational}, while if the component maps are polynomial functions of the matrix components, then the representation $(\rho, V)$ is referred to as being \emph{polynomial}.

\begin{theorem*}
    The rational representations of a matrix group have component maps polynomial in the matrix components and the inverse of the determinant.
\end{theorem*}

When considering finite groups, the normalized summation $\frac{1}{|G|} \sum_{g \in G}$ is frequently used as a means of averaging over the group. This technique is especially pertinent in the context of reducibility, in \hyperref[appA:sec:reducibility]{Section~\ref*{appA:sec:reducibility}} and character theory, in \hyperref[appA:sec:characterTheory]{Section~\ref*{appA:sec:characterTheory}}.

Generalising this practice to infinite matrix groups is not possible. However, for certain matrix groups, namely the compact matrix group, it is possible to perform an integration over a normalized \emph{Haar measure} as $\int_G ~\mathrm{d}g$.

Consider a compact matrix group $G$, and let $\emph{\mathrm{L}^2(G)}$ denote the Hilbert space of all complex function on $G$, with finite $2$-norm, for the Hermitian inner product defined on any $\phi, \psi \in \mathrm{L}^2(G)$ by
\begin{equation*}
    \scalar{\phi}{\psi} \coloneqq \int_G \phi(g) \bar{\psi}(g) ~\mathrm{d}g.
\end{equation*}
In the case where $G$ is a finite group equipped with the \emph{counting measure} as its Haar measure, an isomorphism between the Hilbert space $\mathrm{L}^2(G)$ and the group algebra $\mathbb{C}[G]$ is established via the map
\begin{equation*}
    f \mapsto \sum_{g \in G} f(g) \cdot g.
\end{equation*}

The collection of results presented herein, alongside their corollaries, are commonly referred to as the \emph{Peter-Weyl Theorem}.
\begin{theorem}[Peter-Weyl] \label{appA:thm:PeterWeyl}
    Let $G$ be a compact matrix  group, then the following hold.
    \begin{enumerate}
        \item For any non-identity matrix $g$ in $G$, there exists an irreducible representation of $G$ which maps $g$ to a non-identity matrix.
        \item The matrix components of inequivalent irreducible representations of $G$, scaled by a factor of the square root of the dimension of the corresponding irreducible representation, constitute an orthonormal basis of the Hilbert space $\mathrm{L}^2(G)$.
        \item The Hilbert space $\mathrm{L}^2(G)$ is isomorphic to a Hilbert space direct sum of irreducible representations of $G$, where the multiplicities correspond to the dimensions of the corresponding irreducible representations.
    \end{enumerate}
\end{theorem}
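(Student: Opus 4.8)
The plan is to reduce the three assertions to a single analytic input---the existence of enough finite-dimensional invariant subspaces inside $\mathrm{L}^2(G)$---and then to extract the statements using arguments that parallel the finite-group theory already developed. Throughout I would work with \emph{unitarised} representations: exactly as the invariant inner product $\scalar{\cdot}{\cdot}_G$ was built by averaging a Hermitian form in Maschke's \hyperref[appA:thm:maschke]{Theorem~\ref*{appA:thm:maschke}}, averaging against the normalised Haar measure $\int_G \cdot \; \mathrm{d}g$ renders every finite-dimensional representation unitary, so Schur's \hyperref[appA:lem:Schur]{Lemma~\ref*{appA:lem:Schur}} and complete reducibility hold verbatim for compact $G$. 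Assertion (1) is then almost immediate from the hypothesis that $G$ is a \emph{matrix} group: the tautological inclusion $G \hookrightarrow \mathrm{GL}_n$ is a faithful finite-dimensional representation, so decomposing $\mathbb{C}^n$ into irreducibles, any $g \neq 1_G$ acts non-trivially on $\mathbb{C}^n$ and hence non-trivially on at least one irreducible summand, which is the sought representation.

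For the orthonormality in (2) I would establish the Schur orthogonality relations. Given irreducible unitary representations $\pi$ and $\sigma$ of dimensions $d_\pi$ and $d_\sigma$ and any matrix $A$, the averaged operator $\int_G \pi(g) A \sigma(g)^{\shortminus 1} \; \mathrm{d}g$ intertwines $\pi$ and $\sigma$; by Schur's \hyperref[appA:lem:Schur]{Lemma~\ref*{appA:lem:Schur}} it vanishes when $\pi \not\simeq \sigma$ and is a scalar when $\pi = \sigma$. Letting $A$ run over the matrix units and reading off components yields
\begin{equation*}
    \int_G \pi_{ij}(g) \, \overline{\sigma_{kl}(g)} \; \mathrm{d}g = \frac{1}{d_\pi} \, \delta_{\pi \sigma} \, \delta_{ik} \, \delta_{jl},
\end{equation*}
which is the exact analogue of the character computation in \hyperref[appA:thm:numberIrreducibleRepresentations]{Theorem~\ref*{appA:thm:numberIrreducibleRepresentations}} and shows that the rescaled coefficients $\sqrt{d_\pi}\,\pi_{ij}$ form an orthonormal family in $\mathrm{L}^2(G)$.

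The main obstacle, and the step with no finite-group counterpart, is \emph{completeness}: that these coefficients span $\mathrm{L}^2(G)$ densely. Here I would introduce the right regular representation $(R_g f)(x) \coloneqq f(x g)$ and, for a continuous $\phi$ satisfying $\phi(x^{\shortminus 1}) = \overline{\phi(x)}$, the convolution operator $T_\phi f \coloneqq \phi * f$. Because $G$ is compact and $\phi$ continuous, $T_\phi$ has a continuous kernel, hence is compact and self-adjoint, and it commutes with every $R_g$. The spectral theorem for compact self-adjoint operators then splits $\mathrm{L}^2(G)$ into $\ker T_\phi$ together with an orthogonal sum of finite-dimensional eigenspaces for the non-zero eigenvalues; each such eigenspace is $R$-invariant, hence a finite-dimensional representation of $G$ whose functions are, by complete reducibility, finite sums of matrix coefficients. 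Running $\phi$ through an approximate identity and using that $\phi * f \to f$ for continuous $f$ shows that the matrix coefficients are dense, completing (2). The genuinely delicate points are the compactness and self-adjointness of $T_\phi$ and the approximate-identity estimate---the analytic core that replaces the purely algebraic decomposition of $\mathbb{C}[G]$.

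Finally, assertion (3) follows by organising the dense span by representation type. The span of the coefficients of a fixed irreducible $\pi$ is a $G \times G$-invariant subspace of $\mathrm{L}^2(G)$ of dimension $d_\pi^2$ which, under the right regular action, is isomorphic to $d_\pi$ copies of the space carrying $\pi$. Summing over all inequivalent irreducibles and invoking the density just established gives the Hilbert-space direct sum with multiplicities equal to the dimensions---precisely the infinite-dimensional analogue of the decomposition $\mathbb{C}[G] \simeq V_1^{\oplus \dim V_1} \oplus \cdots \oplus V_k^{\oplus \dim V_k}$ of \hyperref[appA:prop:groupAlgebraDecompositionMultiplicity]{Proposition~\ref*{appA:prop:groupAlgebraDecompositionMultiplicity}}.
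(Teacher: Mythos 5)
The paper never proves this theorem: the appendix announces that for matrix groups ``only the principal results are presented,'' with the cited textbooks serving as references for the proofs. There is therefore no in-paper argument to compare against, and your proposal must be judged on its own; it is, in outline, the standard analytic proof, and it is correct. Your treatment of assertion (1) is a genuine and legitimate simplification: because $G$ is a compact \emph{matrix} group, the tautological embedding $G \subseteq \mathrm{GL}_n$ is already a faithful finite-dimensional representation, so (1) follows from unitarisation and complete reducibility alone, whereas for an abstract compact group this assertion is the deep part of the theorem and itself requires the convolution machinery. Your Schur orthogonality computation for (2) is the exact analogue of the argument in Theorem~\ref{appA:thm:numberIrreducibleRepresentations}, and the analytic core you isolate for completeness --- the operator $T_\phi f = \phi * f$ with continuous kernel, hence Hilbert--Schmidt and compact, self-adjoint for symmetric $\phi$, commuting with right translations, plus the spectral theorem and an approximate identity --- is precisely what replaces the algebraic decomposition of $\mathbb{C}[G]$ in Proposition~\ref{appA:prop:groupAlgebraDecompositionMultiplicity}. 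One point you gloss over deserves to be made explicit: to write a function lying in a finite-dimensional right-invariant subspace $V$ as a sum of matrix coefficients, one picks an orthonormal basis $f_1, \ldots, f_m$ of $V$, writes $f(xg) = \sum_j \langle R_g f, f_j \rangle\, f_j(x)$ and evaluates at $x = 1_G$; pointwise evaluation is meaningless for an $\mathrm{L}^2$-class, so you need the observation that any eigenfunction of $T_\phi$ with nonzero eigenvalue $\lambda$ is automatically continuous, since $f = \lambda^{-1} \phi * f$ and the convolution of a continuous function with an $\mathrm{L}^2$ function is continuous. With that remark inserted, the chain from the spectral decomposition to the density of matrix coefficients, and from there to the multiplicity statement in (3), is complete.
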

\begin{corollary*}
    The compact matrix groups exhibit the properties of complete reducibility and character orthogonality.
\end{corollary*}

The third assertion of the Peter–Weyl's \hyperref[appA:thm:PeterWeyl]{Theorem~\ref*{appA:thm:PeterWeyl}} constitutes the compact matrix group analog of \hyperref[appA:prop:groupAlgebraDecompositionMultiplicity]{Proposition~\ref*{appA:prop:groupAlgebraDecompositionMultiplicity}}.

\begin{example*}
    Let $\emph{\mathrm{GL}_2}$ be the \emph{complex general linear group}, consisting of invertible matrices of degree $2$, and which is locally compact but not compact, resulting in a non-finite Haar measure. Let the representation $\1( \rho, \mathbb{C}^2 \1)$ of $\mathrm{GL}_2$ defined on $g \in \mathrm{GL}_2$ as follows:
    \begin{equation*}
        \rho(g) \coloneqq
        \begin{pmatrix}
            1 & \log | \det g | \\
            0 & 1
        \end{pmatrix}.
    \end{equation*}
    As this representation admits only one-dimensional invariant subspace, namely $\Span_{\mathbb{C}} (1, 0)$, the representation is not completely reducible. Indeed, if for all $g \in \mathrm{GL}_2$, there exists $\lambda_g \in \mathbb{C}$ and $(c_1, c_2) \in \mathbb{C}^2$ such that $g \cdot (c_1, c_2) = \lambda_g \cdot (c_1, c_2)$, then it follows from
    \begin{align*}
        c_1 + \log \1( | \det g | \1) \cdot c_2 &= \lambda_g \cdot c_1 \\
        c_2 &= \lambda_g \cdot c_2,
    \end{align*}
    that $c_2 = 0$.
\end{example*}

A \emph{torus} $T$ of a compact matrix group $G$ is defined as a compact and connected abelian subgroup of $G$. A torus of $G$ is said to be \emph{maximal} if there is no other torus of $G$ that contains it as a proper subgroup.

Let $\emph{\mathbb{S}^1}$ denotes the \emph{unit $1$-sphere} defined by $\mathbb{S}^1 \coloneqq \set{e^{i \theta}}{\theta \in [0, 2\pi)}$ on the complex plane, and isomorphic to the unit circle on the real plane.

\begin{theorem} \label{appA:thm:torusIsomorphism}
    For all tori $T$, there exists some $k \in \mathbb{N}$ such that $T$ is isomorphic to a direct product of $k$ copies of the unit $1$-sphere $\mathbb{S}^1$.
\end{theorem}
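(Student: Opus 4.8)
The plan is to realize $T$ as a closed connected subgroup of the standard torus $(\mathbb{S}^1)^d$ and then classify such subgroups.

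First I would use compactness to replace $T$ by a conjugate subgroup of the unitary group $\mathrm{U}_d$: averaging an arbitrary Hermitian inner product over the normalized Haar measure $\int_T \cdots \mathrm{d}g$ produces a $T$-invariant inner product, exactly as in Maschke's \hyperref[appA:thm:maschke]{Theorem~\ref*{appA:thm:maschke}} but with integration in place of summation, so that $T$ preserves this inner product and is therefore conjugate into $\mathrm{U}_d$. Since $T$ is abelian, every element of $T$ is an intertwiner of the defining representation, and Schur's \hyperref[appA:lem:Schur]{Lemma~\ref*{appA:lem:Schur}} together with complete reducibility (the corollary to the Peter--Weyl \hyperref[appA:thm:PeterWeyl]{Theorem~\ref*{appA:thm:PeterWeyl}}) forces $\mathbb{C}^d$ to decompose into one-dimensional $T$-invariant subspaces. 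Equivalently, the commuting family of unitary matrices constituting $T$ is simultaneously diagonalizable, so after a further change of basis $T$ becomes a closed connected subgroup of the diagonal unitaries, which are precisely $(\mathbb{S}^1)^d$.

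Next I would analyze closed connected subgroups of $(\mathbb{S}^1)^d$ through the exponential map $\exp : \mathbb{R}^d \to (\mathbb{S}^1)^d$ sending $(\theta_1, \ldots, \theta_d)$ to $(e^{i\theta_1}, \ldots, e^{i\theta_d})$, whose kernel is the lattice $(2\pi\mathbb{Z})^d$. Associated with $T$ is its Lie algebra $\mathfrak{t} \coloneqq \set{x \in \mathbb{R}^d}{\exp(tx) \in T \text{ for all } t \in \mathbb{R}}$, a linear subspace of $\mathbb{R}^d$ of some dimension $k$. Because $T$ is connected it is generated by its one-parameter subgroups, which yields $T = \exp(\mathfrak{t})$; the task is then reduced to showing that the image of a subspace under $\exp$, once it is closed, is isomorphic to a product of circles.

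The hard part will be exploiting closedness to control the arithmetic of $\mathfrak{t}$. The obstruction is the classical phenomenon of the irrational winding line on $(\mathbb{S}^1)^2$, whose image under $\exp$ is a dense, non-closed one-parameter subgroup; this shows that $\exp(\mathfrak{t})$ need not be closed for an arbitrary subspace $\mathfrak{t}$, so the hypothesis is genuinely used here. I would argue, via Kronecker's approximation theorem (or the closed-subgroup theorem of Lie theory), that $T = \overline{T}$ forces $\mathfrak{t}$ to be \emph{rational}, that is, $L \coloneqq \mathfrak{t} \cap (2\pi\mathbb{Z})^d$ is a lattice of full rank $k$ in $\mathfrak{t}$ and $\mathfrak{t} = \Span_{\mathbb{R}} L$. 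Granting this, $\exp$ descends to an isomorphism of topological groups $\mathfrak{t}/L \cong T$, and choosing a $\mathbb{Z}$-basis of $L$ identifies $\mathfrak{t}/L$ with $\mathbb{R}^k/(2\pi\mathbb{Z})^k \cong (\mathbb{S}^1)^k$, which completes the proof.
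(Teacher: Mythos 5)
The paper itself offers no proof of this theorem: it appears in the background appendix on matrix groups, where the text explicitly defers all proofs to the cited references. So there is no in-paper argument to compare against; your proposal has to stand on its own, and it does. Your outline is the standard classification of compact connected abelian Lie groups, and each stage is sound: averaging a Hermitian inner product over the Haar measure of $T$ (the integral version of the Maschke argument) conjugates $T$ into $\mathrm{U}_d$; Schur's Lemma applied to the abelian group $T$ (its proof never uses finiteness) together with complete reducibility splits $\mathbb{C}^d$ into one-dimensional invariant subspaces, which may be taken orthogonal with respect to the invariant inner product, placing $T$ inside the diagonal unitaries $(\mathbb{S}^1)^d$; and the exponential-map analysis correctly identifies where closedness enters, with the irrational winding line as the genuine obstruction. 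One simplification is worth pointing out: once you invoke the closed-subgroup theorem, the detour through rationality of $\mathfrak{t}$ and Kronecker's approximation theorem is unnecessary. The theorem already gives that $\exp|_{\mathfrak{t}} : \mathfrak{t} \to T$ is a surjective open homomorphism (surjective because $T$ is connected and abelian, open because it is a local diffeomorphism at $0$), so its kernel $L$ is a discrete subgroup of $\mathfrak{t}$, i.e. a lattice of some rank $r \leq k$, and $T \cong \mathfrak{t}/L$ as topological groups; compactness of $T$ then forces $r = k$, and a $\mathbb{Z}$-basis of $L$ yields $T \cong \mathbb{R}^k/(2\pi\mathbb{Z})^k \cong {\1( \mathbb{S}^1 \1)}^k$ exactly as you conclude. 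Rationality of $\mathfrak{t}$ inside $\mathbb{R}^d$ comes out as a byproduct rather than being needed as an input, so the step you flagged as the hard part can be absorbed entirely into the closed-subgroup theorem.
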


In what follows, a given element $t$ in a torus $T$ is implicitly expressed by means of the isomorphism in \hyperref[appA:thm:torusIsomorphism]{Theorem~\ref*{appA:thm:torusIsomorphism}}, whereby $t$ is written as $(z_1, \ldots, z_k)$ with $z_i \in \mathbb{S}^1$, given $k \in \mathbb{N}$, the number of direct products of copies of $\mathbb{C}^*$ via the isomorphism.

\begin{example*}
    Let $\emph{\mathrm{SO}_{2n}}$ be the \emph{special orthogonal group}, which consists of the special orthogonal matrices of even degree $2n$. An instance of $\mathrm{SO}_{2n}$'s maximal tori is the subgroup $T$ consisting of the matrices of the following structure
    \begin{equation*}
        \begin{pmatrix}
            \cos \theta_1 & \sin \theta_1 & \cdots & 0 & 0 \\
            - \sin \theta_1 & \cos \theta_1 & \cdots & 0 & 0 \\
            \vdots & \vdots & \ddots & \vdots & \vdots \\
            0 & 0 & \cdots & \cos \theta_n & \sin \theta_n \\
            0 & 0 & \cdots & - \sin \theta_n & \cos \theta_n \\
        \end{pmatrix},
    \end{equation*}
    where $\theta_i$ belongs to the interval $[0, 2 \pi)$. In other words, $T$ is the set of all block-diagonal matrices with $2 \times 2$ rotation matrix blocks. Each block is isomorphic to an element of $\mathbb{S}^1$.
\end{example*}

Let the automorphism of a compact matrix group $G$, given by the map $h \mapsto g^{\shortminus 1} \cdot h \cdot g$ for all $h\in G$. This automorphism induces a transformation on the set of maximal tori of $G$ through conjugation. More specifically, it maps a torus $T$ to $g^{\shortminus 1} \cdot T \cdot g$, which is itself a maximal torus of $G$.

\begin{theorem}[Eli Cartan] \label{appA:thm:EliCartan}
    Let $G$ be connected compact matrix group, then the following hold.
    \begin{enumerate}
        \item Each element of $G$ is an element of some maximal torus of $G$.
        \item All maximal tori of $G$ are conjugate to each other in $G$.
    \end{enumerate}
\end{theorem}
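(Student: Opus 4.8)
The plan is to deduce both assertions from the surjectivity of the single conjugation map
\begin{equation*}
    q \colon G/T \times T \longrightarrow G, \qquad (gT, t) \longmapsto g t g^{\shortminus 1},
\end{equation*}
where $T$ is a fixed maximal torus of $G$. First I would establish the existence of such a $T$ together with the tool that drives everything: since $G$ is connected and compact it carries a bi-invariant Riemannian metric, for which the geodesics through $1_G$ are exactly the one-parameter subgroups; by Hopf-Rinow completeness the exponential map $\exp \colon \mathfrak{g} \to G$ is then surjective. Taking a maximal abelian subalgebra $\mathfrak{t} \subseteq \mathfrak{g}$ and setting $T \coloneqq \overline{\exp \mathfrak{t}}$ produces a maximal torus, which by \hyperref[appA:thm:torusIsomorphism]{Theorem~\ref*{appA:thm:torusIsomorphism}} is isomorphic to $(\mathbb{S}^1)^k$ and hence, by Kronecker's density argument, possesses a topological generator $t_0$ whose powers are dense in $T$.

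The crucial observation is that $\dim (G/T) + \dim T = \dim G$, so $q$ is a map between compact connected orientable manifolds of equal dimension and has a well-defined degree. Next I would compute this degree and show it is nonzero. For this I use the root-space decomposition $\mathfrak{g} = \mathfrak{t} \oplus \bigoplus_\alpha \mathfrak{g}_\alpha$ under $\mathrm{Ad}(T)$: at a regular generator $t_0$ the operator $\mathrm{Ad}(t_0) - I$ is invertible on the complement $\mathfrak{g} / \mathfrak{t}$, and a direct differentiation of $q$ at a point lying over $t_0$ shows the differential is an isomorphism, with sign determined only by this complementary action. Counting the finitely many preimages of $t_0$ — which are indexed by the Weyl group $W = N_G(T)/T$ acting freely — and checking that all local degrees carry the same sign yields $\deg q = \pm |W| \neq 0$.

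A continuous map of nonzero degree between compact oriented manifolds of the same dimension is surjective, so every $h \in G$ can be written $h = g t g^{\shortminus 1}$; equivalently $h$ lies in the maximal torus $g T g^{\shortminus 1}$, which proves assertion~(1). For assertion~(2), let $S$ be an arbitrary maximal torus and $s$ a topological generator of $S$. Surjectivity of $q$ gives $g \in G$ and $t \in T$ with $s = g t g^{\shortminus 1}$, so $s \in g T g^{\shortminus 1}$; since $g T g^{\shortminus 1}$ is closed and the powers of $s$ are dense in $S$, it follows that $S \subseteq g T g^{\shortminus 1}$, and maximality of $S$ forces $S = g T g^{\shortminus 1}$. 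Thus every maximal torus is conjugate to $T$.

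I expect the main obstacle to be the degree computation: establishing that $q$ is a well-oriented branched cover whose local degrees all share a sign requires the root-space decomposition and a careful analysis of the differential of conjugation at a regular element, together with the verification that the generic fibre is exactly the free $W$-orbit. The surjectivity of $\exp$ and the existence of topological generators are standard, but must be invoked with care, since they are precisely the points at which the compactness and connectedness of $G$ enter.
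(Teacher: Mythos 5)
Your proposal cannot be compared against the paper's own argument, because there is none: the appendix announces at the outset that, for matrix groups, only the principal results are stated, with the cited textbooks serving as references for the proofs, and this theorem is indeed given without proof. Judged on its own merits, your plan is the classical degree-theoretic proof (Hopf's argument, as in Br\"ocker--tom Dieck), and its skeleton is correct: the bi-invariant metric plus Hopf--Rinow gives surjectivity of $\exp$; for a maximal abelian subalgebra $\mathfrak{t}$ the subgroup $T = \exp \mathfrak{t}$ is a maximal torus admitting a topological generator $t_0$; the conjugation map $q(gT, t) = g t g^{-1}$ is well defined (coset representatives differ by elements of $T$, which commute with $t$); and nonzero degree forces surjectivity, from which assertion (1) is immediate and your deduction of assertion (2) via a topological generator of $S$ and the closedness of $g T g^{-1}$ is exactly right. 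Three points in the middle carry the real weight and must be spelled out. First, orientability of $G/T$: since $T$ is connected, its isotropy action on $\mathfrak{g}/\mathfrak{t}$ lands in the identity component of the orthogonal group, so $G/T$ is orientable. Second, the fiber over $t_0$: a preimage $(gT, t)$ forces $t = g^{-1} t_0 g \in T$, hence $t_0 \in g T g^{-1}$, and the same density argument you use for assertion (2) gives $T = g T g^{-1}$, i.e.\ $g \in N_G(T)$; conversely each class in $N_G(T)/T$ yields exactly one preimage, so the fiber is a free $W$-orbit, and it is finite because $q$ is a local diffeomorphism at each of its points, making the fiber discrete and compact. Third, the common sign: on $\mathfrak{g}/\mathfrak{t}$ the eigenvalues of $\mathrm{Ad}\left(t_0^{-1}\right)$ occur in conjugate pairs $e^{\pm i \theta_j}$ with no eigenvalue equal to $1$, so
\begin{equation*}
    \det \left( \mathrm{Ad}\left(t_0^{-1}\right) - I \right)\Big|_{\mathfrak{g}/\mathfrak{t}} = \prod_j \left| 1 - e^{i \theta_j} \right|^2 > 0,
\end{equation*}
and the identical computation at every fiber point shows all local degrees agree, whence $\deg q = |W| \neq 0$. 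With these three verifications written out, your argument is a complete and correct proof of the statement the paper leaves to its references.
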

\begin{corollary*}
    Let $T$ be a maximal torus of a connected compact matrix group $G$, then
    \begin{equation*}
        G = \bigcup_{g \in G} g^{\shortminus 1} \cdot T \cdot g.
    \end{equation*}
\end{corollary*}

Consider a connected compact matrix group $G$ and let $T$ be a maximal torus of $G$. According to Eli Cartan's \hyperref[appA:thm:EliCartan]{Theorem~\ref*{appA:thm:EliCartan}}, for all $g \in G$ and $t \in T$ there exists $h \in G$ such that $g = u^{\shortminus 1} \cdot t \cdot u$. Let $(\rho, V)$ be a representation of $G$, then
\begin{equation*}
    \rho(g) = \rho \1( u^{\shortminus 1} \1) \cdot \rho(t) \cdot \rho(u),
\end{equation*}
which implies that $\rho(g)$ and $\rho(t)$ have the same spectrum.

Let $\emph{N_G(T)}$ denote the \emph{normalizer} of $T$. The \emph{Weyl group} of $G$ with respect to $T$ is defined as the quotient of $N_G(T)$ by $T$, denoted by $\emph{W(T)} \coloneqq N_G(T) / T$. The Weyl group acts on $T$ through conjugation, where for any $w \in W(T)$ and $t \in T$, the action is given by $w \cdot t \coloneqq w^{\shortminus 1} \cdot t \cdot w$.

\begin{proposition*}
    Consider a connected and compact matrix group $G$ and let $T$ be a maximal torus of $G$. The Weyl group $W(T)$ is a finite group. Furthermore, two elements $g, h \in T$ are conjugate in $G$ if and only if there exists an element $w \in W(T)$ satisfying $w \cdot g = h$.
\end{proposition*}
\begin{corollary*}
    Let $G$ be a connected compact matrix group and let $T$ be a maximal torus of $G$. The space of continuous class functions on $G$ is isomorphic to the space of continuous complex functions on $T$ which are invariant under the action $W(T)$. Specifically every continuous complex functions on $T$ which are invariant under the action $W(T)$, extends uniquely to a continuous class functions on $G$.
\end{corollary*}
\begin{corollary*}[Weyl integration formula]
    Let $T$ be a maximal torus of a connected compact matrix group $G$, and let $f$ be a continuous class function on $G$. Then there exists a continuous real function $h$ on $T$ such that
    \begin{equation*}
        \int_G f(g) ~\mathrm{d}g = \int_T f(t) h(t) ~\mathrm{d}t.
    \end{equation*}
\end{corollary*}

Let $G$ be a connected compact matrix group, with a maximal torus $T$. The Weyl group $W(T)$ can be identified to the group of automorphisms of $T$ that are induced by inner automorphisms of $G$. Given any element $g \in G$, it holds that $N_G \1( g^{\shortminus 1} \cdot T \cdot g \1) = g^{\shortminus 1} \cdot N_G(T) \cdot g$. Consequently, the Weyl groups of $G$ with respect to distinct maximal tori of $G$ are all isomorphic.

\begin{example*}
    Let $\mathrm{U}_n$ be the unitary group of degree $n$. The group of diagonal unitary matrices forms a maximal torus of $\mathrm{U}_n$, denoted by $T$. Any unitary matrix can be decomposed into a diagonal form via conjugation by another unitary matrix, so $\mathrm{U}_n$ consists of diagonalizable matrices. The quotient of $\mathrm{U}_n$ by the conjugacy relation yields the conjugacy classes of diagonal matrices with diagonal entries being roots of unity, which are the eigenvalues of the matrices. In other words, each conjugacy class is represented by a diagonal matrix of the form
    \begin{equation*}
        \mathrm{diag} \1( e^{i\theta_1}, \dots, e^{i\theta_n} \1),
    \end{equation*}
    where $\theta_i$ is a real number in the interval $[0, 2 \pi)$. The spectrum of a matrix is preserved under conjugation, but representatives of these conjugacy classes are unique only up to a permutation of the diagonal components. The permutation group acting here is precisely the Weyl group $W(T)$. In the case of $\mathrm{U}_n$, this is the entire symmetric group $S_n$. Hence, then quotient of the unitary group $\mathrm{U}_n$ by the conjugacy relation is equal to the quotient of the maximal torus $T$ by the Weyl group $W(T)$.
\end{example*}

Let $G$ denote a compact and connected matrix group, and let $T$ be a maximal torus of $G$. Then, $T$ is an abelian group. The irreducible representations of $T$ are one-dimensional. More specifically, they are continuous homomorphisms $\rho: T \to \mathbb{C}^*$, that map each element $z \in {\1( \mathbb{S}^1 \1)}^k$ of $T$, to a product of the form
\begin{equation*}
    \rho(z_1, \ldots, z_k) = \prod^k_{i=1} z^{\lambda_i}_i,
\end{equation*}
for some $\lambda$ in $\mathbb{Z}^k$, referred to as a \emph{weight}. It follows that for each weight $\lambda$ in $\mathbb{Z}^k$, there exists an irreducible representation of $T$ of this form.

Consider a representation $(\rho, V)$ of a group $G$, and let $\1( \mathrm{Res}^G_T(\rho), V \1)$ denote the restricted representation of $G$ on $V$ with respect to the subgroup $T$. By the Peter–Weyl's \hyperref[appA:thm:PeterWeyl]{Theorem~\ref*{appA:thm:PeterWeyl}}, the restricted representation $\1( \mathrm{Res}^G_T(\rho), V \1)$ can be decomposed into a direct sum of one-dimensional irreducible representations of $T$. This decomposition takes the form of
\begin{equation*}
    V \simeq \bigoplus_{\lambda \in \mathbb{Z}^k} V^{\oplus n_\lambda}_\lambda,
\end{equation*}
where each $V_\lambda$ is referred to as the \emph{weight space} corresponding to the weight $\lambda$.

\begin{example*}
    Let $\emph{\mathrm{SU}_2}$ be the \emph{special unitary group}, which consists of the special unitary matrices of degree $2$. Consider a maximal torus of $\mathrm{SU}_2$, denoted by $T$, which is a subgroup of $\mathrm{SU}_2$ consisting of diagonal matrices of the form:
    \begin{equation*}
        \begin{pmatrix}
            e^{i \theta} & 0 \\
            0 & e^{-i \theta}
        \end{pmatrix},
    \end{equation*}
    where $\theta \in [0, 2 \pi)$. Thus, $T$ is isomorphic to $\mathbb{S}^1$. Let $(\rho, V)$ be a representation of $\mathrm{SU}_2$ of dimension $d$, and consider the restricted representation $\1( \mathrm{Res}^{\mathrm{SU}_2}_T(\rho), V \1)$ and its decomposition into a direct sum of weight spaces:
    \begin{equation*}
        V \simeq \bigoplus^d_{i = 1} V_i.
    \end{equation*}
    In this basis, the action of the restricted representation $\mathrm{Res}^{\mathrm{SU}_2}_T(\rho)$ to an element of the torus $T$ is given by
    \begin{equation*}
        \rho
        \begin{pmatrix}
            e^{i \theta} & 0 \\
            0 & e^{-i \theta}
        \end{pmatrix}
        = \mathrm{diag} \1( e^{i \lambda_1 \theta}, \dots, e^{i \lambda_d \theta} \1),
    \end{equation*}
    for some weights $\lambda_i \in \mathbb{Z}$.
\end{example*}

Suppose $G$ is a connected compact group of $n \times n$ matrices having a maximal torus $T$, and consider the decomposition of $\mathbb{C}^n$ into a direct sum of weight spaces, for the representation of $T$ that maps each element $t \in T$ to itself:
\begin{equation*}
    \mathbb{C}^n \simeq \bigoplus_{\lambda \in \mathbb{Z}^k} V^{\oplus n_\lambda}_\lambda.
\end{equation*}
The Weyl group $W(T)$ acts on this decomposition by permuting the weights $\lambda$.

\begin{example*}
    Let $\mathrm{SU}_3$ be the special unitary group of degree $3$. An instance of $\mathrm{SU}_3$'s maximal tori is the subgroup $T$ consisting of the diagonal matrices,
    \begin{equation*}
        \begin{pmatrix}
            e^{i \theta_1} & 0 & 0 \\
            0 & e^{i \theta_2} & 0 \\
            0 & 0 & e^{-i (\theta_1 + \theta_2)}
        \end{pmatrix},
    \end{equation*}
    where $\theta_1$ and $\theta_2$ belong to the interval $[0, 2 \pi)$. Therefore, $T$ is isomorphic to $\mathbb{S}^2$, via the map
    \begin{equation*}
        \begin{pmatrix}
            e^{i \theta_1} & 0 & 0 \\
            0 & e^{i \theta_2} & 0 \\
            0 & 0 & e^{-i (\theta_1 + \theta_2)}
        \end{pmatrix}
        \longmapsto \1( e^{i \theta_1}, e^{i \theta_2} \1).
    \end{equation*}
    Using this isomorphism, the action of each $(z_1, z_2) \in \mathbb{S}^2$, to the elements $e_1, e_2$ and $e_3$ forming the standard basis of $\mathbb{C}^3$, is given by
    \begin{align*}
        (z_1, z_2) \cdot e_1 &= z^1_1 \cdot z^0_2 \\
        (z_1, z_2) \cdot e_2 &= z^0_1 \cdot z^1_2 \\
        (z_1, z_2) \cdot e_3 &= z^{\shortminus 1}_1 \cdot z^{\shortminus 1}_2,
    \end{align*}
    This action yields the $3$ weights $(1, 0), (0, 1)$ and $(-1, -1)$, and their corresponding weight spaces. The Weyl group $W(T)$, being isomorphic to the symmetric group $S_3$, acts on these weight spaces by permuting the weights via the above action.
\end{example*}

Concluding the section, a summary of the various matrix groups of degree $n$ that have been used, and their properties is presented, assuming that the degree $n$ is larger than $1$:
\begin{equation*}
    \begin{array}{l|r}
         \large\textbf{Matrix group} & \large\textbf{Topology} \\[0.5em]
         \hline\hline\rule{0pt}{1.5em}
         \text{Complex general linear group} & \text{not compact} \\[0.25em]
         \mathrm{GL}_n \coloneqq \set{ g \in M_n(\mathbb{C}) }{\det g \neq 0} & \text{connected / not simply connected} \\[0.5em]
         \hline\rule{0pt}{1.5em}
         \text{Complex special linear group} & \text{not compact} \\[0.25em]
         \mathrm{SL}_n \coloneqq \set{ g \in M_n(\mathbb{C}) }{\det g  = 1} & \text{simply connected} \\[0.5em]
         \hline\rule{0pt}{1.5em}
         \text{Unitary group} & \text{compact} \\[0.25em]
         \mathrm{U}_n \coloneqq \set{ g \in M_n(\mathbb{C}) }{g g^* = I} & \text{connected / not simply connected} \\[0.5em]
         \hline\rule{0pt}{1.5em}
         \text{Special unitary group} & \text{compact} \\[0.25em]
         \mathrm{SU}_n \coloneqq \set{ g \in M_n(\mathbb{C}) }{g g^* = I \text{ and }\det g = 1} & \text{simply connected} \\[0.5em]
         \hline\rule{0pt}{1.5em}
         \text{Orthogonal group} & \text{compact} \\[0.25em]
         \mathrm{O}_n \coloneqq \set{ g \in M_n(\mathbb{R}) }{g g^\T = I} & \text{not connected} \\[0.5em]
         \hline\rule{0pt}{1.5em}
         \text{Special orthogonal group} & \text{compact} \\[0.25em]
         \mathrm{SO}_n \coloneqq \set{ g \in M_n(\mathbb{R}) }{g g^\T = I \text{ and }\det g = 1} & \text{connected / not simply connected }\\[0.5em]
    \end{array}
\end{equation*}
With the inclusions $\mathrm{O}_n \subseteq \mathrm{U}_n \subseteq \mathrm{GL}_n$ and $\mathrm{SO}_n \subseteq \mathrm{SU}_n \subseteq \mathrm{SL}_n$.

\subsection{Representations of \texorpdfstring{$\mathrm{U}_d$}{the unitary group}}

The \emph{unitary group} of degree $d$, denoted $\emph{\mathrm{U}_d}$, is the group of $d \times d$ unitary matrices, i.e., matrices $U$ with complex entries such that $U U^* = U^* U = I$, where $U^*$ denotes the conjugate transpose of $U$.

An equivalent definition of the unitary group is as the matrix group that preserves the standard Hermitian inner product $\scalar{\cdot}{\cdot}$ defined on the complex vector space $\mathbb{C}^d$. That is, for all vectors $x, y \in \mathbb{C}^d$ and for all $U \in \mathrm{U}_d$, it holds that \begin{equation*}
    \scalar{Ux}{Uy} = \scalar{x}{y}.
\end{equation*}

Let $T$ be the group of diagonal unitary matrices, a maximal torus of the unitary group $\mathrm{U}_d$. Let $(\rho, V)$ be a representation of $\mathrm{U}_d$, and $\1( \mathrm{Res}^{\mathrm{U}_d}_T(\rho), V \1)$ the restricted representation of $\mathrm{U}_d$ on $V$ with respect to the subgroup $T$, such that
\begin{equation*}
    V \simeq \bigoplus_{\lambda \in \mathbb{Z}^d} V^{\oplus n_\lambda}_\lambda,
\end{equation*}
is the decomposition of $V$ into a direct sum of one-dimensional irreducible representations of $T$, with some weights $\lambda \in \mathbb{Z}^d$. The maximal element in the set of these weights, with respect to the lexicographic order, is denoted the \emph{highest weight}. 

\begin{example*}
    Let $\mathrm{U}_d$ be the unitary group of degree $d$, and let $T$ be the group of diagonal unitary matrices, an instance of $\mathrm{U}_d$'s maximal tori. Let $\1( \rho, \mathbb{C}^d \1)$ be the representation of $\mathrm{U}_d$ that maps each element $U \in \mathrm{U}_d$ to itself, i.e. $\rho(U) = U$ for all $U \in \mathrm{U}_d$. Introduce the element $\mathrm{diag} \1( e^{i\theta_1}, \ldots, e^{i\theta_d} \1)$ of $T$, and let $e_1, \ldots, e_d$ constitute the standard basis of the vector space $\mathbb{C}^d$. The relationship between these elements is given by the equation:
    \begin{equation*}
        \mathrm{diag} \1( e^{i\theta_1}, \ldots, e^{i\theta_d} \1) \cdot e_i = e^{i\theta_i} \cdot e_i.
    \end{equation*}
    The decomposition of $\mathbb{C}^d$ into a direct sum of one-dimensional irreducible representations of $T$ can be expressed as
    \begin{equation*}
        V \simeq \bigoplus^d_{i = 1} V_i,
    \end{equation*}
    where the index $i$ denotes the weight $\lambda \in \mathbb{Z}^d$, such that $\lambda_i = 1$ and $\lambda_j = 0$ for every $j \neq i$. Consequently, the highest weight is $(1, 0, \ldots, 0)$.
\end{example*}

The unitary group $\mathrm{U}_d$ has a trivial irreducible representation of dimension $1$, which assigns the scalar value $1$ to every unitary matrix $U \in \mathrm{U}_d$. Additionally, there exists a $d$-dimensional irreducible representation of $\mathrm{U}_d$, wherein each unitary matrix $U \in \mathrm{U}_d$ is mapped to itself. However, no irreducible representation of intermediate dimension, i.e., between $1$ and $n$, can be found. In order to discover new irreducible representations, it is necessary to increase the dimension. One possible approach to achieving this is to consider a complex tensor product space.

\begin{theorem} \label{appA:thm:highestWeightIrreducibleRepresentation}
    For each weight $\lambda \in \mathbb{Z}^d$ such that $\lambda_1 \geq \cdots \geq \lambda_d$, there exists a unique irreducible representation $V_\lambda$ of the unitary group $\mathrm{U}_d$, with highest weight $\lambda$. These are all inequivalent and they are all the irreducible representations of $\mathrm{U}_d$.
\end{theorem}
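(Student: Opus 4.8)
The plan is to obtain the entire list of irreducibles from three ingredients already available in the excerpt: Schur--Weyl duality for \emph{existence}, the fact that the highest weight is a complete equivalence invariant for \emph{inequivalence}, and the Peter--Weyl theorem together with a density argument for \emph{exhaustiveness}. Throughout, $T$ denotes the maximal torus of diagonal unitaries, whose Weyl group $W(T)$ is the full symmetric group $\mathfrak{S}_d$ acting by permuting diagonal entries. The key structural remark is that the set of weights of any representation is stable under $W(T)$, so the lexicographically largest weight of any representation is automatically dominant, i.e. satisfies $\lambda_1 \geq \cdots \geq \lambda_d$.

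First I would establish existence. By Schur--Weyl duality (Theorem~\ref{chap2:thm:SchurWeylDuality}) and its unitary corollary following Theorem~\ref{chap2:thm:generalAUnitaryMatrixAlgebra}, the spaces $V^d_\lambda$ occurring in $(\mathbb{C}^d)^{\otimes n} \simeq \bigoplus_{\lambda} V^d_\lambda \otimes V_\lambda$ are irreducible representations of $\mathrm{U}_d$, indexed by partitions $\lambda$ with at most $d$ rows, that is, by dominant weights with nonnegative entries. I would then identify the highest weight of $V^d_\lambda$ as $\lambda$ itself by exhibiting an explicit highest weight vector, namely the image under the Young symmetrizer of $e_1^{\otimes \lambda_1}\otimes e_2^{\otimes \lambda_2}\otimes\cdots$, and checking that no lexicographically larger weight occurs in $(\mathbb{C}^d)^{\otimes n}$ above the $\lambda$-summand. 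To reach weights with negative entries I would tensor with integer powers of the one-dimensional determinant representation $U \mapsto (\det U)^{k}$, which is legitimate for every $k \in \mathbb{Z}$ since $\det U$ lies on the unit circle and which has weight $(k,\dots,k)$; tensoring shifts the highest weight by $(k,\dots,k)$. As any dominant $\mu \in \mathbb{Z}^d$ can be written $\mu = \lambda + (k,\dots,k)$ with $k = \mu_d$ and $\lambda = \mu - (\mu_d,\dots,\mu_d)$ a partition with at most $d$ rows, setting $V_\mu \coloneqq V^d_\lambda \otimes (\det)^{\otimes k}$ realises every dominant weight. Since tensoring an irreducible with a one-dimensional representation preserves irreducibility, each $V_\mu$ is irreducible with highest weight $\mu$.

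Next I would dispose of inequivalence and completeness. The character of a representation, restricted to $T$, is a $W(T)$-invariant function that records the weights and their multiplicities, so in particular the highest weight is determined by the equivalence class; hence the $V_\mu$ for distinct dominant $\mu$ are pairwise inequivalent. For completeness I would combine Peter--Weyl with Stone--Weierstrass. The matrix coefficients of the $V_\mu$ are, up to a factor $(\det U)^{-m}$, the monomials in the entries $U_{ij}$; together with their complex conjugates, recovered from the unitarity relation $U^{-1} = U^{*}$ that expresses the entries of $U^{*}$ as polynomials in the $U_{ij}$ divided by $\det U$, they generate a unital, self-adjoint, point-separating subalgebra of $C(\mathrm{U}_d)$, hence dense by Stone--Weierstrass and dense in $\mathrm{L}^2(\mathrm{U}_d)$. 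If some irreducible $W$ were inequivalent to every $V_\mu$, then by the orthogonality of matrix coefficients of inequivalent irreducibles (Theorem~\ref{appA:thm:PeterWeyl}) its nonzero matrix coefficients would be orthogonal to this dense span, a contradiction. Thus every irreducible of $\mathrm{U}_d$ is some $V_\mu$, and since the $V_\mu$ are pairwise inequivalent with distinct highest weights, exactly one irreducible carries each dominant highest weight, which is the asserted uniqueness.

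The hard part will be identifying the highest weight of the abstract Schur--Weyl summand $V^d_\lambda$ with $\lambda$, occurring with multiplicity one: this is where the combinatorics of $(\mathbb{C}^d)^{\otimes n}$ must be matched against the torus weight decomposition, and it is cleanest to do via an explicit highest weight vector from the Young symmetrizer rather than through the Weyl character (Schur polynomial) formula, which would be heavier machinery than the excerpt provides. A secondary technical point is verifying the Stone--Weierstrass hypotheses, namely the self-adjointness of the matrix-coefficient algebra, which hinges precisely on $U^{*} = U^{-1}$ and on the invertibility of $\det$ on $\mathrm{U}_d$.
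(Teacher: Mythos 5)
The thesis itself gives no proof of this theorem: the appendix states explicitly that for matrix groups only the principal results are presented, with the cited textbooks serving as references for the proofs. So there is no in-paper argument to compare against, and your proposal must be judged on its own merits; on those merits it is correct. Your route — Schur--Weyl duality (Theorems~\ref{chap2:thm:SchurWeylDuality} and~\ref{chap2:thm:generalAUnitaryMatrixAlgebra}) for existence of the polynomial irreducibles $V^d_\lambda$, twisting by $(\det)^{k}$, $k\in\mathbb{Z}$, to reach all dominant weights in $\mathbb{Z}^d$, invariance of the torus weight multiplicities under equivalence for pairwise inequivalence, and density of representative functions via Stone--Weierstrass plus the orthogonality in Theorem~\ref{appA:thm:PeterWeyl} for exhaustiveness — is the standard ``algebraic Peter--Weyl'' classification for compact groups. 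It is arguably better adapted to this thesis than the proofs in the books it cites (which typically pass through Lie-algebraic highest-weight theory or the Weyl character formula), because every ingredient except Stone--Weierstrass is already stated in the text; moreover, your construction proves, rather than circularly invokes, the two subsequent unproven statements of the same subsection (that $s_T\cdot(\mathbb{C}^d)^{\otimes n}$ is irreducible with highest weight $\lambda$, and the $\det^{-k}$-twist description). Two points need to be supplied in a full write-up, and you correctly flag the first as the crux: (i) the identification of $\lambda$ as the highest weight of $s_T\cdot(\mathbb{C}^d)^{\otimes n}$ — the clean argument is that $c_T$ annihilates any basis tensor with two equal indices in a column, so any surviving content $\mu$ satisfies $\mu_1+\cdots+\mu_k\leq\lambda_1+\cdots+\lambda_k$ for all $k$, hence is lexicographically at most $\lambda$, while $s_T$ applied to the row-filled tensor $e_1^{\otimes\lambda_1}\otimes e_2^{\otimes\lambda_2}\otimes\cdots$ is nonzero of weight $\lambda$; and (ii) the observation that matrix coefficients of $(\det)^{-m}\otimes(\mathbb{C}^d)^{\otimes n}$ lie in the span of matrix coefficients of the $V_\mu$, which uses complete reducibility (the corollary to Theorem~\ref{appA:thm:PeterWeyl}) together with Schur--Weyl, exactly as you indicate. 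With these details filled in, the proof is complete.
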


Consider the complex tensor product space ${\1( \mathbb{C}^d \1)}^{\otimes n}$. Define a representation of the unitary group $\mathrm{U}_d$ on this space by
\begin{equation*}
    U \mapsto U^{\otimes n}.
\end{equation*}
Additionally, define a representation of the symmetric group $\mathfrak{S}_n$ on the this space by permuting the tensor positions, and extend this action linearly to the group algebra $\mathbb{C} \1[ \mathfrak{S}_n \1]$.

\begin{example*}
    Consider the complex tensor product space ${\1( \mathbb{C}^d \1)}^{\otimes 3}$ Let $v_1, v_2, v_3 \in \mathbb{C}^d$ be arbitrary vectors, and $\sigma \in \mathfrak{S}_n$ be arbitrary permutation, then
    \begin{equation*}
        \sigma \cdot (v_1 \otimes v_2 \otimes v_3) = v_{\sigma^{\shortminus 1}(1)} \otimes v_{\sigma^{\shortminus 1}(2)} \otimes v_{\sigma^{\shortminus 1}(3)}.
    \end{equation*}
    Define $T$ as the canonical Young tableau corresponding to the partition $(3)$ of $3$. The action of the Young symmetrizer $s_T$ on the complex tensor product space ${\1( \mathbb{C}^d \1)}^{\otimes 3}$ gives rise to a new complex tensor product space, $s_T \cdot {\1( \mathbb{C}^d \1)}^{\otimes 3}$, which is referred to as the \emph{symmetric subspace} of ${\1( \mathbb{C}^d \1)}^{\otimes 3}$, and denoted by $\emph{\vee^3 \mathbb{C}^d}$. For instance,
    \begin{align*}
        s_T \cdot (v_1 \otimes v_2 \otimes v_3) &= \sum_{\sigma \in \mathfrak{S}_3} \sigma \cdot (v_1 \otimes v_2 \otimes v_3) \\
        &= (v_1 \otimes v_2 \otimes v_3) + (v_2 \otimes v_1 \otimes v_3) + (v_3 \otimes v_2 \otimes v_1) + \\
        &\phantom{{}={}} (v_1 \otimes v_3 \otimes v_2) + (v_3 \otimes v_1 \otimes v_2) + (v_2 \otimes v_3 \otimes v_1).
    \end{align*}
    The symmetric group $\mathfrak{S}_3$ acts trivially on $\vee^3 \mathbb{C}^d$, i.e. for all $\sigma \in \mathfrak{S}_3$ and $x \in \vee^3 \mathbb{C}^d$, then $\sigma \cdot x = x$. Let $T$ be the canonical Young tableau associated with the partition $(1, 1, 1)$ of $3$. The complex tensor product space $s_T \cdot {\1( \mathbb{C}^d \1)}^{\otimes 3}$, is referred to as the \emph{antisymmetric subspace} of ${\1( \mathbb{C}^d \1)}^{\otimes 3}$, and denoted by $\emph{\wedge^3 \mathbb{C}^d}$. For instance,
    \begin{align*}
        s_T \cdot (v_1 \otimes v_2 \otimes v_3) &= \sum_{\sigma \in \mathfrak{S}_3} \sigma \cdot (v_1 \otimes v_2 \otimes v_3) \\
        &= (v_1 \otimes v_2 \otimes v_3) - (v_2 \otimes v_1 \otimes v_3) - (v_3 \otimes v_2 \otimes v_1) - \\
        &\phantom{{}={}} (v_1 \otimes v_3 \otimes v_2) + (v_3 \otimes v_1 \otimes v_2) + (v_2 \otimes v_3 \otimes v_1).
    \end{align*}
    The symmetric group $\mathfrak{S}_3$ acts on $\wedge^3 \mathbb{C}^d$ through multiplication by the signature, i.e. for all $\sigma \in \mathfrak{S}_3$ and $x \in \wedge^3 \mathbb{C}^d$, then $\sigma \cdot x = \mathrm{sign}(\sigma) \cdot x$.
\end{example*}

Let $\mathrm{U}_d$ be the unitary group of degree $d$. Given a partition $\lambda \vdash n$ consisting of $l$ parts, it is possible to associate $\lambda$ with a weight in $\mathbb{Z}^d$, provided that $l \leq d$, by appending $d-l$ zeros to the right of the partition.

\begin{theorem*}
    Consider $\lambda \vdash n$ a partition of $n$ with $l$ parts. Let $T$ a Young tableau without repetitions on $\lambda$. Then, the complex tensor product space $s_T \cdot {\1( \mathbb{C}^d \1)}^{\otimes n}$ is the zero space, if $l > d$, and an irreducible representation of $\mathrm{U}_d$, for the representation $U \mapsto U^{\otimes n}$, with highest weight $\lambda$, if $l \leq d$.
\end{theorem*}
\begin{corollary*}
    All irreducible representations of the unitary group $\mathrm{U}_n$, highest weights $\lambda \in \mathbb{N}^d$, is equivalent to $s_T \cdot {\1( \mathbb{C}^d \1)}^{\otimes n}$, with $T$ a Young tableau without repetitions on $\lambda$.
\end{corollary*}

From \hyperref[appA:thm:highestWeightIrreducibleRepresentation]{Theorem~\ref*{appA:thm:highestWeightIrreducibleRepresentation}}, for a given partition $\lambda \vdash n$ with at most $d$ part, the irreducible representations $s_T \cdot {\1( \mathbb{C}^d \1)}^{\otimes n}$ of the of the unitary group $\mathrm{U}_d$, are mutually equivalent across all Young tableaux without repetition $T$ on $\lambda$, thereby solely dependent on $\lambda$, and consequently, it is possible to designate any representation of this type as $\emph{V^d_\lambda}$.

\begin{theorem*}
    Consider a partition $\lambda \vdash n$ with at most $d$ parts. Let $V^d_\lambda$ be the corresponding irreducible representation of the unitary group $\mathrm{U}_d$. The dimension of $V^d_\lambda$ is equivalent to the cardinality of the set of semistandard Young tableaux associated with $\lambda$, with entries in $\{ 1, \ldots, d \}$
\end{theorem*}

Since the entries in each columns of a semistandard Young tableau are strictly increasing, there is no semistandard Young tableau with entries in $\{ 1, \ldots, d \}$, and with more than $d$ rows.

\begin{example*}
    Let $\mathrm{U}_2$ be the unitary group of degree $2$. Consider $\lambda$ the partition $(1, 1, 1)$ of $3$, then the antisymmetric subspace $V^3_\lambda \simeq \wedge^3 \mathbb{C}^2$ is the zero space. Let $\lambda$ be the partition $(2, 1)$ of $3$, then the irreducible representation $V^2_\lambda$ of $\mathrm{U}_2$ has dimension $2$, since the semistandard Young tableaux associated with $\lambda$, with entries in $\{ 1, 2 \}$ are
    \begin{equation*}
        \begin{ytableau}
            1 & 1 \\
            2
        \end{ytableau}
        \qquad \text{and} \qquad
        \begin{ytableau}
            1 & 2 \\
            2
        \end{ytableau}
    \end{equation*}
\end{example*}

For all $k \in \mathbb{Z}$ there is a $1$-dimensional irreducible representation of $\mathrm{U}_d$ defined by $U \mapsto (\det U)^k$, and denoted $\emph{{\det}^k}$ This irreducible representation is rational for all $k \in \mathbb{Z}$ and polynomial for $k \in \mathbb{N}$.

\begin{theorem*}
    All irreducible representations of $\mathrm{U}_d$ with highest weight $\lambda \in \mathbb{Z}^d$ are equivalent to the representation
    \begin{equation*}
        {\det}^{\shortminus k} \cdot V^d_\mu,
    \end{equation*}
    where $\mu \vdash n$ is a partition of $n$ with at most $d$ parts, and $k \in \mathbb{N}$ is an integer such that $\lambda_i = \mu_i - k$.
\end{theorem*}

The set of unitary matrices $\mathrm{U}_d$ is dense, with respect to the Zariski topology, within the set of complex invertible matrices $\mathrm{GL}_d$.

\begin{theorem*}
    The irreducible rational representations of $\mathrm{GL}_d$ are the same as the irreducible representations of $\mathrm{U}_d$.
\end{theorem*}

% \selectlanguage{french}
% \include{chapters/appendixB.tex}
% \selectlanguage{english}

\newpage

\titleclass{\chapter}{top}
\titleformat{\chapter}
    [display]
    {\centering\Huge\bfseries}
    {\appendixname\ \thechapter}
    {0pt}
    {\huge}

\renewcommand\emph[1]{\oldemph{#1}}

\printbibliography

\newpage

\end{document}